\long\def\@makecaption#1#2{%
  \vskip\abovecaptionskip\footnotesize
  \sbox\@tempboxa{#1. #2}%
  \ifdim \wd\@tempboxa >\hsize
    #1. #2\par
  \else
    \global \@minipagefalse
    \hb@xt@\hsize{\hfil\box\@tempboxa\hfil}%
  \fi
  \vskip\belowcaptionskip}
\newcommand{\p}{\partial}
\newcommand{\sgn}{\mathop{\rm sgn}\nolimits}
\newcommand{\ord}{\mathop{\rm ord}\nolimits}
\newcommand{\const}{{\rm const}}
\newcommand{\lsemioplus}{\mathbin{\mbox{$\lefteqn{\hspace{.77ex}\rule{.4pt}{1.2ex}}{\in}$}}}
\newlength{\mylength}
\newcommand{\solution}{\hspace*{-\mylength}\bullet\quad}
\newcommand{\solutionRedEq}{\hspace*{-\mylength}\circ\quad}
\newtheorem{theorem}{Theorem}
\newtheorem{lemma}[theorem]{Lemma}
\newtheorem{corollary}[theorem]{Corollary}
\theoremstyle{definition}
\newtheorem{remark}[theorem]{Remark}
\newcommand{\todo}[1][\null]{\ensuremath{\clubsuit}}
\newcommand{\noprint}[1]{}
\begin{document}

\par\noindent {\LARGE\bf
Hidden symmetries, hidden conservation laws and\\ exact solutions
of dispersionless Nyzhnyk equation
\par}

\vspace{4mm}\par\noindent{\large
Oleksandra O.\ Vinnichenko$^\dag$, Vyacheslav M.\ Boyko$^{\dag\ddag}$ and Roman O.\ Popovych$^{\dag\S}$
}

\vspace{4mm}\par\noindent{\it\small
$^\dag$\,Institute of Mathematics of NAS of Ukraine, 3 Tereshchenkivska Str., 01024 Kyiv, Ukraine
\par}

\vspace{2mm}\par\noindent{\it\small
$^\ddag$\,Department of Mathematics, Kyiv Academic University, 36 Vernads'koho Blvd., 03142 Kyiv, Ukraine
\par}

\vspace{2mm}\par\noindent{\it\small
$^\S$\,Mathematical Institute, Silesian University in Opava, Na Rybn\'\i{}\v{c}ku 1, 746 01 Opava, Czech Republic
\par}

\vspace{4mm}\par\noindent{\small
E-mails:
oleksandra.vinnichenko@imath.kiev.ua,
boyko@imath.kiev.ua,
rop@imath.kiev.ua
\par}

\vspace{7mm}\par\noindent\hspace*{5mm}\parbox{150mm}{\small
Among Lie submodels of the (real symmetric potential) dispersionless Nyzhnyk equation,
we single out a remarkable submodel as such that, despite being the only one,
is associated with a family of in general inequivalent one-dimensional subalgebras
of the maximal Lie invariance algebra of this equation,
which are parameterized by an arbitrary function of the time variable.
The large family of invariant solutions of the dispersionless Nyzhnyk equation that are related to the above submodel
is expressed in terms of an arbitrary function of the time variable
and the double quadrature of the well-known (implicit) general solution
of the inviscid Burgers equation with respect to a space-like submodel invariant variable.
The singled out submodel possesses many other interesting properties.
In particular, we show that it is Lie-remarkable,
and its maximal Lie invariance algebra completely defines its point symmetry pseudogroup,
which provides the second but simpler example of the latter phenomenon in literature.
Moreover, only hidden Lie symmetries of the dispersionless Nyzhnyk equation
that are associated with this submodel are essential for finding its exact solutions.
Using Lie reductions, we construct new families of exact solutions
of the inviscid Burgers equation and the dispersionless Nyzhnyk equation in closed or parametric form.
We also exhaustively described generalized symmetries, cosymmetries and conservation laws of the submodel,
which gives the corresponding nonlocal and hidden structures for
the inviscid Burgers equation and the dispersionless Nyzhnyk equation, respectively.\looseness=-1
}\par\vspace{4mm}

\noprint{
Keywords:
dispersionless Nyzhnyk equation;
Lie reduction;
invariant solutions;
point-symmetry pseudogroup;
generalized symmetry;
local conservation laws;

MISC: 35B06 (Primary) 35C05, 35C06, 35A30, 17B80 (Secondary)
17-XX   Nonassociative rings and algebras
 17Bxx	 Lie algebras and Lie superalgebras  {For Lie groups, see 22Exx}
  17B80   Applications of Lie algebras and superalgebras to integrable systems
35-XX   Partial differential equations
  35A30   Geometric theory, characteristics, transformations [See also 58J70, 58J72]
  35B06   Symmetries, invariants, etc.
 35Cxx  Representations of solutions
  35C05   Solutions in closed form
  35C06   Self-similar solutions

\begin{highlights}
\hlsection{Highlights}
\item A large family of invariant solutions of the (real symmetric potential) dispersionless Nyzhnyk equation is expressed in terms of an arbitrary function of the time variable and the double quadrature of the well-known (implicit) general solution of the inviscid Burgers equation.

\item The related remarkable submodel of the dispersionless Nyzhnyk equation is comprehensively studied within the framework of symmetry analysis of differential equations.

\item Lie, point and generalized symmetries, cosymmetries and conservation laws of the submodel are exhaustively described, which gives the corresponding nonlocal and hidden structures for the inviscid Burgers equation and the dispersionless Nyzhnyk equation.

\item Analogous results are obtained for an Euler--Lagrange equation related to the submodel and to the inviscid Burgers equation.

\item New families of exact solutions of the inviscid Burgers equation and the dispersionless Nyzhnyk equation in closed or parametric form are constructed.
\end{highlights}

}


\section{Introduction}

One of the first integrable systems of differential equations in more
than two independent variables was the (1+2)-dimensional Nyzhnyk%
\footnote{%
The commoner version of transliterating this surname in the literature is ``Nizhnik'',
but the version used in the paper is more correct.
}
system suggested in~\cite[Eq.~(4)]{nizh1980a}.
By introducing potentials, the real symmetric version of this system is reduced
to a (1+2)-dimensional single partial differential equation,
which is called the (real symmetric potential) Nyzhnyk equation.
Using the technique of limit transitions to dispersionless counterparts of
(1+2)-dimensional integrable differential equations
and of the corresponding Lax representations~\cite[p.~167]{zakh1994a},
it is easy to show
that the (real symmetric potential) dispersionless Nyzhnyk equation
\begin{gather}\label{eq:dN}
u_{txy}=(u_{xx}u_{xy})_x + (u_{xy}u_{yy})_y,
\end{gather}
which is the dispersionless counterpart of the above Nyzhnyk equation,
possesses a nonlinear Lax representation~\cite{pavl2006a}.

Correcting, enhancing and significantly extending results from~\cite{moro2021a},
in~\cite{boyk2024a,vinn2024a} we carried out classical symmetry analysis
of the equation~\eqref{eq:dN}.
The maximal Lie invariance pseudoalgebra~$\mathfrak g$ of~\eqref{eq:dN}
is infinite-dimensional and is spanned by the vector fields
\begin{gather*}
\begin{split}&
D^t(\tau)=\tau\p_t+\tfrac13\tau_tx\p_x+\tfrac13\tau_ty\p_y-\tfrac1{18}\tau_{tt}(x^3+y^3)\p_u,\quad
D^{\rm s}=x\p_x+y\p_y+3u\p_u,\\ &
P^x(\chi)=\chi\p_x-\tfrac12\chi_tx^2\p_u,\quad
P^y(\rho)=\rho\p_y-\tfrac12\rho_ty^2\p_u,\\ &
R^x(\alpha)=\alpha x\p_u,\quad
R^y(\beta)=\beta y\p_u,\quad
Z(\sigma)=\sigma\p_u,
\end{split}
\end{gather*}
where $\tau$, $\chi$, $\rho$, $\alpha$, $\beta$ and $\sigma$
run through the set of smooth functions of~$t$,
see~\cite{boyk2024a,moro2021a}.
The point-symmetry pseudogroup~$G$ of the equation~\eqref{eq:dN}
was computed in~\cite[Theorem~2]{boyk2024a} using
the original megaideal-based version of the algebraic method that was suggested in~\cite{malt2024a}.
It is generated by the transformations of the form
\begin{gather}\label{eq:dNPointSymForm}
\begin{split}
&\tilde t=T(t),\quad
\tilde x=CT_t^{1/3}x+X^0(t),\quad
\tilde y=CT_t^{1/3}y+Y^0(t),\\
&\tilde u=C^3u-\frac{C^3T_{tt}}{18T_t}(x^3+y^3)
-\frac{C^2}{2T_t^{1/3}}(X^0_tx^2+Y^0_ty^2)+W^1(t)x+W^2(t)y+W^0(t)
\end{split}
\end{gather}
and the transformation
$\mathscr J$: $\tilde t=t$, $\tilde x=y$, $\tilde y=x$, $\tilde u=u$.
Here $T$, $X^0$, $Y^0$, $W^0$, $W^1$ and $W^2$ are arbitrary smooth functions of~$t$
with $T_t\neq0$, and $C$ is an arbitrary nonzero constant.
The identity component~$G_{\rm id}$ of the pseudogroup~$G$
consists of the transformations of the form~\eqref{eq:dNPointSymForm} with $T_t>0$ and $C>0$.

We can single out subgroups of~$G$ each of which is parameterized by a single parameter
among the constant and functional parameters
involved in the representation~\eqref{eq:dNPointSymForm} for transformations from~$G$.
For this purpose, we should set all of these parameters, except the single associated one,
to the trivial values corresponding to the identity transformation,
which are 1, $t$, 0, 0, 0, 0 and~0 for $C$, $T$, $X^0$, $Y^0$, $W^0$, $W^1$ and $W^2$,
thus obtaining the subgroups
\[
\{\mathscr D^t(T)\},\ \ \{\mathscr D^{\rm s}(C)\},\ \
\{\mathscr P^x(X^0)\},\ \ \{\mathscr P^y(Y^0)\},\ \
\{\mathscr R^x(W^1)\},\ \ \{\mathscr R^y(W^2)\}\ \ \mbox{and}\ \ \{\mathscr Z(W^0)\}
\]
of~$G$
associated with the subalgebras~$\{D^t(\tau)\}$, $\langle D^{\rm s}\rangle$,
$\{P^x(\chi)\}$, $\{P^y(\rho)\}$, $\{R^x(\alpha)\}$,
$\{R^y(\beta)\}$ and $\{Z(\sigma)\}$ of~$\mathfrak g$, respectively.
Here all the parameter functions run through the specified sets of their values.
We call transformations from these subgroups and the transformation~$\mathscr J$
elementary point symmetry transformations of the equation~\eqref{eq:dN}.

A complete list of discrete point symmetry transformations of the equation~\eqref{eq:dN}
that are independent up to composing with each other and with transformations from~$G_{\rm id}$
is exhausted by three commuting involutions, which can be chosen to be
the permutation~$\mathscr J$ of the variables~$x$ and~$y$
and two transformations $\mathscr I^{\rm i}:=\mathscr D^t(-t)$ and $\mathscr I^{\rm s}:=\mathscr D^{\rm s}(-1)$
changing the signs of $(t,x,y)$ and of $(x,y,u)$, respectively.

The above-mentioned results of~\cite{boyk2024a} created
a basis for the exhaustive classification of Lie reductions of~\eqref{eq:dN}
to partial differential equations in two independent variables
and to ordinary differential equations in~\cite{vinn2024a}.
Among the listed inequivalent subalgebras of~$\mathfrak g$,
there is a family of one-dimensional subalgebras such that
the corresponding reduced equations are of the same form,
which further reduces to the inviscid Burgers equation%
\footnote{\label{fnt:InviscidBurgersEq}
The inviscid Burgers equation~\eqref{eq:dNs1.3RhoNe1InviscidBurgersEq}
is the simplest nonlinear transport equation, called also Hopf's equation.
It possesses the well-known implicit representation of its general solution
${F(h,z_2-hz_1)=0}$ with an arbitrary nonconstant sufficiently smooth function~$F$,
see \cite[Chapter~E, Eq.~2.44]{kamk1959B} and \cite[Section~1.1.1.18]{poly2012A}.
}
\begin{gather}\label{eq:dNs1.3RhoNe1InviscidBurgersEq}
h_1+hh_2=0.
\end{gather}
Here and in what follows,
the subscripts~1 and~2 of functions denote the differentiations
with respect to~$z_1$ and~$z_2$, respectively,
and each function is considered as its zero-order derivative.
More specifically, the list of $G$-inequivalent one-dimensional subalgebras of~$\mathfrak g$
from~\cite[Lemma~5]{vinn2024a} in particular contains the family of subalgebras
\[
\mathfrak s_{1.3}^\rho=\big\langle P^x(1)+P^y(\rho)\big\rangle,
\]
where $\rho=\rho(t)$ is an arbitrary smooth function of~$t$
with $\rho(t)\ne0$ for any~$t$ in the domain of~$\rho$
and $\rho\not\equiv1$ on each open interval of the domain of~$\rho$.
Within this family, subalgebras~$\mathfrak s_{1.3}^\rho$ and~$\smash{\mathfrak s_{1.3}^{\tilde\rho}}$
are $G$-inequivalent if and only if
$\tilde\rho(t)=\rho(at+b)$ for some $a,b\in\mathbb R$ or
$\tilde\rho=(\rho(\hat T))^{-1}$, where $\hat T$ is the inverse of a solution~$T$ of the equation $T_t=c\rho^{-3}$ for some $c\in\mathbb R$.
In the context of the discussion in \cite[Section~B]{fush1994b} and~\cite{popo1995b},
the optimal ansatz constructed using the subalgebra~$\mathfrak s_{1.3}^\rho$
with a fixed appropriate value of the parameter function~$\rho$ is
\begin{gather}\label{eq:dNs1.3RhoNe1ModifiedAnsatz}
u=w(z_1,z_2)-\frac{\rho_t}{6\rho}y^3, \quad
z_1=2\int\frac{\rho^3-1}{\rho^3}{\rm d}t,\quad
z_2=\frac y{\rho}-x,
\end{gather}
where the integral denotes a fixed antiderivative of the integrand.
Any of the ansatzes of this form reduces the equation~\eqref{eq:dN} to the partial differential equation
\begin{gather}\label{eq:dNs1.3RhoNe1ModifiedRedEq}
w_{122}+w_{22}w_{222}=0
\end{gather}
in two independent variables,
which was called the modified reduced equation~1.3 in~\cite{vinn2024a},
see \mbox{\cite[Eq.~(11)]{vinn2024a}}.
One can see that due to the choice of appropriate ansatzes,
the (in general) $G$-inequivalent subalgebras~$\mathfrak s_{1.3}^\rho$
result in reduced equations of the same form~\eqref{eq:dNs1.3RhoNe1ModifiedRedEq}.
This phenomenon was explicitly indicated for the first time in~\cite{vinn2024a} using the above reduction.
Hereafter the equation~\eqref{eq:dNs1.3RhoNe1ModifiedRedEq} is also called reduced equation~1.3
to relate the results of the present paper with those of~\cite{vinn2024a}.
Objects unambiguously associated with this equation, like various invariance algebras and symmetry (pseudo)groups,
are marked by the subscript ``1.3''.

It is obvious that the substitution~$w_{22}=h$ maps the reduced equation~\eqref{eq:dNs1.3RhoNe1ModifiedRedEq}
to the inviscid Burgers equation~\eqref{eq:dNs1.3RhoNe1InviscidBurgersEq}.
Combining this observation with the ansatz~\eqref{eq:dNs1.3RhoNe1ModifiedAnsatz}
and the known implicit general solution of the inviscid Burgers equation~\eqref{eq:dNs1.3RhoNe1InviscidBurgersEq},
see footnote~\ref{fnt:InviscidBurgersEq},
we obtain a large family of exact solutions of the dispersionless Nyzhnyk equation~\eqref{eq:dN},
\begin{gather}\label{eq:dNs1.3InvarSolutions}
\solution u=
\int^{z_2}(z_2-s) h(z_1,s)\,{\rm d}s
-\frac{\rho_t}{6\rho}y^3, \quad
z_1:=2\int\frac{\rho^3-1}{\rho^3}{\rm d}t,\quad
z_2:=\frac y\rho-x.
\end{gather}
Here $\rho$ is an arbitrary sufficiently smooth function of~$t$
that is not equal to the constant functions~0 and~1,
and the function $h=h(z_1,z_2)$ is implicitly defined by the equation \[{F(h,z_2-hz_1)=0}\]
with an arbitrary nonconstant sufficiently smooth function~$F$ of its arguments.
Up to the $G$-equivalence, the integral with respect to~$z_2$ can be considered
as a fixed second antiderivative of~$h$ with respect to this variable.
The number of quadratures in~\eqref{eq:dNs1.3InvarSolutions} can be reduced by one
if we denote $\vartheta:=2\int(1-\rho^{-3})\,{\rm d}t$ and substitute $\rho=(1-\frac12\vartheta_t)^{-1/3}$,
when assuming $\vartheta$ as an arbitrary sufficiently smooth function of~$t$ instead of~$\rho$,
where the derivative~$\vartheta_t$ is not equal to the constant functions~0 and~2.
Nevertheless, even after replacement~$\rho$ by~$\vartheta$,
the formula~\eqref{eq:dNs1.3InvarSolutions} cannot be considered as a convenient representation
for exact solutions of the dispersionless Nyzhnyk equation~\eqref{eq:dN}
since it still contains the quadrature with an implicitly defined \mbox{function}.\looseness=1

This is why to single out those solutions in the family~\eqref{eq:dNs1.3InvarSolutions}
that admit simpler representations, see Theorem~\ref{thm:dNRedEq1.3CorrInvSolutions},
in the present paper
we carry out the Lie reduction procedure for the equation~\eqref{eq:dNs1.3RhoNe1ModifiedRedEq}
as the second step of the Lie reduction procedure for the equation~\eqref{eq:dN}
with using the subalgebras~$\mathfrak s_{1.3}^\rho$ for the first step of reduction.
In fact, we still mostly work with the inviscid Burgers equation~\eqref{eq:dNs1.3RhoNe1InviscidBurgersEq}
instead of the equation~\eqref{eq:dNs1.3RhoNe1ModifiedRedEq}.

We also comprehensively study local symmetry-like objects of the equation~\eqref{eq:dNs1.3RhoNe1ModifiedRedEq},
which includes its $z_2$-integrals, generalized symmetries, cosymmetries,
conserved currents, conservation-law characteristics and conservation laws,
and relate them to their counterparts for the inviscid Burgers equation~\eqref{eq:dNs1.3RhoNe1InviscidBurgersEq}.
It turns out for each of the above kind of local symmetry-like objects,
the equation~\eqref{eq:dNs1.3RhoNe1ModifiedRedEq} admits such objects of arbitrarily high order.
In total, this gives one more example,
in addition to only a few ones, of a comprehensive study of local symmetry-like objects
for a system of partial differential equations arising in applications. 

More specifically,
the structure of the maximal Lie invariance pseudoalgebra~$\mathfrak a_{1.3}$ of reduced equation~1.3
including its megaideals is analyzed in Section~\ref{sec:dNRedEq1.3MIA}.

Using the essential megaideals among the found ones, in Section~\ref{sec:dNRedEq1.3PointSymGroup}
we apply, to the equation~\eqref{eq:dNs1.3RhoNe1ModifiedRedEq}, the megaideal-based version of the algebraic method
of constructing point-symmetry (pseudo)groups of systems of differential equations
that was suggested in~\cite{malt2024a} and developed in~\cite{boyk2024a}.
It turns out that the point symmetry pseudogroup~$G_{1.3}$ of~\eqref{eq:dNs1.3RhoNe1ModifiedRedEq}
has a remarkable property.
The algebraic condition that the pushforward~$\Phi_*$ of the algebra~$\mathfrak a_{1.3}$
by any element~$\Phi$ of~$G_{1.3}$ preserves this algebra, $\Phi_*\mathfrak a_{1.3}=\mathfrak a_{1.3}$,
completely defines the pseudogroup~$G_{1.3}$.
Therefore, the direct method is needed only to verify that the pseudogroup~$G_{1.3}$
is indeed the entire point-symmetry (pseudo)group of~\eqref{eq:dNs1.3RhoNe1ModifiedRedEq}.
After~\cite{boyk2024a}, this is the second but much simpler example of this kind in the literature.

Inspired by finding the above phenomenon, we study other defining properties of Lie symmetries
of the equation~\eqref{eq:dNs1.3RhoNe1ModifiedRedEq} in Section~\ref{sec:dNs1.3RhoNe1DefPropertiesOfLieSyms}.
We prove that
this equation is Lie-remarkable since it itself is completely defined
by 11- and 12-dimensional subalgebras of the algebra~$\mathfrak a_{1.3}$
in the classes of genuine and general partial differential equations of order not greater than three
in two independent variables, respectively,
whereas a six-dimensional subalgebra of the former subalgebra suffices
to define the local diffeomorphisms that stabilize the algebra~$\mathfrak a_{1.3}$.

In Section~\ref{sec:dNRedEq1.3OnInduction},
we study the induction of Lie and point symmetries of the reduced equation~\eqref{eq:dNs1.3RhoNe1ModifiedRedEq}
by their counterparts for the original equation~\eqref{eq:dN}.
This gives the first example of studying the induction of point symmetries,
including discrete ones, in the course of a Lie reduction,
which is a more complicated problem than that of inducing Lie symmetries.

In Section~\ref{sec:dNs1.3RhoNe1Subalgebras},
we classify, up to the $G_{1.3}$-equivalence,
one-dimensional subalgebras of~$\mathfrak a_{1.3}$ that are appropriate
for Lie reduction of the equation~\eqref{eq:dNs1.3RhoNe1ModifiedRedEq}.
This classification was carried out by means of reducing it
to the classification of one-dimensional subalgebras of the algebra~$\check{\mathfrak a}_{1.3}$
up to the \smash{$\check G_{1.3}$}-equivalence, which was presented in \cite[Table~2]{poch2017a}.
Here $\check{\mathfrak a}_{1.3}$ denotes
the algebra of Lie-symmetry vector fields of~\eqref{eq:dNs1.3RhoNe1InviscidBurgersEq}
that are induced by Lie-symmetry vector fields of~\eqref{eq:dNs1.3RhoNe1ModifiedRedEq},
see the end of Section~\ref{sec:dNRedEq1.3MIA}.
Analogously, \smash{$\check G_{1.3}$} denotes
the group of point symmetry transformations of~\eqref{eq:dNs1.3RhoNe1InviscidBurgersEq}
that are induced by point symmetry transformations of~\eqref{eq:dNs1.3RhoNe1ModifiedRedEq},
see the end of Section~\ref{sec:dNRedEq1.3PointSymGroup}.

Large families of Lie invariant solutions of
the equations~\eqref{eq:dNs1.3RhoNe1InviscidBurgersEq} and~\eqref{eq:dNs1.3RhoNe1ModifiedRedEq}
are constructed in Section~\ref{sec:dNs1.3LieInvSols}
in explicit form in terms of elementary functions and the Lambert $W$ function
as well as in parametric form.
To simplify the treatment,
we replace the Lie reduction procedure for the equation~\eqref{eq:dNs1.3RhoNe1ModifiedRedEq}
by that for the equation~\eqref{eq:dNs1.3RhoNe1InviscidBurgersEq}
and obtain Lie invariant solutions of~\eqref{eq:dNs1.3RhoNe1ModifiedRedEq}
by integrating twice the obtained invariant solutions of~\eqref{eq:dNs1.3RhoNe1InviscidBurgersEq} with respect to~$z_2$
and neglecting trivial summands of the form $\breve W^1(z_1)z_2+\breve W^0(z_1)$ arising in the course of the integration
due to the $G_{1.3}$-inequivalence on the solution set of~\eqref{eq:dNs1.3RhoNe1ModifiedRedEq}.
Here $\breve W^1$ and $\breve W^0$ are arbitrary sufficiently smooth functions of~$z_1$.
Then we complete the application
of the optimized procedure of step-by-step reductions involving hidden symmetries
\cite[Section~B]{kova2023b} in Theorem~\ref{thm:dNRedEq1.3CorrInvSolutions},
presenting the form of the corresponding solutions of the dispersionless Nyzhnyk equation~\eqref{eq:dN}.
This form is obtained
by extending solutions of the reduced equation~\eqref{eq:dNs1.3RhoNe1ModifiedRedEq}
by noninduced point symmetries of this equation
and substituting them into the ansatz~\eqref{eq:dNs1.3RhoNe1ModifiedAnsatz}. 

Local symmetry-like objects associated with the equation~\eqref{eq:dNs1.3RhoNe1ModifiedRedEq}
are studied in Section~\ref{sec:dNs1.3RhoNe1SymLikeObjects}.
This includes the exhaustive descriptions of its
$z_2$-integrals (Section~\ref{sec:dNs1.3RhoNe1Integrals}),
generalized symmetries (Section~\ref{sec:dNs1.3RhoNe1GenSyms}),
cosymmetries (Section~\ref{sec:dNs1.3RhoNe1Cosyms}) and
conserved currents, conservation-law characteristics and conservation laws (Section~\ref{sec:dNs1.3RhoNe1CLs}).
Auxiliary statements on the general solutions of certain differential equations
for differential functions associated with the above objects
are collected in Section~\ref{sec:dNs1.3RhoNe1AuxiliaryResults}.
In Section~\ref{sec:InviscidBurgersEqInducedObjects},
we establish relations between
the local symmetry-like objects of the equation~\eqref{eq:dNs1.3RhoNe1ModifiedRedEq}
and their counterparts for the inviscid Burgers equation~\eqref{eq:dNs1.3RhoNe1InviscidBurgersEq}.
In the context of Section~\ref{sec:InviscidBurgersEqInducedObjects}
and the entire Section~\ref{sec:dNs1.3RhoNe1SymLikeObjects},
it is natural to also study local symmetry-like objects of
the equation $q_{12}+q_2q_{22}=0$, which is intermediate between
the equations~\eqref{eq:dNs1.3RhoNe1InviscidBurgersEq} and~\eqref{eq:dNs1.3RhoNe1ModifiedRedEq}
in the chain of the differential substitutions $h=q_2$ and $q=w_2$.
This study is carried out in Section~\ref{sec:IntermediateEq}.\looseness=1

The final Section~\ref{sec:Conclusion} contains a discussion of the results of the present paper
together with some possible further research perspective.

\section{Maximal Lie invariance algebra}\label{sec:dNRedEq1.3MIA}

The maximal Lie invariance pseudoalgebra~$\mathfrak a_{1.3}$ of the reduced equation~\eqref{eq:dNs1.3RhoNe1ModifiedRedEq}
was computed in~\cite{vinn2024a}
using the packages {\sf DESOLV} \cite{carm2000a} and {\sf Jets} \cite{BaranMarvan} for {\sf Maple};
the latter package is based on algorithms developed in~\cite{marv2009a}.
This pseudoalgebra is spanned by the vector fields
\begin{gather}\label{eq:dNMIASubmodel}
\begin{split}&
P^1=\p_{z_1},\quad
D^1=z_1\p_{z_1}-w\p_w,\quad
K=z_1^2\p_{z_1}+z_1z_2\p_{z_2}+(z_1w+\tfrac16z_2^{\,3})\p_w,\\&
D^2=z_2\p_{z_2}+3w\p_w,\quad
P^2=\p_{z_2},\quad
H=z_1\p_{z_2}+\tfrac12z_2^{\,2}\p_w,\\&
R(\alpha)=\alpha(z_1)z_2\p_w,\quad
Z(\sigma)=\sigma(z_1)\p_w.
\end{split}
\end{gather}
Here and in what follows,
the functional parameters~$\alpha$, $\beta$ and~$\sigma$ run through the set of smooth functions of
a single argument, $t$ or~$z_1$ depending on the context.
A computation by {\sf Jets} also shows that
the contact invariance algebra~$\mathfrak a_{1.3\rm c}$ of~\eqref{eq:dNs1.3RhoNe1ModifiedRedEq}
coincides with the first prolongation of~$\mathfrak a_{1.3}$.

Up to the antisymmetry of the Lie bracket,
the nonzero commutation relations between the vector fields~\eqref{eq:dNMIASubmodel} spanning~$\mathfrak a_{1.3}$
are exhausted by
\begin{gather*}
[P^1,D^1]=P^1,\quad
[P^1,K]=2D^1+D^2,\quad
[D^1,K]=K,\\ 
[P^1,H]=P^2,\quad
[P^1,R(\alpha)]=R(\alpha_{z_1}),\quad
[P^1,Z(\sigma)]=Z(\sigma_{z_1}),\\ 
[D^1,H]=H,\quad
[D^1,R(\alpha)]=R(z_1\alpha_{z_1}+\alpha),\quad
[D^1,Z(\sigma)]=Z(z_1\sigma_{z_1}+\sigma),\\ 
[K,P^2]=-H,\quad
[K,R(\alpha)]=R(z_1^2\alpha_{z_1}),\quad
[K,Z(\sigma)]=Z(z_1^2\sigma_{z_1}-z_1\sigma),\\ 
[D^2,P^2]=-P^2,\quad
[D^2,H]=-H,\quad
[D^2,R(\alpha)]=-2R(\alpha),\quad
[D^2,Z(\sigma)]=-3Z(\sigma),\\ 
[P^2,H]=R(1),\quad
[P^2,R(\alpha)]=Z(\alpha),\quad
[H,R(\alpha)]=Z(z_1\alpha).
\end{gather*}

The commutation relations imply that the Lie pseudoalgebra~$\mathfrak a_{1.3}$
is the sum of its nine-dimensional Lie subalgebra
$\mathfrak a_{1.3}^{\rm ess}:=\langle P^1,D^1,K,D^2,P^2,H,R(1),Z(1),Z(z_1)\rangle$
and its infinite-dimensional abelian (pseudo)ideal
$\mathfrak a_{1.3}^{\rm triv}:=\langle R(\alpha),Z(\sigma)\rangle$,
$\mathfrak a_{1.3}=\mathfrak a_{1.3}^{\rm ess}+\mathfrak a_{1.3}^{\rm triv}$,
where $\mathfrak a_{1.3}^{\rm ess}\cap\mathfrak a_{1.3}^{\rm triv}=\langle R(1),Z(1),Z(z_1)\rangle$.
In fact, only the subalgebra~$\mathfrak a_{1.3}^{\rm ess}$ is essential
in the course of classifying Lie reductions of the equation~\eqref{eq:dNs1.3RhoNe1ModifiedRedEq}.
This is why we call $\mathfrak a_{1.3}^{\rm ess}$ the essential subalgebra of~$\mathfrak a_{1.3}$.

To compute the point-symmetry pseudogroup~$G_{1.3}$ of reduced equation~1.3 given by~\eqref{eq:dNs1.3RhoNe1ModifiedRedEq}
using the algebraic method, we construct \emph{megaideals} of the algebra~$\mathfrak a_{1.3}$,
i.e., linear subspaces of~$\mathfrak a_{1.3}$
that are stable under action of the automorphism group of~$\mathfrak a_{1.3}$~\cite{bihl2015a,popo2003a}.

Given a Lie algebra~$\mathfrak g$,
by $\mathfrak z(\mathfrak g)$, $\mathfrak g'$, $\mathfrak g''$, $\mathfrak g^k$, $k\in\mathbb N$,
and $\mathfrak g_{(k)}$, $k\in\mathbb N_0:=\mathbb N\cup\{0\}$, in this section
we denote the center, the derived algebra, the second derived algebra,
the $k$th Lie power and the $k$th element of the upper central series
of~$\mathfrak g$, respectively,
$\mathfrak z(\mathfrak g):=\{v\in\mathfrak g\mid [v,w]=0\ \forall w\in\mathfrak g\}$,
$\mathfrak g':=[\mathfrak g,\mathfrak g]$,
$\mathfrak g'':=[\mathfrak g',\mathfrak g']$,
$\mathfrak g^1:=\mathfrak g$, $\mathfrak g^{k+1}:=[\mathfrak g,\mathfrak g^k]$, $k\in\mathbb N$,
$\mathfrak g_{(0)}:=\{0\}$ and $\mathfrak g_{(k+1)}/\mathfrak g_{(k)}$ is a center of $\mathfrak g/\mathfrak g_{(k)}$,
$k\in\mathbb N_0$.
In particular, $\mathfrak g_{(1)}=\mathfrak z(\mathfrak g)$ and $\mathfrak g^2=\mathfrak g'$.
All the listed subalgebras of~the algebra~$\mathfrak g$ as well as its radical
are its megaideals \cite{bihl2015a,popo2003a}.
In view of the commutation relations of~$\mathfrak a_{1.3}$, the only following megaideal is obvious:
\[
\mathfrak m_1:=\mathfrak a_{1.3}'=\big\langle P^1,\,2D^1+D^2,\,K,\,P^2,\,H,\,R(\alpha),\,Z(\sigma)\big\rangle.
\]

To find other megaideals of~$\mathfrak a_{1.3}$, we prove the following assertion.

\begin{lemma}\label{lem:radicalOfg}
The radical $\mathfrak r$ of~$\mathfrak a_{1.3}$ coincides with
$\big\langle D^2,\,P^2,\,H,\,R(\alpha),\,Z(\sigma)\big\rangle$.
\end{lemma}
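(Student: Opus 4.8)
The plan is to show that $\mathfrak r:=\langle D^2, P^2, H, R(\alpha), Z(\sigma)\rangle$ is a solvable ideal of $\mathfrak a_{1.3}$ and that the quotient $\mathfrak a_{1.3}/\mathfrak r$ is semisimple, so that $\mathfrak r$ is necessarily the radical. First I would verify, using the commutation relations listed above, that $\mathfrak r$ is indeed an ideal: one checks $[v,\mathfrak r]\subseteq\mathfrak r$ for each of the spanning vector fields $P^1, D^1, K, D^2, P^2, H, R(1), Z(1), Z(z_1)$ of $\mathfrak a_{1.3}$. This is a routine inspection — for instance $[P^1,H]=P^2\in\mathfrak r$, $[P^1,R(\alpha)]=R(\alpha_{z_1})\in\mathfrak r$, $[K,P^2]=-H\in\mathfrak r$, $[K,Z(\sigma)]=Z(z_1^2\sigma_{z_1}-z_1\sigma)\in\mathfrak r$, and so on — and the only mild point is that $\mathfrak r$ is a genuine (pseudo)ideal including the infinite-dimensional part $\langle R(\alpha), Z(\sigma)\rangle$.

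Next I would check that $\mathfrak r$ is solvable. The abelian (pseudo)ideal $\mathfrak a_{1.3}^{\rm triv}=\langle R(\alpha), Z(\sigma)\rangle$ is contained in $\mathfrak r$, and the quotient $\mathfrak r/\mathfrak a_{1.3}^{\rm triv}$ is spanned by the images of $D^2, P^2, H$; from $[D^2,P^2]=-P^2$, $[D^2,H]=-H$, $[P^2,H]=R(1)\equiv0$ in the quotient, this three-dimensional algebra is itself solvable (in fact $\langle P^2, H\rangle$ is an abelian ideal with one-dimensional solvable quotient). Hence $\mathfrak r$ is solvable, being an extension of a solvable algebra by an abelian one.

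Finally I would show $\mathfrak a_{1.3}/\mathfrak r$ is semisimple, which forces $\mathfrak r\supseteq\operatorname{rad}\mathfrak a_{1.3}$ and, combined with solvability of $\mathfrak r$, yields $\mathfrak r=\operatorname{rad}\mathfrak a_{1.3}$. The quotient is three-dimensional, spanned by the images $\bar P^1, \bar D^1, \bar K$ of $P^1, D^1, K$ (since $D^2, P^2, H, R(\alpha), Z(\sigma)$ all lie in $\mathfrak r$, and $2D^1+D^2$ reduces to $2\bar D^1$). The surviving relations are $[\bar P^1,\bar D^1]=\bar P^1$, $[\bar P^1,\bar K]=2\bar D^1$ (because $D^2\in\mathfrak r$), $[\bar D^1,\bar K]=\bar K$; rescaling, this is precisely $\mathfrak{sl}(2,\mathbb R)$, which is simple, hence semisimple. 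I expect the main obstacle to be purely bookkeeping: one must be careful that the infinite-dimensional functional generators $R(\alpha), Z(\sigma)$ do not produce components outside $\mathfrak r$ under bracketing with $P^1, D^1, K$ — in particular the relations $[P^1,Z(\sigma)]=Z(\sigma_{z_1})$, $[D^1,Z(\sigma)]=Z(z_1\sigma_{z_1}+\sigma)$ and $[K,Z(\sigma)]=Z(z_1^2\sigma_{z_1}-z_1\sigma)$ must be seen to stay inside $\langle Z(\sigma)\rangle\subset\mathfrak r$ for all smooth $\sigma$, and similarly for $R(\alpha)$ — after which the identification of the quotient with $\mathfrak{sl}(2,\mathbb R)$ is immediate.
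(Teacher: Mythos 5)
Your proof is correct, and it diverges from the paper's in a genuinely useful way in its second half. The first half --- that $\mathfrak r=\big\langle D^2,P^2,H,R(\alpha),Z(\sigma)\big\rangle$ is a solvable ideal --- matches the paper in substance: you get solvability from the extension of the solvable quotient $\mathfrak r/\mathfrak a_{1.3}^{\rm triv}=\langle\bar D^2,\bar P^2,\bar H\rangle$ by the abelian ideal $\mathfrak a_{1.3}^{\rm triv}=\langle R(\alpha),Z(\sigma)\rangle$, whereas the paper just records that the third derived algebra of $\mathfrak r$ vanishes; these are the same computation in different clothing. Where you differ is in proving that $\mathfrak r$ is actually the radical. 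The paper argues maximality directly: any ideal $\mathfrak s_1$ properly containing $\mathfrak r$ contains some nonzero $c_1P^1+c_2D^1+c_3K$, and repeated bracketing with $P^1$, $D^1$, $K$ forces $P^1,\,2D^1+D^2,\,K\in\mathfrak s_1$, hence $\mathfrak s_1=\mathfrak a_{1.3}$, which is not solvable. You instead pass to the quotient $\mathfrak a_{1.3}/\mathfrak r$, identify it (correctly --- with $h=-2\bar D^1$, $e=\bar P^1$, $f=-\bar K$ one gets the standard $\mathfrak{sl}(2,\mathbb R)$ relations) with a simple non-solvable algebra, and conclude that the image of any solvable ideal there is zero, so every solvable ideal lies in $\mathfrak r$. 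Your route is the cleaner, more standard way to pin down the radical and sidesteps the implicit use of the fact that a sum of solvable ideals is solvable, which the paper's "maximal solvable ideal" argument quietly relies on; the paper's route yields the stronger byproduct that $\mathfrak r$ is a maximal ideal of $\mathfrak a_{1.3}$ outright. Both arguments ultimately rest on the same complement $\langle P^1,2D^1+D^2,K\rangle\cong\mathrm{sl}(2,\mathbb R)$, which the paper records right after the lemma as the Levi-type decomposition of $\mathfrak a_{1.3}$; your bookkeeping that the brackets of $P^1$, $D^1$, $K$ with $R(\alpha)$ and $Z(\sigma)$ stay inside $\langle R(\alpha),Z(\sigma)\rangle$ is exactly the ideal check the paper also performs.
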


\begin{proof}
We use ideas from the proof of Lemma~1 in~\cite{malt2024a} and
denote the span from lemma's statement by~$\mathfrak s$.
To prove that $\mathfrak r=\mathfrak s$,
it suffices to show that $\mathfrak s$ is the maximal solvable ideal of~$\mathfrak a_{1.3}$.
The commutation relations 
between the vector fields spanning $\mathfrak a_{1.3}$
imply that $\mathfrak s$ is an ideal of~$\mathfrak a_{1.3}$.
The third derived algebra~$\mathfrak s^{(3)}$ of~$\mathfrak s$ is equal to $\{0\}$,
and thus the ideal $\mathfrak s$ is solvable (of solvability rank three).
It remains to check that the solvable ideal $\mathfrak s$ of~$\mathfrak a_{1.3}$ is maximal in~$\mathfrak a_{1.3}$.

Consider an ideal $\mathfrak s_1$ of $\mathfrak a_{1.3}$ that properly contains $\mathfrak s$.
Then a vector field~$Q$ of the form $Q=c_1P^1+c_2D^1+c_3K$ with $(c_1,c_2,c_3)\ne(0,0,0)$ belongs to~$\mathfrak s_1$.
Since $\mathfrak s_1$ is an ideal of $\mathfrak a_{1.3}$, the commutators
$[Q,P^1]=-c_2P^1-c_3(2D^1+D^2)$,
$[Q,D^1]=c_1P^1-c_3K$ and
$[Q,K]=c_1(2D^1+D^2)+c_2K$ belong to~$\mathfrak s_1$ as well.
Successively commuting each of these commutators with~$P^1$, $D^1$ and~$K$
and linearly recombining the obtained elements,
we derive that $c_iP^1$, $c_i(2D^1+D^2)$ and $c_iK$ also belong to~$\mathfrak s_1$ for any $i\in\{1,2,3\}$,
which means that $P^1,2D^1+D^2,K\in\mathfrak s_1$, i.e., $\mathfrak s_1=\mathfrak a_{1.3}$.
Since the algebra~$\mathfrak a_{1.3}$ is not solvable,
the span~$\mathfrak s$ is maximal as a solvable ideal of~$\mathfrak a_{1.3}$.
\end{proof}

Thus, $\mathfrak a_{1.3}=\mathfrak f\lsemioplus\mathfrak r$,
where $\mathfrak f=\big\langle P^1,2D^1+D^2,K\big\rangle$ is a ``Levi subalgebra'' of~$\mathfrak a_{1.3}$,
which is isomorphic to ${\rm sl}(2,\mathbb R)$.
We set $\mathfrak m_2:=\mathfrak r$.
Knowing~$\mathfrak r$ and using properties of megaideals~\cite{bihl2015a,popo2003a},
we can easily construct several other proper megaideals of the algebra~$\mathfrak a_{1.3}$,
\begin{gather*}
\mathfrak m_3:=\mathfrak m_2'=\mathfrak m_1\cap\mathfrak m_2=\big\langle P^2,H,R(\alpha),Z(\sigma)\big\rangle,\\
\mathfrak m_4:=\mathfrak m_{2(2)}=\big\langle R(\alpha),Z(\sigma)\big\rangle,\quad
\mathfrak m_2''=\big\langle R(1),Z(\sigma)\big\rangle,\quad
\mathfrak m_5:=\mathfrak z(\mathfrak m_3)=\big\{ Z(\sigma)\big\},\\
\mathfrak m_6:=\big\langle R(1),Z(1),Z(z_1)\big\rangle,\quad
\mathfrak m_7:=(\mathfrak m_2')^3=\big\langle Z(1),Z(z_1)\big\rangle.
\end{gather*}
In particular, to find the megaideal~$\mathfrak m_6$,
we use Proposition~1 from~\cite{card2013a} with
$\mathfrak i_0=\mathfrak m_2''$, $\mathfrak i_1=\mathfrak m_1$ and~$\mathfrak i_2=\mathfrak m_7$.

Overall, for the algebra~$\mathfrak a_{1.3}$ we obtain the proper megaideal $\mathfrak m_1=\mathfrak a_{1.3}'$
and the hierarchy
\[
\mathfrak r=:\mathfrak m_2\varsupsetneq\mathfrak m_3
\varsupsetneq\mathfrak m_4\varsupsetneq\mathfrak m_2''
\varsupsetneq\begin{array}{c}\mathfrak m_5\\\mathfrak m_6\end{array}
\varsupsetneq\mathfrak m_7
\]
of proper megaideals contained in its radical~$\mathfrak r$.
The only proper megaideal~$\mathfrak m_2''$ and the entire algebra~$\mathfrak a_{1.3}$ as its improper nonzero megaideal
is the sum of other found proper megaideals,
$\mathfrak m_2''=\mathfrak m_5+\mathfrak m_6$ and $\mathfrak a_{1.3}=\mathfrak m_1+\mathfrak m_2$.
This is not the case for the other listed megaideals,
and, therefore, they can be essential in the course of computing
the point-symmetry pseudogroup of the equation~\eqref{eq:dNs1.3RhoNe1ModifiedRedEq}
by the algebraic method.
Among them, only the megaideals~$\mathfrak m_6$ and~$\mathfrak m_7$
are finite-dimensional and, moreover, they are respectively three- and two-dimensional.
Note that within the framework of the above elementary approach,
we cannot check whether or not
the entire set of proper megaideals of the (infinite-dimensional) algebra~$\mathfrak a_{1.3}$
is exhausted by the megaideals~$\mathfrak m_j$, $j=1,\dots,7$, and~$\mathfrak m_2''$.

The maximal Lie invariance algebra~$\mathfrak a_{\rm iB}$ of the equation~\eqref{eq:dNs1.3RhoNe1InviscidBurgersEq}
is much wider than the algebra~$\mathfrak a_{1.3}$.
More specifically,
\[
\mathfrak a_{\rm iB}=\big\langle
\theta(z_1,z_2,h)(\p_{z_1}+h\p_{z_2}),\,
\varphi(h,z_2-hz_1)\p_{z_2},\,
\psi(h,z_2-hz_1)(z_1\p_{z_2}+\p_h)
\big\rangle,
\]
where $\theta$, $\varphi$ and~$\psi$ run through the sets of smooth functions
of the corresponding arguments, see~\cite{katk1965a} or \cite[Section~11.2]{CRC_v1}.
The differential substitution~$w_{22}=h$ induces a homomorphism~$\boldsymbol\upsilon$
of the algebra~$\mathfrak a_{1.3}$ into the algebra~$\mathfrak a_{\rm iB}$,
which can be represented as the composition of the second prolongation of the vector fields from~$\mathfrak a_{1.3}$
with the pushforward of the prolonged vector fields by the natural projection
from ${\rm J}^2(\mathbb R^2_{z_1,z_2}\times\mathbb R_w)$ onto $\mathbb R^2_{z_1,z_2}\times\mathbb R_{w_{22}}$
and substituting $w_{22}=h$.
Thus, the homomorphism~$\boldsymbol\upsilon$ maps
the Lie-symmetry vector fields $P^1$, $D^1$, $K$, $D^2$, $P^2$, $H$, $R(\alpha)$ and~$Z(\sigma)$
of the equation~\eqref{eq:dNs1.3RhoNe1ModifiedRedEq}
to the Lie-symmetry vector fields $\check P^1$, $\check D^1$, $\check K$, $\check D^2$, $\check P^2$, $\check H$, $0$ and~$0$
of the equation~\eqref{eq:dNs1.3RhoNe1InviscidBurgersEq}, respectively, where
\begin{gather*}
\begin{split}&
\check P^1=\p_{z_1},\quad
\check D^1=z_1\p_{z_1}-h\p_h,\quad
\check K=z_1^2\p_{z_1}+z_1z_2\p_{z_2}+(z_2-z_1h)\p_h,\\&
\check D^2=z_2\p_{z_2}+h\p_h,\quad
\check P^2=\p_{z_2},\quad
\check H=z_1\p_{z_2}+\p_h.
\end{split}
\end{gather*}
In other words, $\ker\boldsymbol\upsilon=\mathfrak a_{1.3}^{\rm triv}=\langle R(\alpha),Z(\sigma)\rangle$, and
$
\check{\mathfrak a}_{1.3}:=\mathop{\rm im}\boldsymbol\upsilon
=\langle\check P^1,\check D^1,\check K,\check D^2,\check P^2,\check H\rangle
$
is a~subalgebra of~$\mathfrak a_{\rm iB}$, which can be called
the algebra of Lie-symmetry vector fields of~\eqref{eq:dNs1.3RhoNe1InviscidBurgersEq}
that are induced by Lie-symmetry vector fields of~\eqref{eq:dNs1.3RhoNe1ModifiedRedEq}.
The algebra~$\check{\mathfrak a}_{1.3}$ coincides, up to notation of variables and vector fields,
with the algebra~$\mathfrak g$ from \cite[Section~3]{poch2017a}.
It is obvious that the algebra~$\check{\mathfrak a}_{1.3}$ is isomorphic to
the quotient algebra of the essential subalgebra~$\mathfrak a_{1.3}^{\rm ess}$ of~$\mathfrak a_{1.3}$
by its ideal $\langle R(1),Z(1),Z(z_1)\rangle=\mathfrak a_{1.3}^{\rm ess}\cap\ker\boldsymbol\upsilon$.
Note that the algebra~$\check{\mathfrak a}_{1.3}$ is also isomorphic to
the Lie algebra ${\rm aff}(2,\mathbb R)$ of the planar group ${\rm Aff}(2,\mathbb R)$.

\section{Point symmetry pseudogroup}\label{sec:dNRedEq1.3PointSymGroup}

\begin{theorem}\label{thm:dNRedEq1.3PointSymGroup}
The point-symmetry pseudogroup~$G_{1.3}$ of the equation~\eqref{eq:dNs1.3RhoNe1ModifiedRedEq}
is constituted by the transformations of the form
\begin{gather}\label{eq:dNs1.3RhoNe1ModifiedRedEqPointSymForm}
\begin{split}&
\tilde z_1=\frac{c_1z_1+c_2}{c_3z_1+c_4},\quad
\tilde z_2=\frac{z_2+c_5z_1+c_6}{c_3z_1+c_4},\\&
\tilde w=\frac{w}{\Delta(c_3z_1+c_4)}
-\frac{c_3}{\Delta(c_3z_1+c_4)^2}\frac{z_2^3}6-\frac{c_3c_6-c_4c_5}{\Delta(c_3z_1+c_4)^2}\frac{z_2^2}2
+W^1(z_1)z_2+W^0(z_1),
\end{split}
\end{gather}
where $c_1$, \dots, $c_6$ are arbitrary constants with $\Delta:=c_1c_4-c_2c_3\ne0$,
and $W^0$ and $W^1$ are arbitrary smooth functions of~$z_1$.
\end{theorem}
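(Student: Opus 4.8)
The plan is to combine the megaideal-based algebraic method with the direct method, as announced in the introduction. The algebraic part shows that the condition $\Phi_*\mathfrak a_{1.3}=\mathfrak a_{1.3}$ already forces $\Phi$ to have the form~\eqref{eq:dNs1.3RhoNe1ModifiedRedEqPointSymForm}, and then a short direct computation verifies that all such~$\Phi$ are genuine point symmetries of~\eqref{eq:dNs1.3RhoNe1ModifiedRedEq}.

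First I would set up the action of an arbitrary local diffeomorphism $\Phi\colon(z_1,z_2,w)\mapsto(\tilde z_1,\tilde z_2,\tilde w)$ on vector fields and impose that $\Phi_*$ preserves the finite-dimensional essential megaideals found in Section~\ref{sec:dNRedEq1.3MIA}. Since $\mathfrak m_7=\langle Z(1),Z(z_1)\rangle$ consists of vector fields of the form $(\text{affine in }z_1)\,\p_w$ and $\mathfrak m_6=\langle R(1),Z(1),Z(z_1)\rangle$ adds $z_2\p_w$, preservation of $\mathfrak m_6$ and $\mathfrak m_7$ shows that $\tilde w$ is affine in $(z_2,w)$ with coefficients depending only on $z_1$ and that $\tilde z_1,\tilde z_2$ do not depend on $w$. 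Next, pushing forward the $\mathrm{sl}(2,\mathbb R)$-part acting on $z_1$ through $\langle P^1,\,2D^1+D^2,\,K\rangle$ (equivalently, using the megaideal $\mathfrak m_1=\mathfrak a_{1.3}'$ and the Levi subalgebra) forces $\tilde z_1$ to be a Möbius transformation of $z_1$, giving the $c_1,\dots,c_4$ with $\Delta\neq0$; preservation of $\mathfrak m_3=\langle P^2,H,R(\alpha),Z(\sigma)\rangle$, which governs the $z_2$-translational and $H$-directions, then pins down $\tilde z_2$ as in~\eqref{eq:dNs1.3RhoNe1ModifiedRedEqPointSymForm} up to the shift parameters $c_5,c_6$. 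Finally, imposing preservation of the remaining generators $D^1$, $D^2$, $K$ (in particular the $\tfrac16z_2^3\p_w$ and $\tfrac12z_2^2\p_w$ tails of $K$ and $H$) determines the coefficients of $z_2^3$ and $z_2^2$ in $\tilde w$ to be exactly those displayed, while the affine-in-$z_2$ remainder stays free as $W^1(z_1)z_2+W^0(z_1)$; the overall factor $1/(\Delta(c_3z_1+c_4))$ in front of $w$ comes from matching the $w\p_w$-weights of $D^1$ and $D^2$.

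With the form~\eqref{eq:dNs1.3RhoNe1ModifiedRedEqPointSymForm} in hand, the second half is the direct verification: substitute this $\Phi$ into~\eqref{eq:dNs1.3RhoNe1ModifiedRedEq} and check that $w_{122}+w_{22}w_{222}=0$ is mapped to $\tilde w_{\tilde 1\tilde 2\tilde 2}+\tilde w_{\tilde 2\tilde 2}\tilde w_{\tilde 2\tilde 2\tilde 2}=0$ on solutions. It is cleanest to do this at the level of the substitution $w_{22}=h$: the transformations~\eqref{eq:dNs1.3RhoNe1ModifiedRedEqPointSymForm} project to point transformations of the inviscid Burgers equation~\eqref{eq:dNs1.3RhoNe1InviscidBurgersEq}, and these are readily recognized among the known point symmetries of~\eqref{eq:dNs1.3RhoNe1InviscidBurgersEq} (the $\mathrm{aff}(2,\mathbb R)\hookrightarrow\mathfrak a_{\mathrm{iB}}$ piece described at the end of Section~\ref{sec:dNRedEq1.3MIA}); the affine tail $W^1(z_1)z_2+W^0(z_1)$ lies in the kernel of $w\mapsto w_{22}$ and is an obvious symmetry because~\eqref{eq:dNs1.3RhoNe1ModifiedRedEq} only involves $w$ through $w_{22}$ and its derivatives. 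Since every transformation of the form~\eqref{eq:dNs1.3RhoNe1ModifiedRedEqPointSymForm} is thereby a point symmetry and, conversely, every point symmetry necessarily has this form, the two inclusions give $G_{1.3}$ exactly.

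\textbf{The main obstacle} I expect is the algebraic bookkeeping in showing that the stabilizer condition $\Phi_*\mathfrak a_{1.3}=\mathfrak a_{1.3}$ \emph{by itself} yields the full form, with no spurious extra freedom — in particular, disentangling how the infinite-dimensional megaideal $\mathfrak m_4=\langle R(\alpha),Z(\sigma)\rangle$ constrains $\tilde z_1$ and the leading coefficients of $\tilde w$, and making sure the Möbius group in $z_1$ is forced rather than merely allowed. A secondary subtlety is the pseudogroup bookkeeping: the functional parameters $W^0,W^1$ mean one is working with a genuine infinite-dimensional pseudogroup, so the direct-method check must be carried out for the general element, not just on a finite-parameter slice. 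Once the form is established, the verification step is essentially routine given the differential substitution to~\eqref{eq:dNs1.3RhoNe1InviscidBurgersEq}.
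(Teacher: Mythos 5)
Your plan is essentially the paper's proof: the megaideal-based algebraic method pins down the form \eqref{eq:dNs1.3RhoNe1ModifiedRedEqPointSymForm}, and a direct substitution (which, as you note, factors cleanly through the projection $w_{22}=h$ to the inviscid Burgers equation) confirms that every such transformation is a symmetry. Two details of your algebraic bookkeeping need correcting, though neither is fatal. First, the M\"obius form of $\tilde z_1$ does not come from the $\mathrm{sl}(2,\mathbb R)$ part: the Levi factor $\langle P^1,2D^1+D^2,K\rangle$ is not a megaideal, so the only condition available there is the much weaker $\Phi_*Q\in\mathfrak a_{1.3}$, which makes the computation considerably heavier (this is exactly the route of Theorem~\ref{thm:dNs1.3RhoNe1ModifiedDefSubalgOfMLIA} and the warning in Remark~\ref{rem:dNRedEq1.3OnOtherMegaidealChain}). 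In the paper the M\"obius dependence of $\tilde z_1$ on $z_1$ already drops out of $\Phi_*Z(1),\Phi_*Z(z_1)\in\mathfrak m_7$, since these give $W_w=a_{11}+a_{12}Z^1$ and $z_1W_w=a_{21}+a_{22}Z^1$ and hence $Z^1=(a_{21}-a_{11}z_1)/(a_{12}z_1-a_{22})$. Second, preservation of $\mathfrak m_6$ and $\mathfrak m_7$ makes $\tilde w$ affine in $w$ with $w$-coefficient depending only on $z_1$, but \emph{not} affine in $z_2$ — the cubic and quadratic $z_2$-terms in \eqref{eq:dNs1.3RhoNe1ModifiedRedEqPointSymForm} survive and are only fixed later by the conditions on $P^2$, $H$ and $D^2$; your later sentence about determining the $z_2^3$ and $z_2^2$ coefficients contradicts this intermediate claim, so the claim should simply be weakened.
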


\begin{proof}
Since the maximal Lie invariance algebra~$\mathfrak a_{1.3}$
of the equation~\eqref{eq:dNs1.3RhoNe1ModifiedRedEq} is infinite-dimensional
and has a number of megaideals,
it is convenient to find the pseudogroup~$G$ using the modification of the megaideal-based method
that was suggested in~\cite{malt2024a}.%
\footnote{%
The method to be applied is called \emph{algebraic}
in contrast with the \emph{direct} method, which is described,
e.g., in~\cite{king1998a} and \cite[Sections~2.2 and 4]{bihl2011b}.
The first version of the algebraic method for computing the point-symmetry group
of a system of differential equations, which was suggested by Hydon \cite{hydo1998a,hydo1998b,hydo2000b,hydo2000A},
is based on knowing the automorphism group of the corresponding maximal Lie invariance algebra~$\mathfrak g$,
and hence it is applicable only if the algebra~$\mathfrak g$ is finite-dimensional,
and, moreover, its dimension is not too large.
This is why Hydon's approach can be reinforced using classical results
on finite-dimensional Lie algebras~\cite{kont2019a}.
The other version of the algebraic method involves less knowledge on the structure of~$\mathfrak g$,
which is just a collection of megaideals of~$\mathfrak g$ and can be obtained even if $\dim\mathfrak g=\infty$.
It was suggested for the first time in~\cite{bihl2011b} and was developed and efficiently applied
in~\cite{boyk2024a,card2011a,card2013a,card2021a,malt2024a,opan2020a}.
}
The consideration is based on the following fact as a necessary condition to be satisfied
by any point symmetry of the equation~\eqref{eq:dNs1.3RhoNe1ModifiedRedEq}.
If a point transformation~$\Phi$ in the space with the coordinates $(z_1,z_2,w)$,
\[
\Phi\colon\ (\tilde z_1,\tilde z_2,\tilde w)=(Z^1,Z^2,W)
\]
with a tuple of smooth functions $(Z^1,Z^2,W)$ of $(z_1,z_2,w)$ with nonvanishing Jacobian,
is a point symmetry of~\eqref{eq:dNs1.3RhoNe1ModifiedRedEq},
then the pushforward~$\Phi_*$ of vector fields by~$\Phi$ generates
an automorphism of the algebra~$\mathfrak g:=\mathfrak a_{1.3}$.
Hence $\Phi_*\mathfrak g\subseteq\mathfrak g$ and, moreover,
$\Phi_*\mathfrak m_j\subseteq\mathfrak m_j$, $j=1,\dots,7$.

We choose the following linearly independent elements of~$\mathfrak g$:
\begin{gather}\label{eq:dNs1.3RhoNe1DefSubalgBasis}
Q^1:=Z(1),\quad Q^2:=Z(z_1),\quad Q^3:=R(1),\quad Q^4:=P^2,\quad Q^5:=H,\quad Q^6:=D^2.
\end{gather}
Since
$Q^1,Q^2\in\mathfrak m_7$, $Q^3\in\mathfrak m_6$, $Q^4,Q^5\in\mathfrak m_3$ and $Q^6\in\mathfrak m_2$, then
\begin{gather}\label{eq:dNRedEq1.3MainPushforwards}
\begin{split}&
\Phi_*Q^i=a_{i1}\tilde Z(1)+a_{i2}\tilde Z(\tilde z_1),\quad i=1,2,
\\&
\Phi_*Q^i=a_{i1}\tilde Z(1)+a_{i2}\tilde Z(\tilde z_1)+a_{i3}\tilde R(1),\quad i=3,
\\&
\Phi_*Q^i=\tilde Z(\tilde\sigma^i)+\tilde R(\tilde\alpha^i)+a_{i4}\tilde P^2+a_{i5}\tilde H,\quad i=4,5,
\\&
\Phi_*Q^i=\tilde Z(\tilde\sigma^i)+\tilde R(\tilde\alpha^i)+a_{i4}\tilde P^2+a_{i5}\tilde H
+a_{i6}\tilde D^2,\quad i=6.
\end{split}
\end{gather}
Here
$a_{ij}$ are constants with $\hat\Delta a_{33}(a_{44}a_{55}-a_{45}a_{54})\ne0$,
the other parameters are smooth functions of~$\tilde z_1$,
and we denote $a_{11}a_{22}-a_{12}a_{21}=:\hat\Delta$.

For each~$i\in\{1,\dots,6\}$,
we expand the $i$th equation from~\eqref{eq:dNRedEq1.3MainPushforwards}, split it componentwise and pull the result back by~$\Phi$.
We simplify the obtained constraints, taking into account constraints derived in the same way for preceding values of~$i$
and omitting the constraints satisfied identically in view of other constraints.

Thus, for $i=1,2,3$, we get
$Z^1_w=Z^2_w=0$, $W_w=a_{11}+a_{12}Z^1$, $z_1W_w=a_{21}+a_{22}Z^1$ and $z_2W_w=a_{31}+a_{32}Z^1+a_{33}Z^2$.
Hence
\begin{gather*}
Z^1=-\frac{a_{11}z_1-a_{21}}{a_{12}z_1-a_{22}},
\quad
Z^2=\frac{-\hat\Delta}{a_{33}(a_{12}z_1-a_{22})}z_2
+\frac{a_{32}}{a_{33}}\,\frac{a_{11}z_1-a_{21}}{a_{12}z_1-a_{22}}-\frac{a_{31}}{a_{33}},
\\
W_w=\frac{-\hat\Delta}{a_{12}z_1-a_{22}}.
\end{gather*}

The equations~\eqref{eq:dNRedEq1.3MainPushforwards} with $i=4,5$ result in the constraints
\begin{gather}\label{eq:dNRedEq1.3MainPushforwards4,5a}
Z^2_2=a_{44}+a_{45}Z^1,\quad
z_1Z^2_2=a_{54}+a_{55}Z^1,\\[.8ex]\label{eq:dNRedEq1.3MainPushforwards4,5b}
W_2=\frac{a_{45}}2(Z^2)^2+\tilde\alpha^4(Z^1)Z^2+\tilde\sigma^4(Z^1),\\[-.5ex]\label{eq:dNRedEq1.3MainPushforwards4,5c}
z_1W_2+\frac{z_2^2}2W_w=\frac{a_{55}}2(Z^2)^2+\tilde\alpha^5(Z^1)Z^2+\tilde\sigma^5(Z^1).
\end{gather}
Solving~\eqref{eq:dNRedEq1.3MainPushforwards4,5a} as a system
of linear algebraic equations with respect to $(Z^1,Z^2_2)$
and comparing the obtained expressions with the above ones,
we derive that the tuple $(a_{44},a_{45},a_{54},a_{55})$ is proportional to  $(a_{11},a_{12},a_{21},a_{22})$
with the multiplier $1/a_{33}$.
In particular, $a_{45}=a_{12}/a_{33}$ and $a_{55}=a_{22}/a_{33}$.
Therefore, integrating the equation~\eqref{eq:dNRedEq1.3MainPushforwards4,5b} gives
\[
W=\frac{-\hat\Delta}{a_{12}z_1-a_{22}}w
+\frac{a_{12}\hat\Delta^2}{6 a_{33}^3(a_{12}z_1-a_{22})^2}z_2^3
+W^2(z_1)z_2^2+W^1(z_1)z_2+W^0(z_1),
\]
where the function~$W^0$ arises due to the integration with respect to~$z_2$,
and the functions~$W^1$ and~$W^2$ are expressed via $a_{i,j}$, $i=1,2,3$, $j=1,2$, $a_{33}$,
$\tilde\sigma^4(Z^1)$ and $\tilde\alpha^4(Z^1)$
but the precise form of these expressions is not essential.
The equation~\eqref{eq:dNRedEq1.3MainPushforwards4,5c} results in the constraint
$a_{33}^3=\hat\Delta$
and expressions for $\tilde\sigma^5(Z^1)$ and $\tilde\alpha^5(Z^1)$,
which both are inessential as well.

From the equation~\eqref{eq:dNRedEq1.3MainPushforwards} with $i=6$,
we obtain the following explicit form of the coefficient~$W^2$:
\[W^2=\frac{-a_{32}\hat\Delta}{2 (a_{12}z_1-a_{22})^2}.\]
Re-denoting the constant parameters,
$c_1=a_{11}\hat\Delta^{-2/3}$,
$c_2=-a_{21}\hat\Delta^{-2/3}$,
$c_3=-a_{12}\hat\Delta^{-2/3}$,
$c_4=a_{22}\hat\Delta^{-2/3}$,
$c_5=(a_{31}a_{12}-a_{32}a_{11})\hat\Delta^{-1}$ and
$c_6=(a_{21}a_{32}-a_{22}a_{31})\hat\Delta^{-1}$,
we derive the representation~\eqref{eq:dNs1.3RhoNe1ModifiedRedEqPointSymForm}
for the point transformation~$\Phi$.

We can straightforwardly check by the direct substitution that
any point transformation of the form~\eqref{eq:dNs1.3RhoNe1ModifiedRedEqPointSymForm}
is a point symmetry of the equation~\eqref{eq:dNs1.3RhoNe1ModifiedRedEq}.

It is easy to check by the direct substitution that
any point transformation of this form is a point symmetry
of the equation~\eqref{eq:dNs1.3RhoNe1ModifiedRedEq}.
This means that the condition $\Phi_*\mathfrak g\subseteq\mathfrak g$
is not only necessary but also sufficient
for a point transformation~$\Phi$ belongs to
the point-symmetry pseudogroup~$G_{1.3}$ of the equation~\eqref{eq:dNs1.3RhoNe1ModifiedRedEq}.
\end{proof}

\begin{remark}
The proof of Theorem~\ref{thm:dNRedEq1.3PointSymGroup} shows
that the following implications hold true:
\begin{gather*}
\Phi_*\mathfrak m_7\subseteq\mathfrak m_7\ \Rightarrow\ \Phi_*\mathfrak m_5\subseteq\mathfrak m_5,
\\
\Phi_*\mathfrak m_7\subseteq\mathfrak m_7,\,\Phi_*\mathfrak m_6\subseteq\mathfrak m_6
\ \Rightarrow\ \Phi_*\mathfrak m_4\subseteq\mathfrak m_4,
\\
\Phi_*\mathfrak m_7\subseteq\mathfrak m_7,\,\Phi_*\mathfrak m_6\subseteq\mathfrak m_6,\,\Phi_*\mathfrak m_3\subseteq\mathfrak m_3,\,\Phi_*\mathfrak m_2\subseteq\mathfrak m_2
\ \Rightarrow\ \Phi_*\mathfrak m_1\subseteq\mathfrak m_1.
\end{gather*}
Therefore, the collection of
the megaideals~$\mathfrak m_7$, $\mathfrak m_6$, $\mathfrak m_3$ and~$\mathfrak m_2$
is optimal in the course of applying the megaideal-based method
for computing the pseudogroup~$G_{1.3}$.
Nevertheless, the claim that the conditions $\Phi_*\mathfrak m_i\subseteq\mathfrak m_i$, $i=1,4,5$,
give no new constraints for~the components of~$\Phi$
in comparison with the conditions $\Phi_*\mathfrak m_i\subseteq\mathfrak m_i$, $i=2,3,6,7$,
becomes evident only in the course of proving Theorem~\ref{thm:dNRedEq1.3PointSymGroup}.
\end{remark}

\begin{remark}
Moreover, applying the modified version of the megaideal-based method from~\cite{malt2024a}
to the equation~\eqref{eq:dNs1.3RhoNe1ModifiedRedEq},
we can replace the collection of the conditions $\Phi_*Q\subseteq\mathfrak m_i$
for $Q$ from a set of vector fields generating the megaideal~$\mathfrak m_i$, $i=2,3,6,7$,
by a selection of a finite number (which is six) of these conditions,
$\Phi_*Q^1,\Phi_*Q^2\subseteq\mathfrak m_7$, $\Phi_*Q^3\subseteq\mathfrak m_6$,
$\Phi_*Q^4,\Phi_*Q^4\subseteq\mathfrak m_3$ and $\Phi_*Q^6\subseteq\mathfrak m_2$
under the notation~\eqref{eq:dNs1.3RhoNe1DefSubalgBasis}.
\end{remark}

\begin{remark}
The span~$\mathfrak h$ 
of the selected linearly independent vector fields~$Q^1$,~\dots,~$Q^6$
is closed with respect to the Lie bracket of vector fields, i.e.,
it is a subalgebra of the algebra~$\mathfrak a_{1.3}$.
This phenomenon in the course of applying the above modification of the megaideal-based method
was also observed in~\cite[Remark~6]{malt2024a} and~\cite[Remark~24]{boyk2024a},
but it is still not clear whether its appearance is occasional
or one can always choose such appropriate vector fields
that they substitute a basis of a subalgebra of the corresponding maximal Lie invariance algebra.
\end{remark}

\begin{remark}\label{rem:dNRedEq1.3OnOtherMegaidealChain}
When proving Theorem~\ref{thm:dNRedEq1.3PointSymGroup},
we can replace the megaideal~$\mathfrak m_2$ by~$\mathfrak m_1$,
and then the selected subalgebra $\mathfrak h:=\langle Z(1),Z(z_1),R(1),P^2,H,D^2\rangle$
should be replaced by the subalgebra $\tilde{\mathfrak h}:=\langle Z(1),Z(z_1),R(1),P^2,H,P^1,2D^1+D^2\rangle$
of greater dimension.
At the same time, this replacement complicates the related computations.
\end{remark}

\begin{corollary}
The contact-symmetry pseudogroup~$G_{1.3\rm c}$ of reduced equation~1.3
coincides with the first prolongation~$G_{1.3(1)}$ of the pseudogroup~$G_{1.3}$.
\end{corollary}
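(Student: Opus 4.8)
The inclusion $G_{1.3(1)}\subseteq G_{1.3\rm c}$ is evident, since the first prolongation of any point symmetry of~\eqref{eq:dNs1.3RhoNe1ModifiedRedEq} is a contact symmetry of this equation. For the converse inclusion, an infinitesimal argument based on the coincidence $\mathfrak a_{1.3\rm c}=\mathfrak a_{1.3(1)}$ would settle only the identity components, so the plan is to use the algebraic method, reproducing the argument of the proof of Theorem~\ref{thm:dNRedEq1.3PointSymGroup} one jet order higher, on the first-order jet space~${\rm J}^1$ with the coordinates $(z_1,z_2,w,w_1,w_2)$. Here the key input is the fact recalled above that the contact invariance algebra~$\mathfrak a_{1.3\rm c}$ coincides with the first prolongation~$\mathfrak a_{1.3(1)}$ of~$\mathfrak a_{1.3}$. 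Let $\Psi$ be a contact symmetry of~\eqref{eq:dNs1.3RhoNe1ModifiedRedEq}. Then the pushforward~$\Psi_*$ maps~$\mathfrak a_{1.3\rm c}$ onto itself, hence induces an automorphism of~$\mathfrak a_{1.3(1)}$; since the first-prolongation map is a Lie-algebra isomorphism of~$\mathfrak a_{1.3}$ onto~$\mathfrak a_{1.3(1)}$, it carries the megaideals~$\mathfrak m_j$ of~$\mathfrak a_{1.3}$ to megaideals~$\mathfrak m_{j(1)}$ of~$\mathfrak a_{1.3(1)}$, and therefore $\Psi_*\mathfrak m_{j(1)}\subseteq\mathfrak m_{j(1)}$ for every~$j$.

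I would then run the computation using the optimal collection of megaideals~$\mathfrak m_7$, $\mathfrak m_6$, $\mathfrak m_3$, $\mathfrak m_2$ employed in the proof of Theorem~\ref{thm:dNRedEq1.3PointSymGroup}, now in prolonged form; for instance, $\mathfrak m_{7(1)}=\langle\p_w,\,z_1\p_w+\p_{w_1}\rangle$ and $\mathfrak m_{6(1)}=\langle\p_w,\,z_1\p_w+\p_{w_1},\,z_2\p_w+\p_{w_2}\rangle$. Writing the base-coordinate part of~$\Psi$ as a tuple $(Z^1,Z^2,W)$ and its jet part as a tuple $(U^1,U^2)$, each a function of $(z_1,z_2,w,w_1,w_2)$ and the whole map having nonvanishing Jacobian, I would expand every inclusion $\Psi_*\mathfrak m_{j(1)}\subseteq\mathfrak m_{j(1)}$, split it componentwise and pull it back by~$\Psi$, exactly as in the proof of Theorem~\ref{thm:dNRedEq1.3PointSymGroup}. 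The conditions coming from~$\mathfrak m_{7(1)}$, $\mathfrak m_{6(1)}$ and~$\mathfrak m_{3(1)}$ should first force $Z^1$, $Z^2$ and~$W$ to be independent of~$w_1$ and~$w_2$, i.e., $\Psi$ to be projectable onto~${\rm J}^0$, and then, together with the condition from~$\mathfrak m_{2(1)}$, impose on $Z^1$, $Z^2$ and~$W$ precisely the constraints derived in the proof of Theorem~\ref{thm:dNRedEq1.3PointSymGroup}. Thus the base part of~$\Psi$ is a transformation~$\Phi$ of the form~\eqref{eq:dNs1.3RhoNe1ModifiedRedEqPointSymForm}, so $\Phi\in G_{1.3}$; a contact transformation projecting onto a genuine point transformation of~${\rm J}^0$ is uniquely determined by the latter and coincides with its first prolongation, whence $\Psi=\Phi_{(1)}\in G_{1.3(1)}$. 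As at the end of the proof of Theorem~\ref{thm:dNRedEq1.3PointSymGroup}, a concluding direct substitution confirms that every~$\Phi_{(1)}$ with $\Phi\in G_{1.3}$ is indeed a contact symmetry, so that the above algebraic necessary condition is also sufficient and $G_{1.3\rm c}=G_{1.3(1)}$.

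The main obstacle is the projectability step, namely deducing $Z^1_{w_1}=Z^1_{w_2}=Z^2_{w_1}=Z^2_{w_2}=W_{w_1}=W_{w_2}=0$ from preservation of the prolonged megaideals, rather than merely constraining these components. This exclusion of non-prolonged contact symmetries is the actual content of the corollary, since on~${\rm J}^1$ with a single dependent variable genuine contact transformations do exist, and its infinitesimal counterpart is exactly the identity $\mathfrak a_{1.3\rm c}=\mathfrak a_{1.3(1)}$. I expect projectability to follow from the structure of the pushforwards of the simplest generators: $\Psi_*(\p_w)$, $\Psi_*(z_1\p_w+\p_{w_1})$ and $\Psi_*(z_2\p_w+\p_{w_2})$ must lie in a fixed finite-dimensional span of vector fields having no $\p_{z_1}$- and $\p_{z_2}$-components, and pulling this back pins down and then kills the dependence of $Z^1$, $Z^2$ and~$W$ on~$w_1$ and~$w_2$. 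Once this is in place, the rest is a verbatim repetition of the computation in the proof of Theorem~\ref{thm:dNRedEq1.3PointSymGroup}.
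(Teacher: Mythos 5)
Your proposal is correct and follows essentially the same route as the paper's proof, which likewise reruns the megaideal-based algebraic method on the first prolongations $\mathfrak m_{j(1)}$ and extracts projectability from the conditions on $\Psi_*Q^i_{(1)}$ for $i=1,2,3$ (i.e., the prolonged $Z(1)$, $Z(z_1)$, $R(1)$), which force $Z^k_w=Z^k_{w_1}=Z^k_{w_2}=0$ and then, via the contact condition, $W_{w_1}=W_{w_2}=0$. The step you flag as the "main obstacle" goes through exactly by the mechanism you describe, so there is no gap.
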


\begin{proof}
Mimicking part~(ii) of the proof of \cite[Theorem~2]{boyk2024a},
we follow the proof of Theorem~\ref{thm:dNRedEq1.3PointSymGroup}
and use the same version of the algebraic method,
just extending it to contact vector fields and contact transformations.
In other words, we replace
the maximal Lie invariance algebra~$\mathfrak a_{1.3}$,
its megaideals~$\mathfrak m_j$, $j=1,\dots,7$,
and a point transformation~$\Phi$
by the contact invariance algebra~$\mathfrak a_{1.3\rm c}$
coinciding with the first prolongation $\mathfrak a_{1.3(1)}$ of~$\mathfrak a_{1.3}$,
the first prolongations~$\mathfrak m_{j(1)}$ of megaideals~$\mathfrak m_j$, $j=1,\dots,7$,
and a contact transformation
\[
\Psi\colon (\tilde z_1,\tilde z_2,\tilde w,\tilde w_{\tilde z_1},\tilde w_{\tilde z_2})
=(Z^1,Z^2,W,W^{z_1},W^{z_2}),
\]
respectively.
The right-hand side of the above equality is given by a tuple of smooth functions
of $(z_1,z_2,w,w_1,w_2)$ with nonvanishing Jacobian,
which additionally satisfies the contact condition
\begin{gather*}
(Z^k_l+Z^k_ww_l)W^{z_k}=W_l+W_ww_l,\quad
Z^k_{w_l}W^{z_k}=W_{w_l}.
\end{gather*}
Here and in what follows, the indices~$k$ and~$l$ run through the set $\{1,2\}$,
we assume summation for repeated indices,
and supplementing subscripts with ``(1)'' denotes the first prolongation of the corresponding object.
If the transformation~$\Psi$ is a contact symmetry of the equation~\eqref{eq:dNs1.3RhoNe1ModifiedRedEq},
then $\Psi_*\mathfrak m_{j(1)}\subseteq\mathfrak m_{j(1)}$, $j=1,\dots,7$,
where $\Psi_*$ is the pushforward of contact vector fields by~$\Psi$.
The counterpart of the collection of equations~\eqref{eq:dNRedEq1.3MainPushforwards} for the contact case
is handled in the same way as described after~\eqref{eq:dNRedEq1.3MainPushforwards}.
Successively considering the equations with $i=1$, $i=2$ and $i=3$,
we in particular derive the constraints $Z^k_w=0$, $Z^k_{w_1}=0$ and $Z^k_{w_2}=0$, respectively.
In view of the contact condition, this implies that $W_{w_l}=0$ as well.
Therefore, the contact transformation~$\Psi$ is the first prolongation
of a point transformation in the space $\mathbb R^3_{z_1,z_2,w}$,
which means that $G_{1.3\rm c}=G_{1.3(1)}$.
\end{proof}

\begin{remark}\label{rem:dNs1.3RhoNe1ModifiedDefSymGroup}
According to the proof of Theorem~\ref{thm:dNRedEq1.3PointSymGroup},
the necessary condition $\Phi_*\mathfrak a_{1.3}\subseteq\mathfrak a_{1.3}$
for elements~$\Phi$ of the pseudogroup~$G_{1.3}$ in fact defines this pseudogroup completely.
In other words, the pseudogroup~$G_{1.3}$ coincides with the stabilizer of the algebra~$\mathfrak a_{1.3}$
in the pseudogroup of local diffeomorphisms in the space $\mathbb R^3_{z_1,z_2,w}$.
Thus, the application of the direct method in the course of the second part
of the computing~$G_{1.3}$ within the framework of the algebraic method
reduces to the trivial check that all the point transformations
singled out by the condition $\Phi_*\mathfrak a_{1.3}\subseteq\mathfrak a_{1.3}$
are indeed point symmetries of the equation~\eqref{eq:dNs1.3RhoNe1ModifiedRedEq}.
In the literature, there is only one example of a system of differential equations
with the above property, given by the dispersionless Nyzhnyk equation~\eqref{eq:dN}, see~\cite{boyk2024a}.
The analogous claims also hold for the algebra~$\mathfrak a_{1.3\rm c}$ and the pseudogroup~$G_{1.3\rm c}$.
\end{remark}

By analogy with the algebra~$\mathfrak a_{1.3}$,
considering the modified composition of transformations \cite{kova2023b,kova2023a} as the pseudogroup operation,
we can represent the pseudogroup~$G_{1.3}$ as the product of
its subgroup~$G_{1.3}^{\rm ess}$
and its normal pseudosubgroup $G_{1.3}^{\rm triv}$.
Here
the subgroup~$G_{1.3}^{\rm ess}$ consists of the transformations
of the form~\eqref{eq:dNs1.3RhoNe1ModifiedRedEqPointSymForm} with $W^1_1=W^0_{11}=0$
and their natural domains.
The normal pseudosubgroup $G_{1.3}^{\rm triv}$ is singled out from~$G_{1.3}$
by the constraints $c_1=c_4=1$ and $c_2=c_3=c_5=c_6=0$,
i.e., it is constituted by the transformations of the form
$\tilde z_1=z_1$, $\tilde z_2=z_2$, $\tilde w=w+W^1(z_1)z_2+W^0(z_1)$.
The intersection $G_{1.3}^{\rm ess}\cap G_{1.3}^{\rm triv}$ is a normal subgroup of~$G_{1.3}^{\rm ess}$
and consists of the globally defined transformations
$\tilde z_1=z_1$, $\tilde z_2=z_2$, $\tilde w=w+a_2z_2+a_1z_1+a_0$,
where $a_0$, $a_1$ and~$a_2$ are arbitrary constants.

The vector fields~\eqref{eq:dNMIASubmodel} spanning the algebra~$\mathfrak a_{1.3}$
are respectively associated with the following parameter families of transformations
from the pseudogroup~$G_{1.3}$:
\begin{gather*}
\arraycolsep=0ex
\begin{array}{llllll}
\mathscr P^1(c_2)\colon\quad & \tilde z_1=z_1+c_2,              \quad& \tilde z_2=z_2,                  \quad& \tilde w=w,                \\[1.3ex]
\mathscr D^1(c_1)\colon\quad & \tilde z_1=c_1z_1,               \quad& \tilde z_2=z_2,                  \quad& \tilde w=c_1^{-1}w,        \\
\mathscr K(c_3)  \colon\quad & \tilde z_1=\dfrac{z_1}{1-c_3z_1},\quad& \tilde z_2=\dfrac{z_2}{1-c_3z_1},\quad& \tilde w=\dfrac w{1-c_3z_1}+\dfrac{c_3z_2^3}{6(1-c_3z_1)^2},\\[1.8ex]
\mathscr D^2(\tilde c_4)\colon\quad & \tilde z_1=z_1,           \quad& \tilde z_2=\tilde c_4z_2,        \quad& \tilde w=\tilde c_4^{\,3}w,\\[1.3ex]
\mathscr P^2(c_6)\colon\quad & \tilde z_1=z_1,                  \quad& \tilde z_2=z_2+c_6,              \quad& \tilde w=w,                \\[1.3ex]
\mathscr H(c_5)  \colon\quad & \tilde z_1=z_1,                  \quad& \tilde z_2=z_2+c_5z_1,           \quad& \tilde w=w+\frac12c_5z_2^2+\frac12c_5^{\,2}z_1z_2+\frac16c_5^{\,3}z_1^2,\\[1.3ex]
\mathscr R(W^1)  \colon\quad & \tilde z_1=z_1,                  \quad& \tilde z_2=z_2,                  \quad& \tilde w=w+W^1(z_1)z_2,    \\[1.3ex]
\mathscr Z(W^0)  \colon\quad & \tilde z_1=z_1,                  \quad& \tilde z_2=z_2,                  \quad& \tilde w=w+W^0(z_1),
\end{array}
\end{gather*}
where $c_1$, $c_2$, $c_3$, $\tilde c_4$, $c_5$ and $c_6$ are arbitrary constants with $c_1\ne0$
and $W^0$ and $W^1$ are arbitrary smooth functions of~$z_1$.
Each of these families is singled out from~$G_{1.3}$ by setting all
the constant and functional parameters in~\eqref{eq:dNs1.3RhoNe1ModifiedRedEqPointSymForm}, except the single associated one,
to the trivial values corresponding to the identity transformation,
which are 1, 0, 0, 1, 0, 0, 0, 0, for $c_1$, \dots, $c_6$, $W^0$ and~$W^1$, respectively.
The exception is the family $\{\mathscr D^2(\tilde c_4)\}$,
for which we confine, to the trivial values, all the parameters except $c_1$ and $c_4$,
set $c_1=c_4$ and re-denote $\tilde c_4=1/c_4$.
Thus, each of these families is a (pseudo)subgroup of~$G_{1.3}$ parameterized by a single constant or functional parameters,
and each element of~$G_{1.3}$ can be represented as a composition of transformations from these (pseudo)subgroups.
It is natural to consider these transformations as elementary point symmetry transformations of the equation~\eqref{eq:dNs1.3RhoNe1ModifiedRedEq}.
The (pseudo)subgroups
$\{\mathscr P^1(c_2)\}$,
$\{\mathscr D^1(c_1)\mid c_1>0\}$,
$\{\mathscr K(c_3)\}$,
$\{\mathscr D^2(\tilde c_4)\mid \tilde c_4>0\}$,
$\{\mathscr P^2(c_6)\}$,
$\{\mathscr H(c_5)\}$,
$\{\mathscr R(W^1)\}$,
$\{\mathscr Z(W^0)\}$
are generated by the vector fields~$P^1$, $D^1$, $K$, $D^2$, $P^2$ and $H$
and the collection of vector fields $\{R(\alpha)\}$ and $\{Z(\sigma)\}$, respectively.
The families $\{\mathscr D^1(c_1)\}$ and $\{\mathscr D^2(\tilde c_4)\}$
also contain the compositions of elements of their subfamilies
$\{\mathscr D^1(c_1)\mid c_1>0\}$ and $\{\mathscr D^2(\tilde c_4)\mid \tilde c_4>0\}$
with the discrete point symmetry transformation $(\tilde z_1,\tilde z_2,\tilde w)=(-z_1,z_2,-w)$
and the transformation $(\tilde z_1,\tilde z_2,\tilde w)=(z_1,-z_2,-w)$ belonging to the identity component of~$G_{1.3}$, respectively.

The Lie algebra spanned by the vector fields $P^1$, $2D^1+D^2$ and $K$
is isomorphic to the algebra ${\rm sl}(2,\mathbb R)$.
This is why it is convenient to replace the basis element~$K$ by $Q^+:=P^1+K$
since the modified basis agrees with the Iwasawa decomposition of the corresponding Lie group.
The one-parameter group generated by $Q^+$ consists of the transformations
\begin{gather}\label{eq:dNPointTransformP1K}
\begin{split}
\mathscr Q^+(\tilde c_3)\colon\ &\tilde z_1=\frac{\sin\tilde c_3+z_1\cos\tilde c_3}{\cos\tilde c_3-z_1\sin\tilde c_3},\quad
\tilde z_2=\frac{z_2}{\cos\tilde c_3-z_1\sin\tilde c_3},\\&
\tilde w=\frac{w}{\cos\tilde c_3-z_1\sin\tilde c_3}
+\frac{\sin\tilde c_3}{(\cos\tilde c_3-z_1\sin\tilde c_3)^2}\frac{z_2^3}6,
\end{split}
\end{gather}
where $\tilde c_3$ is an arbitrary constant parameter,
which is determined by the corresponding transformation up to the summands $2\pi k$, $k\in\mathbb Z$.
The transformation~\eqref{eq:dNPointTransformP1K} with $\tilde c_3=-\pi/2$ is
\[
\mathscr K'\colon\quad \tilde z_1=-\dfrac1{z_1},\quad \tilde z_2=\dfrac{z_2}{z_1},\quad \tilde w=\dfrac w{z_1}-\dfrac{z_2^3}{6z_1^2},
\]
which it can be represented as the composition~$\mathscr P^1(-1)\circ\mathscr K(-1)\circ\mathscr P^1(-1)$.
The value $\tilde c_3=\pi$ corresponds to the transformation
\[
\mathscr J\colon\quad \tilde z_1=z_1,\quad \tilde z_2=-z_2,\quad \tilde w=-w.
\]
This shows that under the chosen interpretation of linear fractional transformations
as that in \cite{kova2023b,kova2023a},
the transformations~$\mathscr K'$ and~$\mathscr J$ belong to the identity components of~$G_{1.3}^{\rm ess}$ and of~$G_{1.3}$,
and thus they are not discrete point symmetry transformations of the equation~\eqref{eq:dNs1.3RhoNe1ModifiedRedEq}
although they look to be those.
Therefore, a complete list of discrete point symmetry transformations of the equation~\eqref{eq:dNs1.3RhoNe1ModifiedRedEq}
that are independent up to composing with each other
and with continuous point symmetry transformations of~\eqref{eq:dNs1.3RhoNe1ModifiedRedEq}
is exhausted by the single involution alternating the signs of $(z_1,w)$,
$\mathscr I\colon(\tilde z_1,\tilde z_2,\tilde w)=(-z_1,z_2,-w)$.

Similar to the algebras~$\mathfrak a_{1.3}$ and~$\mathfrak a_{\rm iB}$,
the substitution~$w_{22}=h$ induces a homomorphism~$\Upsilon$
of the pseudogroup~$G_{1.3}$ into the point symmetry pseudogroup~$G_{\rm iB}$
of the inviscid Burgers equation~\eqref{eq:dNs1.3RhoNe1InviscidBurgersEq}.
The homomorphism~$\Upsilon$ can be represented as the composition of the second prolongation of the transformations from~$G_{1.3}$
with the pushforward of the prolonged transformations by the natural projection
from ${\rm J}^2(\mathbb R^2_{z_1,z_2}\times\mathbb R_w)$ onto $\mathbb R^2_{z_1,z_2}\times\mathbb R_{w_{22}}$.
Thus, the transformation components for $z_1$ and $z_2$ are preserved,
and the transformation component for $h$ coincides with that for $w_{22}$.
In other words, the pseudogroup~$\mathop{\rm im}\Upsilon$
consists of the transformations
\begin{gather}\label{eq:dNSubmodelTransportInducedPointSymForm}
\tilde z_1=\frac{c_1z_1+c_2}{c_3z_1+c_4},\quad
\tilde z_2=\frac{z_2+c_5z_1+c_6}{c_3z_1+c_4},\quad
\tilde h=\frac{(c_3z_1+c_4)h-c_3z_2-c_3c_6+c_4c_5}\Delta,
\end{gather}
where $c_1$, \dots, $c_6$ are arbitrary constants with $\Delta:=c_1c_4-c_2c_3\ne0$.
Properly defining the domains of transformations of the form~\eqref{eq:dNSubmodelTransportInducedPointSymForm}
and their composition in the manner of \cite{kova2023b,kova2023a},
we can convert the pseudogroup~$\mathop{\rm im}\Upsilon$ into a group~$\check G_{1.3}$,
which is naturally isomorphic to the group constituted by the matrices of the form
\[
\begin{pmatrix}c_1&c_2&0\\c_3&c_4&0\\c_5&c_6&1\end{pmatrix}\quad\mbox{with}\quad
\Delta:=\begin{vmatrix}c_1&c_2\\c_3&c_4\end{vmatrix}\ne0
\]
and thus antiisomorphic to the general affine group~${\rm Aff}(2,\mathbb R)$.
Summing up, the kernel of the homomorphism~$\Upsilon$
coincides with the pseudosubgroup $G_{1.3}^{\rm triv}$,
whereas the group~$\check G_{1.3}$ associated with its image
is isomorphic to the quotient group of~$G_{1.3}^{\rm ess}$ by $G_{1.3}^{\rm ess}\cap G_{1.3}^{\rm triv}$.

By
$\check{\mathscr P}^1(c_2)$,
$\check{\mathscr D}^1(c_1)$,
$\check{\mathscr K}(c_3)  $,
$\check{\mathscr D}^2(\tilde c_4)$,
$\check{\mathscr P}^2(c_6)$,
$\check{\mathscr H}(c_5)$ and
$\check{\mathscr Q}^+(\tilde c_3)$
we denote the images of
$\mathscr P^1(c_2)$,
$\mathscr D^1(c_1)$,
$\mathscr K(c_3)  $,
$\mathscr D^2(\tilde c_4)$,
$\mathscr P^2(c_6)$,
$\mathscr H(c_5)$ and
$\mathscr Q^+(\tilde c_3)$
with respect to the homomorphism~$\Upsilon$, respectively.

\section{Defining properties of Lie symmetries}\label{sec:dNs1.3RhoNe1DefPropertiesOfLieSyms}

In view of Remark~\ref{rem:dNs1.3RhoNe1ModifiedDefSymGroup},
it is of interest to look for other defining properties of Lie symmetries
of the equation~\eqref{eq:dNs1.3RhoNe1ModifiedRedEq}.
The most interesting among them is this equation is completely defined
not only by its (infinite-dimensional) maximal Lie invariance algebra~$\mathfrak a_{1.3}$
but also by a proper (finite-dimensional) subalgebra of~$\mathfrak a_{1.3}$.

\begin{theorem}\label{thm:dNs1.3RhoNe1ModifiedDefSubalg}
(i) A genuine%
\footnote{
Here the attribute ``genuine'' of the corresponding partial differential equation means
that it cannot be represented in or transformed to a form
where one of the independent variables plays a role of a parameter.
}
partial differential equation of maximal rank of order not greater than three
in two independent variables $z_1$ and~$z_2$ and dependent variable~$w$ admits the algebra
\[
\mathfrak p:=\big\langle
P^1,\,P^2,\,Z(1),\,Z(z_1),\,Z(z_1^2),\,Z(z_1^3),\,R(1),\,R(z_1),\,R(z_1^2),\,H,\,D^2\big\rangle
\]
as its Lie invariance algebra if and only if it coincides with the equation~\eqref{eq:dNs1.3RhoNe1ModifiedRedEq}.

(ii) A differential equation of maximal rank of order not greater than three
in two independent variables $z_1$ and~$z_2$ and dependent variable~$w$ admits the algebra
\[
\mathfrak q:=\big\langle
P^1,\,P^2,\,Z(1),\,Z(z_1),\,Z(z_1^2),\,Z(z_1^3),\,R(1),\,R(z_1),\,R(z_1^2),\,H,\,2D^1+D^2,\,K
\big\rangle
\]
as its Lie invariance algebra if and only if it coincides with the equation~\eqref{eq:dNs1.3RhoNe1ModifiedRedEq}.

\end{theorem}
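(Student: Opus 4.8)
The plan is to prove both directions of each equivalence by exploiting the structure of the given subalgebras, working primarily with the jet-space representation of infinitesimal invariance. For the ``only if'' direction, note that we already know from Section~\ref{sec:dNRedEq1.3MIA} that $\mathfrak a_{1.3}$ is the maximal Lie invariance algebra of~\eqref{eq:dNs1.3RhoNe1ModifiedRedEq}, and that $\mathfrak p,\mathfrak q\subset\mathfrak a_{1.3}$; hence~\eqref{eq:dNs1.3RhoNe1ModifiedRedEq} indeed admits each of them as a Lie invariance algebra, and that direction is immediate.

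The substance is the ``if'' direction: starting from an arbitrary third-order PDE $\mathcal E$ of maximal rank in $(z_1,z_2,w)$ that is invariant under $\mathfrak p$ (resp.\ $\mathfrak q$), one must deduce that $\mathcal E$ coincides with~\eqref{eq:dNs1.3RhoNe1ModifiedRedEq}. First I would write $\mathcal E$ as $\{F(z_1,z_2,w_{(3)})=0\}$ and successively impose the infinitesimal invariance conditions $\mathrm{pr}_{(3)}v\,F=0$ on the manifold $\mathcal E$ for $v$ running through the listed generators. The abelian generators $Z(\sigma)$ with $\sigma\in\{1,z_1,z_1^2,z_1^3\}$ and $R(\alpha)$ with $\alpha\in\{1,z_1,z_1^2\}$ are the cheap workhorses: their prolongations only involve derivatives of~$w$, so invariance under them forces $F$ to be independent of a large block of jet variables (those of order $\le 1$, plus several of the higher mixed ones), cutting $\mathcal E$ down to depend only on the genuinely nonlinear derivatives $w_{122}$, $w_{222}$, $w_{22}$, and a few others. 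Then $P^1=\partial_{z_1}$ and $P^2=\partial_{z_2}$ remove the explicit dependence on $z_1,z_2$. Finally $H=z_1\partial_{z_2}+\tfrac12 z_2^2\partial_w$ and $D^2=z_2\partial_{z_2}+3w\partial_w$ (resp.\ the $\mathrm{sl}(2,\mathbb R)$-type generators $2D^1+D^2$ and $K$) act nontrivially on the surviving derivatives: writing out their second and third prolongations and imposing invariance yields a small overdetermined linear (and scaling-homogeneity) system for the reduced $F$, whose solution space is one-dimensional and spanned exactly by $w_{122}+w_{22}w_{222}$. The genuineness hypothesis in part~(i) is what rules out the degenerate ``$z_2$-free'' or ``$z_1$-free'' solutions of that reduced system (equations like $w_{22}=\text{const}$ treated as ODEs), so that in~(i) an $11$-dimensional algebra suffices once genuineness is assumed, whereas in~(ii) the extra generator $K$ (which mixes $z_1$ and $z_2$ essentially) removes those degenerate branches on its own, at the cost of needing $12$ generators.

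I would organize the computation as a short sequence of lemmas, each recording which jet variables $F$ has been shown independent of after the current batch of generators, so the reductions compose cleanly; the bookkeeping of which prolonged coefficients are nonzero (especially for $K$, whose third prolongation is the messiest) is where care is needed but no cleverness. The main obstacle I anticipate is precisely controlling the third-order prolongations of $H$ and $K$: one must be sure that, after the abelian generators have thinned $F$ down, the remaining invariance conditions from $H$, $D^2$ (resp.\ $2D^1+D^2$, $K$) genuinely pin $F$ to a single equation and do not leave, say, an undetermined function of a scaling-invariant combination of $w_{22}$, $w_{122}$, $w_{222}$. Checking that the joint weight system of $D^1$, $D^2$ together with the affine action of $H$ (and, in case~(ii), the conformal-type action of $K$) admits only the ratio realized by~\eqref{eq:dNs1.3RhoNe1ModifiedRedEq} is the crux; I would verify it by comparing the induced scaling weights of $w_{122}$ and $w_{22}w_{222}$ under $D^1$ and $D^2$ and confirming they agree, and then using the $H$-action (or $K$-action) to exclude any additive freedom. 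Once that is established, a direct substitution check that~\eqref{eq:dNs1.3RhoNe1ModifiedRedEq} is invariant under every listed generator — which is already implicit in $\mathfrak p,\mathfrak q\subset\mathfrak a_{1.3}$ — closes the argument in both parts.
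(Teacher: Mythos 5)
Your plan is essentially the paper's own proof: successively imposing invariance under the abelian generators $Z(\cdot)$, $R(\cdot)$ and the translations $P^1,P^2$ reduces $F$ to a function of $w_{22}$, $w_{122}$, $w_{222}$ alone, then $H$ forces dependence only on $w_{222}$ and $\omega:=w_{122}+w_{22}w_{222}$, and finally $D^2$ together with genuineness (resp.\ $2D^1+D^2$ and $K$) pins the equation down to $\omega=0$, exactly as you outline. Two cosmetic slips only: you have interchanged the labels ``if'' and ``only if'' (as the theorem is phrased, the trivial direction is the ``if'' part), and the degenerate branch that genuineness excludes in part~(i) is $w_{222}=\mathrm{const}$ rather than $w_{22}=\mathrm{const}$, since after the $H$-reduction $F$ no longer depends on $w_{22}$ separately.
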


\begin{proof}
The ``if''-parts of both statements~(i) and~(ii) are trivial.
Let us prove their ``only if''-parts.
Since the algebras~$\mathfrak p$ and~$\mathfrak q$ have a quite large intersection,
\[
\mathfrak p\cap\mathfrak q=\big\langle
P^1,\,P^2,\,Z(1),\,Z(z_1),\,Z(z_1^2),\,Z(z_1^3),\,R(1),\,R(z_1),\,R(z_1^2),\,H
\big\rangle,
\]
the major part of the proof is the same for~(i) and~(ii).

Consider a differential equation $F=0$, where $F=F[w]$ is a differential function~\cite[p.~288]{olve1993A} of~$w$
with $\ord F\leqslant3$,
and denote by~$\mathcal M$ the manifold defined by it in the third-order jet space.
Suppose that this equation admits $\mathfrak p\cap\mathfrak q$ as its Lie invariance algebra.
Successively taking into account the invariance
with respect to~$P^1$, $P^2$, $Z(1)$, $Z(z_1)$, $Z(z_1^2)$, $Z(z_1^3)$, $R(1)$, $R(z_1)$ and~$R(z_1^2)$,
we obtain that up to factoring out an inessential nonvanishing multiplier,
the differential function~$F$ can be assumed
not to depend on~$z_1$, $z_2$, $w_{0,0}$, $w_{1,0}$, $w_{2,0}$, $w_{3,0}$, $w_{0,1}$, $w_{1,1}$ and $w_{2,1}$.
In other words, it depends at most on~$w_{0,2}$, $w_{1,2}$ and~$w_{0,3}$.
Then the invariance of the equation with respect to~$H$ implies
that up to factoring out a nonvanishing multiplier,
the differential function~$F$ can be assumed to depend at most on~$w_{0,3}$ and~$\omega:=w_{1,2}+w_{0,2}w_{0,3}$.

(i) Let the equation $F=0$ admit the entire algebra~$\mathfrak p$.
The dependence on the latter expression is essential for the equation to be a genuine partial differential one.
Since the equation $F=0$ is of maximal rank, we have $F_{w_{0,3}}\ne0$ or $F_\omega\ne0$ at each point of~$\mathcal M$.
Suppose that $F_{w_{0,3}}\ne0$ at some such point.
Then we can locally solve the equation $F=0$ with respect to~$w_{0,3}$,
$w_{0,3}=f(\omega)$ for some sufficiently smooth function~$f$ of~$\omega$.
The invariance with respect to~$D^2$ implies that the function~$f$ is constant,
which contradicts the supposition that the equation $F=0$ is a genuine partial differential equation.
Hence $F_{w_{0,3}}=0$ and $F_\omega\ne0$ on the entire manifold~$\mathcal M$.
For each point of~$\mathcal M$, we locally solve the equation $F=0$ with respect to~$\omega$,
obtaining the equation $\omega=c$ for some constant~$c$.
The last equation is invariant with respect to~$D^2$ only if $c=0$.

(ii) Let the equation $F=0$ admit the entire algebra~$\mathfrak q$.
Its invariance with respect to~$2D^1+D^2$ and~$K$ implies that
$2w_{0,3}F_{w_{0,3}}+3\omega F_\omega=0$ and
$(2z_1w_{0,3}-1)F_{w_{0,3}}+3z_1\omega F_\omega=0$ on~$\mathcal M$.
Hence $F_{w_{0,3}}=0$ on~$\mathcal M$, and thus $F_\omega\ne0$ on~$\mathcal M$ since the equation $F=0$ is maximal rank.
This means that $\omega=0$ on~$\mathcal M$,
i.e., the equation $F=0$ is equivalent to the equation $\omega=0$.
\end{proof}

A statement similar to Theorem~\ref{thm:dNs1.3RhoNe1ModifiedDefSubalg}
also holds for the equation~\eqref{eq:dNs1.3RhoNe1InviscidBurgersEq}.
More specifically, a genuine partial differential equation (resp.\ a differential equation)
of maximal rank of order one
in two independent variables $z_1$ and~$z_2$ and dependent variable~$h$ admits the algebra
$\check{\mathfrak p}:=\langle\check P^1,\,\check P^2,\,\check H,\,\check D^2\rangle$
(resp.\ $\check{\mathfrak q}:=\langle\check P^1,\,\check P^2,\,\check H,\,2\check D^1+\check D^2,\,\check K\rangle$)
as its Lie invariance algebra
if and only if it coincides with the equation~\eqref{eq:dNs1.3RhoNe1InviscidBurgersEq}~\cite{popo2025b}.
Using the terminology of \cite{gorg2019a,mann2014a,mann2007b,oliv2004a},
we can say that the equations~\eqref{eq:dNs1.3RhoNe1InviscidBurgersEq} and~\eqref{eq:dNs1.3RhoNe1ModifiedRedEq}
are (strongly) \emph{Lie-remarkable} in the context of partial differential equations of maximal rank.
See also \cite{andr2001a,krau1994a,nucc1996b,rose1986a} 
for studies on defining differential equations by their Lie or more general symmetries.

Theorem~\ref{thm:dNs1.3RhoNe1ModifiedDefSubalg} means that
the equation~\eqref{eq:dNs1.3RhoNe1ModifiedRedEq} is defined by its Lie symmetries
in a much more restrictive way than the dispersionless Nyzhnyk equation~\eqref{eq:dN} does.
More specifically, finite-dimensional subalgebras of the maximal Lie invariance algebra~$\mathfrak a_{1.3}$
of the equation~\eqref{eq:dNs1.3RhoNe1ModifiedRedEq} define not only the point-symmetry pseudogroup~$G_{1.3}$
of this equation, but also the equation~\eqref{eq:dNs1.3RhoNe1ModifiedRedEq} itself.
In contrast to this,
to completely define the dispersionless Nyzhnyk equation~\eqref{eq:dN} by its geometric properties,
one should involve, in addition to its Lie symmetries, e.g., its three simplest conservation laws
\cite[Theorem~19]{boyk2024a}.
Another point is that for defining certain properties of the equation~\eqref{eq:dNs1.3RhoNe1ModifiedRedEq},
even a narrower subalgebra of the algebra~$\mathfrak a_{1.3}$
than the subalgebra~$\mathfrak p$ from Theorem~\ref{thm:dNs1.3RhoNe1ModifiedDefSubalg} is sufficient.

Recall \cite[Definition~5]{boyk2024a}, see also \cite[Section~9]{malt2024a}.
A proper subalgebra~$\mathfrak s$ of a Lie algebra~$\mathfrak a$ of vector fields
is called a \emph{subalgebra defining the diffeomorphisms that stabilize~$\mathfrak a$}
if the conditions $\Phi_*\mathfrak a\subseteq\mathfrak a$ and $\Phi_*\mathfrak s\subseteq\mathfrak a$
for local diffeomorphisms (i.e., point transformations)~$\Phi$ in the underlying space are equivalent.
We have shown that
the point symmetry pseudogroup~$G_{1.3}$ of the equation~\eqref{eq:dNs1.3RhoNe1ModifiedRedEq}
coincides with the stabilizer of the algebra~$\mathfrak a_{1.3}$
in the pseudogroup of local diffeomorphisms in the space $\mathbb R^3_{z_1,z_2,w}$,
see Remark~\ref{rem:dNs1.3RhoNe1ModifiedDefSymGroup}.
Since admitting the subalgebra~$\mathfrak p$ as its Lie invariance algebra
completely defines the equation~\eqref{eq:dNs1.3RhoNe1ModifiedRedEq},
this subalgebra also defines the diffeomorphisms stabilizing the algebra~$\mathfrak a_{1.3}$.
At the same time, it turns out that these diffeomorphisms are also defined
by a proper subalgebra of~$\mathfrak p$ with essentially smaller dimension.

\begin{theorem}\label{thm:dNs1.3RhoNe1ModifiedDefSubalgOfMLIA}
The subalgebra~$\mathfrak h=\langle Z(1),Z(z_1),R(1),P^2,H,D^2\rangle$
of the algebra~$\mathfrak a_{1.3}$ defines the local diffeomorphisms that stabilize~$\mathfrak a_{1.3}$.
\end{theorem}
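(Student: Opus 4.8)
The statement is an equivalence of two conditions on a local diffeomorphism~$\Phi$ of $\mathbb R^3_{z_1,z_2,w}$, together with the fact that $\mathfrak h$ is a proper subalgebra of~$\mathfrak a_{1.3}$; the latter is clear, since $\mathfrak h=\langle Q^1,\dots,Q^6\rangle$ with $Q^1,\dots,Q^6$ from~\eqref{eq:dNs1.3RhoNe1DefSubalgBasis} is six-dimensional, closed under the Lie bracket, and $\mathfrak a_{1.3}$ is infinite-dimensional. The implication $\Phi_*\mathfrak a_{1.3}\subseteq\mathfrak a_{1.3}\ \Rightarrow\ \Phi_*\mathfrak h\subseteq\mathfrak a_{1.3}$ is trivial because $\mathfrak h\subset\mathfrak a_{1.3}$. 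So the whole content is the converse: if $\Phi_*Q^i\in\mathfrak a_{1.3}$ for $i=1,\dots,6$, then $\Phi_*\mathfrak a_{1.3}\subseteq\mathfrak a_{1.3}$. By Remark~\ref{rem:dNs1.3RhoNe1ModifiedDefSymGroup} the pseudogroup~$G_{1.3}$ is the stabilizer of~$\mathfrak a_{1.3}$, so this reduces to showing that the six conditions $\Phi_*Q^i\in\mathfrak a_{1.3}$ force~$\Phi$ to have the form~\eqref{eq:dNs1.3RhoNe1ModifiedRedEqPointSymForm}.

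The plan is to re-run the computation in the proof of Theorem~\ref{thm:dNRedEq1.3PointSymGroup} with the relaxed hypotheses $\Phi_*Q^i\in\mathfrak a_{1.3}$, $i=1,\dots,6$, in place of the a priori stronger megaideal conditions $\Phi_*Q^1,\Phi_*Q^2\in\mathfrak m_7$, $\Phi_*Q^3\in\mathfrak m_6$, $\Phi_*Q^4,\Phi_*Q^5\in\mathfrak m_3$, $\Phi_*Q^6\in\mathfrak m_2$ used there. For each~$i$ I would write $\Phi_*Q^i$ as a general element of~$\mathfrak a_{1.3}$ spanned by the vector fields~\eqref{eq:dNMIASubmodel} with undetermined constant and functional parameters, expand the equality, split it componentwise in~$z_2$ and~$w$, and pull the result back by~$\Phi$, exactly as after~\eqref{eq:dNRedEq1.3MainPushforwards}. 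The task is to verify that the relaxed hypotheses recover the same system of constraints as there — equivalently, that they already imply membership of each $\Phi_*Q^i$ in the respective megaideal — after which the remainder of the proof of Theorem~\ref{thm:dNRedEq1.3PointSymGroup} applies without change.

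The tool that replaces the megaideal structure is the diffeomorphism property of~$\Phi$: since $\mathrm dZ^1\ne0$ and the relevant Jacobian minors do not vanish, a polynomial of degree at most two in~$Z^1$ (or in~$Z^2$) that is proportional to another such polynomial with a nonconstant multiplier among $1,z_1,z_2$ must vanish identically. Applying this to the $\tilde z_1$-components of $\Phi_*Q^1,\Phi_*Q^2,\Phi_*Q^3$ (multipliers $1,z_1,z_2$) and of $\Phi_*Q^4,\Phi_*Q^5$ (multipliers $1,z_1$) gives $Z^1_w=Z^1_2=0$, hence $Z^1=Z^1(z_1)$ with $Z^1_1\ne0$, and that $\Phi_*Q^1,\Phi_*Q^2,\Phi_*Q^3$ have vanishing $\tilde z_1$-component; then comparing the $\tilde z_2$- and $\tilde w$-components of $\Phi_*Q^1,\Phi_*Q^2,\Phi_*Q^3$ and ruling out the degenerate branches where $Z^2$ or $W$ would lose its $w$-dependence (again via nonvanishing of the Jacobian) yields $Z^2_w=0$ and the explicit forms of $Z^1$, $Z^2$ and~$W_w$ found in the proof of Theorem~\ref{thm:dNRedEq1.3PointSymGroup}, i.e.\ precisely $\Phi_*Q^1,\Phi_*Q^2\in\mathfrak m_7$ and $\Phi_*Q^3\in\mathfrak m_6$. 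Once $Z^1=Z^1(z_1)$ and $Z^2_w=0$ hold, $\Phi_*Q^6\in\mathfrak a_{1.3}$ forces $\Phi_*Q^6\in\mathfrak m_2=\mathfrak r$ (its $\tilde z_1$-component equals $z_2Z^1_2+3wZ^1_w=0$), and, with this, $\Phi_*Q^4,\Phi_*Q^5\in\mathfrak a_{1.3}$ reproduce the relations~\eqref{eq:dNRedEq1.3MainPushforwards4,5a}--\eqref{eq:dNRedEq1.3MainPushforwards4,5c}; from there the final part of the computation of Theorem~\ref{thm:dNRedEq1.3PointSymGroup} yields~\eqref{eq:dNs1.3RhoNe1ModifiedRedEqPointSymForm}, which by Remark~\ref{rem:dNs1.3RhoNe1ModifiedDefSymGroup} means $\Phi_*\mathfrak a_{1.3}\subseteq\mathfrak a_{1.3}$.

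I expect the main obstacle to be this verification in the third paragraph: keeping track of which combinations of the six relaxed conditions produce each intermediate relation, and systematically invoking $\mathrm dZ^1\ne0$ and the nonvanishing of the appropriate Jacobian minors to exclude the degenerate branches, so as to be certain that replacing the megaideal conditions by mere membership in~$\mathfrak a_{1.3}$ does not enlarge the admissible family of~$\Phi$ beyond~$G_{1.3}$. Everything downstream of those relations is the routine computation already carried out for Theorem~\ref{thm:dNRedEq1.3PointSymGroup}.
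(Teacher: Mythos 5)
Your proposal follows essentially the same route as the paper's proof: the paper likewise re-runs the computation of Theorem~\ref{thm:dNRedEq1.3PointSymGroup} with the six relaxed conditions $\Phi_*Q^i\in\mathfrak a_{1.3}$ written out as in~\eqref{eq:dNMainPushforwardsMod}, uses the nondegeneracy of~$\Phi$ (nonconstancy of~$Z^1$ and functional independence of~$Z^1$,~$Z^2$) to kill the extra coefficients and degenerate branches, and lands on the form~\eqref{eq:dNs1.3RhoNe1ModifiedRedEqPointSymForm}, whence Remark~\ref{rem:dNs1.3RhoNe1ModifiedDefSymGroup} finishes the argument. The only bookkeeping difference is that in the paper the explicit fractional-linear forms of~$Z^1$ and~$Z^2$ are extracted from the $\tilde z_2$-components of $\Phi_*Q^4$, $\Phi_*Q^5$, $\Phi_*Q^6$ rather than from $\Phi_*Q^1$, $\Phi_*Q^2$, $\Phi_*Q^3$, whose role is only to establish $Z^1_w=Z^2_w=0$ and to annihilate the corresponding coefficients.
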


\begin{proof}
A local diffeomorphism stabilizes the algebra~$\mathfrak a_{1.3}$
if and only if it belongs to the pseudogroup~$G_{1.3}$,
i.e., it is the form~\eqref{eq:dNs1.3RhoNe1ModifiedRedEqPointSymForm}.
Therefore, it suffices to show that any local diffeomorphism~$\Phi$ stabilizing the subalgebra~$\mathfrak h$
is the form~\eqref{eq:dNs1.3RhoNe1ModifiedRedEqPointSymForm}.

We follow the proof of Theorem~\ref{thm:dNRedEq1.3PointSymGroup}
and use the same notation, including~\eqref{eq:dNs1.3RhoNe1DefSubalgBasis}.
However, for each of the selected elements~$Q^i$, $i=1,\dots,6$, of the algebra~$\mathfrak a_{1.3}$,
we employ the condition $\Phi_*Q^i\in\mathfrak a_{1.3}$ instead of the condition $\Phi_*Q^i\in\mathfrak m$,
where $\mathfrak m$ is the minimal megaideal of~$\mathfrak a_{1.3}$ containing the vector field~$Q^i$.
In other words, we replace the equations~\eqref{eq:dNRedEq1.3MainPushforwards} with the equations
\begin{gather}\label{eq:dNMainPushforwardsMod}
\Phi_*Q^i=a_{i1}\tilde P^1+a_{i2}\tilde D^1+a_{i3}\tilde K+a_{i4}\tilde D^2+a_{i5}\tilde P^2+a_{i6}\tilde H
+\tilde R(\tilde\alpha^i)+\tilde Z(\tilde\sigma^i),
\end{gather}
where
$a_{ij}$, $j=1,\dots,6$, are constants
and $\tilde\alpha^i$ and~$\tilde\sigma^i$ are smooth functions of~$\tilde z_1$.
In what follows, by $(i,\tilde z_1)$, $(i,\tilde z_2)$ and $(i,\tilde w)$
we denote the equations that are obtained by collecting the $\tilde z_1$-, $\tilde z_2 $- or $\tilde w$-components
in the equation~\eqref{eq:dNMainPushforwardsMod} with the same value of~$i$
and pulling the results back by~$\Phi$, respectively.

First, we consider the equations
\begin{gather*}
(1,\tilde z_1)\colon\quad Z^1_w=a_{13}(Z^1)^2+a_{12}Z^1+a_{11},\\
(2,\tilde z_1)\colon\quad z_1Z^1_w=a_{23}(Z^1)^2+a_{22}Z^1+a_{21},\\
(3,\tilde z_1)\colon\quad z_2Z^1_w=a_{33}(Z^1)^2+a_{32}Z^1+a_{31}.
\end{gather*}
If $Z^1_w\ne0$, then we can split the combination $z_1(1,\tilde z_1)-(2,\tilde z_1)$
with respect to $z_1$ and $Z^1$ and derive the equalities $a_{ij}=0$, $i=1,2$, $j=1,2,3$,
which contradicts the supposition $Z^1_w\ne0$ in view of~$(1,\tilde z_1)$.
Therefore, $Z^1_w=0$, and thus the equation~$(3,\tilde z_1)$ implies $a_{3j}=0$, $j=1,2,3$.

Under the derived conditions $a_{i3}=0$, $i=1,2,3$,
the equations $(i,\tilde z_2)$, $i=1,2,3$, take the following form:
\begin{gather*}
(1,\tilde z_2)\colon\quad Z^2_w=a_{14}Z^2+a_{15}+a_{16}Z^1,\\
(2,\tilde z_2)\colon\quad z_1Z^2_w=a_{24}Z^2+a_{25}+a_{26}Z^1,\\
(3,\tilde z_2)\colon\quad z_2Z^2_w=a_{34}Z^2+a_{35}+a_{36}Z^1.
\end{gather*}
Suppose that $Z^2_w\ne0$.
Then the splitting the combination $z_1(1,\tilde z_2)-(2,\tilde z_2)$ with respect to $Z^2$
leads to the equation $a_{14}z_1-a_{24}=0$, which further splits into $a_{14}=a_{24}=0$,
and to the equation $(a_{16}z_1-a_{26})Z^1+a_{15}z_1-a_{25}=0$.
The parameters $a_{15}$, $a_{16}$, $a_{25}$ and~$a_{26}$ do not simultaneously vanish
since otherwise the equation~$(1,\tilde z_2)$ immediately implies the contradiction
with the supposition $Z^2_w\ne0$.
Hence the tuples $(a_{15},a_{25})$ and $(a_{16},a_{26})$ are simultaneously nonzero
and thus $Z^1=-(a_{15}z_1-a_{25})/(a_{16}z_1-a_{26})$, i.e., $Z^1=Z^1(z_1)$ with $Z^1_1\ne0$
in view of the nondegeneracy of~$\Phi$.
Then we can split the combination $z_2(1,\tilde z_2)-(3,\tilde z_2)$ with respect to $(Z^1,z_2,Z^2)$
and get $a_{15}=a_{16}=a_{34}=a_{35}=a_{36}=0$, which contradicts the supposition as well.
As a result, $Z^2_w=0$.

In a similar way, we analyze the following collection of equations:
\begin{gather*}
(4,\tilde z_1)\colon\quad Z^1_2=a_{43}(Z^1)^2+a_{42}Z^1+a_{41},\\
(5,\tilde z_1)\colon\quad z_1Z^1_2=a_{53}(Z^1)^2+a_{52}Z^1+a_{51},\\
(6,\tilde z_1)\colon\quad z_2Z^1_2=a_{63}(Z^1)^2+a_{62}Z^1+a_{61}.
\end{gather*}
If $Z^1_2\ne0$, then we can split the combination $z_1(4,\tilde z_1)-(5,\tilde z_1)$
with respect to~$z_1$ and~$Z^1$ and derive the equalities $a_{ij}=0$, $i=4,5$, $j=1,2,3$,
contradicting the supposition $Z^1_2\ne0$ in view of the equation $(4,\tilde z_1)$.
Hence $Z^1_2=0$, and thus the nondegeneracy of~$\Phi$ and the equation $(6,\tilde z_1)$
respectively imply $Z^1_1Z^2_2\ne0$ and $a_{61}=a_{62}=a_{63}=0$.

Taking into account the previous results, consider the equations
\begin{gather*}
(4,\tilde z_2)\colon\quad Z^2_2=a_{44}Z^2+a_{45}+a_{46}Z^1,\\
(5,\tilde z_2)\colon\quad z_1Z^2_2=a_{54}Z^2+a_{55}+a_{56}Z^1,\\
(6,\tilde z_2)\colon\quad z_2Z^2_2=a_{64}Z^2+a_{65}+a_{66}Z^1.
\end{gather*}
The combination $z_1(4,\tilde z_2)-(5,\tilde z_2)$ splits with respect to $Z^2$
into the pair of the equations $a_{44}z_1-a_{54}=0$ and $(a_{46}z_1-a_{56})Z^1+a_{45}z_1-a_{55}=0$,
where the former equation further splits into the equalities $a_{44}=a_{54}=0$.
Then the latter equation, the equation~$(4,\tilde z_2)$ and the inequality $Z^2_2\ne0$
jointly imply that $(a_{46},a_{56})\ne(0,0)$ and thus
\begin{gather*}
Z^1=-\frac{a_{45}z_1-a_{55}}{a_{46}z_1-a_{56}},\quad
Z^2_2=-\frac{\hat\Delta}{a_{46}z_1-a_{56}}
\end{gather*}
with $\hat\Delta:=a_{45}a_{56}-a_{46}a_{55}\ne0$ since $Z^1_1\ne0$.
Substituting the expressions for $Z^1$ and $Z^2_2$ into the equation $(6,\tilde z_2)$
and differentiating the result with respect to~$z_2$, we obtain $a_{64}=1$,
and thus this equation gives
\begin{gather*}
Z^2=\frac{-\hat\Delta z_2+(a_{45}a_{66}-a_{46}a_{65})z_1+a_{65}a_{56}-a_{66}a_{55}}{a_{46}z_1-a_{56}}.
\end{gather*}

Finally, we analyze the equations
\begin{gather*}
(1,\tilde w)\colon\quad W_w=\tilde\alpha^1Z^2+\tilde\sigma^1,\\
(4,\tilde w)\colon\quad W_2=\tfrac12a_{46}(Z^2)^2+\tilde\alpha^4Z^2+\tilde\sigma^4,\\
(5,\tilde w)\colon\quad z_1W_2+\tfrac12z_2^2W_w=\tfrac12a_{56}(Z^2)^2+\tilde\alpha^5Z^2+\tilde\sigma^5,\\
(6,\tilde w)\colon\quad z_2W_2+3wW_w-3W=\tfrac12a_{66}(Z^2)^2+\tilde\alpha^6Z^2+\tilde\sigma^6.
\end{gather*}
The differential consequence $\p_{z_2}(1,\tilde w)-\p_w(4,\tilde w)$ is $\tilde\alpha^1=0$.
Hence the equation~$(1,\tilde w)$ is in fact of the form $W_w=\tilde\sigma^1$,
i.e., $W_w$ depends at most on~$z_1$.
Then, we can collect the coefficients of~$z_2^2$ in the combination $z_1(4,\tilde w)-(5,\tilde w)$
and derive the equation
\begin{gather*}
W_w=-\frac{\hat\Delta^2}{a_{46}z_1-a_{56}}.
\end{gather*}
Collecting the coefficients of~$z_2$ in the differential consequence
$(z_2\p_2+3w\p_w-2)(4,\tilde w)-\p_2(6,\tilde w)$ gives an expression for~$\tilde\alpha^4$,
$\tilde\alpha^4=a_{46}(z_2Z^2_2-Z^2)-a_{66}Z^2_2$,
which we then substitute into the equation $(4,\tilde w)$
to obtain a more specific expression for~$W_2$.

The last step is to substitute the found expressions for~$Z^2$, $W_2$ and~$W_w$ into $(6,\tilde w)$
and solve the resulting equation with respect to~$W$, which gives
\begin{gather*}
W=-\frac{\hat\Delta^2w}{a_{46}z_1-a_{56}}
+\frac{a_{46}\hat\Delta^2}{(a_{46}z_1-a_{56})^2}\frac{z_2^3}{6}
-\frac{a_{66}\hat\Delta^2}{(a_{46}z_1-a_{56})^2}\frac{z_2^2}{2}
+F^1(z_1)z_2+F^0(z_1),
\end{gather*}
where $F^1$ and $F^0$ are functions of~$z_1$ that are expressed in terms of parameters
of the equations~\eqref{eq:dNMainPushforwardsMod}.

We obtain that the point transformation~$\Phi$ is of the form~\eqref{eq:dNs1.3RhoNe1ModifiedRedEqPointSymForm}.
\end{proof}


\begin{remark}
The proof of Theorem~\ref{thm:dNs1.3RhoNe1ModifiedDefSubalgOfMLIA} presents one more,
the most primitive algebraic way
for computing the point symmetry pseudogroup~$G_{1.3}$ of the equation~\eqref{eq:dNs1.3RhoNe1ModifiedRedEq},
which does not use megaideals of the algebra~$\mathfrak a_{1.3}$.
Although this computation is also based on pushing forward
only the six-dimensional subalgebra~$\mathfrak h$ of~$\mathfrak a_{1.3}$,
it is much more involved than the computation in the proof of Theorem~\ref{thm:dNRedEq1.3PointSymGroup},
and the simplification of the latter occurs precisely due to the use of known megaideals of~$\mathfrak a_{1.3}$.
\end{remark}

\begin{remark}
In Theorem~\ref{thm:dNs1.3RhoNe1ModifiedDefSubalgOfMLIA},
the subalgebra~$\mathfrak h$, which is contained in the megaideal~$\mathfrak m_2$ of~$\mathfrak a_{1.3}$,
can be replaced by the subalgebra \[\tilde{\mathfrak h}:=\langle Z(1),Z(z_1),R(1),P^2,H,P^1,2D^1+D^2\rangle,\]
which is contained in the megaideal~$\mathfrak m_1$ of~$\mathfrak a_{1.3}$,
but this leads to more complicated computations, cf.\ Remark~\ref{rem:dNRedEq1.3OnOtherMegaidealChain}.
\end{remark}

\section{On induction of Lie and point symmetries}\label{sec:dNRedEq1.3OnInduction}

The induction of Lie symmetries of a reduced system
by Lie symmetries of the original system of partial differential equations
is a well-known phenomenon and was first discussed already in \cite[Section~20.4]{ovsi1982A}.
For the dispersionless Nyzhnyk equation~\eqref{eq:dN} and its reduced equation~\eqref{eq:dNs1.3RhoNe1ModifiedRedEq},
this phenomenon reveals new features,
which have not been observed in the literature and deserve a~detailed consideration.

To find, for each fixed admissible value of the parameter function~$\rho$,
the algebra~$\mathfrak a_{1.3}^{\rho,\rm ind}$ of Lie-symmetry vector fields of the reduced equation~\eqref{eq:dNs1.3RhoNe1ModifiedRedEq}
that are induced under the Lie reduction of~\eqref{eq:dN} with respect to the subalgebra~$\mathfrak s_{1.3}^\rho$,
we make the following steps.
We first compute the normalizer ${\rm N}_{\mathfrak g}(\mathfrak s_{1.3}^\rho)$
of the subalgebra~$\mathfrak s_{1.3}^\rho$ in~$\mathfrak g$.
Then we push forward its elements by the point transformation
from the space with the coordinates $(t,x,y,u)$ to the space with the coordinates $(z_1,z_2,z_3,w)$
whose $z_1$-, $z_2$- and~$w$-components are defined in~\eqref{eq:dNs1.3RhoNe1ModifiedAnsatz}
and the $z_3$-component is, e.g., $z_3=y$.
And finally, we naturally project the pushed forward vector fields to the space with the coordinates $(z_1,z_2,w)$.
The normalizer ${\rm N}_{\mathfrak g}(\mathfrak s_{1.3}^\rho)$ depends on whether the derivative~$\rho_t$ vanishes,
\begin{gather*}
{\rm N}_{\mathfrak g}(\mathfrak s_{1.3}^\rho)=
\big\langle D^{\rm s},\,P^x(1),\,P^y(\rho),\,R^y(\beta)-R^x(\rho\beta),\,Z(\sigma)\big\rangle
\quad\mbox{if}\quad \rho_t\ne0,
\\
{\rm N}_{\mathfrak g}(\mathfrak s_{1.3}^\rho)=
\big\langle D^t(1),\,D^t(t),\,D^{\rm s},\,P^x(1),\,P^y(\rho),\,R^y(\beta)-R^x(\rho\beta),\,Z(\sigma)\big\rangle
\quad\mbox{if}\quad \rho_t=0.
\end{gather*}
The vector fields~$D^{\rm s}$, $P^x(1)+P^y(\rho)$, $P^y(\rho)$, $R^y(\beta)-R^x(\rho\beta)$, $Z(\sigma)$
and, if $\rho_t=0$, $D^t(1)$ and~$D^t(t)$ from \smash{${\rm N}_{\mathfrak g}(\mathfrak s_{1.3}^\rho)$}
induce the Lie-symmetry vector fields
$D^2$, 0, $P^2$,
$R(\tilde\alpha)$ with $\tilde\alpha(z_1)=\rho(t)\beta(t)$,
$Z(\tilde\sigma)$ with $\tilde\sigma(z_1)=\sigma(t)$ and, if $\rho_t=0$,
$P^1$ and $D^1+\frac13D^2$
of the reduced equation~\eqref{eq:dNs1.3RhoNe1ModifiedRedEq}, respectively.
All the elements of~$\mathfrak a_{1.3}$
from the set complement of the linear span~\smash{$\mathfrak a_{1.3}^{\rho,\rm ind}$}
of the above vector fields from~$\mathfrak a_{1.3}$
are genuinely hidden symmetries%
\footnote{%
The term \emph{hidden symmetries} in this sense was introduced in~\cite{yeho2004a}.
The same notion has other names in the literature, e.g.,
\emph{additional} \cite[Example~3.5]{olve1993A}
or \emph{Type-II hidden} \cite{abra2006a,abra2006b} symmetries
or \emph{noninduced} symmetries of the corresponding submodels \cite{fush1994a,fush1994b,popo1995b}.
Hidden symmetries of a system of partial differential equations were found for the first time in~\cite{kapi1978a};
an accessible description of these results was presented in \cite[Example~3.5]{olve1993A}.
See also \cite[footnote~3]{vinn2024a} for a brief discussion
and references to examples with comprehensive studies of hidden symmetries
of particular systems of differential equations.
}
of the equation~\eqref{eq:dN}.
Note that whether the Lie-symmetry vector fields $P^1$ and $D^1+\frac13D^2$ of~\eqref{eq:dNs1.3RhoNe1ModifiedRedEq}
are induced depends on the value of the parameter function~$\rho$,
which is involved neither in the reduced equation~\eqref{eq:dNs1.3RhoNe1ModifiedRedEq}
nor in its maximal Lie invariance algebra~$\mathfrak a_{1.3}$.

The above description of the induced Lie-symmetry vector fields of the reduced equation~\eqref{eq:dNs1.3RhoNe1ModifiedRedEq}
leads to the description of the induced continuous symmetry transformations of this equation.
Singling out the entire pseudosubgroup~$G_{1.3}^{\rho,\rm ind}$ of~$G_{1.3}$
constituted by the point symmetry transformations of~\eqref{eq:dNs1.3RhoNe1ModifiedRedEq}
that are induced under the Lie reduction of~\eqref{eq:dN} with respect to the subalgebra~$\mathfrak s_{1.3}^\rho$
is a much more difficult problem and depends on~$\rho$ in a more complicated way.
To solve this problem, we first find the stabilizer~${\rm St}_G(\mathfrak s_{1.3}^\rho)$
of the subalgebra~$\mathfrak s_{1.3}^\rho$ in the point-symmetry pseudogroup~$G$ of the equation~\eqref{eq:dN}
for each fixed admissible value of the parameter function~$\rho$.
Denote by $\hat G$ the pseudosubgroup of~$G$
that is constituted by the transformations~\eqref{eq:dNPointSymForm}.
Then $G\setminus\hat G=\mathscr J\circ\hat G$.
The pseudosubgroup ${\rm St}_G(\mathfrak s_{1.3}^\rho)\cap\hat G$
of the pseudosubgroup~${\rm St}_G(\mathfrak s_{1.3}^\rho)$ is singled out from~$\hat G$
by the constraints
\begin{gather*}
X^0_t=0,\quad
(\rho^{-1}Y^0)_t=0,\quad
W^1+\rho W^2=-\frac{C\rho_t}{2T_t^{2/3}}(Y^0)^2,\quad
T_{tt}=0,\quad \rho(T)=\rho(t),
\end{gather*}
i.e., $X^0=b_0$, $Y^0=b_1\rho$ and $T=b_2t+b_3$,
where $b_0$, \dots, $b_3$ are arbitrary constants with $b_2\ne0$ such that $\rho(T)=\rho(t)$.
Its complement ${\rm St}_G(\mathfrak s_{1.3}^\rho)\cap(G\setminus\hat G)$ in ${\rm St}_G(\mathfrak s_{1.3}^\rho)$
is singled out from $G\setminus\hat G$ by the constraints
\begin{gather*}
(X^0\rho)_t=0,\quad
Y^0_t=0,\quad
W^1+\rho W^2=\frac{C\rho_t}{2\rho T_t^{2/3}}(X^0)^2,\quad
(\rho^3T_t)_t=0,\quad \rho(T)=\frac1{\rho(t)},
\end{gather*}
i.e., $X^0=b_1/\rho$, $Y^0=b_0$ and $T=-b_2\int\rho^{-3}{\rm d}t+b_3$,
where $b_0$, \dots, $b_3$ are arbitrary constants with $b_2\ne0$
such that $\rho(T)=1/\rho(t)$.
Then we push forward the elements of~${\rm St}_G(\mathfrak s_{1.3}^\rho)$ by the point transformation
from the space with the coordinates $(t,x,y,u)$ to the space with the coordinates $(z_1,z_2,z_3,w)$
whose $z_1$-, $z_2$- and~$w$-components are defined in~\eqref{eq:dNs1.3RhoNe1ModifiedAnsatz}
and the $z_3$-component is, e.g., $z_3=y$.
Finally, we naturally project the pushed forward transformations to the space with the coordinates $(z_1,z_2,w)$.
As a result, we obtain that the pseudosubgroup~$G_{1.3}^{\rho,\rm ind}$ of~$G_{1.3}$
consists of transformations of the form
\begin{gather*}
\tilde z_1=\hat b_2z_1+\hat b_3, \quad
\tilde z_2=\hat C\hat b_2^{1/3}z_2+\hat b_1, \quad
\tilde w=\hat C^3w+\hat W^1(z_1)z_2+\hat W^0(z_1).
\end{gather*}
Here $\hat C$ and $\hat b_1$ are arbitrary constants with $\hat C\ne0$,
which correspond to the above constants $C$ and $b_1-b_0$, respectively,
and $\hat W^1$ and $\hat W^0$ are arbitrary sufficiently smooth functions of~$z_1$.
The expressions for these functions in terms of the parameters of~$G_{1.3}$ are defined by which transformations,
from ${\rm St}_G(\mathfrak s_{1.3}^\rho)\cap\hat G$
or from ${\rm St}_G(\mathfrak s_{1.3}^\rho)\cap(G\setminus\hat G)$, are considered.
For these two induction cases, we respectively have
\begin{gather*}
\hat W^1(z_1)=-W^1(t),\quad
\hat W^0(z_1)=W^0(t)+\frac{b_1^3}{6b_2}\rho_t(t)\rho^2(t),
\\
\hat W^1(z_1)=-W^1(t)+\frac{Cb_0^2\rho_t(t)}{2b_2^{2/3}\rho(t)},\quad
\hat W^0(z_1)=W^0(t)+\frac{b_1^3\rho_t(t)}{6b_2\rho(t)},
\end{gather*}
where $t$ and~$z_1$ are related via the second equality in~\eqref{eq:dNs1.3RhoNe1ModifiedAnsatz}.
Under the induction, the constants~$\hat b_2$ and~$\hat b_3$ are defined by
\begin{gather*}
\hat b_2=b_2, \quad
\hat b_3=-2b_2\int_{t_0}^{T^{-1}(t_0)}\frac{\rho^3(t)-1}{\rho^3(t)}{\rm d}t,
\end{gather*}
where $t_0$ is the fixed lower limit of the integral with variable upper limit~$t$
taken as the fixed antiderivative in~\eqref{eq:dNs1.3RhoNe1ModifiedAnsatz}.
Recall that $T=b_2t+b_3$ with $\rho(T)=\rho(t)$ and
$T=-b_2\int\rho^{-3}{\rm d}t+b_3$ with $\rho(T)=1/\rho(t)$
for the inducing transformations from ${\rm St}_G(\mathfrak s_{1.3}^\rho)\cap\hat G$
and ${\rm St}_G(\mathfrak s_{1.3}^\rho)\cap(G\setminus\hat G)$, respectively.
Therefore, the set that the tuple $(\hat b_2,\hat b_3)$ runs through
depends on the value of the parameter function~$\rho$.
If $\rho$ is a constant function, then there are no specific constraints on~$\hat b_2$ and~$\hat b_3$,
i.e., these constants are arbitrary with $\hat b_2\ne0$.
In other words, the pseudosubgroup~\smash{$G_{1.3}^{\rho,\rm ind}$} with constant~$\rho$ is singled out from~$G_{1.3}$
by the constraints $c_3=c_5=0$ and is maximal among such pseudosubgroups with respect to the inclusion relation.
In the case of general~$\rho$, we have $\hat b_2=1$ and $\hat b_3=0$,
i.e., the pseudosubgroup $G_{1.3}^{\rho,\rm ind}=:G_{1.3}^{\rm ind,gen}$ is singled out from~$G_{1.3}$
by the more constraints $c_2=c_3=c_1-c_4=c_5=0$
and is minimal among such pseudosubgroups with respect to the inclusion relation.
In both above cases for~$\rho$, the pseudosubgroup ${\rm St}_G(\mathfrak s_{1.3}^\rho)\cap\hat G$
and its complement ${\rm St}_G(\mathfrak s_{1.3}^\rho)\cap(G\setminus\hat G)$ in ${\rm St}_G(\mathfrak s_{1.3}^\rho)$
induce the same set of point symmetry transformations of~\eqref{eq:dNs1.3RhoNe1ModifiedAnsatz},
which coincides with~$G_{1.3}^{\rho,\rm ind}$.
For other values of the parameter function~$\rho$,
elements of~$G$ induce, up to composing with elements of~\smash{$G_{1.3}^{\rm ind,gen}$},
only discrete subsets of~$G_{1.3}$.
For example, if $\rho$ is a general periodic function with period~$\rm T$,
then the shifts of~$t$ by $n\rm T$, $n\in\mathbb Z$, as an element of~$G$
induce the shift of~$z_1$ by $n\hat{\rm T}$ with
$\hat{\rm T}:=2\int_0^{\rm T}\big(1-\rho^{-3}(t)\big){\rm d}t$,
which belongs~to~$G_{1.3}$. 

\section{Classification of appropriate one-dimensional subalgebras}\label{sec:dNs1.3RhoNe1Subalgebras}

We compute the adjoint action of the pseudogroup~$G_{1.3}$ on the algebra~$\mathfrak a_{1.3}$
via pushing forward the spanning vector fields of~$\mathfrak a_{1.3}$ by the elementary transformations.
This way is more convenient in the infinite-dimensional case~\cite{bihl2012b,card2011a}
than the classical approach based on constructing inner automorphisms~\cite[Section~3.3]{olve1993A}.
In addition, it allows one to use not only the transformations from the identity component of~$G_{1.3}$
but also discrete elements of~$G_{1.3}$.
The nonidentity adjoint actions of the elementary transformations from~$G_{1.3}$
on vector fields spanning~$\mathfrak a_{1.3}$ are given by
\begin{gather*}
\mathscr P^1_*(c_2)D^1=D^1-c_2P^1,\quad
\mathscr P^1_*(c_2)K=K-c_2(2D^1+D^2)+c_2^2P^1,\quad
\mathscr P^1_*(c_2)H=H-c_2P^2,\\
\mathscr P^1_*(c_2)R(\alpha)=R(\tilde\alpha^1),\quad
\mathscr P^1_*(c_2)Z(\sigma)=Z(\tilde\sigma^1),
\\[2ex]
\mathscr D^1_*(c_1)P^1=c_1P^1,\quad
\mathscr D^1_*(c_1)K=c_1^{-1}K,\quad
\mathscr D^1_*(c_1)H=c_1^{-1}H,\\
\mathscr D^1_*(c_1)R(\alpha)=c_1^{-1}R(\tilde\alpha^2),\quad
\mathscr D^1_*(c_1)Z(\sigma)=c_1^{-1}Z(\tilde\sigma^2),
\\[2ex]
\mathscr K_*(c_3)P^1=P^1+c_3(2D^1+D^2)+c_3^2K,\quad
\mathscr K_*(c_3)D^1=D^1+c_3K,\\
\mathscr K_*(c_3)P^2=P^2+c_3H,\quad
\mathscr K_*(c_3)R(\alpha)=R(\tilde\alpha^3),\quad
\mathscr K_*(c_3)Z(\sigma)=Z\bigl((1+c_3z_1)\tilde\sigma^3\bigr),\quad
\\[2ex]
\mathscr D^2_*(\tilde c_4)P^2={\tilde c_4}P^2,\quad
\mathscr D^2_*(\tilde c_4)H={\tilde c_4}H,\quad
\mathscr D^2_*(\tilde c_4)R(\alpha)={\tilde c_4}^2R(\alpha),\quad
\mathscr D^2_*(\tilde c_4)Z(\sigma)={\tilde c_4}^3Z(\sigma),
\\[2ex]
\mathscr P^2_*(c_6)K=K-c_6H+\tfrac12c_6^2R(1)-\tfrac16c_6^3Z(1),\quad
\mathscr P^2_*(c_6)D^2=D^2-c_6P^2,\\
\mathscr P^2_*(c_6)H=H-c_6R(1)+\tfrac12c_6^2Z(1),\quad
\mathscr P^2_*(c_6)R(\alpha)=R(\alpha)-c_6Z(\alpha),
\\[2ex]
\mathscr H_*(c_5)P^1=P^1+c_5P^2+\tfrac12c_5^2R(1)-\tfrac16c_5^3Z(z_1),\quad
\mathscr H_*(c_5)D^1=D^1+c_5H,\\
\mathscr H_*(c_5)D^2=D^2-c_5H,\quad
\mathscr H_*(c_5)P^2=P^2+c_5R(1)-\tfrac12c_5^2Z(z_1),\\
\mathscr H_*(c_5)R(\alpha)=R(\alpha)-c_5Z(z_1\alpha),
\\[2ex]
\mathscr R_*(W^1)P^1=P^1+R\bigl(W^1_{z_1}\bigr),\quad
\mathscr R_*(W^1)D^1=D^1+R\bigl(z_1W^1_{z_1}+W^1\bigr),\\
\mathscr R_*(W^1)K=K+R\bigl(z_1^2W^1_{z_1}\bigr),\quad
\mathscr R_*(W^1)D^2=D^2-2R\bigl(W^1\bigr),\\
\mathscr R_*(W^1)P^2=P^2+Z\bigl(W^1\bigr),\quad
\mathscr R_*(W^1)H=H+Z\bigl(z_1W^1\bigr),
\\[2ex]
\mathscr Z_*(W^0)P^1=P^1+Z\bigl(W^0_{z_1}\bigr),\quad
\mathscr Z_*(W^0)D^1=D^1+Z\bigl(z_1W^0_{z_1}+W^0\bigr),\\
\mathscr Z_*(W^0)K=K+Z\bigl(z_1^2W^0_{z_1}-z_1W^0\bigr),\quad
\mathscr Z_*(W^0)D^2=D^2-3Z\big(W^0\big),
\\[2ex]
\mathscr Q^+_*(\tilde c_3)P^1=\mathsf c^2P^1+\mathsf c\,\mathsf s\,(2D^1+D^2)+\mathsf s^2K,\quad
\mathscr Q^+_*(\tilde c_3)K=\mathsf c^2K-\mathsf c\,\mathsf s\,(2D^1+D^2)+\mathsf s^2P^1,\\
\mathscr Q^+_*(\tilde c_3)D^1=D^1-\mathsf c\,\mathsf s\,(P^1-K)-\mathsf s^2(2D^1+D^2),\\
\mathscr Q^+_*(\tilde c_3)P^2=\mathsf c\,P^2+\mathsf s\,H,\quad
\mathscr Q^+_*(\tilde c_3)H=\mathsf c\,H-\mathsf s\,P^2,\\
\mathscr Q^+_*(\tilde c_3)R(\alpha)=R(\tilde\alpha^4),\quad
\mathscr Q^+_*(\tilde c_3)Z(\sigma)=Z(\tilde\sigma^4),
\end{gather*}
where $c_1$, $c_2$, $c_3$, $\tilde c_3$, $\tilde c_4$, $c_5$ and $c_6$ are arbitrary constants with $c_1\ne0$,
$W^0$~and~$W^1$ are arbitrary smooth functions of~$z_1$,
$\tilde\alpha^1(z_1):=\alpha^1(z_1-c_2)$,
$\tilde\sigma^1(z_1):=\sigma^1(z_1-c_2)$,
$\tilde\alpha^2(z_1):=\alpha^2(c_1^{-1}z_1)$,
$\tilde\sigma^2(z_1):=\sigma^2(c_1^{-1}z_1)$,
$\tilde\alpha^3(z_1):=\alpha^3\bigl(z_1(1+c_3z_1)^{-1}\bigr)$,
$\tilde\sigma^3(z_1):=\sigma^3\bigl(z_1(1+c_3z_1)^{-1}\bigr)$,
$\mathsf c:=\cos\tilde c_3$, $\mathsf s:=\sin\tilde c_3$
and
\[
\tilde\alpha^4(z_1)=\alpha^4\!\left(\frac{\mathsf c\,z_1-\mathsf s}{\mathsf s\,z_1+\mathsf c}\right),\quad
\tilde\sigma^4(z_1)=(\mathsf s\,z_1+\mathsf c)\,
\sigma^4\!\left(\frac{\mathsf c\,z_1-\mathsf s}{\mathsf s\,z_1+\mathsf c}\right).
\]

\begin{lemma}\label{lem:dNSubmodel1DSubalgsProperty}
Any one-dimensional subalgebra~$\mathfrak b$ of~$\mathfrak a_{1.3}$ that is appropriate
for Lie reduction  of the equation~\eqref{eq:dNs1.3RhoNe1ModifiedRedEq}
is $G_{1.3}$-equivalent to a subalgebra contained in the span $\mathfrak u:=\langle P^1,D^1,K,D^2,P^2,H\rangle$
or in the span $\langle P^2,R(\alpha)\rangle$.
\end{lemma}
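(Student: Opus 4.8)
The goal is to show that any appropriate one-dimensional subalgebra $\mathfrak b=\langle Q\rangle$ of $\mathfrak a_{1.3}$ can be moved, by the adjoint action of $G_{1.3}$, into one of the two listed subspaces. I would begin by writing a general element of $\mathfrak a_{1.3}$ as
\[
Q=a_1P^1+a_2D^1+a_3K+a_4D^2+a_5P^2+a_6H+R(\alpha)+Z(\sigma),
\]
with constants $a_1,\dots,a_6$ and functional parameters $\alpha,\sigma$, and then stratify the analysis according to the ``$\mathrm{sl}(2,\mathbb R)$-part'' $(a_1,a_2,a_3)$, since the Levi factor $\mathfrak f=\langle P^1,2D^1+D^2,K\rangle$ governs the coarsest behaviour of the adjoint orbits. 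The three classical cases for $\mathrm{sl}(2,\mathbb R)$ — the projection of $Q$ onto $\mathfrak f$ (modulo the radical) being zero, nilpotent but nonzero, or semisimple — are handled by the adjoint formulas for $\mathscr P^1_*$, $\mathscr D^1_*$, $\mathscr K_*$ and $\mathscr Q^+_*$ already listed in the paper. In the semisimple case I can conjugate $Q$ so that its $\mathfrak f$-part becomes a multiple of $2D^1+D^2$ (i.e. $a_1=a_3=0$); in the nilpotent case, a multiple of $P^1$ (or $K$), and in particular $a_2=a_3=0$; in the zero case, $Q$ lies already in $\mathfrak r=\mathfrak m_2$.

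**Killing off the radical tail.** In each of the three cases the next step is to use the ``lower'' elementary transformations — $\mathscr P^2_*(c_6)$, $\mathscr H_*(c_5)$, $\mathscr R_*(W^1)$, $\mathscr Z_*(W^0)$, and in the first case also $\mathscr D^2_*(\tilde c_4)$ and translations/dilations in $z_1$ — to remove the functional parts $R(\alpha)$ and $Z(\sigma)$ and as much of $a_4,a_5,a_6$ as possible. The key observation is that $[P^1,R(\alpha)]=R(\alpha_{z_1})$, $[P^1,Z(\sigma)]=Z(\sigma_{z_1})$, and the analogous relations with $D^1$, $K$, $H$, $P^2$: when the $\mathfrak f$-part of $Q$ is nonzero (semisimple or nilpotent case), the operator $\mathrm{ad}_Q$ acts on the infinite-dimensional ideal $\mathfrak m_4=\langle R(\alpha),Z(\sigma)\rangle$ (and on $\langle P^2,H\rangle$ modulo $\mathfrak m_4$) in a way that is generically invertible — e.g. $\alpha\mapsto \alpha_{z_1}$ or $\alpha\mapsto z_1\alpha_{z_1}+c\,\alpha$ can be inverted on the relevant function space — so the corresponding exponentiated transformations $\mathscr R_*$, $\mathscr Z_*$, $\mathscr P^2_*$, $\mathscr H_*$ can be chosen to cancel $R(\alpha)$, $Z(\sigma)$, and $a_5P^2+a_6H$ entirely. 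This is the standard ``use the semisimple/nilpotent generator to sweep out the solvable tail'' argument. What survives is then a vector field lying in $\mathfrak u=\langle P^1,D^1,K,D^2,P^2,H\rangle$ — in fact in the much smaller span $\langle P^1\rangle$, $\langle 2D^1+D^2\rangle$ or similar — which certainly lies in $\mathfrak u$.

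**The residual (zero $\mathfrak f$-part) case.** The remaining and more delicate case is when $Q\in\mathfrak r=\langle D^2,P^2,H,R(\alpha),Z(\sigma)\rangle$, i.e. $a_1=a_2=a_3=0$. Here I would first split on whether $a_4\ne 0$. If $a_4\ne0$, rescale so $a_4=1$; then $\mathscr P^2_*$, $\mathscr H_*$, $\mathscr R_*$, $\mathscr Z_*$ with suitably chosen parameters can be used (via the relations $[D^2,P^2]=-P^2$, $[D^2,H]=-H$, $[D^2,R(\alpha)]=-2R(\alpha)$, $[D^2,Z(\sigma)]=-3Z(\sigma)$, all of which are invertible on the respective summands since the eigenvalues $-1,-1,-2,-3$ are nonzero) to eliminate $a_5P^2+a_6H+R(\alpha)+Z(\sigma)$ completely, leaving $Q\sim D^2\in\mathfrak u$. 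If $a_4=0$, then $Q=a_5P^2+a_6H+R(\alpha)+Z(\sigma)$ lies in the nilradical, and one must argue more carefully: here I would use $\mathscr H_*$ and $\mathscr R_*$ to normalize the pair $(a_6,\alpha)$ — if $a_6\ne0$ one can absorb $\alpha$ and $\sigma$ using $[H,R(\alpha)]=Z(z_1\alpha)$ and $[P^2,H]=R(1)$ etc.; if $a_6=0$ then $Q=a_5P^2+R(\alpha)+Z(\sigma)$, and $\mathscr P^2_*$, $\mathscr Z_*$ plus a $z_1$-reparametrization reduce $\alpha$ and $\sigma$, landing $\mathfrak b$ inside $\langle P^2,R(\alpha)\rangle$ (the second listed span), the case $a_5\ne0$ giving the genuine $\langle P^2,R(\alpha)\rangle$ stratum and $a_5=0$ giving $\langle R(\alpha)\rangle$ which is contained in it. \textbf{The main obstacle} I expect is precisely the bookkeeping in this residual nilpotent stratum: one must verify that the relevant linear operators on the function spaces (differentiation, multiplication by $z_1$, and their combinations coming from $\mathrm{ad}_H$, $\mathrm{ad}_{P^2}$) have the surjectivity needed to absorb $\alpha$ and $\sigma$, and to correctly identify the ``first reduction to the inviscid Burgers equation'' sub-case where the appropriateness condition for Lie reduction forces a nonzero transverse component — ensuring that the dichotomy in the lemma's statement (span $\mathfrak u$ \emph{or} span $\langle P^2,R(\alpha)\rangle$) is exhaustive and that no appropriate subalgebra is inadvertently discarded as ``inappropriate.''
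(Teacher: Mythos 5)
Your overall strategy -- normalizing the generator $Q=a_1P^1+a_2D^1+a_3K+a_4D^2+a_5P^2+a_6H+R(\alpha)+Z(\sigma)$ by the adjoint action of the elementary transformations -- is the right one, but the mechanism you propose for the one stratum where the second alternative of the lemma actually arises would fail. In the case $a_1=\dots=a_4=0$, $(a_5,a_6)\ne(0,0)$, $a_6\ne0$, you want to ``use $\mathscr H_*$ and $\mathscr R_*$ to normalize the pair $(a_6,\alpha)$'' and to ``absorb $\alpha$ and $\sigma$ using $[H,R(\alpha)]=Z(z_1\alpha)$ and $[P^2,H]=R(1)$''. Neither step works: $\mathscr H_*(c_5)$ and $\mathscr R_*(W^1)$ map each of $P^2$ and $H$ to itself plus an element of $\mathfrak m_4=\langle R(\alpha),Z(\sigma)\rangle$, so they cannot change $a_6$ at all; and the only $R$-terms that any elementary transformation produces out of $P^2$ and $H$ are constant multiples of $R(1)$ (from $\mathscr P^2_*(c_6)H=H-c_6R(1)+\tfrac12c_6^2Z(1)$ and $\mathscr H_*(c_5)P^2=P^2+c_5R(1)-\tfrac12c_5^2Z(z_1)$), so a generic functional parameter $\alpha$ can only be shifted by a constant and reparametrized, never eliminated. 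That $\alpha$ survives is exactly why the lemma's second alternative is $\langle P^2,R(\alpha)\rangle$ rather than $\langle P^2\rangle$ or $\mathfrak u$. The missing idea is to first act by $\mathscr Q^+_*(\tilde c_3)$, which rotates the $(P^2,H)$-plane (a shear by $\mathscr K_*$ would also do when $a_5\ne0$), to achieve $a_6=0$ with $a_5\ne0$; then $\mathscr R_*(W^1)P^2=P^2+Z(W^1)$ kills $Z(\sigma)$, a scaling sets $a_5=1$, and one lands on $P^2+R(\alpha)$. This is the route the paper takes.

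Two further remarks. First, the whole ${\rm sl}(2,\mathbb R)$-orbit stratification and the sweeping of $a_4$, $a_5$, $a_6$ are unnecessary for this lemma: whenever some $a_i\ne0$ with $i\in\{1,\dots,4\}$, it suffices to solve the first-order linear (or algebraic) equations $(a_1+a_2z_1+a_3z_1^2)W^1_{z_1}+(a_2-2a_4)W^1=-\alpha$ and its analogue for $W^0$, which annihilate the $R$- and $Z$-components via $\mathscr R_*(W^1)$ and $\mathscr Z_*(W^0)$; the residue already lies in $\mathfrak u$ because $P^2,H,D^2\in\mathfrak u$, so no canonical form inside $\mathfrak u$ is needed here (that is the content of the subsequent classification via $\check{\mathfrak a}_{1.3}$). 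Your side claim that the nilpotent case reduces to $\langle P^1\rangle$ is moreover false -- $[P^1,P^2]=0$ and $\langle P^1+H\rangle$ is the inequivalent class $\mathfrak b_{1.3}$ -- though this does not damage the lemma. Second, the sub-case $a_5=a_6=0$ (with all $a_i=0$) that you append at the end is excluded by the hypothesis of appropriateness, since such a $Q$ has zero projection to $\mathfrak u$, so it should not appear in the case analysis at all.
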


\begin{proof}
A one-dimensional subalgebra~$\mathfrak b$ of~$\mathfrak a_{1.3}$ is appropriate
for Lie reduction  of the equation~\eqref{eq:dNs1.3RhoNe1ModifiedRedEq} if
its natural projection to $\mathfrak u$ is nonzero.
In other words, if $\mathfrak b=\langle Q\rangle$ with $Q=a_1P^1+a_2D^1+a_3K+a_4D^2+a_5P^2+a_6H+R(\alpha^0)+Z(\sigma^0)$,
then $a_i\ne0$ for some $i\in\{1,\dots,6\}$.

If at least one of the coefficients $a_1$, \dots, $a_4$ is nonzero,
then we successively push forward~$\mathfrak b$ by~$\mathscr R(W^1)$ and $\mathscr Z(W^0)$
with appropriate values of the parameter functions~$W^0$ and~$W^1$ to set $\alpha^0=0$ and $\sigma^0=0$.
This means that the pushed forward subalgebra is contained in~$\mathfrak u$.

If $a_1=\dots=a_4=0$, then $(a_5,a_6)\ne(0,0)$.
Successively pushing forward~$\mathfrak b$ by~$\mathscr Q^+(\tilde c_3)$ and~$\mathscr R(W^1)$
with appropriate values of constant~$\tilde c_3$ and the parameter function~$W^0$ and scaling~$Q$,
we can set $a_6=0$, $\sigma^0=0$ and $a_5=1$, respectively.
As a result, $Q=P^2+R(\alpha^0)$.
\end{proof}

The subalgebras contained in $\mathfrak u$ are $G_{1.3}$-equivalent
if and only if their images under the homomorphism
$\boldsymbol\upsilon\colon\mathfrak a_{1.3}\to\mathfrak a_{\rm iB}$
(see the end of Section~\ref{sec:dNRedEq1.3MIA}) are \smash{$\check G_{1.3}$}-equivalent.
The one-dimensional subalgebras of the algebra~$\check{\mathfrak a}_{1.3}$ up to the \smash{$\check G_{1.3}$}-equivalence
were classified in \cite[Table~2]{poch2017a},
and the exhaustive classification of subalgebras of the affine Lie algebra ${\rm aff}(2,\mathbb R)$,
which is isomorphic to~$\check{\mathfrak a}_{1.3}$, was carried out in~\cite[Theorem~11]{chap2024a}.
The classification list for dimension one consists of the subalgebras
\begin{gather*}
\check{\mathfrak b}_{1.0}^\alpha=\big\langle\check P^2\big\rangle,\quad
\check{\mathfrak b}_{1.1}=\big\langle\check D^2\big\rangle,\quad
\check{\mathfrak b}_{1.2}=\big\langle\check P^1\big\rangle,\quad
\check{\mathfrak b}_{1.3}=\big\langle\check P^1+\check H\big\rangle,\\
\check{\mathfrak b}_{1.4}=\big\langle\check P^1+\check D^2\big\rangle,\quad
\check{\mathfrak b}_{1.5}=\big\langle\check D^1+\check P^2\big\rangle,\quad
\check{\mathfrak b}_{1.6}^a=\big\langle\check D^1+a\check D^2\big\rangle_{a\geqslant\frac12\ (\!\bmod\check G_{1.3})},\\
\check{\mathfrak b}_{1.7}=\big\langle\check D^1+\check D^2+\check H\big\rangle,\quad
\check{\mathfrak b}_{1.8}^a=\big\langle\check P^1+\check K+a\check D^2\big\rangle_{a\geqslant0\ (\!\bmod\check G_{1.3})}.
\end{gather*}
In view of these notes and Lemma~\ref{lem:dNSubmodel1DSubalgsProperty}, we obtain the following assertion.

\begin{lemma}\label{lem:dNSubmodel1DSubalgs}
A complete list of $G_{1.3}$-inequivalent one-dimensional subalgebras of the algebra~$\mathfrak a_{1.3}$
that are appropriate for Lie reductions of the equation~\eqref{eq:dNs1.3RhoNe1ModifiedRedEq}
is exhausted by the following subalgebras:
\begin{gather*}
\mathfrak b_{1.0}^\alpha=\big\langle P^2+R(\alpha)\big\rangle,\quad
\mathfrak b_{1.1}=\big\langle D^2\big\rangle,\quad
\mathfrak b_{1.2}=\big\langle P^1\big\rangle,\quad
\mathfrak b_{1.3}=\big\langle P^1+H\big\rangle,\\
\mathfrak b_{1.4}=\big\langle P^1+D^2\big\rangle,\quad
\mathfrak b_{1.5}=\big\langle D^1+P^2\big\rangle,\quad
\mathfrak b_{1.6}^a=\big\langle D^1+aD^2\big\rangle_{a\geqslant\frac12\ (\!\bmod G_{1.3})},\\
\mathfrak b_{1.7}=\big\langle D^1+D^2+H\big\rangle,\quad
\mathfrak b_{1.8}^a=\big\langle P^1+K+aD^2\big\rangle_{a\geqslant0\ (\!\bmod G_{1.3})},
\end{gather*}
where $\alpha$ runs through the set of smooth function of~$z_1$ and $a$ is an arbitrary constant.
\end{lemma}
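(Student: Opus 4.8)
The plan is to reduce the classification to one that is already available: the classification of one-dimensional subalgebras of~$\check{\mathfrak a}_{1.3}\cong{\rm aff}(2,\mathbb R)$ up to $\check G_{1.3}$-equivalence recorded in \cite[Table~2]{poch2017a} (see also \cite[Theorem~11]{chap2024a}). By Lemma~\ref{lem:dNSubmodel1DSubalgsProperty}, any appropriate one-dimensional subalgebra~$\mathfrak b$ of~$\mathfrak a_{1.3}$ is $G_{1.3}$-equivalent either to a subalgebra contained in $\mathfrak u=\langle P^1,D^1,K,D^2,P^2,H\rangle$ or to one contained in $\langle P^2,R(\alpha)\rangle$, and I would handle these two alternatives separately. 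The preliminary observation I would record is that the restriction of the homomorphism~$\boldsymbol\upsilon\colon\mathfrak a_{1.3}\to\mathfrak a_{\rm iB}$ to~$\mathfrak u$ is a Lie algebra isomorphism onto~$\check{\mathfrak a}_{1.3}$: indeed $\mathfrak u\cap\ker\boldsymbol\upsilon=\mathfrak u\cap\langle R(\alpha),Z(\sigma)\rangle=\{0\}$, the vector fields $P^1,D^1,K,D^2,P^2,H$ are sent by~$\boldsymbol\upsilon$ to $\check P^1,\check D^1,\check K,\check D^2,\check P^2,\check H$, and $\dim\mathfrak u=6=\dim\check{\mathfrak a}_{1.3}$.

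For the first alternative, I would pull the list $\check{\mathfrak b}_{1.0}^\alpha,\check{\mathfrak b}_{1.1},\dots,\check{\mathfrak b}_{1.8}^a$ back along $(\boldsymbol\upsilon|_{\mathfrak u})^{-1}$. Reading off the preimages of the basis vector fields, the subalgebras $\check{\mathfrak b}_{1.1},\dots,\check{\mathfrak b}_{1.8}^a$ pull back to exactly $\mathfrak b_{1.1},\dots,\mathfrak b_{1.8}^a$ as written, and $\check{\mathfrak b}_{1.0}^\alpha=\langle\check P^2\rangle$ (whose superscript~$\alpha$ is vacuous in~$\check{\mathfrak a}_{1.3}$) pulls back to $\langle P^2\rangle=\mathfrak b_{1.0}^0$. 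The equivalence stated just before the lemma --- subalgebras of~$\mathfrak u$ are $G_{1.3}$-equivalent if and only if their $\boldsymbol\upsilon$-images are $\check G_{1.3}$-equivalent --- then gives, in its ``if'' direction, that every one-dimensional subalgebra of~$\mathfrak u$ is $G_{1.3}$-equivalent to one of $\langle P^2\rangle,\mathfrak b_{1.1},\dots,\mathfrak b_{1.8}^a$, and in its ``only if'' direction, that these representatives are pairwise $G_{1.3}$-inequivalent (including the inequivalence of $\mathfrak b_{1.6}^a$ for distinct $a\geqslant\tfrac12$ and of $\mathfrak b_{1.8}^a$ for distinct $a\geqslant0$, which is part of the $\check G_{1.3}$-classification).

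For the second alternative, the computation in the proof of Lemma~\ref{lem:dNSubmodel1DSubalgsProperty} shows that, after scaling its generator, such a subalgebra is $\langle P^2+R(\alpha^0)\rangle=\mathfrak b_{1.0}^{\alpha^0}$ for a suitable smooth function~$\alpha^0$ of~$z_1$; so this alternative contributes precisely the family $\{\mathfrak b_{1.0}^\alpha\}$, which already contains the special case $\mathfrak b_{1.0}^0=\langle P^2\rangle$. Since $\boldsymbol\upsilon(\mathfrak b_{1.0}^\alpha)=\langle\check P^2\rangle=\check{\mathfrak b}_{1.0}^\alpha$ for every~$\alpha$, and since the pseudogroup homomorphism~$\Upsilon$ induced by the same substitution $w_{22}=h$ intertwines with~$\boldsymbol\upsilon$, no $\mathfrak b_{1.0}^\alpha$ can be $G_{1.3}$-equivalent to any $\mathfrak b_{1.j}$ with $j\geqslant1$, whose $\boldsymbol\upsilon$-images $\check{\mathfrak b}_{1.j}$ are $\check G_{1.3}$-inequivalent to~$\langle\check P^2\rangle$. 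Combining the two alternatives yields exactly the asserted list. (The family $\{\mathfrak b_{1.0}^\alpha\}$ is not irredundant --- a residual $G_{1.3}$-equivalence transforms~$\alpha$ by a finite-parameter group, namely rescalings and a constant shift of~$\alpha$ together with linear-fractional reparametrizations of~$z_1$, all readable off from the adjoint-action table above --- but the statement leaves~$\alpha$ unnormalized, since this does not reduce the family to finitely many representatives.)

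The work is essentially bookkeeping, and the main point requiring care is keeping straight which direction of the $\mathfrak u$-equivalence is used where: the ``if'' direction for completeness of the list, and the ``only if'' direction --- extended, via the intertwining $\boldsymbol\upsilon\leftrightarrow\Upsilon$, to subalgebras not lying in~$\mathfrak u$ --- for separating the listed families from one another. No new symmetry computation beyond what the excerpt already records is needed.
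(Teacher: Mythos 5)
Your proposal is correct and follows essentially the same route as the paper: reduce via Lemma~\ref{lem:dNSubmodel1DSubalgsProperty} to subalgebras of~$\mathfrak u$ or of~$\langle P^2,R(\alpha)\rangle$, note that $G_{1.3}$-equivalence within~$\mathfrak u$ corresponds to $\check G_{1.3}$-equivalence of the $\boldsymbol\upsilon$-images, and pull back the known classification of one-dimensional subalgebras of~$\check{\mathfrak a}_{1.3}\cong{\rm aff}(2,\mathbb R)$. You merely spell out details the paper leaves implicit (that $\boldsymbol\upsilon|_{\mathfrak u}$ is an isomorphism, and the separation of the family~$\{\mathfrak b_{1.0}^\alpha\}$ from the rest via the intertwining with~$\Upsilon$), which is consistent with the paper's argument.
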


\begin{remark}
In Lemma~\ref{lem:dNSubmodel1DSubalgs}, we assume
that only $G_{1.3}$-equivalent subalgebras from the family $\{\mathfrak b_{1.0}^\alpha\}$ are chosen.
A subalgebra~$\mathfrak b_{1.0}^\alpha$ is mapped
by a transformation of the form~\eqref{eq:dNs1.3RhoNe1ModifiedRedEqPointSymForm}
to a subalgebra~$\mathfrak b_{1.0}^{\tilde\alpha}$ if and only if
$c_3=0$ and $W^1(z_1)=c_1^{-1}c_4^{-2}(c_5z_1+c_6)\big(\alpha(z_1)+c_5\big)$.
Hence subalgebras $\mathfrak b_{1.0}^\alpha$ and~$\mathfrak b_{1.0}^{\tilde\alpha}$ are $G_{1.3}$-equivalent
if and only if there exist constants $c_1$, $c_2$, $c_4$ and~$c_5$ with $c_1c_4\ne0$ such that
$\tilde\alpha(\tilde z_1)=c_1^{-1}\big(\alpha(z_1)+c_5\big)$, where $\tilde z_1=c_4^{-1}(c_1z_1+c_2)$.
\end{remark}

\section{Lie invariant solutions}\label{sec:dNs1.3LieInvSols}

We avoid directly constructing Lie invariant solutions of the equation~\eqref{eq:dNs1.3RhoNe1ModifiedRedEq}.
Instead of this, we apply an equivalent but simpler approach.
We carry out the Lie reduction procedure for the equation~\eqref{eq:dNs1.3RhoNe1InviscidBurgersEq},
integrate twice the obtained invariant solutions of~\eqref{eq:dNs1.3RhoNe1InviscidBurgersEq} with respect to~$z_2$
and, modulo the $G_{1.3}$-inequivalence on the solution set of~\eqref{eq:dNs1.3RhoNe1ModifiedRedEq},
neglect trivial summands of the form $\breve W^1(z_1)z_2+\breve W^0(z_1)$ arising in the course of the integration.
Here $\breve W^1$ and $\breve W^0$ are arbitrary sufficiently smooth functions of~$z_1$.

Each of the $\check G_{1.3}$-inequivalent one-dimensional subalgebras of the algebra~$\check{\mathfrak a}_{1.3}$
that are listed before Lemma~\ref{lem:dNSubmodel1DSubalgs}
are appropriate to be used for Lie reduction of~\eqref{eq:dNs1.3RhoNe1InviscidBurgersEq}.
The corresponding ansatzes and reduced equations are collected in Table~\ref{TableRedEqs},
where $\varphi=\varphi(\omega)$ is the new unknown function of the single invariant variable~$\omega$.

\begin{table}[ht!]\footnotesize
\begin{center}\renewcommand{\arraystretch}{1.4}
\caption{Lie reductions with respect to one-dimensional subalgebras of~$\mathfrak {\hat a}_{1.3}$.\strut}\label{TableRedEqs}
\begin{tabular}{|l|l|l|l|l|}
\hline
\hfil$\mathfrak \subset\mathfrak g$
&\hfil Basis &\hfil Ansatz, $\varphi=\varphi(\omega)$
& \hfil~$\omega$ &\hfil Reduced equation
\\
\hline
$\check{\mathfrak b}_{1.0}$   &$\check P^2$                 &$h=\varphi$                &$z_1$                           &$\varphi_\omega=0$
\\[1.5ex]
$\check{\mathfrak b}_{1.1}$   &$\check D^2$                 &$h=z_2\varphi$             &$z_1$                           &$\varphi_\omega+\varphi^2=0$
\\[1.5ex]
$\check{\mathfrak b}_{1.2}$   &$\check P^1$                 &$h=\varphi$                &$z_2$                           &$\varphi\varphi_\omega=0$
\\[1.5ex]
$\check{\mathfrak b}_{1.3}$   &$\check P^1+\check H$          &$h=\varphi+z_1$            &$z_2-\dfrac{z_1^2}2$            &$\varphi\varphi_\omega+1=0$
\\[1.5ex]
$\check{\mathfrak b}_{1.4}$   &$\check P^1+\check D^2$        &$h=e^{z_1}\varphi$         &$e^{-z_1}z_2$                   &$\varphi\varphi_\omega-\omega\varphi_\omega+\varphi=0$
\\[1.5ex]
$\check{\mathfrak b}_{1.5}$   &$\check D^1+\check P^2$        &$h=z_1^{-1}\varphi$        &$z_2-\ln|z_1|$                  &$\varphi\varphi_\omega-\varphi_\omega-\varphi=0$
\\[1.5ex]
$\check{\mathfrak b}_{1.6}^a$ &$\check D^1+a\check D^2$       &$h=z_1^{-1}|z_1|^a\varphi$ &$|z_1|^{-a}z_2$                 &$\varphi\varphi_\omega-a\omega\varphi_\omega+(a-1)\varphi=0$
\\[1.5ex]
$\check{\mathfrak b}_{1.7}$   &$\check D^1+\check D^2+\check H$ &$h=\varphi+\ln|z_1|$       &$\dfrac {z_2}{z_1}-\ln|z_1|$    &$\varphi\varphi_\omega-(\omega+1)\varphi_\omega+1=0$
\\[1.5ex]
$\check{\mathfrak b}_{1.8}^a$ &$\check P^1+\check K+a\check D^2$&$h=\dfrac{{\rm e}^{a\arctan z_1}}{\sqrt{z_1^2+1}}\varphi+\dfrac{z_1+a}{z_1^2+1}z_2$  &$\dfrac{{\rm e}^{-a\arctan z_1}}{\sqrt{z_1^2+1}}z_2\!$
&$\varphi\varphi_\omega+2a\varphi+(a^2+1)\omega=0\!$
\\[1.8ex]
\hline
\end{tabular}
\end{center}
\end{table}

After integrating each of the listed reduced equations, we present
the corresponding solutions~$h$ and~$w$ of the equations~\eqref{eq:dNs1.3RhoNe1InviscidBurgersEq} and~\eqref{eq:dNs1.3RhoNe1ModifiedRedEq}
up to the $\check G_{1.3}$- and $G_{1.3}$-equivalences, respectively,
omitting most of the related explanations.

Below $c_0$ and $c_1$ are arbitrary constants with $c_1\ne0$.
Reduced equations~1.0--1.2 and 1.4--1.6 have the solutions $\varphi=0$
that are trivial and will be neglected since
they correspond to the zero solutions of~\eqref{eq:dNs1.3RhoNe1InviscidBurgersEq} and~\eqref{eq:dNs1.3RhoNe1ModifiedRedEq}.
For readers' convenience,
we marked the constructed solutions of the reduced equation~\eqref{eq:dNs1.3RhoNe1ModifiedRedEq}
by the symbol~$\circ$
and the form of the corresponding inequivalent solutions of the dispersionless Nyzhnyk equation~\eqref{eq:dN}
by the bullet symbol~$\bullet$\,.

\medskip\par\noindent{\bf 1.0.}\
Reduced equation~1.0 trivially integrates to~$\varphi=c_0$.
Transformations from $\{\check{\mathscr H}(c_5)\}$ induce shifts of $\varphi$,
and thus we can set $\varphi=0$ modulo the equivalence induced by the action of~$\check G_{1.3}$,
which gives $h=0$ and $w=0$.

This case is singular in the sense of the correspondence between Lie reductions
of the equations~\eqref{eq:dNs1.3RhoNe1InviscidBurgersEq} and~\eqref{eq:dNs1.3RhoNe1ModifiedRedEq}.
More specifically, the $\check G_{1.3}$-inequivalent subalgebra~$\check{\mathfrak b}_{1.0}$ of~$\check{\mathfrak a}_{1.3}$
is associated with the family ${\big\{\mathfrak b_{1.0}^\alpha=\big\langle P^2+R(\alpha)\big\rangle\big\}}$
of the $G_{1.3}$-inequivalent subalgebras of~$\mathfrak a_{1.3}$.
An ansatz constructed for~$w$ using the subalgebra~$\mathfrak b_{1.0}^\alpha$
with a fixed value of the parameter function~$\alpha$ if
$w=\psi(\omega)+\frac12\alpha(z_1)z_2^{\,2}$,
where $\psi=\psi(\omega)$ is the new unknown function of the single invariant variable~$\omega=z_1$.
The corresponding reduced equation $\alpha_{z_1}=0$ is inconsistent if $\alpha\ne\const$
and is an identity otherwise.
In the latter case, the subalgebra~$\mathfrak b_{1.0}^\alpha$ is in fact a subalgebra of~$\mathfrak a_{1.3}^{\rm ess}$,
and each of the obtained solutions of~\eqref{eq:dNs1.3RhoNe1ModifiedRedEq}
is $G_{1.3}$-equivalent to the above zero solution $w=0$.

\medskip\par\noindent{\bf 1.1.}\
The general solution of reduced equation~1.1 is $\varphi=(\omega+c_0)^{-1}$.
The subgroup of~$\check G_{1.3}$ singled out by the constraint $c_5=c_6=0$
induces the point symmetry group of reduced equation~1.1 that consists of the transformations
\begin{gather*}
\tilde\omega=\frac{c_1\omega+c_2}{c_3\omega+c_4},\quad
\tilde\varphi=(c_3\omega+c_4)\frac{(c_3\omega+c_4)\varphi-c_3}\Delta
\end{gather*}
with the modified composition of transformations \cite{kova2023b,kova2023a} as the group operation,
where $c_1$,~\dots,~$c_4$ are arbitrary constants with $\Delta=c_1c_4-c_2c_3\ne0$,
cf.\ \eqref{eq:dNSubmodelTransportInducedPointSymForm}.
Any of the latter transformations with $c_3=1$ and $c_4=c_0$ maps $\varphi=(\omega+c_0)^{-1}$ to $\varphi=0$.

\medskip\par\noindent{\bf 1.2.}\
The solutions of reduced equation~1.2 are exhausted by the constant ones $\varphi=c_0$.
The corresponding solutions of the equations~\eqref{eq:dNs1.3RhoNe1InviscidBurgersEq} and~\eqref{eq:dNs1.3RhoNe1ModifiedRedEq}
are $\{\check{\mathscr H}(c_5)\}$- and $\{\mathscr H(c_5)\}$-equivalent to the zero solutions of these equations,
cf.\ Case~1.0.

\medskip\par\noindent{\bf 1.3.}\
Reduced equation~1.3 integrates to $\varphi=\pm(-2\omega+c_0)^{1/2}$.
Up to shifts with respect to~$z_2$, this gives $h=\pm(z_1^2-2z_2)^{1/2}+z_1$ and 
\begin{gather*}
\solutionRedEq
w=\pm\frac{1}{15}(z_1^2-2z_2)^{5/2}+\frac12z_1z_2^2.
\end{gather*}

\noindent{\bf 1.4.}\
The solution set of reduced equation~1.4 consists of the functions
\begin{gather*}
\varphi=-\omega/\zeta,\quad
\zeta\in\big\{W_0(\tilde\omega),W_{-1}(\tilde\omega)\big\},\quad
\tilde\omega:=c_1\omega,
\end{gather*}
where~$W_0$ and~$W_{-1}$ are the principal real and the other real branches of the Lambert $W$ function, respectively.
Up to scalings induced by $\{\check{\mathscr D}^2(\tilde c_4)\}$, we can set $c_1=1$.
As a result, $\tilde\omega=\omega={\rm e}^{-z_1}z_2$, $h=-z_2/\zeta$ and
\begin{gather*}
\solutionRedEq
w=-z_2^3\frac{18\zeta^2+15\zeta+4}{108\zeta^3}.
\end{gather*}

\noindent{\bf 1.5.}\
Reduced equation~1.5 also integrates in terms of the Lambert $W$ function,
\begin{gather*}
\varphi=-\zeta,\quad
\zeta\in\big\{W_0(\tilde\omega),W_{-1}(\tilde\omega)\big\},\quad
\tilde\omega:=c_1{\rm e}^{-\omega},
\end{gather*}
where~$W_0$ and~$W_{-1}$ are the principal real and the other real branches of the Lambert $W$ function, respectively,
and the integration constant~$c_1$ can be set to be equal $\sgn z_1$ modulo scalings induced by $\{\check{\mathscr D}^2(\tilde c_4)\}$.
We obtain $\tilde\omega={\rm e}^{-\omega}=z_1{\rm e}^{-z_2}$, $h=-z_1^{-1}\zeta$ and
\begin{gather*}
\solutionRedEq
w=-\zeta\frac{2\zeta^2+9\zeta+12}{12z_1}.
\end{gather*}

\noindent{\bf 1.6.}\
For any value of~$a$, reduced equation~1.6$^a$ has the solution $\varphi=\omega$.
The corresponding solution $h=z_1^{-1}z_2$ of the equation~\eqref{eq:dNs1.3RhoNe1InviscidBurgersEq}
is trivial since it is $\check G_{1.3}$-equivalent to the zero solution.
We neglect this solution below.

Recall that $a\geqslant\frac12$ ($\!{}\bmod\check G_{1.3}$) since
the pushforward $\check{\mathscr Q}^+_*(\frac12\pi)$ maps the subalgebra~$\check{\mathfrak b}_{1.6}^a$
to the subalgebra~$\check{\mathfrak b}_{1.6}^{1-a}$.
We separately consider the cases with $a=1$ and with general values of~$a$.
In the last case, we additionally single out two subcases, $a=2$ and $a=1/2$,
where the general solutions of the corresponding reduced equations can be represented explicitly.

In addition to $\varphi=\omega$,
the solution set of reduced equation~1.6$^1$, $(\varphi-\omega)\varphi_\omega=0$,
includes only the constant functions $\varphi=c_0$.
The corresponding solutions $h=c_1$ of the equation~\eqref{eq:dNs1.3RhoNe1InviscidBurgersEq}
are obviously $\check G_{1.3}$-equivalent to the zero solution.

Below $a\geqslant\frac12$ and $a\ne1$.
The general solution of reduced equation~1.6$^a$ can be represented implicitly in the form
\begin{gather}\label{eq:dNs1.3RhoNe1TransportEqRedEq1.6GenSolution}
\omega=\varphi-c_0|\varphi|^{\tfrac{a}{a-1}}.
\end{gather}
If $\varphi\ne\omega$, then $c_0\ne0$, and modulo the equivalence induced by the action of~$\check G_{1.3}$,
we can set $c_0$ to any nonzero value.

Choosing $c_0=1/4$, we easily solve the equation~\eqref{eq:dNs1.3RhoNe1TransportEqRedEq1.6GenSolution}
as a quadratic equation with respect to~$\varphi$ for the value $a=1/2$,
which results in an explicit solution of reduced equation~1.6$^{1/2}$
and the corresponding explicit solution of the equation~\eqref{eq:dNs1.3RhoNe1InviscidBurgersEq},
\noprint{
\begin{gather*}
a=\frac12\colon\quad
\varphi=\frac12\Bigl(\omega\pm\sqrt{\omega^2+\tilde c_1}\,\Bigr),\quad
h=\frac1{2z_1}\Bigl(z_2\pm\sqrt{z_2^2+\tilde c_1z_1}\,\,\Bigr),
\end{gather*}
}
\begin{gather*}
\varphi=\frac12\Bigl(\omega+\sqrt{\omega^2+\varepsilon}\,\Bigr),\quad
h=\frac1{2z_1}\Bigl(z_2+\sqrt{z_2^2+z_1}\,\,\Bigr),
\end{gather*}
where $\varepsilon=\pm1$,
and in addition we simultaneously change
the signs of $(\omega,\varphi)$ and $(z_2,h)$ if necessary to set ``+'' before the square roots
the signs of $(z_1,h)$ if necessary to set $\varepsilon=1$ in~$h$.
The corresponding solution of the equations~\eqref{eq:dNs1.3RhoNe1ModifiedRedEq} is also explicit,
\noprint{
\begin{gather*}
\solutionRedEq
w=\frac1{12z_1}\left(z_2^3\pm\biggl((z_2^2+\tilde c_1z_1)^{3/2}
+3\tilde c_1z_1\Bigl(z_2\ln\Bigl(z_2+\sqrt{z_2^2+\tilde c_1z_1}\,\,\Bigr)
-\sqrt{z_2^2+\tilde c_1z_1}\,\,\Bigr)\biggr)\right),
\end{gather*}
}
\begin{gather*}
\solutionRedEq
w=\frac{z_2^3+(z_2^2+z_1)^{3/2}}{12z_1}
+\frac{z_2}4\ln\Bigl|z_2+\sqrt{z_2^2+z_1}\,\Bigr|
-\frac14\sqrt{z_2^2+z_1}.
\end{gather*}
In the same way, we can also construct solutions
of~\eqref{eq:dNs1.3RhoNe1InviscidBurgersEq} and~\eqref{eq:dNs1.3RhoNe1ModifiedRedEq} for the case $a=2$,
\begin{gather*}
\noprint{
a=2\colon\quad
\varphi=\frac2{\tilde c_1}\Bigl(1\pm\sqrt{1-\tilde c_1\omega}\,\Bigr),\quad
h=\frac2{\tilde c_1}\Bigl(z_1\pm\sqrt{z_1^2-\tilde c_1z_2}\,\,\Bigr),\\
w=\frac1{\tilde c_1}z_1z_2^2\pm\frac8{15\tilde c_1^3}(z_1^2-\tilde c_1z_2)^{5/2},
\\
a=2\colon\quad
\varphi=2\Bigl(1-\sqrt{1-\omega}\,\Bigr),\quad
}
h=2z_1-2\sqrt{z_1^2-z_2},\quad
w=z_1z_2^2-\frac8{15}(z_1^2-z_2)^{5/2},
\end{gather*}
but they are respectively $\check G_{1.3}$- and $G_{1.3}$-equivalent to those
obtained above using the reduction~1.3.

For the other values of~$a$,
we consider $\varphi$ in the equation~\eqref{eq:dNs1.3RhoNe1TransportEqRedEq1.6GenSolution} as a parameter
and denote it by~$s$, thus representing the general solution of reduced equation~1.6$^a$
in a parametric form in a~uniform way as
\[
\varphi=s,\quad \omega=s-c_0|s|^{\tfrac a{a-1}}.
\]
Modulo the induced equivalence, we can set, e.g., $c_0=1$.
The corresponding family of solutions of the equation~\eqref{eq:dNs1.3RhoNe1InviscidBurgersEq} in the parametric form is
\begin{gather}\label{eq:ReducedEq1.6ParamSolTransportEq}
h=\frac{|z_1|^a}{z_1}s, \quad \frac{z_2}{|z_1|^a}=s-c_0|s|^{\tfrac{a}{a-1}}.
\end{gather}
This leads to the following solutions of the equation~\eqref{eq:dNs1.3RhoNe1ModifiedRedEq}:
\begin{gather*}
\solutionRedEq
w=\frac{|z_1|^{3a}}{z_1}\left(\frac{s^3}{6}
-\frac{c_0a(4a-3)}{2(3a-2)(2a-1)}|s|^{\tfrac{3a-2}{a-1}}
+\frac{(c_0a)^2}{(2a-1)(3a-1)}s|s|^{\tfrac{2a}{a-1}}\right)\quad\mbox{if}\quad a\ne\frac23,
\\[1ex]
\solutionRedEq
w=z_1\left(\frac{s^3}6-c_0\ln|s|+\frac43c_0^2 s^{-3}\right)\quad\mbox{if}\quad a=\frac23,
\end{gather*}
where $s$ is defined by the second equation in~\eqref{eq:ReducedEq1.6ParamSolTransportEq}.

\medskip\par\noindent{\bf 1.7.}\
Similarly to Cases~1.4 and~1.5, we derive
\begin{gather*}
\varphi=\omega-\zeta,
\ \
\zeta\in\big\{W_0(\tilde\omega),W_{-1}(\tilde\omega)\big\},\ \
\tilde\omega:=c_1{\rm e}^{\omega},
\end{gather*}
where~$W_0$ and~$W_{-1}$ are the principal real and the other real branches of the Lambert $W$ function, respectively,
and the integration constant~$c_1$ can be set to be equal $\sgn z_1$ modulo scalings induced by $\{\check{\mathscr D}^2(\tilde c_4)\}$.
Hence $\tilde\omega={\rm e}^{-\omega}={\rm e}^{z_2/z_1}/z_1$, $h=z_2z_1^{-1}-\zeta$ and
\begin{gather*}
\solutionRedEq
w=\frac{z_2^3}{6z_1}-\frac{z_1^2}{2}\zeta\left(\frac13\zeta^2+\frac32\zeta+2\right).
\end{gather*}

\par\noindent{\bf 1.8.}\
Reduced equation 1.8$^0$ can be easily integrated to $\varphi=\pm(c_1-\omega^2)^{1/2}$,
where $c_1>0$ for the solution to be real.
The scaling $(\tilde\omega,\tilde\varphi)=(b\omega,b\varphi)$ induced
by the scaling $\check{\mathscr D}^2(b)$ from the group~$\check G_{1.3}$, where $b=\pm c_1^{1/2}$,
reduces the above solution to the canonical form $\varphi=(1-\omega^2)^{1/2}$,
which gives the following explicit solutions
of the equations~\eqref{eq:dNs1.3RhoNe1InviscidBurgersEq} and~\eqref{eq:dNs1.3RhoNe1ModifiedRedEq}:
\noprint{
\begin{gather*}
h=\frac{z_1z_2\pm\sqrt{c_1(z_1^2+1)-z_2^2}}{z_1^2+1},\\
\solutionRedEq
w=\frac{z_1z_2^3}{6(z_1^2+1)}
\pm\left(\frac{c_1z_2}2\arctan\frac{z_2}{\sqrt{c_1(z_1^2+1)-z_2^2}}
+\left(\frac{c_1}3+\frac{z_2^2}{6(z_1^2+1)}\right)\sqrt{c_1(z_1^2+1)-z_2^2}\right)\!.
\end{gather*}
}
\begin{gather*}
h=\frac{z_1z_2+\sqrt{z_1^2+1-z_2^2}}{z_1^2+1},\\
\solutionRedEq
w=\frac{z_1z_2^3}{6(z_1^2+1)}
+\frac{z_2}2\arctan\frac{z_2}{\sqrt{z_1^2+1-z_2^2}}
+\frac16\left(2+\frac{z_2^2}{z_1^2+1}\right)\sqrt{z_1^2+1-z_2^2}.
\end{gather*}

The general solution of reduced equation~1.8$^a$ with $a\ne0$ can be represented in a parametric form.
Considering $\varphi/\omega$ as a parameter and denoting it by~$s$, we obtain
\[
\varphi=s\omega,\quad \omega=\frac{c_1{\rm e}^{a\arctan(s+a)}}{\sqrt{(s+a)^2+1}}.
\]
Up to the induced equivalence, we can set $c_1=1$.
The corresponding parametric solutions
of the equations~\eqref{eq:dNs1.3RhoNe1InviscidBurgersEq} and~\eqref{eq:dNs1.3RhoNe1ModifiedRedEq} are
\begin{gather*}
h=\frac{z_2}{z_1^2+1}(s+z_1+a),
\\
\solutionRedEq
w=\frac{z_2^3}{6(z_1^2+1)}
\left(z_1-\frac{3(a^2+1)s^2+2(a^2+1)^2-4as^3}{2a(9a^2+1)}+\frac{a^2-1}{2a}\right),
\end{gather*}
where
\[
\frac{{\rm e}^{-a\arctan z_1}}{\sqrt{z_1^2+1}}z_2=\frac{c_1{\rm e}^{a\arctan(s+a)}}{\sqrt{(s+a)^2+1}}.
\]

\begin{remark}
The point symmetry pseudogroup~$G_{\rm iB}$
of the inviscid Burgers equation~\eqref{eq:dNs1.3RhoNe1InviscidBurgersEq}
is much wider than its pseudosubgroup~$\check G_{1.3}$
consisting of the point symmetry transformations of~\eqref{eq:dNs1.3RhoNe1InviscidBurgersEq}
that are induced by the point symmetry transformations of~\eqref{eq:dNs1.3RhoNe1ModifiedRedEq}
via the substitution~$w_{22}=h$,
see the penultimate paragraph of Section~\ref{sec:dNRedEq1.3PointSymGroup}
and the last paragraph of Section~\ref{sec:dNRedEq1.3MIA}.
Any two solutions of~\eqref{eq:dNs1.3RhoNe1InviscidBurgersEq} are $G_{\rm iB}$-equivalent,
but generating solutions of~\eqref{eq:dNs1.3RhoNe1InviscidBurgersEq}
from a known explicit solution using point transformations from~$G_{\rm iB}$
does not in general lead to explicit solutions of~\eqref{eq:dNs1.3RhoNe1InviscidBurgersEq}.
The above solutions obtained by reductions~1.6$^2$ and~1.6$^{1/2}$
are related by the simple transformation
$\tilde h=-1/h$, $\tilde z_1=z_2$, $\tilde z_2=-z_1$ from~$G_{\rm iB}$.
\end{remark}


According to the optimized procedure of step-by-step reductions involving hidden symmetries
\cite[Section~B]{kova2023b},
to construct the corresponding exact solutions of the dispersionless Nyzhnyk equation~\eqref{eq:dN},
we extend the above solution families of the reduced equation~\eqref{eq:dNs1.3RhoNe1ModifiedRedEq}
by transformations from the pseudogroup~$G_{1.3}$ up to the equivalence
with respect to the induced symmetries of this equation
and substitute the extended families into ansatz~\eqref{eq:dNs1.3RhoNe1ModifiedAnsatz}.

\begin{theorem}\label{thm:dNRedEq1.3CorrInvSolutions}
Up to the $G$-equivalence, the set of exact solutions of the dispersionless Nyzhnyk equation~\eqref{eq:dN}
that can be constructed using the two-step Lie reductions,
where the first step is based on a subalgebra from the family $\{\mathfrak s_{1.3}^\rho\}$,
is exhausted by those of the form
\begin{gather}\label{eq:dNRedEq1.3CorrInvSolutions1}
\solution
u=\Delta(c_3z_1+c_4)\,\,w\!\left(\frac{c_1z_1+c_2}{c_3z_1+c_4},\frac{z_2+c_5z_1}{c_3z_1+c_4}\right)
+\frac{c_3z_2^3}{6(c_3z_1+c_4)}-\frac{c_4c_5 z_2^2}{2(c_3z_1+c_4)}-\frac{\rho_t}{6\rho}y^3,
\end{gather}
where $c_1$, \dots, $c_5$ are arbitrary constants with $\Delta=c_1c_4-c_2c_3=\pm1$, 
if $\rho$ is an arbitrary nonvanishing function of~$t$ with $\rho_t\ne0$,
and
\begin{gather}\label{eq:dNRedEq1.3CorrInvSolutions2}
\solution
u=w(z_1,z_2+c_5z_1)-\frac{c_5}2 z_2^2,\qquad\qquad
\solution
u=-z_1w(z_1^{-1},z_1^{-1}z_2+c_5)+\frac{z_2^3}{6z_1},
\end{gather}
where $c_5$ is an arbitrary constant,
if $\rho$ is an arbitrary constant with $\rho\ne0,1$.
In both cases,
$w\equiv0$ or $w(\cdot,\cdot)$ runs through the solutions of the equation~\eqref{eq:dNs1.3RhoNe1ModifiedRedEq}
listed in this section 
and marked by~``$\circ$'',~and
\[
z_1=2\int\frac{\rho^3-1}{\rho^3}{\rm d}t,\quad
z_2=\frac y{\rho}-x.
\]
\end{theorem}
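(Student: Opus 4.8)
The plan is to follow the optimized procedure of step-by-step Lie reductions involving hidden symmetries from \cite[Section~B]{kova2023b}. First I would note that, by construction of the ansatz~\eqref{eq:dNs1.3RhoNe1ModifiedAnsatz}, every solution of~\eqref{eq:dN} produced by a two-step reduction whose first step uses a subalgebra~$\mathfrak s_{1.3}^\rho$ arises by substituting into~\eqref{eq:dNs1.3RhoNe1ModifiedAnsatz} a solution~$w$ of the reduced equation~\eqref{eq:dNs1.3RhoNe1ModifiedRedEq} that is invariant under a one-dimensional subalgebra of~$\mathfrak a_{1.3}$. By Lemma~\ref{lem:dNSubmodel1DSubalgs} and the reductions carried out earlier in this section (performed through the intermediate inviscid Burgers equation~\eqref{eq:dNs1.3RhoNe1InviscidBurgersEq}), a complete list of $G_{1.3}$-inequivalent such solutions~$w$ is exhausted by $w\equiv0$ and by the solutions of~\eqref{eq:dNs1.3RhoNe1ModifiedRedEq} marked by ``$\circ$'' in this section. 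Thus the first step is simply to recall this list.

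Second, the crucial ingredient: substituting $w$ and $\Phi_*w$, with $\Phi\in G_{1.3}$, into~\eqref{eq:dNs1.3RhoNe1ModifiedAnsatz} gives $G$-equivalent solutions of~\eqref{eq:dN} whenever $\Phi$ belongs to the pseudosubgroup $G_{1.3}^{\rho,\rm ind}$ of point symmetries of~\eqref{eq:dNs1.3RhoNe1ModifiedRedEq} induced by point symmetries of~\eqref{eq:dN} under the reduction with~$\mathfrak s_{1.3}^\rho$, whereas noninduced~$\Phi$ in general produce solutions of~\eqref{eq:dN} that are genuinely new up to $G$-equivalence --- this is exactly what makes hidden symmetries enter. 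Consequently, up to $G$-equivalence, the set of two-step solutions in question is obtained by letting~$w$ run over the list above and extending each~$w$ by the transformations of~$G_{1.3}$ taken up to equivalence with respect to~$G_{1.3}^{\rho,\rm ind}$ (and modulo the stabilizer of~$w$ in~$G_{1.3}$, which only removes redundancy), and then substituting into~\eqref{eq:dNs1.3RhoNe1ModifiedAnsatz}. The precise description of~$G_{1.3}^{\rho,\rm ind}$ required here is exactly the one established in Section~\ref{sec:dNRedEq1.3OnInduction}, and it behaves very differently depending on whether $\rho_t\equiv0$.

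Third, I would carry out this extension-and-substitution in the two cases. If $\rho_t\ne0$, then by Section~\ref{sec:dNRedEq1.3OnInduction} the pseudogroup~$G_{1.3}^{\rho,\rm ind}$ coincides (up to at most discrete elements, for special~$\rho$) with the minimal induced pseudosubgroup~$G_{1.3}^{\rm ind,gen}$, i.e.\ with the transformations~\eqref{eq:dNs1.3RhoNe1ModifiedRedEqPointSymForm} having $c_2=c_3=c_1-c_4=c_5=0$ and arbitrary $W^0,W^1$; using the induced families $\{\mathscr Z(W^0)\}$, $\{\mathscr R(W^1)\}$, $\{\mathscr P^2(c_6)\}$ and $\{\mathscr D^2(\tilde c_4)\}$ one normalizes a coset representative to the form~\eqref{eq:dNs1.3RhoNe1ModifiedRedEqPointSymForm} with $W^0=W^1\equiv0$, $c_6=0$ and $|\Delta|=1$ (the sign of~$\Delta$ surviving as a genuine invariant of the remaining equivalence), which after substitution into~\eqref{eq:dNs1.3RhoNe1ModifiedAnsatz}, whose term $-\tfrac{\rho_t}{6\rho}y^3$ is now nonzero, yields exactly~\eqref{eq:dNRedEq1.3CorrInvSolutions1}. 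If instead $\rho$ is a constant distinct from~$0$ and~$1$, then $\rho_t=0$, so the $y^3$-term of~\eqref{eq:dNs1.3RhoNe1ModifiedAnsatz} disappears, and $G_{1.3}^{\rho,\rm ind}$ is now the maximal induced pseudosubgroup, singled out from~$G_{1.3}$ by $c_3=c_5=0$; modulo this pseudosubgroup the only noninduced directions are those of~$\langle K,H\rangle$, and taking into account the stabilizer of~$w$ one is left with the single continuous parameter of the family $\{\mathscr H(c_5)\}$ together with the discrete alternative between the identity and the inversion~$\mathscr K'$ inside the ${\rm sl}(2,\mathbb R)$-part $\langle P^1,2D^1+D^2,K\rangle$; substitution then gives the first family in~\eqref{eq:dNRedEq1.3CorrInvSolutions2}, and composing additionally with~$\mathscr K'$ (and absorbing the residual involution~$\mathscr I$ and $\mathscr H$-shifts into the form of~$w$) gives the second. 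In both cases $w\equiv0$ contributes the solutions of~\eqref{eq:dN} with trivial reduced-equation part.

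The main obstacle will be the bookkeeping in the third step: correctly describing the double coset $G_{1.3}^{\rho,\rm ind}\backslash G_{1.3}/\mathrm{Stab}(w)$, which collapses to a four-parameter continuous family for nonconstant~$\rho$ but to a one-parameter family plus a discrete alternative for constant~$\rho$, and then verifying that pushing the chosen representatives through the substitution into~\eqref{eq:dNs1.3RhoNe1ModifiedAnsatz}, while retaining only the residual parameters that are genuinely $G$-inequivalent, reproduces precisely the closed forms~\eqref{eq:dNRedEq1.3CorrInvSolutions1} and~\eqref{eq:dNRedEq1.3CorrInvSolutions2}. Everything else is routine once Lemma~\ref{lem:dNSubmodel1DSubalgs}, the list of ``$\circ$''-solutions and the induction analysis of Section~\ref{sec:dNRedEq1.3OnInduction} are available.
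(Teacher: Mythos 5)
Your proposal is correct and follows essentially the same route as the paper's proof: extend the $G_{1.3}$-inequivalent ``$\circ$''-solutions by representatives of $G_{1.3}$ modulo the induced pseudosubgroup $G_{1.3}^{\rho,\rm ind}$ from Section~\ref{sec:dNRedEq1.3OnInduction} (setting $W^0=W^1=0$, $c_6=0$, $|\Delta|=1$ for $\rho_t\ne0$, and reducing the $\mathrm{GL}_2$-part to the identity or the inversion plus the residual $\mathscr H(c_5)$-parameter for constant $\rho$), then substitute into the ansatz~\eqref{eq:dNs1.3RhoNe1ModifiedAnsatz}. Your explicit framing via the double coset $G_{1.3}^{\rho,\rm ind}\backslash G_{1.3}/\mathrm{Stab}(w)$ is in fact a slightly more careful articulation of the bookkeeping that the paper compresses into ``check which group parameters are inessential.''
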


\begin{proof}
More specifically,
the inequivalent invariant solutions of the related intermediate reduced equation~\eqref{eq:dNs1.3RhoNe1ModifiedRedEq}
should be extended using a complete set of $\breve G_{1.3}^\rho$-inequivalent transformations from~$G_{1.3}$
under the left action of $\breve G_{1.3}^\rho$ on~$G_{1.3}$.
Recall that the pseudosubgroup~$\breve G_{1.3}^\rho$ of~$G_{1.3}$
consists of the point symmetry transformations of~\eqref{eq:dNs1.3RhoNe1ModifiedRedEq}
that are induced under the Lie reduction of~\eqref{eq:dN} with respect to the subalgebra~$\mathfrak s_{1.3}^\rho$,
see Section~\ref{sec:dNRedEq1.3OnInduction}.
For nonconstant values of the parameter function~$\rho$,
we assume $\breve G_{1.3}^\rho:=\breve G_{1.3}^{\rm gen}$,
thus neglecting the discrete extensions of~$\breve G_{1.3}^{\rm gen}$ for particular values of~$\rho$.
In other words, we extend the $G_{1.3}$-inequivalent solutions
of the equation~\eqref{eq:dNs1.3RhoNe1ModifiedRedEq}, which have been constructed in this section,
by acting the pseudosubgroup~$G_{1.3}$,
whose elements are of the form~\eqref{eq:dNs1.3RhoNe1ModifiedRedEqPointSymForm},
and then check which group parameters are inessential up to the $\breve G_{1.3}^\rho$-equivalence.
For $\rho_t\ne0$, these are $W^1(z_1)$, $W^0(z_1)$ and $c_6$, which can be set to zero.
For $\rho_t=0$, we can in addition set
either $c_1=c_4=1$, $c_2=c_3=0$ or $c_1=c_4=0$, $c_2=c_3=1$.
Finally, we substitute the obtained solutions into the ansatz~\eqref{eq:dNs1.3RhoNe1ModifiedAnsatz}.
\end{proof}

\begin{remark}
The $G$-inequivalent codimension-two Lie reductions of the dispersionless Nyzhnyk equation~\eqref{eq:dN}
from~\cite[Section~8.1]{vinn2024a} can be interpreted as two-step Lie reductions of this equation,
where the first steps involve one-dimensional subalgebras of~$\mathfrak g$
that are $G$-equivalent to subalgebras from the family~$\{\mathfrak s_{1.3}^\rho\}$.
Using $G_{1.3}$-inequivalent Lie reductions of the equation~\eqref{eq:dNs1.3RhoNe1ModifiedRedEq}
and extending the obtained exact solutions by hidden point symmetries of the equation~\eqref{eq:dN}
associated with its Lie reductions to~\eqref{eq:dNs1.3RhoNe1ModifiedRedEq}
in Theorem~\ref{thm:dNRedEq1.3CorrInvSolutions},
we construct much wider families of closed-form solutions of~\eqref{eq:dN}.
Any solution presented in~\cite[Section~8.1]{vinn2024a} is $G$-equivalent to
either a solution from the family~\eqref{eq:dNRedEq1.3CorrInvSolutions1} with $w=0$ and $c_5=0$
either a solution from the first family in~\eqref{eq:dNRedEq1.3CorrInvSolutions2},
where $c_5=0$ and $w$ is obtained by reductions~1.4 or~1.6.
\end{remark}

\section{Local symmetry-like objects}\label{sec:dNs1.3RhoNe1SymLikeObjects}

For a theoretical background on local symmetry-like objects of systems of differential equations,
which are generalized symmetries, cosymmetries, conservation-law characteristics and conservation laws,
see~\cite{olve1993A} as well as~\cite{blum2010A,blum1989A}.
We solve the equation~\eqref{eq:dNs1.3RhoNe1ModifiedRedEq} with respect to the derivative~$w_{222}$,
thus (locally) representing this equation in the Kovalevskaya form.
Therefore, we consider the derivatives of~$w$ with three or more differentiations with respect to~$z_2$
and the other derivatives of~$w$
as the principal and the parametric derivatives of the equation~\eqref{eq:dNs1.3RhoNe1ModifiedRedEq}, respectively.
In other words, the jet variables~$z_1$, $z_2$, $w_{k,l}$ with $k\in\mathbb N_0$ and $l\in\{0,1,2\}$,
where $w_{k,l}:=\p^{k+l}w/\p z_1^{\,k}\p z_2^{\,l}$,
constitute a coordinate system on the manifold~$\mathcal L$ defined by the equation~\eqref{eq:dNs1.3RhoNe1ModifiedRedEq}
and its differential consequences in the jet space $\mathrm J^\infty(\mathbb R^2_{z_1,z_2}\times\mathbb R_w)$.
(The preferable notation of derivatives of~$w$ in this section differs from that in the rest of the paper,
$w_{0,0}:=w$, $w_{1,0}:=w_1$, $w_{0,1}:=w_2$, $w_{2,0}:=w_{11}$, $w_{1,1}:=w_{12}$, $w_{0,2}:=w_{22}$, etc.)
The equation~\eqref{eq:dNs1.3RhoNe1ModifiedRedEq} possesses the two independent minimum-order $z_2$-integrals
\[
I^1:=w_{1,1}+\frac12(w_{0,2})^2, \quad 
I^2:=w_{2,0}-\frac13(w_{0,2})^3-z_2(w_{2,1}+w_{0,2}w_{1,2})=w_{2,0}-\frac13(w_{0,2})^3-z_2\mathrm D_1I^1,
\]
i.e., $\mathrm D_2I^1=\mathrm D_2I^2=0$ on solutions of~\eqref{eq:dNs1.3RhoNe1ModifiedRedEq}.
Here and in what follows, the symbols~$\mathrm D_1$ and~$\mathrm D_2$ denote
the operators of total derivatives with respect to the variables~$z_1$ and~$z_2$, respectively,
the index~$k$ runs~$\mathbb N_0$, $i,i'\in\{1,2\}$, and we assume summation with respect to repeated indices.
Then $\mathrm D_1^{\,k}I^1$ and~$\mathrm D_1^{\,k}I^2$ are $z_2$-integrals
of~\eqref{eq:dNs1.3RhoNe1ModifiedRedEq} as well.
Following the approach developed in~\cite{popo2024a,popo2025a,popo2025b},
we replace the above simple coordinate system on~$\mathcal L$ with the more sophisticated collection
\begin{gather}\label{eq:dNs1.3RhoNe1ModifiedRedEqCoords}
z_1,\ z_2,\ w_{0,0},\ w_{1,0},\ w_{0,1},\ w_{k,2},\ \zeta^{ik}:=\mathrm D_1^{\,k}I^i.
\end{gather}
We denote by~$f\{w\}$ a differential function of~$w$
that depends only on parametric variables and derivatives of~\eqref{eq:dNs1.3RhoNe1ModifiedRedEq}.
Up to the equivalence of integrals
(resp.\ of generalized symmetries, resp.\ of conserved currents, resp.\ of characteristics of conservation laws)
of~\eqref{eq:dNs1.3RhoNe1ModifiedRedEq},
we can consider the components or the characteristics of these objects to be such differential functions.
The restrictions~$\hat{\mathrm D}_1$ and~$\hat{\mathrm D}_2$
of the operators of total derivatives~$\mathrm D_1$ and~$\mathrm D_2$ take the form
\begin{gather*}
\begin{split}
\hat{\mathrm D}_1={}&\p_{z_1}+w_{1,0}\p_{w_{0,0}}
+\left(\zeta^{20}+\frac13(w_{0,2})^3+z_2\zeta^{11}\right)\p_{w_{1,0}}
+\left(\zeta^{10}-\frac12(w_{0,2})^2\right)\p_{w_{0,1}}\\
&+w_{k+1,2}\p_{w_{k,2}}+\zeta^{i,k+1}\p_{\zeta^{ik}},
\end{split}
\\[1ex]
\hat{\mathrm D}_2=\p_{z_2}+w_{0,1}\p_{w_{0,0}}
+\left(\zeta^{10}-\frac12(w_{0,2})^2\right)\p_{w_{1,0}}+w_{0,2}\p_{w_{0,1}}
-\hat{\mathrm D}_1^k\left(\frac{w_{1,2}}{w_{0,2}}\right)\p_{w_{k,2}}.
\end{gather*}
In particular, the condition for $z_2$-integrals~$f$ can be written as $\hat{\mathrm D}_2f=0$.
Note that $[\hat{\mathrm D}_1,\hat{\mathrm D}_2]=0$ since $[\mathrm D_1,\mathrm D_2]=0$.

In the coordinates~\eqref{eq:dNs1.3RhoNe1ModifiedRedEqCoords},
we define the partial orders $\ord_0$ and $\ord_i$
of a differential function~$f\{w\}$ with respect to the derivatives of
$w_{0,2}$ and~$I^i$, $i\in\{1,2\}$, respectively,
\begin{gather*}
\ord_0f:=\begin{cases}
\max\big\{k\mid f_{w_{k,2}}\ne0\big\}&\mbox{if this set is nonempty,}\\
-\infty&\mbox{otherwise,}
\end{cases}
\\
\ord_if:=\begin{cases}
\max\big\{k\mid f_{\zeta^{ik}}\ne0\big\}&\mbox{if this set is nonempty,}\\
-\infty&\mbox{otherwise.}
\end{cases}
\end{gather*}

Simultaneously with the coordinates~\eqref{eq:dNs1.3RhoNe1ModifiedRedEqCoords},
we use the even more sophisticated (local) coordinates on~$\mathcal L$,
\begin{gather}\label{eq:dNs1.3RhoNe1ModifiedRedEqCoords2}
w_{0,0},\ w_{1,0},\ w_{0,1},\ w_{0,2},\ w_{1,2},\
\theta^k:=\left(\frac{w_{0,2}}{w_{1,2}}\hat{\mathrm D}_2\right)^k(z_2-w_{0,2}z_1),\
\zeta^{ik}:=\mathrm D_1^{\,k}I^i.
\end{gather}
In the latter coordinates, the orders $\ord_if$ of a differential function~$f\{w\}$ are defined in the same way as above.
The expressions for~$\theta^k$ can be rewritten in the following form:
\begin{gather*}
\theta^0=z_2-w_{0,2}z_1,\quad
\theta^1=\frac{w_{0,2}}{w_{1,2}}+z_1,\quad
\theta^l=\left(\frac{w_{0,2}}{w_{1,2}}\hat{\mathrm D}_2\right)^{l-1}\frac{w_{0,2}}{w_{1,2}},\quad l=2,3,\dots.
\end{gather*}
Since
$\big(\hat{\mathrm D}_1+w_{0,2}\hat{\mathrm D}_2\big)\theta^0
=\big(\hat{\mathrm D}_1+w_{0,2}\hat{\mathrm D}_2\big)(z_2-w_{0,2}z_1)=0$ and
\[
\left[\hat{\mathrm D}_1+w_{0,2}\hat{\mathrm D}_2,\frac{w_{0,2}}{w_{1,2}}\hat{\mathrm D}_2\right]=0,
\]
then $\big(\hat{\mathrm D}_1+w_{0,2}\hat{\mathrm D}_2\big)\theta^k=0$ for any $k$,
and hence, in the modified coordinates~\eqref{eq:dNs1.3RhoNe1ModifiedRedEqCoords2},
the operators~$\hat{\mathrm D}_1$ and~$\hat{\mathrm D}_2$ take the form
\begin{gather*}
\begin{split}
\hat{\mathrm D}_1={}&w_{1,0}\p_{w_{0,0}}
+\left(\zeta^{20}+\frac13(w_{0,2})^3+\left(\theta^0+w_{0,2}\theta^1-\frac{(w_{0,2})^2}{w_{1,2}}\right)\zeta^{11}\right)\p_{w_{1,0}}\\&{}
+\left(\zeta^{10}-\frac12(w_{0,2})^2\right)\p_{w_{0,1}}
+w_{1,2}\p_{w_{0,2}}+\frac{(w_{1,2})^2}{w_{0,2}}(w_{1,2}\theta^2+2)\p_{w_{1,2}}\\&{}
-w_{1,2}\theta^{k+1}\p_{\theta^k}+\zeta^{i,k+1}\p_{\zeta^{ik}},
\end{split}
\\[1ex]
\begin{split}
\hat{\mathrm D}_2
={}&w_{0,1}\p_{w_{0,0}}+\left(\zeta^{10}-\frac12(w_{0,2})^2\right)\p_{w_{1,0}}
+w_{0,2}\p_{w_{0,1}}-\frac{w_{1,2}}{w_{0,2}}\p_{w_{0,2}}\\&
-\left(\frac{w_{1,2}}{w_{0,2}}\right)^2(w_{1,2}\theta^2+1)\p_{w_{1,2}}
+\frac{w_{1,2}}{w_{0,2}}\theta^{k+1}\p_{\theta^k},
\end{split}
\end{gather*}
as well as
$z_1=\theta^1-w_{0,2}/w_{1,2}$ and $z_2=\theta^0+w_{0,2}\theta^1-(w_{0,2})^2/w_{1,2}$.

In what follows, all functions that arise in the course of integrating
are sufficiently smooth functions of their arguments,
which are indicated explicitly when the corresponding function first appears.
The symbol ``$*$'' in superscripts indicates indices running through finite sets of nonnegative integers.
For example, the dependence of a differential function~$f$ on~$\zeta^{1*}$ means
the dependence of~$f$ on $(\zeta^{10},\dots,\zeta^{1k_1})$ with $k_1:=\ord_1f$.

\subsection{Integrals}\label{sec:dNs1.3RhoNe1Integrals}

Basic symmetry-like objects associated with the equation~\eqref{eq:dNs1.3RhoNe1ModifiedRedEq}
are its integrals.

\begin{theorem}\label{thm:dNs1.3RhoNe1Integrals}
A differential function $\alpha\{w\}$ is a $z_2$-integral of the equation~\eqref{eq:dNs1.3RhoNe1ModifiedRedEq},
${\hat{\mathrm D}_2\alpha\{w\}=0}$,
if and only if it is a sufficiently smooth function of $z_1$, $I^1$ and~$I^2$
and a finite number of total derivatives of $I^1$ and~$I^2$ with respect to~$z_1$,
\[
\alpha=\alpha\big(z_1,(\zeta^{ik})_{k=0,\dots,r,\,i=1,2}\big)=
\alpha(z_1,I^1,\mathrm D_1I^1,\dots,\mathrm D_1^{\,r}I^1,I^2,\mathrm D_1I^2,\dots,\mathrm D_1^{\,r}I^2)
\quad\mbox{with}\quad r\in\mathbb N.
\]
\end{theorem}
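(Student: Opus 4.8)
The plan is to carry out everything in the coordinate system~\eqref{eq:dNs1.3RhoNe1ModifiedRedEqCoords} on~$\mathcal L$. Sufficiency is immediate: in the explicit form of~$\hat{\mathrm D}_2$ displayed above, the coefficients of~$\p_{z_1}$ and of all the~$\p_{\zeta^{ik}}$ vanish, so $\hat{\mathrm D}_2\alpha=0$ for every smooth $\alpha=\alpha(z_1,(\zeta^{ik}))$; equivalently, $\mathrm D_2\zeta^{ik}=\mathrm D_1^{\,k}\mathrm D_2I^i=0$ on~$\mathcal L$ because the~$I^i$ are $z_2$-integrals and $[\mathrm D_1,\mathrm D_2]=0$. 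The substance is the converse, which I would prove by successively eliminating the dependence of a $z_2$-integral on all jet variables other than the~$\zeta^{ik}$.

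Let $\alpha\{w\}$ satisfy $\hat{\mathrm D}_2\alpha=0$, and suppose, towards a contradiction, that $r:=\ord_0\alpha\geqslant0$. First I would establish, by an easy induction on~$k$ from the explicit form of~$\hat{\mathrm D}_1$, the leading-order identity
\[
\hat{\mathrm D}_1^{\,k}\!\left(\frac{w_{1,2}}{w_{0,2}}\right)=\frac{w_{k+1,2}}{w_{0,2}}+R_k,\qquad \ord_0 R_k\leqslant k .
\]
Because of it, the jet variable~$w_{r+1,2}$ enters $\hat{\mathrm D}_2\alpha$ only through the summand $-\hat{\mathrm D}_1^{\,r}(w_{1,2}/w_{0,2})\,\alpha_{w_{r,2}}$, and there only linearly, with coefficient $-w_{0,2}^{-1}\alpha_{w_{r,2}}$; splitting the equation $\hat{\mathrm D}_2\alpha=0$ with respect to~$w_{r+1,2}$ then forces $\alpha_{w_{r,2}}=0$, contradicting the choice of~$r$. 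Hence $\alpha$ is independent of all~$w_{k,2}$, $k\in\mathbb N_0$.

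For such~$\alpha$ the condition $\hat{\mathrm D}_2\alpha=0$ reduces to
\[
\alpha_{z_2}+w_{0,1}\alpha_{w_{0,0}}+\Bigl(\zeta^{10}-\tfrac12(w_{0,2})^2\Bigr)\alpha_{w_{1,0}}+w_{0,2}\alpha_{w_{0,1}}=0 ,
\]
and since $\alpha$ no longer depends on~$w_{0,2}$ while the coefficients do, I would split this identity with respect to~$w_{0,2}$, obtaining $\alpha_{w_{1,0}}=0$ (coefficient of~$(w_{0,2})^2$), $\alpha_{w_{0,1}}=0$ (coefficient of~$w_{0,2}$) and $\alpha_{z_2}+w_{0,1}\alpha_{w_{0,0}}=0$; splitting the last equation with respect to~$w_{0,1}$ yields $\alpha_{w_{0,0}}=0$ and $\alpha_{z_2}=0$. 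This leaves $\alpha=\alpha(z_1,(\zeta^{ik}))$, with only finitely many~$\zeta^{ik}$ present because $\ord_1\alpha,\ord_2\alpha<\infty$, which is the asserted form.

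The only step that is not pure bookkeeping is the leading-order analysis of $\hat{\mathrm D}_1^{\,k}(w_{1,2}/w_{0,2})$: one has to check that each application of~$\hat{\mathrm D}_1$ produces exactly the single new top-order summand $w_{k+1,2}/w_{0,2}$ while the remainder stays of order $\leqslant k$ in the~$w_{*,2}$, so that the splitting with respect to~$w_{r+1,2}$ is legitimate. Once this is in place, the rest is just a matter of tracking which of the coordinates~\eqref{eq:dNs1.3RhoNe1ModifiedRedEqCoords} appear in the coefficients of~$\hat{\mathrm D}_2$ and performing the successive splittings above.
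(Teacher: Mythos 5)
Your proposal is correct and follows essentially the same route as the paper's proof: a contradiction argument splitting $\hat{\mathrm D}_2\alpha=0$ with respect to~$w_{k_0+1,2}$ to kill the dependence on all~$w_{k,2}$, followed by successive splittings with respect to~$w_{0,2}$ and~$w_{0,1}$ to eliminate~$w_{1,0}$, $w_{0,1}$, $w_{0,0}$ and~$z_2$. The only difference is that you make explicit the leading-order identity $\hat{\mathrm D}_1^{\,k}(w_{1,2}/w_{0,2})=w_{k+1,2}/w_{0,2}+R_k$, which the paper leaves implicit in the step $\p_{w_{k_0+1,2}}\hat{\mathrm D}_2\alpha=-w_{0,2}^{-1}\alpha_{w_{k_0,2}}=0$.
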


\begin{proof}
The ``if''-part can be checked by the direct application of the chain rule.

We prove the ``only if''-part by contradiction.
Let $k_0:=\ord_0\alpha>-\infty$ for a $z_2$-integral~$\alpha\{w\}$ of~\eqref{eq:dNs1.3RhoNe1ModifiedRedEq}.
Then the condition \smash{$\p_{w_{k_0+1,2}}\hat{\mathrm D}_2\alpha=0$} implies $\alpha_{w_{k_0,2}}=0$,
which contradicts the supposition.
Therefore, $\alpha_{w_{k,2}}=0$ for any~$k\in\mathbb N_0$,
and thus successively
\smash{$\p_{w_{0,2}}^{\,\,2}\hat{\mathrm D}_2\alpha=\alpha_{w_{1,0}}=0$},
\smash{$\p_{w_{0,2}}\hat{\mathrm D}_2\alpha=\alpha_{w_{0,1}}=0$},
\smash{$\p_{w_{0,1}}\hat{\mathrm D}_2\alpha=\alpha_{w_{0,0}}=0$}.
As a result, \smash{$\hat{\mathrm D}_2\alpha=\alpha_{z_2}=0$} as well.
\end{proof}

\subsection{Auxiliary results}\label{sec:dNs1.3RhoNe1AuxiliaryResults}

\begin{lemma}\label{lem:dNs1.3RhoNe1AuxiliaryLemma1}
A differential function $\varrho\{w\}$ satisfies the equation
\begin{gather}\label{eq:dNs1.3RhoNe1ModifiedRedEqInTotalDers}
\big(\hat{\mathrm D}_1+w_{0,2}\hat{\mathrm D}_2\big)\varrho=0
\end{gather}
if and only if $\varrho$ is a function at most of~$w_{0,2}$ and a finite number of~$\theta^k$.
\end{lemma}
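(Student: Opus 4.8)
The ``if''-part is immediate: we have already recorded that $\big(\hat{\mathrm D}_1+w_{0,2}\hat{\mathrm D}_2\big)\theta^0=0$ and, via the vanishing commutator
\[
\left[\hat{\mathrm D}_1+w_{0,2}\hat{\mathrm D}_2,\ \frac{w_{0,2}}{w_{1,2}}\hat{\mathrm D}_2\right]=0,
\]
that $\big(\hat{\mathrm D}_1+w_{0,2}\hat{\mathrm D}_2\big)\theta^k=0$ for all $k$, while $\big(\hat{\mathrm D}_1+w_{0,2}\hat{\mathrm D}_2\big)w_{0,2}=\hat{\mathrm D}_1w_{0,2}+w_{0,2}\hat{\mathrm D}_2w_{0,2}=w_{1,2}+w_{0,2}(-w_{1,2}/w_{0,2})=0$. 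Hence by the chain rule any function of $w_{0,2}$ and finitely many $\theta^k$ satisfies~\eqref{eq:dNs1.3RhoNe1ModifiedRedEqInTotalDers}. So the work is the ``only if''-part.

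For the ``only if''-part I would work in the modified coordinate system~\eqref{eq:dNs1.3RhoNe1ModifiedRedEqCoords2}, in which the explicit form of $\hat{\mathrm D}_1$ and $\hat{\mathrm D}_2$ is displayed just above the lemma. The plan is to compute $\hat{\mathrm D}_1+w_{0,2}\hat{\mathrm D}_2$ in these coordinates and read off that it annihilates all of $w_{0,2}$ and all $\theta^k$, so that the operator acts only on the remaining coordinates $w_{0,0}$, $w_{1,0}$, $w_{0,1}$, $w_{1,2}$ and the $\zeta^{ik}$. Adding the two displayed expressions (with the weight $w_{0,2}$ on the second) one gets, after cancellation, an operator of the schematic form
\[
\hat{\mathrm D}_1+w_{0,2}\hat{\mathrm D}_2
= A\,\p_{w_{0,0}}+B\,\p_{w_{1,0}}+C\,\p_{w_{0,1}}+E\,\p_{w_{1,2}}+\zeta^{i,k+1}\p_{\zeta^{ik}},
\]
where $A=w_{1,0}+w_{0,2}w_{0,1}$, $C=0$ (the $\p_{w_{0,1}}$-coefficients are $\zeta^{10}-\tfrac12(w_{0,2})^2$ from $\hat{\mathrm D}_1$ and $w_{0,2}$ from $\hat{\mathrm D}_2$... careful bookkeeping will give the precise $C$), $E$ combines the two $\p_{w_{1,2}}$-coefficients, and the $\zeta$-part shifts indices. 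The essential structural point to extract is that the coefficient of $\p_{w_{1,2}}$ is a \emph{nonzero} differential function (it contains the term $(w_{1,2})^3\theta^2/w_{0,2}$ with sign that does not cancel against the $-(w_{1,2})^3\theta^2/w_{0,2}$ coming from $w_{0,2}\hat{\mathrm D}_2$... this needs to be checked, and the surviving term is the pure power-of-$w_{1,2}$ piece $2(w_{1,2})^2/w_{0,2}-(w_{1,2})^2/w_{0,2}=(w_{1,2})^2/w_{0,2}$, which is nonzero), so that the $\p_{w_{1,2}}$-direction is genuinely present.

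Given this, the ``only if''-argument runs by successive elimination, exactly parallel to the proof of Theorem~\ref{thm:dNs1.3RhoNe1Integrals}. Suppose $\varrho\{w\}$ solves~\eqref{eq:dNs1.3RhoNe1ModifiedRedEqInTotalDers}. First, if $\varrho$ depended on some $\zeta^{ik}$, let $m:=\max_i\ord_i\varrho\ge0$; then in~\eqref{eq:dNs1.3RhoNe1ModifiedRedEqInTotalDers} the coefficient of $\zeta^{i,m+1}$ (a jet variable on which no other term of the equation depends) forces $\varrho_{\zeta^{im}}=0$, a contradiction; so $\varrho$ is independent of all $\zeta^{ik}$. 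Next, applying $\p_{w_{0,0}}$, $\p_{w_{1,0}}$, $\p_{w_{0,1}}$ to~\eqref{eq:dNs1.3RhoNe1ModifiedRedEqInTotalDers} in a suitable order (the coefficients $A$, $B$, $C$ being functions of the other coordinates) yields $\varrho_{w_{0,0}}=\varrho_{w_{1,0}}=\varrho_{w_{0,1}}=0$; one must choose the order so that at each step the already-derived vanishing kills the terms where these coefficients themselves depend on $w_{0,0},w_{1,0},w_{0,1}$. Finally, with $\varrho$ now a function of $w_{0,2},w_{1,2}$ and the $\theta^k$ only, equation~\eqref{eq:dNs1.3RhoNe1ModifiedRedEqInTotalDers} reduces to $E\,\varrho_{w_{1,2}}=0$ with $E=(w_{1,2})^2/w_{0,2}\ne0$, whence $\varrho_{w_{1,2}}=0$, i.e.\ $\varrho$ depends at most on $w_{0,2}$ and finitely many $\theta^k$, as claimed.

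The main obstacle I anticipate is purely computational rather than conceptual: verifying that the $\p_{w_{1,2}}$-coefficient of $\hat{\mathrm D}_1+w_{0,2}\hat{\mathrm D}_2$ really is a nonzero differential function (the cancellations among the $(w_{1,2})^3\theta^2/w_{0,2}$, $(w_{1,2})^2/w_{0,2}$ and $(w_{1,2})^3\theta^2/w_{0,2}$, $(w_{1,2})^2/w_{0,2}$ pieces must be tracked carefully, and likewise that the $\p_{\theta^k}$-coefficients all cancel, which is guaranteed in principle by $\big(\hat{\mathrm D}_1+w_{0,2}\hat{\mathrm D}_2\big)\theta^k=0$ but should be seen explicitly), together with confirming the precise forms of $A$, $B$, $C$ so that the elimination order in the middle step is legitimate. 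None of this is deep, but it is the place where a sign error would break the argument, so I would write it out in full in the actual proof even though I omit it from this sketch.
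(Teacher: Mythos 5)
Your proof follows essentially the same route as the paper's: pass to the modified coordinates~\eqref{eq:dNs1.3RhoNe1ModifiedRedEqCoords2}, observe that $\hat{\mathrm D}_1+w_{0,2}\hat{\mathrm D}_2$ annihilates $w_{0,2}$ and all $\theta^k$ while its $\p_{w_{1,2}}$-coefficient reduces to the nonzero function $(w_{1,2})^2/w_{0,2}$, and then remove the remaining dependencies by successive differentiation with respect to suitably chosen jet variables. One small correction to your first elimination step: $\zeta^{11}$ is \emph{not} ``a jet variable on which no other term of the equation depends''---it enters the $\p_{w_{1,0}}$-coefficient of $\hat{\mathrm D}_1$ through the summand $z_2\zeta^{11}$---so that step as stated only yields $\ord_1\varrho\leqslant 0$ and $\ord_2\varrho=-\infty$ rather than independence of all $\zeta^{ik}$; the dependence on $\zeta^{10}$ must be removed later, by differentiating with respect to $\zeta^{11}$ only after $\varrho_{w_{1,0}}=0$ has been obtained from the $\zeta^{20}$-derivative, which is exactly the order $\zeta^{20}$, $\zeta^{11}$, $\zeta^{10}$, $w_{1,0}$ used in the paper. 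Everything else, including your explicit final step $\bigl((w_{1,2})^2/w_{0,2}\bigr)\varrho_{w_{1,2}}=0$ (which the paper leaves implicit), is sound.
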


\begin{proof}
Denote $k_i:=\ord_i\varrho$.
Suppose that $k_1\geqslant1$.
Then the differentiation of the equation~\eqref{eq:dNs1.3RhoNe1ModifiedRedEqInTotalDers}
with respect to~$\zeta^{1,k_1+1}$ implies the condition $\varrho_{\zeta^{1k_1}}=0$
contradicting the supposition ${k_1>1}$.
Analogously, when supposing $k_2\geqslant0$, we derive the contradicting condition $\varrho_{\zeta^{2k_2}}=0$
by differentiating the equation~\eqref{eq:dNs1.3RhoNe1ModifiedRedEqInTotalDers}
with respect to~$\zeta^{2,k_2+1}$.
Therefore, $k_1\leqslant0$ and $k_2=-\infty$.
Then, successively considering the derivatives of the equation~\eqref{eq:dNs1.3RhoNe1ModifiedRedEqInTotalDers}
with respect to~$\zeta^{20}$, $\zeta^{11}$, $\zeta^{10}$ and~$w_{1,0}$,
we derive $\varrho_{w_{1,0}}=0$, $\varrho_{\zeta^{10}}=0$, $\varrho_{w_{0,1}}=0$ and $\varrho_{w_{0,0}}=0$, respectively.
Therefore, $\varrho$ is a function at most of~$w_{0,2}$ and a finite number of~$\theta^k$.
It is obvious that any such function~$\varrho$ is a solution of the equation~\eqref{eq:dNs1.3RhoNe1ModifiedRedEqInTotalDers}.
\end{proof}

\begin{lemma}\label{lem:dNs1.3RhoNe1AuxiliaryLemma2b}
Given $z_2$-integrals~$\kappa^0$ and~$\kappa^1$ of the equation~\eqref{eq:dNs1.3RhoNe1ModifiedRedEq}, the equation
\begin{gather}\label{eq:dNs1.3RhoNe1ModifiedRedEqInTotalDers2'}
\big(\hat{\mathrm D}_1+w_{0,2}\hat{\mathrm D}_2\big)f=\kappa^0+z_2\kappa^1
\end{gather}
for a differential function $f\{w\}$ has a solution if and only if
\begin{gather*}
\kappa^0=\hat{\mathrm D}_1\alpha+a_0\zeta^{10}+(b_{01}z_1+b_{00})\zeta^{20},\\
\kappa^1=-\hat{\mathrm D}_1^2\gamma+a_1(z_1\zeta^{11}+2\zeta^{10})
+a_2\zeta^{11}+(3b_{11}z_1+b_{10}+b_{21})\zeta^{20}
+(b_{11}z_1^{\,2}+b_{21}z_1+b_{20})\zeta^{21}\!
\end{gather*}
for some $z_2$-integrals~$\alpha$ and~$\gamma$ of~\eqref{eq:dNs1.3RhoNe1ModifiedRedEq}
and some constants~$a_j$ and~$b_{jj'}$, $j=0,1,2$, $j'=0,1$.
Then the general solution of~\eqref{eq:dNs1.3RhoNe1ModifiedRedEqInTotalDers2'}~is
\begin{gather*}
f=\check f+\alpha-z_2\hat{\mathrm D}_1\gamma+w_{0,2}\gamma
+(a_0+a_1\theta^0-a_2w_{0,2})(w_{0,1}-\tfrac12z_1(w_{0,2})^2)
\\\quad\quad{}
+a_1z_1z_2\zeta^{10}+a_2z_2\zeta^{10}
+b_{00}(w_{1,0}-z_2\zeta^{10}+\tfrac16z_1(w_{0,2})^3)
\\\quad\quad{}
+b_{01}\big(z_1(w_{1,0}-z_2\zeta^{10})+z_2w_{0,1}-w_{0,0}
+\tfrac1{12}z_1^2(w_{0,2})^3-\tfrac14z_2^2w_{0,2}\big)
\\\quad\quad{}
+b_{10}(z_2(w_{1,0}-z_2\zeta^{10})+w_{0,2}(z_2w_{0,1}-w_{0,0})
-\tfrac16z_2^2(w_{0,2})^2)
\\\quad\quad{}
+b_{11}\big(z_1\theta^0(w_{1,0}-z_2\zeta^{10})+\theta^0(z_2w_{0,1}-w_{0,0})
+\tfrac1{12}z_1^2(w_{0,2})^3\theta^0-\tfrac14z_2^2w_{0,2}\theta^0
+z_1^{\,2}z_2\zeta^{20}\big)
\\\quad\quad{}
+b_{20}(z_2\zeta^{20}-w_{0,2}(w_{1,0}-z_2\zeta^{10})-\tfrac16z_1(w_{0,2})^4)
\\\quad\quad{}
+b_{21}\big(z_1z_2\zeta^{20}-w_{0,2}(z_1(w_{1,0}-z_2\zeta^{10})
+(z_2w_{0,1}-w_{0,0})
+\tfrac1{12}z_1^2(w_{0,2})^3-\tfrac14z_2^2w_{0,2})\big),
\end{gather*}
where~$\check f$ is an arbitrary function at most of~$w_{0,2}$ and a finite number of~$\theta^k$.
\end{lemma}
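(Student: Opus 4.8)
The plan is to read~\eqref{eq:dNs1.3RhoNe1ModifiedRedEqInTotalDers2'} as the inhomogeneous equation $\mathrm E f=\kappa^0+z_2\kappa^1$ for the operator $\mathrm E:=\hat{\mathrm D}_1+w_{0,2}\hat{\mathrm D}_2$, whose kernel on differential functions has been described in Lemma~\ref{lem:dNs1.3RhoNe1AuxiliaryLemma1} as the functions of $w_{0,2}$ and finitely many $\theta^k$. Since any two solutions differ by an element of this kernel, once a single solution is exhibited the whole solution set is automatically of the form ``a particular solution $+\,\check f$''; thus the lemma splits into (a) showing that solvability forces $\kappa^0,\kappa^1$ to have the asserted shapes, and (b) checking that the displayed expression for $f$ is a solution. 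Throughout I use that $\kappa^0,\kappa^1$ are $z_2$-integrals, so by Theorem~\ref{thm:dNs1.3RhoNe1Integrals} they are functions of $z_1$ and finitely many $\zeta^{jk}$, and that $\mathrm E$ acts simply: $\mathrm E z_1=1$, $\mathrm E z_2=w_{0,2}$, $\mathrm E w_{0,2}=0$, $\mathrm E\theta^k=0$, $\mathrm E\zeta^{jk}=\zeta^{j,k+1}$, and in particular $\mathrm E\beta=\hat{\mathrm D}_1\beta$ for every $z_2$-integral~$\beta$ (because $\hat{\mathrm D}_2\beta=0$), so that $\hat{\mathrm D}_1^k\beta$ is again a $z_2$-integral.

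Part~(b) is a direct, if lengthy, verification carried out block by block. Each summand occurring in the asserted forms of $\kappa^0$ and $\kappa^1$ has an explicit $\mathrm E$-primitive, and the primitives assemble into the displayed formula for~$f$; for instance $\mathrm E\alpha=\hat{\mathrm D}_1\alpha$, $\mathrm E\big(w_{0,2}\gamma-z_2\hat{\mathrm D}_1\gamma\big)=-z_2\hat{\mathrm D}_1^2\gamma$, $\mathrm E\big(w_{0,1}-\tfrac12z_1(w_{0,2})^2\big)=\zeta^{10}$ and $\mathrm E\big(w_{1,0}-z_2\zeta^{10}+\tfrac16z_1(w_{0,2})^3\big)=\zeta^{20}$, while the remaining finitely many constant-coefficient blocks, including the $z_1$-weighted ones, are integrated using these identities together with $\mathrm E z_1=1$, $\mathrm E z_2=w_{0,2}$ and $\mathrm E(z_2\zeta^{10})=w_{0,2}\zeta^{10}+z_2\zeta^{11}$; one then simply reads off the coefficients. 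I would spell out only a few representative blocks and leave the rest as a routine check, pointing out that the $w_{0,2}\gamma$, the $a_1\theta^0(\cdot)$ and the $b_{11}z_1^2z_2\zeta^{20}$ terms are exactly the corrections forced by $\mathrm E z_2=w_{0,2}\neq 0$ and by integrating the $z_2$-factor.

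For part~(a) I would argue by order reduction. Given a solution~$f$, inspect its order $\ord_0 f$ in the $w_{k,2}$ and its orders $\ord_i f$ in the $\zeta^{i*}$: differentiating the equation with respect to the top jet variable in each tower and using that the right-hand side involves no $w_{k,2}$ with $k\ge 1$ and only finitely many $\zeta^{jk}$, one shows (much as in the proof of Lemma~\ref{lem:dNs1.3RhoNe1AuxiliaryLemma1}) that the leading coefficient of $f$ in that tower is itself $\mathrm E$-closed, hence a function of $w_{0,2}$ and the $\theta^k$; subtracting an $\mathrm E$-primitive of the corresponding piece — an integral $\alpha$ in the $\kappa^0$-channel and an integral $\gamma$ (entering through $-z_2\hat{\mathrm D}_1\gamma+w_{0,2}\gamma$) in the $z_2\kappa^1$-channel — strictly lowers the order while replacing $\kappa^0$ by $\kappa^0-\hat{\mathrm D}_1\alpha$ and $\kappa^1$ by $\kappa^1+\hat{\mathrm D}_1^2\gamma$. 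Iterating reduces to a differential function $f$ depending only on $z_1$, $z_2$, $w_{0,0}$, $w_{1,0}$, $w_{0,1}$, $w_{0,2}$ and $\zeta^{i0},\zeta^{i1}$, for which $\mathrm E f=\kappa^0+z_2\kappa^1$ becomes a finite linear system; splitting it with respect to the remaining jet coordinates pins down $\kappa^0$ and the $z_2$-coefficient $\kappa^1$ to be precisely the listed constant-coefficient combinations, the constants $a_j$ and $b_{jj'}$ being the integration constants left over from the descent. I expect this last, finite step to be the main obstacle: the bookkeeping of which low-order $z_2$-integrals carry a genuine obstruction is delicate, because repeated application of $\mathrm E$ to $z_2$ and to $w_{0,0},w_{1,0},w_{0,1}$ generates $z_1$-polynomial prefactors (this is why the $\zeta^{20}$-coefficient of $\kappa^0$ may be affine in $z_1$ while its $\zeta^{10}$-coefficient must be constant, and why the $\kappa^1$-side carries the shapes $z_1\zeta^{11}+2\zeta^{10}$, $3b_{11}z_1+b_{10}+b_{21}$ and $b_{11}z_1^2+b_{21}z_1+b_{20}$), and one must verify that no further obstruction — in particular none involving $\theta^{k\ge 1}$ or $w_{k,2}$ — survives, which is where Lemma~\ref{lem:dNs1.3RhoNe1AuxiliaryLemma1} is invoked a last time to absorb the residual $\mathrm E$-closed remainder into~$\check f$.
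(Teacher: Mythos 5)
Your overall strategy coincides with the paper's: establish the ``if''-part by direct verification, and for the ``only if''-part descend through the $\zeta$-towers of~$f$, reduce to a finite residual problem, and absorb the $\mathrm E$-closed remainder via Lemma~\ref{lem:dNs1.3RhoNe1AuxiliaryLemma1}. However, there are two problems. First, your descent step is stated incorrectly: differentiating \eqref{eq:dNs1.3RhoNe1ModifiedRedEqInTotalDers2'} with respect to the top variable $\zeta^{i,k_i}$ gives $f_{\zeta^{i,k_i-1}}=\kappa^0_{\zeta^{i,k_i}}+z_2\kappa^1_{\zeta^{i,k_i}}$, which is \emph{not} $\mathrm E$-closed and is not a function of $w_{0,2}$ and the $\theta^k$ --- it is a $z_2$-integral plus $z_2$ times a $z_2$-integral. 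The correct structural fact, which has to be proved by downward induction (with the base case $\gamma^{i,k_i-1}=0$), is that every $f_{\zeta^{ik}}$ has the three-term form $\alpha^{ik}+z_2\beta^{ik}+w_{0,2}\gamma^{ik}$ with $z_2$-integral coefficients; the $w_{0,2}\gamma^{ik}$ correction is forced precisely because $\mathrm E(z_2\beta)=w_{0,2}\beta+z_2\hat{\mathrm D}_1\beta$. Relatedly, ``$\mathrm E$-closedness of the leading coefficient'' is the mechanism governing the $w_{*,2}$-tower (where the top-order terms of $\hat{\mathrm D}_1$ and $w_{0,2}\hat{\mathrm D}_2$ cancel and no order reduction occurs at all), not the $\zeta$-towers; conflating the two makes the claimed ``strict lowering of the order'' unjustified as written.

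Second, and more seriously, the entire content of the ``only if''-part lies in the finite residual step that you defer as ``the main obstacle''. After the descent one has $f=\alpha+z_2\beta+w_{0,2}\gamma+\bar f$, and splitting $\mathrm E f=\kappa^0+z_2\kappa^1$ with respect to the linearly independent triple $(1,z_2,w_{0,2})$ yields $\kappa^0-\hat{\mathrm D}_1\alpha$, $\kappa^1-\hat{\mathrm D}_1\beta$ and $\beta+\hat{\mathrm D}_1\gamma$ as affine combinations of $\zeta^{20}$, $\zeta^{11}$ and~$1$ with coefficients depending on $(z_1,\zeta^{10})$; pinning those nine coefficient functions down to the specific polynomials in~$z_1$ (and, via the elimination of~$\beta$, producing the $-\hat{\mathrm D}_1^2\gamma$ term and the coupled coefficients $3b_{11}z_1+b_{10}+b_{21}$, $b_{11}z_1^2+b_{21}z_1+b_{20}$, $z_1\zeta^{11}+2\zeta^{10}$) requires a further chain of differential consequences, cross-differentiations in $w_{1,0}$ and $\zeta^{10}$, and compatibility/splitting arguments. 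Asserting that the splitting ``pins down'' the listed combinations is not a proof of the solvability criterion --- it restates the conclusion. You also still need to integrate the resulting overdetermined system for~$\bar f$ explicitly to produce the displayed particular solution; the ``if''-verification of that formula does not substitute for showing that \emph{no other} obstruction terms survive. As it stands, the proposal is a correct plan with the decisive computation missing.
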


\begin{proof}
Let us prove the ``only if''-part. Denote $k_i:=\max(\ord_i\kappa^0,\ord_i\kappa^1)$, $i=1,2$.
Then the equation~\eqref{eq:dNs1.3RhoNe1ModifiedRedEqInTotalDers2'} implies that
$\ord_1f=k_1-1$ if $k_1>1$,
$\ord_1f\leqslant0$ if $k_1\leqslant1$,
$\ord_2f=k_2-1$ if $k_2>0$ and
$\ord_2f=-\infty$ if $k_2\leqslant0$.

Let $k_1>1$ and $k_2>0$.
Denote by $\Delta_{ik}$ the equation obtained by the differentiation of~\eqref{eq:dNs1.3RhoNe1ModifiedRedEqInTotalDers2'}
with respect to $\zeta^{ik}$, $i\in\{1,2\}$, $k\in\mathbb N_0$,
\begin{gather*}
\Delta_{ik}\colon\ f_{\zeta^{i,k-1}}+\big(\hat{\mathrm D}_1+w_{0,2}\hat{\mathrm D}_2\big)f_{\zeta^{ik}}
=\kappa^0_{\zeta^{ik}}+z_2\kappa^1_{\zeta^{ik}},\quad k>1\mbox{ \ if \ }i=1 \mbox{ \ and \ } k>0\mbox{ \ if \ }i=2,
\\
\Delta_{11}\colon\ f_{\zeta^{10}}+\big(\hat{\mathrm D}_1+w_{0,2}\hat{\mathrm D}_2\big)f_{\zeta^{11}}+z_2f_{w_{1,0}}
=\kappa^0_{\zeta^{11}}+z_2\kappa^1_{\zeta^{11}},
\\
\Delta_{10}\colon\ \big(\hat{\mathrm D}_1+w_{0,2}\hat{\mathrm D}_2\big)f_{\zeta^{10}}
+w_{0,2}f_{w_{1,0}}+f_{w_{0,1}}
=\kappa^0_{\zeta^{10}}+z_2\kappa^1_{\zeta^{10}},
\\
\Delta_{20}\colon\ \big(\hat{\mathrm D}_1+w_{0,2}\hat{\mathrm D}_2\big)f_{\zeta^{20}}+f_{w_{1,0}}
=\kappa^0_{\zeta^{20}}+z_2\kappa^1_{\zeta^{20}}.
\end{gather*}

For each $i\in\{1,2\}$,
we prove by induction downward with respect to~$k$ starting from $k=k_i-1$ to $k=1$ for $i=1$ and to $k=0$ for $i=2$
that
\begin{gather}\label{eq:dNs1.3RhoNe1ModifiedRedEqInTotalDersProofA}
\begin{split}&
f_{\zeta^{ik}}=\alpha^{ik}+z_2\beta^{ik}+w_{0,2}\gamma^{ik},\\&
k=1,\dots,k_1-1\mbox{ \ if \ }i=1\mbox{ \ and \ }k=0,\dots,k_2-1\mbox{ \ if \ }i=2,
\end{split}
\end{gather}
where $\alpha^{ik}$, $\beta^{ik}$ and~$\gamma^{ik}$ are $z_2$-integrals of~\eqref{eq:dNs1.3RhoNe1ModifiedRedEq}
whose $\ord_{i'}$ is less than~$k_{i'}$.
Indeed, the equation $\Delta_{ik_i}$ gives the base case with
$\alpha^{i,k_i-1}=\kappa^0_{\zeta^{ik_i}}$,
$\beta^{i,k_i-1}=\kappa^1_{\zeta^{ik_i}}$ and
$\gamma^{i,k_i-1}=0$.
For the induction step, suppose that the claim to be proved holds true for $k=l$.
Then the equation $\Delta_{il}$ implies
\[
f_{\zeta^{i,l-1}}
=\kappa^0_{\zeta^{il}}-\hat{\mathrm D}_1\alpha^{il}+z_2(\kappa^1_{\zeta^{il}}-\hat{\mathrm D}_1\beta^{il})
-w_{0,2}(\hat{\mathrm D}_1\gamma^{il}+\beta^{il}),
\]
i.e.,
$\alpha^{i,l-1}=\kappa^0_{\zeta^{il}}-\hat{\mathrm D}_1\alpha^{il}$,
$\beta^{i,l-1}=\kappa^1_{\zeta^{il}}-\hat{\mathrm D}_1\beta^{il}$ and
$\gamma^{i,l-1}=-\hat{\mathrm D}_1\gamma^{il}-\beta^{il}$.

The system~\eqref{eq:dNs1.3RhoNe1ModifiedRedEqInTotalDersProofA} implies that
\begin{gather}\label{eq:dNs1.3RhoNe1ModifiedRedEqInTotalDersProofB}
f=\alpha+z_2\beta+w_{0,2}\gamma+\bar f,
\end{gather}
where $\alpha$, $\beta$ and~$\gamma$ are $z_2$-integrals of~\eqref{eq:dNs1.3RhoNe1ModifiedRedEq}
whose $\ord_i$ is less than~$k_i$,
and $\bar f$ is a function of at most $(z_1,z_2,w_{0,0},w_{1,0},w_{0,1},w_{*,2},\zeta^{10})$.
Recall that $(w_{*,2}):=(w_{k,2}, k=0,\dots,\ord_0\bar f)$
according to the explanation in the introductive part of Section~\ref{sec:dNs1.3RhoNe1SymLikeObjects}.

Acting in a similar way in the case $k_1\leqslant1$ and $k_2>0$ as in the above case,
we derive the representation~\eqref{eq:dNs1.3RhoNe1ModifiedRedEqInTotalDersProofB} for~$f$
with $z_2$-integrals~$\alpha$, $\beta$ and~$\gamma$ of~\eqref{eq:dNs1.3RhoNe1ModifiedRedEq}
whose $\ord_1$ is less than or equal to~$0$ and $\ord_2$ is less than~$k_2$.
The treatment of the case $k_1>1$ and $k_2\leqslant0$ is analogous.
In the case $k_1\leqslant1$ and $k_2\leqslant0$,
we immediately have the representation~\eqref{eq:dNs1.3RhoNe1ModifiedRedEqInTotalDersProofB} for~$f$
with zero $\alpha$, $\beta$ and~$\gamma$.

We substitute the representation~\eqref{eq:dNs1.3RhoNe1ModifiedRedEqInTotalDersProofB}
into~\eqref{eq:dNs1.3RhoNe1ModifiedRedEqInTotalDers2'},
\[\hat{\mathrm D}_1\alpha+z_2\hat{\mathrm D}_1\beta+w_{0,2}(\beta+\hat{\mathrm D}_1\gamma)
+\big(\hat{\mathrm D}_1+w_{0,2}\hat{\mathrm D}_2\big)\bar f
=\kappa^0+z_2\kappa^1,
\]
and consider the obtained equation for three fixed values
$(z_2^\iota,w_{0,0}^\iota,w_{1,0}^\iota,w_{0,1}^\iota,w_{*,2}^\iota)$, $\iota=1,2,3$,
of the variable tuple $(z_2,w_{0,0},w_{1,0},w_{0,1},w_{*,2})$
such that the tuples $(1,z_2^\iota,w_{0,2}^\iota)$ are linearly independent.
This gives the system of linear algebraic equations
\[
\hat{\mathrm D}_1\alpha-\kappa^0+z_2^\iota(\hat{\mathrm D}_1\beta-\kappa^1)+w_{0,2}^\iota(\beta+\hat{\mathrm D}_1\gamma)
=\chi^{\iota1}\zeta^{20}+\chi^{\iota2}\zeta^{11}+\chi^{\iota0}
\]
with respect to $\hat{\mathrm D}_1\alpha-\kappa^0$, $\hat{\mathrm D}_1\beta-\kappa^1$ and~$\beta+\hat{\mathrm D}_1\gamma$
with nonzero determinant of its coefficient matrix,
where $\chi^{\iota0}$, $\chi^{\iota1}$ and~$\chi^{\iota2}$ are functions of~$(z_1,\zeta^{10})$ such that
$\chi^{\iota1}\zeta^{20}+\chi^{\iota2}\zeta^{11}+\chi^{\iota0}$ is the value of ${-\big(\hat{\mathrm D}_1+w_{0,2}\hat{\mathrm D}_2\big)\bar f}$
at $(z_2^\iota,w_{0,0}^\iota,w_{1,0}^\iota,w_{0,1}^\iota,w_{*,2}^\iota)$.
The solution of the system takes the form
\[
\beta+\hat{\mathrm D}_1\gamma=\hat\varphi\zeta^{20}+\tilde \varphi\zeta^{11}+\check\varphi,\quad
\kappa^0=\hat{\mathrm D}_1\alpha+\hat\kappa^0\zeta^{20}+\tilde \kappa^0\zeta^{11}+\check\kappa^0,\quad
\kappa^1=\hat{\mathrm D}_1\beta+\hat\kappa^1\zeta^{20}+\tilde \kappa^1\zeta^{11}+\check\kappa^1,
\]
where $\hat\kappa^0$, $\check\kappa^0$, $\tilde \kappa^0$, $\hat\kappa^1$, $\check\kappa^1$, $\tilde \kappa^1$, $\hat\varphi$, $\check\varphi$ and~$\tilde\varphi$ are functions of~$(z_1,\zeta^{10})$,
and hence
\begin{gather}\label{eq:dNs1.3RhoNe1ModifiedRedEqInTotalDersProofC}
\begin{split}
\big(\hat{\mathrm D}_1+w_{0,2}\hat{\mathrm D}_2\big)\bar f
={}&\hat\kappa^0\zeta^{20}+\tilde \kappa^0\zeta^{11}+\check\kappa^0+z_2(\hat\kappa^1\zeta^{20}+\tilde \kappa^1\zeta^{11}+\check\kappa^1)\\
&-w_{0,2}(\hat\varphi\zeta^{20}+\tilde\varphi\zeta^{11}+\check\varphi).
\end{split}
\end{gather}

Acting on the equation~\eqref{eq:dNs1.3RhoNe1ModifiedRedEqInTotalDersProofC}
by the operators~$\p_{\zeta^{20}}$ and $\p_{\zeta^{11}}-z_2\p_{\zeta^{20}}$,
we derive its differential consequences
\[
\bar f_{w_{1,0}}=\hat\kappa^0+z_2\hat\kappa^1-w_{0,2}\hat\varphi, \quad
\bar f_{\zeta^{10}}=-z_2(\hat\kappa^0+z_2\hat\kappa^1-w_{0,2}\hat\varphi)+\tilde \kappa^0+z_2\tilde \kappa^1-w_{0,2}\tilde\varphi.
\]
We make the cross differentiation of these differential consequences with respect to~$w_{1,0}$ and~$\zeta^{10}$
and split the obtained equation \smash{$\hat\kappa^0_{\zeta^{10}}+z_2\hat\kappa^1_{\zeta^{10}}-w_{0,2}\hat\varphi_{\zeta^{10}}=0$}
with respect to~$z_2$ and~$w_{0,2}$,
which gives the equations \smash{$\hat\kappa^0_{\zeta^{10}}=\hat\kappa^1_{\zeta^{10}}=\hat\varphi_{\zeta^{10}}=0$},
i.e., the coefficients~$\hat\kappa^0$, $\hat\kappa^1$ and $\hat\varphi$ depend at most on~$z_1$.
Separately differentiating the equation~\eqref{eq:dNs1.3RhoNe1ModifiedRedEqInTotalDersProofC}
with respect to~$\zeta^{10}$, $w_{1,0}$, $w_{0,0}$ and~$w_{0,1}$
and taking into account the previously derived differential consequences,
we obtain the following equations:
\begin{gather*}
\bar f_{w_{0,1}}
=\check\kappa^0_{\zeta^{10}}+z_2\check\kappa^1_{\zeta^{10}}-w_{0,2}\check\varphi_{\zeta^{10}}
+z_2(\hat\kappa^0_{z_1}+z_2\hat\kappa^1_{z_1}-w_{0,2}\hat\varphi_{z_1}+w_{0,2}\hat\kappa^1)
\\\qquad\quad{}
-(\tilde \kappa^0_{z_1}+z_2\tilde \kappa^1_{z_1}-w_{0,2}\tilde\varphi_{z_1}+w_{0,2}\tilde \kappa^1),
\\
\bar f_{w_{0,0}}=-(\hat\kappa^0_{z_1}+z_2\hat\kappa^1_{z_1}-w_{0,2}\hat\varphi_{z_1}+w_{0,2}\hat\kappa^1),
\\
\hat\kappa^0_{z_1z_1}+z_2\hat\kappa^1_{z_1z_1}-w_{0,2}\hat\varphi_{z_1z_1}+2w_{0,2}\hat\kappa^1_{z_1}=0,
\\
\check\kappa^0_{z_1\zeta^{10}}+z_2\check\kappa^1_{z_1\zeta^{10}}-w_{0,2}\check\varphi_{z_1\zeta^{10}}
+w_{0,2}\check\kappa^1_{\zeta^{10}}-(\tilde \kappa^0_{z_1z_1}+z_2\tilde \kappa^1_{z_1z_1}-w_{0,2}\tilde\varphi_{z_1z_1}+2w_{0,2}\tilde \kappa^1_{z_1})=0.
\end{gather*}
The cross differentiation of expressions for~$\bar f_{\zeta^{10}}$ and~$\bar f_{w_{0,1}}$
in addition gives the following equations:
\smash{$\check\kappa^0_{\zeta^{10}\zeta^{10}}=\tilde \kappa^0_{z_1\zeta^{10}}$},
\smash{$\check\kappa^1_{\zeta^{10}\zeta^{10}}=\tilde \kappa^1_{z_1\zeta^{10}}$} and
\smash{$\check\varphi_{\zeta^{10}\zeta^{10}}=
\tilde\varphi_{z_1\zeta^{10}}-\tilde \kappa^1_{\zeta^{10}}$}.
The last two equations split with respect to~$z_2$ and~$w_{0,2}$ into the equations
${\hat\kappa^0_{z_1z_1}=\hat\kappa^1_{z_1z_1}=0}$,
${\hat\varphi_{z_1z_1}=2\hat\kappa^1_{z_1}}$,
\smash{$\check\kappa^0_{z_1\zeta^{10}}=\tilde \kappa^0_{z_1z_1}$},
\smash{$\check\kappa^1_{z_1\zeta^{10}}=\tilde \kappa^1_{z_1z_1}$} and
\smash{$\check\varphi_{z_1\zeta^{10}}-\check\kappa^1_{\zeta^{10}}=\tilde\varphi_{z_1z_1}-2\tilde \kappa^1_{z_1}$}.
Therefore,
\begin{gather*}
\check\kappa^0=\Phi_{z_1},\quad
\check\kappa^1=\Psi_{z_1},\quad
\check\varphi=\Theta_{z_1}-\Psi,
\\
\tilde \kappa^0=\Phi_{\zeta^{10}}-a_0z_1,\quad
\tilde \kappa^1=\Psi_{\zeta^{10}}-a_1z_1,\quad
\tilde\varphi=\Theta_{\zeta^{10}}-a_1z_1^2-a_2z_1,
\\
\hat\kappa^0=b_{01}z_1+b_{00},\quad
\hat\kappa^1=b_{11}z_1+b_{10},\quad
\hat\varphi=b_{11}z_1^{\,2}+b_{21}z_1+b_{20},
\end{gather*}
where~$a_j$ and~$b_{jj'}$, $j=0,1,2$, $j'=0,1$, are arbitrary constants and
$\Phi$, $\Psi$ and~$\Theta$ are arbitrary functions of~$(z_1,\zeta^{10})$.
As a result, the function~$\bar f$ takes the form
\begin{gather*}
\bar f
=\hat f(z_1,z_2,w_{*,2})
+\Phi+z_2\Psi-w_{0,2}\Theta
+(a_0+a_1\theta^0-a_2w_{0,2})(w_{0,1}-z_1\zeta^{10})
\\\quad\quad{}
+\big(b_{01}z_1+b_{00}+b_{11}z_1\theta^0+b_{10}z_2-(b_{21}z_1+b_{20})w_{0,2}\big)(w_{1,0}-z_2\zeta^{10})
\\\quad\quad{}
+\big(b_{01}+b_{11}\theta^0+(b_{10}-b_{21})w_{0,2}\big)(z_2w_{0,1}-w_{0,0})
\end{gather*}
for some function~$\hat f$ of the indicated arguments.
We substitute this representation for~$\bar f$ into the equation~\eqref{eq:dNs1.3RhoNe1ModifiedRedEqInTotalDersProofC}
and obtain the reduced equation for~$\hat f$,
\begin{gather*}
\big(\hat{\mathrm D}_1+w_{0,2}\hat{\mathrm D}_2\big)\hat f(z_1,z_2,w_{*,2})
=\tfrac16(w_{0,2})^3\big(b_{01}z_1+b_{00}+b_{11}z_1\theta^0
+b_{10}z_2-(b_{21}z_1+b_{20})w_{0,2}\big)
\\\quad{}
-\tfrac12z_2(w_{0,2})^2\big(b_{01}+b_{11}\theta^0+(b_{10}-b_{21})w_{0,2}\big)
-\tfrac12(w_{0,2})^2(a_0+a_1\theta^0-a_2w_{0,2}).
\end{gather*}
We solve this equation with respect to~$\hat f$.
For convenient representation of the solution,
we consider the antiderivative of the fourth coefficient, which depends on~$z_1$,
in the right-hand side of the equation as a parameter function instead of the function involved in this coefficient,
\begin{gather*}
\hat f
=\check f
-\tfrac12 z_1(w_{0,2})^2(a_0+a_1\theta^0-a_2w_{0,2})
\\\quad\quad{}
+\tfrac16b_{00}z_1(w_{0,2})^3
+b_{01}\big(\tfrac1{12}z_1^2(w_{0,2})^3-\tfrac14z_2^2w_{0,2}\big)
\\\quad\quad{}
-\tfrac16b_{10}z_2^2(w_{0,2})^2
+b_{11}\big(\tfrac1{12}z_1^2(w_{0,2})^3\theta^0-\tfrac14z_2^2w_{0,2}\theta^0\big)
\\\quad\quad{}
-\tfrac16b_{20}z_1(w_{0,2})^4
-b_{21}\big(\tfrac1{12}z_1^2(w_{0,2})^4-\tfrac14z_2^2(w_{0,2})^2\big),
\end{gather*}
where $\check f$ is an arbitrary solution of the associated homogeneous equation
$\big(\hat{\mathrm D}_1+w_{0,2}\hat{\mathrm D}_2\big)\check f=0$.

Successively carrying out all the above substitutions
and re-denoting
$\alpha+\Phi-a_0z_1\zeta^{10}$ by $\alpha$ and
$\gamma-\Theta+a_1z_1^2\zeta^{10}+a_2z_1\zeta^{10}$ by $\gamma$
result in the expressions for~$\kappa^0$, $\kappa^1$ and~$f$ in lemma's statement.
\end{proof}

\begin{lemma}\label{lem:dNs1.3RhoNe1AuxiliaryLemma3}
Given a $z_2$-integral~$\alpha$ of the equation~\eqref{eq:dNs1.3RhoNe1ModifiedRedEq},
a solution~$\varrho$ of the equation~\eqref{eq:dNs1.3RhoNe1ModifiedRedEqInTotalDers}
and some constants~$a_0$, $a_1$ and~$a_2$, the equation
\begin{gather}\label{eq:dNs1.3RhoNe1ModifiedRedEqInTotalDers3}
\begin{split}
\hat{\mathrm D}_2 f={}&
\alpha+\varrho+(a_2z_1+a_1)(w_{1,0}-z_2\zeta^{10})+(a_2z_2+a_0)w_{0,1}-a_2w_{0,0}\\&{}
+\frac{z_1}{12}(a_2z_1+2a_1)(w_{0,2})^3-\frac{z_2}4(a_2z_2+2a_0)w_{0,2}
\end{split}
\end{gather}
has a solution if and only if
\begin{gather*}
\varrho=-R\frac{R\hat\varrho}{\theta^2}
+a_1(w_{0,2})^2\bigg(
-\frac23w_{0,2}\frac{(\theta^1)^2}{(\theta^2)^2}\theta^3
-2\frac{(\theta^1)^2}{\theta^2}
+2w_{0,2}\theta^1
-\frac12\theta^0
\bigg)
\\ \qquad{}
+a_2w_{0,2}\bigg(
2w_{0,2}\frac{(\theta^1)^2}{(\theta^2)^2}\theta^0\theta^3
+4\frac{(\theta^1)^2}{\theta^2}\theta^0
-2w_{0,2}\frac{(\theta^1)^3}{\theta^2}
+\frac54(\theta^0)^2-6w_{0,2}\theta^0\theta^1
\bigg)
\\ \qquad{}
+b_{00}w_{0,2}-b_{01}\theta^0
+b_{10}w_{0,2}\theta^0
-b_{11}\frac{(\theta^0)^2}2
+b_{20}\frac{(w_{0,2})^2}2
\\ \qquad{}
+\tilde b_{21}w_{0,2}\theta^1\bigg(
\frac{w_{0,2}}2\frac{\theta^1\theta^3}{(\theta^2)^2}
+\frac{\theta^1}{\theta^2}
-\frac32w_{0,2}\bigg),
\end{gather*}
where
$R:=-\p_{w_{0,2}}+\theta^{k+1}\p_{\theta^k}$,
the function~$\hat\varrho$ depends at most on~$w_{0,2}$ and a finite number of~$\theta^k$,
$b_{00}$, $b_{01}$, $b_{10}$, $b_{20}$, $b_{11}$ and~$\tilde b_{21}$ are arbitrary constants.
Then the general solution of~\eqref{eq:dNs1.3RhoNe1ModifiedRedEqInTotalDers3}~is
\begin{gather*}
f=\beta+z_2\alpha-\frac{w_{0,2}}{w_{1,2}\theta^2}R\hat\varrho+\hat\varrho
\\ \qquad{}
-\frac{a_0}2(z_2w_{0,1}-3w_{0,0})
+a_1\bigg(
\frac23\frac{(w_{0,2})^4}{w_{1,2}}\frac{(\theta^1)^2}{\theta^2}+\frac16z_1^2(w_{0,2})^4
+z_2w_{1,0}-z_2^2\zeta^{10}
\bigg)
\\ \qquad{}
-a_2\bigg(
2\frac{(w_{0,2})^3}{w_{1,2}}\frac{(\theta^1)^2}{\theta^2}\theta^0
+\frac23z_1^2(w_{0,2})^3\theta^0+\frac16z_1^3(w_{0,2})^4
\\ \qquad\qquad\quad{}
+z_2w_{0,0}-z_2^2w_{0,1}-z_1z_2w_{1,0}+z_1z_2^2\zeta^{10}
\bigg)
\\ \qquad{}
+b_{00}w_{0,1}
+\frac{b_{01}}2(2z_1w_{0,1}-z_2^2)
+b_{10}(2z_1w_{1,0}+z_2w_{0,1}-w_{0,0}-2z_1z_2\zeta^{10})
\\ \qquad{}
+b_{11}\bigg(z_1^2w_{1,0}+z_1z_2w_{0,1}-z_1w_{0,0}-\frac16z_2^3-z_1^2z_2\zeta^{10}\bigg),
\\ \qquad{}
-b_{20}(w_{1,0}-z_2\zeta^{10})
-\tilde b_{21}\bigg(
\frac{(w_{0,2})^3}{2w_{1,2}}\frac{(\theta^1)^2}{\theta^2}+\frac16z_1^2(w_{0,2})^3
+z_1w_{1,0}-z_2z_1\zeta^{10}
\bigg)
\end{gather*}
where $\beta$ is an arbitrary $z_2$-integral of~\eqref{eq:dNs1.3RhoNe1ModifiedRedEq}.
\end{lemma}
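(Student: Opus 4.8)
The plan is to follow the proof of Lemma~\ref{lem:dNs1.3RhoNe1AuxiliaryLemma2b}, working in the coordinates~\eqref{eq:dNs1.3RhoNe1ModifiedRedEqCoords} and, for the decisive step, in the refined coordinates~\eqref{eq:dNs1.3RhoNe1ModifiedRedEqCoords2}, in which $z_1,z_2$ are themselves functions of $(w_{0,2},w_{1,2},\theta^0,\theta^1)$ and $\hat{\mathrm D}_2$ has the displayed explicit form. The ``if''-part is a direct verification: applying~$\hat{\mathrm D}_2$ to the claimed~$f$, one uses $\hat{\mathrm D}_2\beta=\hat{\mathrm D}_2\alpha=0$ (Theorem~\ref{thm:dNs1.3RhoNe1Integrals}), $\hat{\mathrm D}_2(z_2\alpha)=\alpha$, the identities $\hat{\mathrm D}_2\theta^k=\tfrac{w_{1,2}}{w_{0,2}}\theta^{k+1}$ and $\hat{\mathrm D}_2w_{0,2}=-\tfrac{w_{1,2}}{w_{0,2}}$, and the fact that on differential functions of~$w_{0,2}$ and finitely many~$\theta^k$ the operator~$\hat{\mathrm D}_2$ acts as $\tfrac{w_{1,2}}{w_{0,2}}R$; the resulting cancellations reproduce the right-hand side of~\eqref{eq:dNs1.3RhoNe1ModifiedRedEqInTotalDers3}.

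For the ``only if''-part I would begin with the order analysis of Lemma~\ref{lem:dNs1.3RhoNe1AuxiliaryLemma2b}. With $k_i$ the maximum of~$\ord_i$ over the differential functions on the right-hand side of~\eqref{eq:dNs1.3RhoNe1ModifiedRedEqInTotalDers3}, successive differentiation with respect to~$\zeta^{ik}$ for decreasing~$k$ peels off the top-order dependence of~$f$ and yields $f=\beta+z_2\alpha+\bar f$, where~$\beta$ is a $z_2$-integral of~\eqref{eq:dNs1.3RhoNe1ModifiedRedEq} (the kernel of~$\hat{\mathrm D}_2$ by Theorem~\ref{thm:dNs1.3RhoNe1Integrals}), the summand~$z_2\alpha$ is forced in order to absorb the~$\alpha$ term, and~$\bar f$ depends at most on $z_1,z_2,w_{0,0},w_{1,0},w_{0,1},w_{0,2},w_{1,2}$, finitely many~$\theta^k$, and $\zeta^{10},\zeta^{20}$. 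Substituting this into~\eqref{eq:dNs1.3RhoNe1ModifiedRedEqInTotalDers3} and analysing, by repeated differentiation and splitting exactly as in Lemma~\ref{lem:dNs1.3RhoNe1AuxiliaryLemma2b}, the dependence of~$\bar f$ on $w_{0,0},w_{1,0},w_{0,1},\zeta^{10},\zeta^{20}$, one pins down the part of~$\bar f$ polynomial in these variables; this generates the constants $b_{00},b_{01},b_{10},b_{20},b_{11},\tilde b_{21}$ together with the $a_0$-, $a_1$-, $a_2$-contributions to~$f$, and reduces the problem to a single core equation $\hat{\mathrm D}_2\hat f=\varrho+E$ for a function~$\hat f$ of $(w_{0,2},w_{1,2},\theta^*)$ alone, $E$ being a fully explicit combination of~$w_{0,2}$ and the~$\theta^k$ whose coefficients are linear in $a_1,a_2$ and the above constants. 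Recall that~$\varrho$, being a solution of~\eqref{eq:dNs1.3RhoNe1ModifiedRedEqInTotalDers}, is by Lemma~\ref{lem:dNs1.3RhoNe1AuxiliaryLemma1} a function of~$w_{0,2}$ and finitely many~$\theta^k$.

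The heart of the argument, and the step I expect to be the main obstacle, is solving this core equation and reading off the compatibility constraint it imposes on~$\varrho$. After removing the explicit inhomogeneity~$E$ by an explicit particular solution (a polynomial in $w_{0,2},\theta^0,\theta^1,\theta^2,\theta^3$ and in $1/w_{1,2}$ with coefficients linear in $a_1,a_2$ and the $b$-constants — these account for most of the terms in the claimed formula for~$f$), what remains is $\hat{\mathrm D}_2\hat f_0=\varrho$; the natural ansatz here is $\hat f_0=-\tfrac{w_{0,2}}{w_{1,2}\theta^2}R\hat\varrho+\hat\varrho$ with~$\hat\varrho$ a function of $(w_{0,2},\theta^*)$ only, the precise prefactor $-\tfrac{w_{0,2}}{w_{1,2}\theta^2}$ being forced by the~$\p_{w_{1,2}}$-term in~$\hat{\mathrm D}_2$, which is exactly what makes all $w_{1,2}$-dependent contributions cancel and leaves $\hat{\mathrm D}_2\hat f_0=-R\bigl(R\hat\varrho/\theta^2\bigr)$. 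Matching then shows that a solution exists if and only if $\varrho$ equals $-R\bigl(R\hat\varrho/\theta^2\bigr)$ modulo the explicit exceptional terms (the $a_1$-, $a_2$-, $b$-summands in the stated form of~$\varrho$), which is precisely the asserted characterization; the general solution assembles from~$\beta$, $z_2\alpha$, the particular solution for~$E$, and~$\hat f_0$, while the non-uniqueness reduces to the freedom in~$\beta$ and in the kernel of the map $\hat\varrho\mapsto R\bigl(R\hat\varrho/\theta^2\bigr)$. The genuinely laborious part is the bookkeeping of the many polynomial terms in $w_{0,2},\theta^0,\theta^1,\theta^2,\theta^3$ produced by the~$a_1$- and~$a_2$-pieces of the right-hand side and their matching against the claimed expression for~$f$; that is where essentially all the computational effort lies.
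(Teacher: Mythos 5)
Your ``if''-part and your identification of the key identity are both correct: a direct computation with $\hat{\mathrm D}_2 w_{0,2}=-w_{1,2}/w_{0,2}$, $\hat{\mathrm D}_2 w_{1,2}=-(w_{1,2}/w_{0,2})^2(w_{1,2}\theta^2+1)$ and $\hat{\mathrm D}_2\theta^k=(w_{1,2}/w_{0,2})\theta^{k+1}$ indeed gives
$\hat{\mathrm D}_2\bigl(-\tfrac{w_{0,2}}{w_{1,2}\theta^2}R\hat\varrho+\hat\varrho\bigr)=-R\bigl(R\hat\varrho/\theta^2\bigr)$,
with the $w_{1,2}$-dependent contributions cancelling exactly as you say. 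Your ordering of the reduction also differs from the paper's (you peel off the $\zeta^{ik}$- and $w_{j,0}$-, $w_{0,1}$-, $w_{0,0}$-dependence first and leave the $\theta$-analysis for last, whereas the paper begins with the $\theta$-dependence), but that reordering is not in itself a problem.

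The genuine gap is in the ``only if'' direction of your core step. Writing down a ``natural ansatz'' $\hat f_0=-\tfrac{w_{0,2}}{w_{1,2}\theta^2}R\hat\varrho+\hat\varrho$ and ``matching'' proves only that every $\varrho$ of the stated form is attainable; the lemma asserts that solvability of~\eqref{eq:dNs1.3RhoNe1ModifiedRedEqInTotalDers3} \emph{forces} $\varrho$ into that form, and for this you must show that \emph{every} solution $f$ has its $w_{1,2}$-dependence confined to the single factor $w_{0,2}/w_{1,2}$ multiplying a function of $(w_{0,2},\theta^*)$. You assert this is ``forced by the $\p_{w_{1,2}}$-term in $\hat{\mathrm D}_2$'' but give no mechanism. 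The paper derives it by a downward recursion: differentiating the equation with respect to $\theta^{k+1}$ yields $f_{\theta^k}=\tfrac{w_{0,2}}{w_{1,2}}\varrho_{\theta^{k+1}}-\tfrac{w_{0,2}}{w_{1,2}}\hat{\mathrm D}_2f_{\theta^{k+1}}$ (and the analogous relation coupling $f_{\theta^1}$ with $f_{w_{1,2}}$), which after checking the cross-compatibility conditions gives the representation $f=\tfrac{w_{0,2}}{w_{1,2}}\hat f+\check f+\bar f$ with $\hat f,\check f$ functions of $(w_{0,2},\theta^0,\dots,\theta^r)$ only. The subsequent determination of the exceptional ($a_1$-, $a_2$-, $b$-) terms is likewise not pure bookkeeping: it rests on splitting arguments in the mixed coordinates where $z_1$ and $z_2$ are expressed through $(w_{0,2},w_{1,2},\theta^0,\theta^1)$, e.g.\ acting by $(\theta^1-z_1)^2\p_{\theta^1}$ to force $\phi$ and $\psi$ to be polynomial in $\theta^1$ and then integrating the resulting Frobenius-type system for $V$. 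Without an argument of this kind your characterization of the admissible $\varrho$ remains a conjecture consistent with the ansatz rather than a proof.
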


\begin{proof}
Replacing~$f$ by $f-z_2\alpha$, we can set $\alpha=0$.
We represent the equation~\eqref{eq:dNs1.3RhoNe1ModifiedRedEqInTotalDers3}
in the modified coordinates~\eqref{eq:dNs1.3RhoNe1ModifiedRedEqCoords2}, substituting
\begin{gather*}
z_1=\theta^1-\frac{w_{0,2}}{w_{1,2}},\quad
z_2=\theta^0+w_{0,2}z_1=\theta^0+w_{0,2}\theta^1-\frac{(w_{0,2})^2}{w_{1,2}}.
\end{gather*}

Suppose that $r:=\ord_0f\geqslant2$.
Then $\ord_0\varrho=r+1$.
Differentiating the equation~\eqref{eq:dNs1.3RhoNe1ModifiedRedEqInTotalDers3}
with respect to one of the coordinates~$\theta^{k+1}$ with $k\geqslant2$ or $\theta^2$,
we respectively derive the equations
\begin{gather*}
f_{\theta^k}=\frac{w_{0,2}}{w_{1,2}}\varrho_{\theta^{k+1}}-\frac{w_{0,2}}{w_{1,2}}\hat{\mathrm D}_2 f_{\theta^{k+1}},\quad k\geqslant2,
\quad
f_{\theta^1}-\frac{(w_{1,2})^2}{w_{0,2}}f_{w_{1,2}}
=\frac{w_{0,2}}{w_{1,2}}\varrho_{\theta^2}-\frac{w_{0,2}}{w_{1,2}}\hat{\mathrm D}_2 f_{\theta^2}.
\end{gather*}
We use these equations, going from $k=r$ downward to derive
\begin{gather*}
f_{\theta^k}=\frac{w_{0,2}}{w_{1,2}}\varrho_{\theta^{k+1}}-\frac{w_{0,2}}{w_{1,2}}\hat{\mathrm D}_2
\left(\frac{w_{0,2}}{w_{1,2}}\varrho_{\theta^{k+2}}-\frac{w_{0,2}}{w_{1,2}}\hat{\mathrm D}_2
\left(\dots\left(\frac{w_{0,2}}{w_{1,2}}\varrho_{\theta^{r+1}}
\right)\dots\right)\right),\quad k=2,\dots,r,
\\
f_{\theta^1}-\frac{(w_{1,2})^2}{w_{0,2}}f_{w_{1,2}}
=\frac{w_{0,2}}{w_{1,2}}\varrho_{\theta^2}-\frac{w_{0,2}}{w_{1,2}}\hat{\mathrm D}_2
\left(\frac{w_{0,2}}{w_{1,2}}\varrho_{\theta^3}-\frac{w_{0,2}}{w_{1,2}}\hat{\mathrm D}_2
\left(\dots\left(\frac{w_{0,2}}{w_{1,2}}\varrho_{\theta^{r+1}}
\right)\dots\right)\right).
\end{gather*}
This implies that
\[
f_{\theta^k}=\frac{w_{0,2}}{w_{1,2}}\hat\varphi^k+\check\varphi^k,\quad k=2,\dots,r,
\quad
f_{\theta^1}-\frac{(w_{1,2})^2}{w_{0,2}}f_{w_{1,2}}=\frac{w_{0,2}}{w_{1,2}}\hat\varphi^1+\check\varphi^1,
\]
where $\hat\varphi^k$ and~$\check\varphi^k$, $k=1,\dots,r$, are at most functions of $(w_{0,2},\theta^0,\dots,\theta^r)$.
The compatibility conditions of the last collection of equations as a system with respect to~$f$ are
$\hat\varphi^k_{\theta^l}=\hat\varphi^l_{\theta^k}$,
$\check\varphi^k_{\theta^l}+\delta_{l1}\hat\varphi^k=\check\varphi^l_{\theta^k}+\delta_{k1}\hat\varphi^l$,
$k,l=1,\dots,r$,
where $\delta_{kl}$ is the Kronecker delta.
Hence, integrating this system gives the following representation for~$f$:
\begin{gather}\label{eq:ReprOfFViaHatCheckBar}
f=\frac{w_{0,2}}{w_{1,2}}\hat f+\check f+\bar f,
\end{gather}
where $\hat f$ and~$\check f$ are at most functions of $(w_{0,2},\theta^0,\dots,\theta^r)$,
and the function~$\bar f$ depends at most on $(z_1,w_{0,0},w_{1,0},w_{0,1},w_{0,2},\theta^0,\zeta^{ik})$.
We substitute this representation into the equation~\eqref{eq:dNs1.3RhoNe1ModifiedRedEqInTotalDers3},
\begin{gather}\label{eq:dNs1.3RhoNe1ModifiedRedEqInTotalDers3bar}
\begin{split}
\hat{\mathrm D}_2\bar f={}&\phi-\frac{w_{1,2}}{w_{0,2}}\psi
+(a_2z_1+a_1)(w_{1,0}-z_2\zeta^{10})+(a_2z_2+a_0)w_{0,1}-a_2w_{0,0}\\&{}
+\frac{z_1}{12}(a_2z_1+2a_1)(w_{0,2})^3-\frac{z_2}4(a_2z_2+2a_0)w_{0,2},
\end{split}
\end{gather}
where we should substitute the above expressions for~$z_1$ and~$z_2$,
$z_1=\theta^1-w_{0,2}/w_{1,2}$ and $z_2=\theta^0+w_{0,2}\theta^1-(w_{0,2})^2/w_{1,2}$.
Since $\hat{\mathrm D}_2z_1=0$, the coefficients
\begin{gather}\label{eq:PhiPsiDef}
\phi:=\varrho-\frac{w_{0,2}}{w_{1,2}}\hat{\mathrm D}_2\hat f,\quad
\psi:=\theta^2\hat f+\frac{w_{0,2}}{w_{1,2}}\hat{\mathrm D}_2\check f
\end{gather}
depend at most on $(w_{0,2},\theta^0,\theta^1)$.
We replace the coordinates, using $z_1$ instead of $w_{1,2}$ as a coordinate.
Then the equation~\eqref{eq:dNs1.3RhoNe1ModifiedRedEqInTotalDers3bar} takes the form
\begin{gather}\label{eq:dNs1.3RhoNe1ModifiedRedEqInTotalDers3barB}
\begin{split}&
w_{0,1}\bar f_{w_{0,0}}+\left(\zeta^{10}-\frac12(w_{0,2})^2\right)\bar f_{w_{1,0}}+w_{0,2}\bar f_{w_{0,1}}
+\frac{\bar f_{\theta^0}\theta^1-\bar f_{w_{0,2}}}{\theta^1-z_1}
\\&\qquad
={}\phi-\frac\psi{\theta^1-z_1}
+\frac{z_1}{12}(a_2z_1+2a_1)(w_{0,2})^3-\frac{z_2}4(a_2z_2+2a_0)w_{0,2}
\\&\qquad\quad
+(a_2z_1+a_1)(w_{1,0}-z_2\zeta^{10})+(a_2z_2+a_0)w_{0,1}-a_2w_{0,0},
\end{split}
\end{gather}
where $z_2:=\theta^0+w_{0,2}z_1$.
We act on the last equation by the operator $(\theta^1-z_1)^2\p_{\theta^1}$,
\begin{gather}\label{eq:dNs1.3RhoNe1ModifiedRedEqInTotalDers3barConseq1}
\bar f_{w_{0,2}}-z_1\bar f_{\theta^0}
=(\theta^1-z_1)^2\phi_{\theta^1}-(\theta^1-z_1)\psi_{\theta^1}+\psi.
\end{gather}
Differentiating the equation~\eqref{eq:dNs1.3RhoNe1ModifiedRedEqInTotalDers3barConseq1}
once more with respect to~$\theta^1$ and splitting the obtained equation with respect to~$z_1$,
we derive $\phi_{\theta^1\theta^1}=0$ and $\psi_{\theta^1\theta^1}=2\phi_{\theta^1}$.
Hence
\[
\phi=\phi^1\theta^1+\phi^0,\quad
\psi=\phi^1(\theta^1)^2+\psi^1\theta^1+\psi^0,
\]
where $\phi^0$, $\phi^1$, $\psi^0$ and~$\psi^1$ are functions of $(w_{0,2},\theta^0)$.
Thus, the equation~\eqref{eq:dNs1.3RhoNe1ModifiedRedEqInTotalDers3barConseq1} reduces to
\begin{gather}\label{eq:dNs1.3RhoNe1ModifiedRedEqInTotalDers3barConseq2}
\bar f_{w_{0,2}}-z_1\bar f_{\theta^0}=\phi^1z_1^2+\psi^1z_1+\psi^0.
\end{gather}
We substitute the expression for $\bar f_{w_{0,2}}$
in view of~\eqref{eq:dNs1.3RhoNe1ModifiedRedEqInTotalDers3barConseq2}
into the equation~\eqref{eq:dNs1.3RhoNe1ModifiedRedEqInTotalDers3barB},
\begin{gather}\label{eq:dNs1.3RhoNe1ModifiedRedEqInTotalDers3barC}
\begin{split}&
w_{0,1}\bar f_{w_{0,0}}+\left(\zeta^{10}-\frac12(w_{0,2})^2\right)\bar f_{w_{1,0}}+w_{0,2}\bar f_{w_{0,1}}
+\bar f_{\theta^0}
\\&\qquad
={}\phi^0-\psi^1-z_1\phi^1
+\frac{z_1}{12}(a_2z_1+2a_1)(w_{0,2})^3-\frac{z_2}4(a_2z_2+2a_0)w_{0,2}
\\&\qquad\quad
+(a_2z_1+a_1)(w_{1,0}-z_2\zeta^{10})+(a_2z_2+a_0)w_{0,1}-a_2w_{0,0}.
\end{split}
\end{gather}
Differential consequences of~\eqref{eq:dNs1.3RhoNe1ModifiedRedEqInTotalDers3barConseq2}
and~\eqref{eq:dNs1.3RhoNe1ModifiedRedEqInTotalDers3barC} are
\begin{gather}\label{eq:BlockEq1}
-w_{0,2}\bar f_{w_{1,0}}+\bar f_{w_{0,1}}=V+\frac{z_1}4(a_2z_1+2a_1)(w_{0,2})^2-\frac{z_2}4(a_2z_2+2a_0),
\\\label{eq:BlockEq2}
-\bar f_{w_{1,0}}=(\p_{w_{0,2}}-z_1\p_{\theta^0})V+\frac{z_1}2(a_2z_1+2a_1)w_{0,2},
\\\label{eq:BlockEq4}
-\bar f_{w_{0,0}}=\p_{\theta^0}V+(a_2z_1+a_1)w_{0,2}-\frac32(a_2z_2+a_0),
\\ \label{eq:V}
\p_{\theta^0}^2V=\frac52a_2,\quad
\p_{\theta^0}\p_{w_{0,2}}V=\frac32a_2z_1-a_1,\quad
\p_{w_{0,2}}^2V=-3a_1z_1,
\end{gather}
where
$V:=(\phi^0-\psi^1-z_1\phi^1)_{w_{0,2}}-(z_1\phi^0+\psi^0)_{\theta^0}$.
We integrate the equations~\eqref{eq:V} with respect to~$V$,
\begin{gather*}
V=\tfrac54a_2(\theta^0)^2+\left(\tfrac32a_2z_1-a_1\right)w_{0,2}\theta^0-\tfrac32a_1z_1(w_{0,2})^2
\\ \quad\quad{}
+(b_{11}z_1+b_{10})\theta^0+(b_{21}z_1+b_{20})w_{0,2}+b_{01}z_1+b_{00},
\end{gather*}
where $b_{ij}$, $i=0,1,2$, $j=0,1$, are arbitrary constants.
Recalling the definition of~$V$, we split the last equality with respect to~$z_1$,
and derive two equations for~$\phi^0$, $\phi^1$, $\psi^0$ and $\psi^1$,
whose general solution can be represented as
\begin{gather*}
\phi^0=\Phi_{w_{0,2}}-\tfrac34a_2w_{0,2}(\theta^0)^2-\tfrac12b_{11}(\theta^0)^2,
\\
\phi^1=-\Phi_{\theta^0}+\tfrac12a_1(w_{0,2})^3-\tfrac12b_{21}(w_{0,2})^2-b_{01}w_{0,2},
\\
\psi^0=\Psi_{w_{0,2}}+\tfrac12(a_1w_{0,2}-b_{10})(\theta^0)^2-\tfrac5{12}a_2(\theta^0)^3,
\\
\psi^1=\phi^0-\Psi_{\theta^0}-\tfrac12b_{20}(w_{0,2})^2-b_{00}w_{0,2},
\end{gather*}
where $\Phi$ and~$\Psi$ are arbitrary functions of~$(w_{0,2},\theta^0)$.
We substitute the expressions for~$\phi^0$, $\phi^1$, $\psi^0$, $\psi^1$ and~$V$
into the equations~\eqref{eq:dNs1.3RhoNe1ModifiedRedEqInTotalDers3barConseq2}--\eqref{eq:BlockEq4}
and integrate the obtained equations with respect to~$\bar f$,
\begin{gather*}
\bar f=z_1\Phi+\Psi+\beta
\\ \qquad{}
+\frac{a_0}2(3w_{0,0}-z_2w_{0,1})
+a_1\left(\frac{(w_{0,2})^2}{12}(2z_2^2-2z_2\theta^0+3(\theta^0)^2)+z_2w_{1,0}-z_2^2\zeta^{10}\right)
\\ \qquad{}
+a_2\left(\frac{w_{0,2}}{12}(z_1^2(w_{0,2})^2\theta^0-2z_2^3-3z_2^2\theta^0)
+z_1z_2w_{1,0}+z_2^2w_{0,1}-z_2w_{0,0}-z_1z_2^2\zeta^{10}\right)
\\ \qquad{}
+b_{10}\left(
-\frac{w_{0,2}}6(z_2^2+z_2\theta^0+(\theta^0)^2)
+z_1w_{1,0}+z_2w_{0,1}-w_{0,0}-z_1z_2\zeta^{10}\right)
\\ \qquad{}
+b_{11}\left(-z_1\frac{w_{0,2}}6(z_2^2+z_2\theta^0+(\theta^0)^2)
+z_1^2w_{1,0}+z_1z_2w_{0,1}-z_1w_{0,0}-z_1^2z_2\zeta^{10}
\right)
\\ \qquad{}
-(b_{01}z_1+b_{00})\left(\frac{z_1}2(w_{0,2})^2-w_{0,1}\right)
-(b_{21}z_1+b_{20})\left(\frac{z_1}6(w_{0,2})^3+w_{1,0}-z_2\zeta^{10}\right)
,
\end{gather*}
where $\beta$ is an arbitrary $z_2$-integral of~\eqref{eq:dNs1.3RhoNe1ModifiedRedEq}.
The equations~\eqref{eq:ReprOfFViaHatCheckBar} and~\eqref{eq:PhiPsiDef} imply
the following representation for~$f$ in terms of~$\check f$, $\bar f$ and~$\psi$:
\[
f=\frac{w_{0,2}}{w_{1,2}}\left(\frac\psi{\theta^2}-\frac1{\theta^2}\frac{w_{0,2}}{w_{1,2}}\hat{\mathrm D}_2\check f\right)
+\check f+\bar f.
\]
We substitute the expressions for~$\bar f$ and~$\psi$ and $b_{21}=\tilde b_{21}-b_{10}$ into this representation
and, denoting $\hat\varrho=\check f+\Phi\theta^1+\Psi+\tilde\Phi\theta^1+\tilde\Psi$
with
\begin{gather*}
\tilde\Phi=
\frac{a_1}6(w_{0,2})^3\theta^0-a_2(w_{0,2})^2(\theta^0)^2
-\frac{b_{00}}2(w_{0,2})^2+b_{01}w_{0,2}\theta^0-\frac{b_{10}}2(w_{0,2})^2\theta^0-\frac{b_{20}}6(w_{0,2})^3,
\\
\tilde\Psi=
\frac{a_1}4(w_{0,2})^2(\theta^0)^2-\frac5{12}a_2w_{0,2}(\theta^0)^3
+\frac{b_{01}}2(\theta^0)^2-\frac{b_{10}}2w_{0,2}(\theta^0)^2+\frac{b_{11}}6(\theta^0)^3,
\end{gather*}
obtain the expression for~$f$ from lemma's statement.
The expression for~$\varrho$ is found by solving~\eqref{eq:dNs1.3RhoNe1ModifiedRedEqInTotalDers3}
with respect to~$\varrho$.
\end{proof}

\subsection{Generalized symmetries}\label{sec:dNs1.3RhoNe1GenSyms}

The natural representatives of equivalence classes of generalized symmetries
of the equation~\eqref{eq:dNs1.3RhoNe1ModifiedRedEq}
are generalized vector fields in evolution form
whose characteristics are differential functions of~$w$
that depend only on parametric derivatives of this equation.

\begin{theorem}\label{thm:dNs1.3RhoNe1ModifiedRedEqGenSyms}
A differential function $f\{w\}$ is the characteristic
of a generalized symmetry of the equation~\eqref{eq:dNs1.3RhoNe1ModifiedRedEq}
if and only if it is a linear combination of the differential functions
\begin{gather*}
w_{1,0},\quad
z_1w_{1,0}+w_{0,0},\quad
z_1^2w_{1,0}+z_1z_2w_{0,1}-z_1w_{0,0}-\frac16z_2^{\,3},
\\[.5ex]
w_{0,1},\quad
2z_1w_{0,1}-z_2^{\,2},\quad
z_2w_{0,1}-3w_{0,0},\quad
\beta,\quad z_2\alpha,\quad
\frac{w_{0,2}}{w_{1,2}\theta^2}(\hat\varrho_{w_{0,2}}-\theta^{k+1}\hat\varrho_{\theta^k})+\hat\varrho,
\\[.5ex]
\frac{(w_{0,2})^3}{2w_{1,2}}\frac{(\theta^1)^2}{\theta^2}+\frac16z_1^2(w_{0,2})^3+z_1w_{1,0}, \quad
\frac23\frac{(w_{0,2})^4}{w_{1,2}}\frac{(\theta^1)^2}{\theta^2}+\frac16z_1^2(w_{0,2})^4+z_2w_{1,0}-z_2^2\zeta^{10},
\\[.5ex]
2\frac{(w_{0,2})^3}{w_{1,2}}\frac{(\theta^1)^2}{\theta^2}\theta^0+\frac23z_1^2(w_{0,2})^3\theta^0+\frac16z_1^3(w_{0,2})^4
+z_2w_{0,0}-z_2^2w_{0,1}-z_1z_2w_{1,0}+z_1z_2^2\zeta^{10},
\end{gather*}
where $\alpha$ and~$\beta$ are arbitrary $z_2$-integrals of~\eqref{eq:dNs1.3RhoNe1ModifiedRedEq},
the $\hat\varrho$ is an arbitrary function of~$w_{0,2}$ and a~finite number of~$\theta^k$.
\end{theorem}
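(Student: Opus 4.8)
The plan is to determine the characteristic~$f\{w\}$ of a generalized symmetry by writing down the invariance condition and peeling off structure order by order, exactly as in the proofs of Lemmas~\ref{lem:dNs1.3RhoNe1AuxiliaryLemma1}--\ref{lem:dNs1.3RhoNe1AuxiliaryLemma3}, which are designed precisely to handle the differential equations that will arise here. Recall that $f$ is the characteristic of a generalized symmetry of~\eqref{eq:dNs1.3RhoNe1ModifiedRedEq} if and only if $\mathrm E(f)\{w\}=0$, where $\mathrm E$ is the linearization (Fr\'echet derivative) operator of the left-hand side of~\eqref{eq:dNs1.3RhoNe1ModifiedRedEq}. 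Since the equation has been put in Kovalevskaya form $w_{222}=-w_{122}/w_{022}$ (with $w_{0,2}\ne0$ on the relevant open set), I would restrict everything to the manifold~$\mathcal L$ using the coordinates~\eqref{eq:dNs1.3RhoNe1ModifiedRedEqCoords} or the modified ones~\eqref{eq:dNs1.3RhoNe1ModifiedRedEqCoords2}, and compute the restricted invariance condition in terms of the restricted total-derivative operators $\hat{\mathrm D}_1$ and~$\hat{\mathrm D}_2$. Explicitly, the linearization of $w_{122}+w_{022}w_{222}$ gives, after substituting the equation, a relation of the form $\hat{\mathrm D}_1\hat{\mathrm D}_2^{\,2}f + w_{0,2}\hat{\mathrm D}_2^{\,3}f + (\hat{\mathrm D}_2^{\,2}I^1)\,\hat{\mathrm D}_2^{\,2}f=0$ on~$\mathcal L$; the key point, as in \cite{popo2024a,popo2025a,popo2025b}, is that this can be rewritten as
\[
\hat{\mathrm D}_2\bigl(\bigl(\hat{\mathrm D}_1+w_{0,2}\hat{\mathrm D}_2\bigr)\hat{\mathrm D}_2 f - (\hat{\mathrm D}_2 f)(\hat{\mathrm D}_2 w_{0,2})\bigr)=0
\]
or an equivalent telescoping form, so that the problem is reduced to a hierarchy of the inhomogeneous linear equations already solved in Section~\ref{sec:dNs1.3RhoNe1AuxiliaryResults}.

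The concrete steps I would carry out are the following. First, integrate the invariance condition once with respect to the characteristic of a $z_2$-derivative: the vanishing of the $\hat{\mathrm D}_2$ of a differential function forces that function to be a $z_2$-integral by Theorem~\ref{thm:dNs1.3RhoNe1Integrals}, so I obtain an equation of the type $\hat{\mathrm D}_2 g = (\text{$z_2$-integral})$ for a differential function $g$ built from $\hat{\mathrm D}_2 f$ and~$f$. Second, analyze the order of~$f$ in the derivatives of $w_{0,2}$ (i.e.\ $\ord_0 f$) and in the derivatives of $I^1,I^2$ (i.e.\ $\ord_i f$), exactly as in the auxiliary lemmas: differentiating the invariance condition with respect to the highest-order jet variable forces the top coefficient to vanish unless it has a very restricted form, which eventually bounds the orders and isolates the terms involving $\theta^k$ and $w_{0,2}$. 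Third, feed the resulting inhomogeneous equations successively into Lemmas~\ref{lem:dNs1.3RhoNe1AuxiliaryLemma2b} and~\ref{lem:dNs1.3RhoNe1AuxiliaryLemma3}: Lemma~\ref{lem:dNs1.3RhoNe1AuxiliaryLemma2b} handles the equation $(\hat{\mathrm D}_1+w_{0,2}\hat{\mathrm D}_2)f=\kappa^0+z_2\kappa^1$ whose solutions already contain the polynomial-in-$(z_1,z_2)$ building blocks like $w_{1,0}$, $z_1w_{1,0}+w_{0,0}$, $2z_1w_{0,1}-z_2^2$, etc., while Lemma~\ref{lem:dNs1.3RhoNe1AuxiliaryLemma3} handles $\hat{\mathrm D}_2 f=\alpha+\varrho+\cdots$ whose solution set produces the $\theta$-rational terms such as $\tfrac23\tfrac{(w_{0,2})^4}{w_{1,2}}\tfrac{(\theta^1)^2}{\theta^2}+\tfrac16 z_1^2(w_{0,2})^4+z_2w_{1,0}-z_2^2\zeta^{10}$ and the general ``$\hat\varrho$-term''. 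Fourth, collect all emerging arbitrary constants and arbitrary $z_2$-integrals $\alpha,\beta$ and the arbitrary function $\hat\varrho$, and match the general solution against the list in the statement, checking that no further independent summands survive the compatibility conditions and that each listed function is indeed a characteristic (the ``if''-direction is a direct, if tedious, substitution using that $\mathrm D_2 I^i=0$ and $(\hat{\mathrm D}_1+w_{0,2}\hat{\mathrm D}_2)\theta^k=0$).

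The main obstacle will be the bookkeeping in the order-reduction step: one must correctly track how $\ord_0 f$ (the order in the $w_{0,2}$-tower) interacts with $\ord_1 f$ and $\ord_2 f$ (the orders in the two $z_2$-integral towers) under the operator $\hat{\mathrm D}_1+w_{0,2}\hat{\mathrm D}_2$, and show that the top-order analysis collapses the arbitrariness down to the finite list plus the two arbitrary $z_2$-integrals and the one arbitrary function~$\hat\varrho$. The danger is a spurious family slipping through, or conversely over-killing a genuine summand; the safeguard is that every inhomogeneous equation appearing is \emph{exactly} one of those whose general solution is recorded in Lemmas~\ref{lem:dNs1.3RhoNe1AuxiliaryLemma1}--\ref{lem:dNs1.3RhoNe1AuxiliaryLemma3}, so the auxiliary section has been engineered to make this matching mechanical. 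A secondary technical point is the passage between the two coordinate systems~\eqref{eq:dNs1.3RhoNe1ModifiedRedEqCoords} and~\eqref{eq:dNs1.3RhoNe1ModifiedRedEqCoords2} and the substitutions $z_1=\theta^1-w_{0,2}/w_{1,2}$, $z_2=\theta^0+w_{0,2}\theta^1-(w_{0,2})^2/w_{1,2}$, which must be handled carefully so that terms like $z_1 w_{1,0}$ and $(w_{0,2})^3(\theta^1)^2/(w_{1,2}\theta^2)$ are recognized as genuinely distinct generators rather than coordinate artifacts of one another.
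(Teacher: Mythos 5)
Your proposal follows essentially the same route as the paper: the invariance condition restricted to~$\mathcal L$ factors as $\hat{\mathrm D}_2\big(\hat{\mathrm D}_1+w_{0,2}\hat{\mathrm D}_2\big)\hat{\mathrm D}_2f=0$ (note there is no extra subtracted term $(\hat{\mathrm D}_2f)(\hat{\mathrm D}_2w_{0,2})$ as in your displayed formula, and the coefficient of $\hat{\mathrm D}_2^{\,2}f$ is $-w_{1,2}/w_{0,2}$, not $\hat{\mathrm D}_2^{\,2}I^1$, which vanishes), so Theorem~\ref{thm:dNs1.3RhoNe1Integrals} gives $\big(\hat{\mathrm D}_1+w_{0,2}\hat{\mathrm D}_2\big)\hat{\mathrm D}_2f=\alpha$ for a $z_2$-integral~$\alpha$, and one then applies Lemmas~\ref{lem:dNs1.3RhoNe1AuxiliaryLemma1}, \ref{lem:dNs1.3RhoNe1AuxiliaryLemma2b} (with $\kappa^0=\alpha$, $\kappa^1=0$) and~\ref{lem:dNs1.3RhoNe1AuxiliaryLemma3} in succession, exactly as you describe. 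Modulo correcting that factorization, your plan is the paper's proof.
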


\begin{proof}
The proof of the ``if''-part reduces to the substitution of each of the listed differential functions
into the generalized invariance condition for the equation~\eqref{eq:dNs1.3RhoNe1ModifiedRedEq},
\begin{gather}\label{eq:dNs1.3RhoNe1ModifiedRedEqGenInvCond}
\hat{\mathrm D}_1\hat{\mathrm D}_2^{\,2}f+w_{0,2}\hat{\mathrm D}_2^{\,3}f-\frac{w_{1,2}}{w_{0,2}}\hat{\mathrm D}_2^{\,2}f
=\hat{\mathrm D}_2\big(\hat{\mathrm D}_1+w_{0,2}\hat{\mathrm D}_2\big)\hat{\mathrm D}_2f
=0.
\end{gather}

Let us prove the ``only if''-part.
Suppose that a differential function $f\{w\}$ is the characteristic
of a generalized symmetry of the equation~\eqref{eq:dNs1.3RhoNe1ModifiedRedEq}.
Then it satisfies the generalized invariance condition~\eqref{eq:dNs1.3RhoNe1ModifiedRedEqGenInvCond},
which is equivalent to the condition
$
\big(\hat{\mathrm D}_1+w_{0,2}\hat{\mathrm D}_2\big)\hat{\mathrm D}_2f=\alpha
$ 
for some $z_2$-integral~$\alpha$ of the equation~\eqref{eq:dNs1.3RhoNe1ModifiedRedEq},
see Theorem~\ref{thm:dNs1.3RhoNe1Integrals} for the description of such integrals.
The further successive application of Lemmas~\ref{lem:dNs1.3RhoNe1AuxiliaryLemma1},
\ref{lem:dNs1.3RhoNe1AuxiliaryLemma2b} (with $\kappa^0=\alpha$ and $\kappa^1=0$)
and~\ref{lem:dNs1.3RhoNe1AuxiliaryLemma3} leads to the required statement.
\end{proof}

Since the possible number of arguments of the parameter functions grows
when the order of the corresponding symmetries does,
the generalized symmetry algebra  of the equation~\eqref{eq:dNs1.3RhoNe1ModifiedRedEq}
cannot be generated via commuting its lower-order generalized symmetries.

\subsection{Cosymmetries}\label{sec:dNs1.3RhoNe1Cosyms}

\begin{theorem}\label{thm:dNs1.3RhoNe1Cosyms}
A differential function $f\{w\}$ is a cosymmetry of the equation~\eqref{eq:dNs1.3RhoNe1ModifiedRedEq}
if and only if it is a linear combination of the differential functions
\begin{gather*}
\varrho,\quad \alpha,\quad w_{0,2}\gamma-z_2\hat{\mathrm D}_1\gamma,
\\[.5ex]
w_{0,1}-\tfrac12z_1(w_{0,2})^2,\quad 
\big(w_{0,1}-\tfrac12z_1(w_{0,2})^2\big)w_{0,2}-z_2\zeta^{10},\quad  
\big(w_{0,1}-\tfrac12z_1(w_{0,2})^2\big)\theta^0+z_1z_2\zeta^{10},
\\[.5ex]
z_1(w_{1,0}-z_2\zeta^{10})+z_2w_{0,1}-w_{0,0}+\tfrac1{12}z_1^2(w_{0,2})^3-\tfrac14z_2^2w_{0,2},\quad 
w_{1,0}-z_2\zeta^{10}+\tfrac16z_1(w_{0,2})^3, 
\\[.5ex]
\big(z_1(w_{1,0}-z_2\zeta^{10})+z_2w_{0,1}-w_{0,0}+\tfrac1{12}z_1^2(w_{0,2})^3-\tfrac14z_2^2w_{0,2}\big)\theta^0
+z_1^{\,2}z_2\zeta^{20}, 
\\[.5ex]
\big(z_1(w_{1,0}-z_2\zeta^{10})+z_2w_{0,1}-w_{0,0}+\tfrac1{12}z_1^2(w_{0,2})^3-\tfrac14z_2^2w_{0,2}\big)w_{0,2}
-z_1z_2\zeta^{20}, 
\\[.5ex]
z_2(w_{1,0}-z_2\zeta^{10})+(z_2w_{0,1}-w_{0,0})w_{0,2}-\tfrac16z_2^2(w_{0,2})^2, 
\\[.5ex]
(w_{1,0}-z_2\zeta^{10})w_{0,2}+\tfrac16z_1(w_{0,2})^4-z_2\zeta^{20}, 
\end{gather*}
where~$\varrho$ is an arbitrary function at most of~$w_{0,2}$ and a finite number of~$\theta^k$,
and $\alpha$ and~$\gamma$ are arbitrary $z_2$-integrals of~\eqref{eq:dNs1.3RhoNe1ModifiedRedEq}.
\end{theorem}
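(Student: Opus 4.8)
The plan is to proceed along the lines of the proof of Theorem~\ref{thm:dNs1.3RhoNe1ModifiedRedEqGenSyms}, reducing the defining condition for cosymmetries to the equation already solved in Lemma~\ref{lem:dNs1.3RhoNe1AuxiliaryLemma2b}. First I would recall that a differential function $f\{w\}$ is a cosymmetry of~\eqref{eq:dNs1.3RhoNe1ModifiedRedEq} if and only if it is annihilated, on the manifold~$\mathcal L$, by the formal adjoint of the linearization of~\eqref{eq:dNs1.3RhoNe1ModifiedRedEq}. Writing $F=w_{1,2}+w_{0,2}w_{0,3}$, the linearization is $\mathrm D_1\mathrm D_2^2+w_{0,3}\mathrm D_2^2+w_{0,2}\mathrm D_2^3$, so its formal adjoint applied to~$f$ equals $-\mathrm D_1\mathrm D_2^2 f+\mathrm D_2^2(w_{0,3}f)-\mathrm D_2^3(w_{0,2}f)$. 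Using the off-shell identity $\mathrm D_2^2(w_{0,3}f)-\mathrm D_2^3(w_{0,2}f)=-\mathrm D_2^2(w_{0,2}\mathrm D_2f)$ together with $\mathrm D_1\mathrm D_2^2=\mathrm D_2^2\mathrm D_1$, this collapses to $-\mathrm D_2^2\big((\mathrm D_1+w_{0,2}\mathrm D_2)f\big)$, and restricting to~$\mathcal L$ (which is legitimate since $f$ and $w_{0,2}$ depend only on parametric variables) shows that $f$ is a cosymmetry of~\eqref{eq:dNs1.3RhoNe1ModifiedRedEq} if and only if
\[
\hat{\mathrm D}_2^{\,2}\big((\hat{\mathrm D}_1+w_{0,2}\hat{\mathrm D}_2)f\big)=0.
\]

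Next I would observe that a differential function $g\{w\}$ satisfies $\hat{\mathrm D}_2^{\,2}g=0$ if and only if $g=\kappa^0+z_2\kappa^1$ for some $z_2$-integrals $\kappa^0$ and $\kappa^1$ of~\eqref{eq:dNs1.3RhoNe1ModifiedRedEq}. Indeed, then $\hat{\mathrm D}_2g$ is a $z_2$-integral, call it $\kappa^1$, and $\hat{\mathrm D}_2(g-z_2\kappa^1)=\kappa^1-(\hat{\mathrm D}_2z_2)\kappa^1-z_2\hat{\mathrm D}_2\kappa^1=0$ in view of $\hat{\mathrm D}_2z_2=1$, so $\kappa^0:=g-z_2\kappa^1$ is a $z_2$-integral too; the converse is immediate, and the $z_2$-integrals are those described by Theorem~\ref{thm:dNs1.3RhoNe1Integrals}. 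Applying this with $g=(\hat{\mathrm D}_1+w_{0,2}\hat{\mathrm D}_2)f$, the cosymmetry condition becomes equivalent to the solvability of
\[
\big(\hat{\mathrm D}_1+w_{0,2}\hat{\mathrm D}_2\big)f=\kappa^0+z_2\kappa^1
\]
for some $z_2$-integrals~$\kappa^0$ and~$\kappa^1$ of~\eqref{eq:dNs1.3RhoNe1ModifiedRedEq}. This is precisely the setting of Lemma~\ref{lem:dNs1.3RhoNe1AuxiliaryLemma2b}, which simultaneously characterizes the admissible right-hand sides (in terms of further $z_2$-integrals $\alpha$ and $\gamma$ and constants $a_j$, $b_{jj'}$) and writes down the corresponding general solution~$f$.

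To conclude, I would read the general solution furnished by Lemma~\ref{lem:dNs1.3RhoNe1AuxiliaryLemma2b} off as a linear combination in its free parameters: the summand~$\check f$ gives the term~$\varrho$; the $z_2$-integral~$\alpha$ and the combination $w_{0,2}\gamma-z_2\hat{\mathrm D}_1\gamma$ give the next two terms; the constants~$a_0$,~$a_1$,~$a_2$ produce the three differential functions built on $w_{0,1}-\tfrac12z_1(w_{0,2})^2$ (with the $\theta^0$- and $w_{0,2}$-weighted variants carrying the extra $z_1z_2\zeta^{10}$ and $-z_2\zeta^{10}$ pieces); and the six constants $b_{00}$, $b_{01}$, $b_{10}$, $b_{11}$, $b_{20}$, $b_{21}$ produce the remaining six listed differential functions. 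Conversely, each function in the statement arises from the solution formula for a suitable choice of parameters, so applying $\hat{\mathrm D}_1+w_{0,2}\hat{\mathrm D}_2$ maps it onto an expression of the form $\kappa^0+z_2\kappa^1$ with $z_2$-integral coefficients, which is annihilated by $\hat{\mathrm D}_2^{\,2}$; hence the ``if''-part follows as well (it could alternatively be verified by direct substitution into the condition above).

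The only mildly delicate points are the adjoint computation in the first paragraph --- in particular checking that the restriction to~$\mathcal L$ of the formal adjoint is the correct operator, so that the cosymmetry condition really takes the displayed form --- and the bookkeeping of matching the twelve-parameter output of Lemma~\ref{lem:dNs1.3RhoNe1AuxiliaryLemma2b} with the twelve differential functions in the statement, up to signs and obvious recombinations such as merging the two $\gamma$-dependent summands. There is no substantial new obstacle beyond this bookkeeping, since all the genuine computational weight has already been absorbed into Lemmas~\ref{lem:dNs1.3RhoNe1AuxiliaryLemma1} and~\ref{lem:dNs1.3RhoNe1AuxiliaryLemma2b}.
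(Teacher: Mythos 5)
Your proposal is correct and follows essentially the same route as the paper: state the cosymmetry determining condition $-\hat{\mathrm D}_2^{\,2}\big(\hat{\mathrm D}_1+w_{0,2}\hat{\mathrm D}_2\big)f=0$, observe via Theorem~\ref{thm:dNs1.3RhoNe1Integrals} that it is equivalent to $\big(\hat{\mathrm D}_1+w_{0,2}\hat{\mathrm D}_2\big)f=\kappa^0+z_2\kappa^1$ with $z_2$-integrals $\kappa^0$, $\kappa^1$, and then invoke Lemma~\ref{lem:dNs1.3RhoNe1AuxiliaryLemma2b}. You merely spell out two steps the paper leaves implicit (the adjoint computation and the double-integration of $\hat{\mathrm D}_2^{\,2}g=0$), and your parameter-to-list matching of the twelve-parameter general solution is accurate.
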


\begin{proof}
On can prove the ``if''-part by substituting each of the listed differential functions
into the condition defining cosymmetries of the equation~\eqref{eq:dNs1.3RhoNe1ModifiedRedEq},
which is formally adjoint to the generalized invariance condition for this equation,
\begin{gather}\label{eq:dNs1.3RhoNe1ModifiedRedEqCosymCond}
-\hat{\mathrm D}_2^2(\hat{\mathrm D}_1+w_{0,2}\hat{\mathrm D}_2\big)f=0.
\end{gather}

It remains to prove the ``only if''-part.
Suppose that a differential function $f\{w\}$ is the characteristic of a cosymmetry of~\eqref{eq:dNs1.3RhoNe1ModifiedRedEq}.
Then it satisfies the condition~\eqref{eq:dNs1.3RhoNe1ModifiedRedEqCosymCond},
which is equivalent to the condition
\begin{gather*}
\big(\hat{\mathrm D}_1+w_{0,2}\hat{\mathrm D}_2\big)f=\kappa^0+z_2\kappa^1
\end{gather*}
for some $z_2$-integrals~$\kappa^0$ and~$\kappa^1$ of the equation~\eqref{eq:dNs1.3RhoNe1ModifiedRedEq},
see Theorem~\ref{thm:dNs1.3RhoNe1Integrals} for the description of such integrals.
The further application of Lemma~\ref{lem:dNs1.3RhoNe1AuxiliaryLemma2b}
leads to the required statement.
\end{proof}

In view of~\eqref{eq:dNs1.3RhoNe1ModifiedRedEqGenInvCond}
and~\eqref{eq:dNs1.3RhoNe1ModifiedRedEqCosymCond},
the operator~$\hat{\mathrm D}_2$ is an inverse Noether operator
for the equation~\eqref{eq:dNs1.3RhoNe1ModifiedRedEq}.

\subsection{Conservation laws}\label{sec:dNs1.3RhoNe1CLs}

\begin{lemma}\label{lem:dNs1.3RhoNe1ModifiedRedEqGenFormOfCCs}
Any conserved current of the equation~\eqref{eq:dNs1.3RhoNe1ModifiedRedEq} is equivalent to
a linear combination of the tuples
\begin{gather*}
\left(\frac{w_{1,2}}{w_{0,2}}\varrho,\,w_{1,2}\varrho\right),
\quad
\big(0,\,\alpha\big),
\\[1ex]
\left(
\frac{(w_{0,2})^5}{24w_{1,2}}+\frac12w_{0,1}(w_{0,2})^2,\,
\frac{(w_{0,2})^6}{24w_{1,2}}-z_2(\zeta^{10})^2+\frac13w_{0,1}(w_{0,2})^3
+w_{1,0}\zeta^{10}\right),
\\[1ex]
\left(
-\frac{(w_{0,2})^5}{24w_{1,2}}\left(z_1+\frac{w_{0,2}}{3w_{1,2}}\right)
-\frac12w_{0,1}(z_1(w_{0,2})^2+w_{0,1}),\right.
\\\quad\left.
-\frac{(w_{0,2})^6}{24w_{1,2}}\left(z_1+\frac{w_{0,2}}{3w_{1,2}}\right)
+z_1z_2(\zeta^{10})^2-\frac13z_1w_{0,1}(w_{0,2})^3-z_1w_{1,0}\zeta^{10}
+w_{0,0}\zeta^{10}
\right),
\end{gather*}
where $\varrho$ is an arbitrary function at most of~$w_{0,2}$ and a finite number of~$\theta^k$,
and $\alpha$ is an arbitrary $z_2$-integral of~\eqref{eq:dNs1.3RhoNe1ModifiedRedEq}.
\end{lemma}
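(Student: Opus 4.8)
The ``if''-part is a direct verification: for each tuple $(P^1,P^2)$ in the list one computes $\mathrm D_1P^1+\mathrm D_2P^2$ and checks that it vanishes on the manifold~$\mathcal L$, that is, that $\hat{\mathrm D}_1\bar P^1+\hat{\mathrm D}_2\bar P^2=0$ in the coordinates~\eqref{eq:dNs1.3RhoNe1ModifiedRedEqCoords} and~\eqref{eq:dNs1.3RhoNe1ModifiedRedEqCoords2}. For the $(0,\alpha)$-tuple this is immediate from $\hat{\mathrm D}_2\alpha=0$ (Theorem~\ref{thm:dNs1.3RhoNe1Integrals}); for the $\varrho$-tuple one uses $\hat{\mathrm D}_2w_{0,2}=-w_{1,2}/w_{0,2}$, $\big(\hat{\mathrm D}_1+w_{0,2}\hat{\mathrm D}_2\big)\varrho=0$ (Lemma~\ref{lem:dNs1.3RhoNe1AuxiliaryLemma1}) and $\big(\hat{\mathrm D}_1+w_{0,2}\hat{\mathrm D}_2\big)w_{1,2}=(w_{1,2})^2/w_{0,2}$; the two remaining tuples are checked by a longer but routine computation with the explicit forms of $\hat{\mathrm D}_1$ and $\hat{\mathrm D}_2$.

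For the ``only if''-part the plan is to bring the defining condition into the shape already handled in Section~\ref{sec:dNs1.3RhoNe1AuxiliaryResults}. Restricting a conserved current $(P^1,P^2)$ to~$\mathcal L$ yields a tuple $(\bar P^1,\bar P^2)$ of differential functions of~$w$ with $\hat{\mathrm D}_1\bar P^1+\hat{\mathrm D}_2\bar P^2=0$. Exploiting the equivalence of conserved currents (subtracting total curls $(\hat{\mathrm D}_2R,-\hat{\mathrm D}_1R)$ and currents vanishing on~$\mathcal L$) together with the fact that~\eqref{eq:dNs1.3RhoNe1ModifiedRedEq} is written in Kovalevskaya form with respect to~$z_2$, one may, as in~\cite{popo2024a,popo2025b}, normalize $\bar P^1$ so that it has bounded order and no essential explicit dependence on~$z_2$. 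I would then split the relation $\hat{\mathrm D}_1\bar P^1+\hat{\mathrm D}_2\bar P^2=0$ according to the dependence of $\bar P^1$ and $\bar P^2$ on the highest of the variables $w_{k,2}$ (equivalently $\theta^k$) and on the derivatives $\zeta^{ik}=\mathrm D_1^{\,k}I^i$ of the basic integrals, and lower these orders step by step, in the style of the proofs of Lemmas~\ref{lem:dNs1.3RhoNe1AuxiliaryLemma2b} and~\ref{lem:dNs1.3RhoNe1AuxiliaryLemma3}, until the problem reduces to equations of the types solved there and by Lemma~\ref{lem:dNs1.3RhoNe1AuxiliaryLemma1}. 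Substituting the resulting general solutions back into $\bar P^1$ and $\bar P^2$ and discarding the trivial summands should produce exactly the four families of the statement: the $\varrho$-family as the contribution of the solutions of~\eqref{eq:dNs1.3RhoNe1ModifiedRedEqInTotalDers}, the $(0,\alpha)$-family as the contribution of the $z_2$-integrals, and the two explicit tuples from the constant-parameter part of Lemma~\ref{lem:dNs1.3RhoNe1AuxiliaryLemma2b}. An alternative would be to pass to conservation-law characteristics, which are cosymmetries and hence are already classified in Theorem~\ref{thm:dNs1.3RhoNe1Cosyms}, and to reconstruct the currents by the homotopy formula; but the direct route above seems shorter.

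I expect the main obstacle to be bookkeeping rather than any single conceptual step. The manifold~$\mathcal L$ carries the infinitely many jet coordinates $w_{k,2}$, $\theta^k$ and $\zeta^{ik}$, so the downward induction must keep track of the partial orders $\ord_0$, $\ord_1$, $\ord_2$ of both $\bar P^1$ and $\bar P^2$ simultaneously, and one must be careful not to drop any of the trivial summands which, even after reduction modulo triviality, remain in the final list. A secondary point deserving attention is to justify that the normalization of $\bar P^1$ is admissible here, in particular that it preserves finiteness of the order, and that the explicit $z_2$-dependence of the coefficients of $\hat{\mathrm D}_1$ and $\hat{\mathrm D}_2$ produces no conserved currents beyond the listed ones.
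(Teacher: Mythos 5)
Your overall strategy --- a downward induction on the partial orders $\ord_0$, $\ord_1$, $\ord_2$ in the adapted coordinates on~$\mathcal L$, working modulo null divergences --- is indeed the one the paper follows, and your remarks on the ``if''-part and on the bookkeeping difficulties are apt. However, the core of your ``only if''-argument is a plan rather than a proof, and its central claim is inaccurate: the paper's proof of this lemma does \emph{not} reduce to Lemmas~\ref{lem:dNs1.3RhoNe1AuxiliaryLemma1}, \ref{lem:dNs1.3RhoNe1AuxiliaryLemma2b} and~\ref{lem:dNs1.3RhoNe1AuxiliaryLemma3}, and it is not clear that it could. Those lemmas solve equations of the form $\big(\hat{\mathrm D}_1+w_{0,2}\hat{\mathrm D}_2\big)f=\dots$ and $\hat{\mathrm D}_2 f=\dots$ for a \emph{single} unknown differential function, whereas the condition $\hat{\mathrm D}_1F^1+\hat{\mathrm D}_2F^2=0$ couples the two components of the current. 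The device that actually drives the paper's induction --- integrating each differential consequence $\p_{\zeta^{ik}}\big(\hat{\mathrm D}_1F^1+\hat{\mathrm D}_2F^2\big)=0$ with respect to a suitable jet variable and absorbing the result into a null divergence $(\hat{\mathrm D}_2H,\,-\hat{\mathrm D}_1H)$, thereby strictly lowering $\ord_iF^1$ at each step --- does not appear in the proofs of Lemmas~\ref{lem:dNs1.3RhoNe1AuxiliaryLemma2b} and~\ref{lem:dNs1.3RhoNe1AuxiliaryLemma3} and is precisely the missing idea in your sketch. Relatedly, the two explicit tuples in the statement arise from two constants $c_0$, $c_1$ produced by the paper's own final integration of an equation for $\bar F^1/w_{1,2}$ with respect to~$w_{1,2}$, not from ``the constant-parameter part of Lemma~\ref{lem:dNs1.3RhoNe1AuxiliaryLemma2b}'': that lemma classifies cosymmetries, of which (by Theorem~\ref{thm:dNs1.3RhoNe1ModifiedRedEqCLChars}) only two of the non-$\varrho$, non-$\alpha$ members are actually conservation-law characteristics, so your ``alternative route'' via characteristics would also require identifying which cosymmetries are characteristics and then reconstructing currents, neither of which you carry out.

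A secondary concern is your proposed normalization of $\bar F^1$ to have ``no essential explicit dependence on~$z_2$''. The equation is in Kovalevskaya form with respect to~$z_2$, so it is $F^2$ that plays the role of a density here, and the final currents do depend on~$z_2$ both explicitly (e.g.\ the term $-z_2(\zeta^{10})^2$) and through $\theta^0=z_2-w_{0,2}z_1$; such a normalization is neither justified nor used in the paper. What the paper does instead is merely to assume, up to triviality, that $F^1$ and $F^2$ depend only on the parametric coordinates, and then to track the residual $z_2$-dependence carefully through the splittings. To turn your proposal into a proof you would need to supply the full induction with the null-divergence subtractions, the subsequent normalizations $F^1_{w_{1,0}}=F^1_{\zeta^{10}}=F^1_{w_{0,0}}=0$, and the final integrations that produce the polynomial dependence of the coefficients on $z_1$ and $\zeta^{10}$ and hence the two exceptional tuples.
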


\begin{proof}
Let $(F^1,F^2)$ be a conserved current of the equation~\eqref{eq:dNs1.3RhoNe1ModifiedRedEq}.
Without loss of generality, up to the conserved-current equivalence
related to vanishing on the solution set of~\eqref{eq:dNs1.3RhoNe1ModifiedRedEq},
we can assume that the components~$F^1$ and~$F^2$ depend only on parametric derivatives of~\eqref{eq:dNs1.3RhoNe1ModifiedRedEq},
$F^1=F^1\{w\}$ and $F^2=F^2\{w\}$.
We use the coordinate system
~\eqref{eq:dNs1.3RhoNe1ModifiedRedEqCoords2} on~$\mathcal L$.
Let $k_i:=\ord_i F^1$. 
The condition that the tuple $(F^1,F^2)$ is a conserved current of~\eqref{eq:dNs1.3RhoNe1ModifiedRedEq}
reduces to the equation
\begin{gather}\label{eq:dNs1.3RhoNe1ModifiedRedEqCCCond}
\hat{\mathrm D}_1F^1+\hat{\mathrm D}_2F^2=0.
\end{gather}
We fix an arbitrary point
$\mathbf j_0^{}=(w_{0,0}^0,w_{1,0}^0,w_{0,1}^0,w_{0,2}^0,w_{1,2}^0,\theta^k_0,\zeta^{ik}_0,\,k\in\mathbb N_0,\,i=1,2)$
in the domain of $(F^1,F^2)$.
(Only a finite number of components of~$\mathbf j_0^{}$ is relevant for the proof.)
When integrating with respect to a jet variable,
we take the definite integral with respect to this variable with variable upper boundary
and lower bound equal to the corresponding component of~$\mathbf j_0^{}$
such that the integration line is contained in the domain of $(F^1,F^2)$.
We further consider various differential consequences,
marking them as the corresponding differential operator acting on~\eqref{eq:dNs1.3RhoNe1ModifiedRedEqCCCond}.
Note that the operator~$\hat{\mathrm D}_2$ commutes
with $\p_{\zeta^{1k}}$, $k\geqslant1$, and with $\p_{\zeta^{2k}}$, $k\geqslant0$.

Let $k_1\geqslant1$ and $k_2\geqslant0$.
We proceed first with the value $i=1$ and then with the value $i=2$.
The differential consequences
\begin{gather*}
\p_{\zeta^{ik}}\eqref{eq:dNs1.3RhoNe1ModifiedRedEqCCCond},\ k>k_i+1\colon\ \hat{\mathrm D}_2F^2_{\zeta^{ik}}=0,
\\
\p_{\zeta^{i,k_i+1}}\eqref{eq:dNs1.3RhoNe1ModifiedRedEqCCCond}\colon\
F^1_{\zeta^{ik_i}}+\hat{\mathrm D}_2F^2_{\zeta^{i,k_i+1}}=0
\end{gather*}
imply that the differential function $\hat{\mathrm D}_2F^2$
does not depend on $\zeta^{ik}$ with $k>k_i+1$ and is affine with respect to~$\zeta^{i,k_i+1}$,
i.e., its derivative \smash{$\hat{\mathrm D}_2F^2_{\zeta^{i,k_i+1}}$} does not depend on $\zeta^{i,k_i+1}$.
Integrating the differential consequence \smash{$\p_{\zeta^{i,k_i+1}}\eqref{eq:dNs1.3RhoNe1ModifiedRedEqCCCond}$}
with respect to the jet variable~$\zeta^{ik_i}$, we obtain
\begin{gather*}
F^1=\tilde F^1-\hat{\mathrm D}_2H,\quad\mbox{where}\quad
\tilde F^1:=F^1\Big|_{\zeta^{ik_i}=\zeta^{ik_i}_0}^{},\quad
H:=\int_{\zeta^{ik_i}_0}^{\zeta^{ik_i}}F^2_{\zeta^{i,k_i+1}}\Big|_{\zeta^{ik_i}=\varsigma}^{}{\rm d}\varsigma.
\end{gather*}
The tuple $(\tilde F^1,\tilde F^2)$ with $\tilde F^2:=F^2-\hat{\mathrm D}_1H$
is a conserved current of~\eqref{eq:dNs1.3RhoNe1ModifiedRedEq} that is equivalent to $(F^1,F^2)$.
We also have $\ord_i\tilde F^1<k_i$,
and for the other value~$i'$ of~$i$, $\ord_{i'}\tilde F^1\leqslant k_{i'}$.
We replace the tuple $(F^1,F^2)$ by $(\tilde F^1,\tilde F^2)$,
re-denote $(\tilde F^1,\tilde F^2)$ by $(F^1,F^2)$ and iterate the above procedure.
As a result, we conclude that up to adding null divergences, we can assume that
$k_1\leqslant0$ and $k_2=-\infty$.

The differential consequence
\begin{gather*}
\p_{\zeta^{20}}\eqref{eq:dNs1.3RhoNe1ModifiedRedEqCCCond}\colon\
F^1_{w_{1,0}}+\hat{\mathrm D}_2F^2_{\zeta^{20}}=0
\end{gather*}
implies that the differential function $\hat{\mathrm D}_2F^2$ is affine with respect to~$\zeta^{20}$,
i.e., its derivative \smash{$\hat{\mathrm D}_2F^2_{\zeta^{20}}$} does not depend on $\zeta^{20}$.
We integrate this differential consequence with respect to~$w_{1,0}$ to derive
\begin{gather*}
F^1=\tilde F^1-\hat{\mathrm D}_2H,\quad\mbox{where}\quad
\tilde F^1:=F^1\Big|_{w_{1,0}=w_{1,0}^0}^{},\quad
H:=\int_{w_{1,0}^0}^{w_{1,0}}F^2_{\zeta^{20}}\Big|_{w_{1,0}=\varsigma}^{}{\rm d}\varsigma.
\end{gather*}
The tuple $(\tilde F^1,\tilde F^2)$ with $\tilde F^2:=F^2-\hat{\mathrm D}_1H$
is a conserved current of~\eqref{eq:dNs1.3RhoNe1ModifiedRedEq} that is equivalent to $(F^1,F^2)$.
Moreover, $\tilde F^1_{w_{1,0}}=F^1_{w_{1,0}}+\hat{\mathrm D}_2H_{w_{1,0}}=0$,
$\ord_1\tilde F^1\leqslant0$ and $\ord_2\tilde F^1=-\infty$.
Therefore, up to adding null divergences, we can in addition assume that $F^1_{w_{1,0}}=0$.

Acting as above with the differential consequences
\begin{gather*}
\p_{\zeta^{11}}\eqref{eq:dNs1.3RhoNe1ModifiedRedEqCCCond}\colon\
F^1_{\zeta^{10}}+\hat{\mathrm D}_2F^2_{\zeta^{11}}=0,
\\
\p_{w_{1,0}}\eqref{eq:dNs1.3RhoNe1ModifiedRedEqCCCond}\colon\
F^1_{w_{0,0}}+\hat{\mathrm D}_2F^2_{w_{1,0}}=0,
\end{gather*}
we can replace the initial conserved current
with the equivalent conserved current that in addition satisfies the constraints \smash{$F^1_{\zeta^{10}}=F^1_{w_{0,0}}=0$}.
Then, differentiating the equation~\eqref{eq:dNs1.3RhoNe1ModifiedRedEqCCCond} with respect to~$w_{0,0}$,
we derive the equation $\p_{w_{0,0}}\eqref{eq:dNs1.3RhoNe1ModifiedRedEqCCCond}\colon$
\smash{$\hat{\mathrm D}_2F^2_{w_{0,0}}=0$}.
As a result, the differential functions~$F^1$ and~$F^2$ satisfy the system
\begin{gather*}
F^1_\varsigma=0,\quad \varsigma\in\{\zeta^{1k},\zeta^{2k},k\in\mathbb N_0,\,w_{1,0},\,w_{0,0}\},\\
\hat{\mathrm D}_2F^2_\tau=0,\quad \tau\in\{\zeta^{1k},k\in\mathbb N,\,\zeta^{2l},l\in\mathbb N_0,\,w_{1,0},\,w_{0,0}\},
\end{gather*}
which integrates to
\begin{gather*}
F^1=F^1(w_{0,1},w_{0,2},w_{1,2},\theta^*),\\
F^2=\bar F^2(w_{0,1},w_{0,2},w_{1,2},\theta^*,\zeta^{10})
+\kappa^1(z_1,\zeta^{10})w_{1,0}+\kappa^0(z_1,\zeta^{10})w_{0,0}
+\alpha(z_1,\zeta^{1*},\zeta^{2*}),
\end{gather*}
where $\bar F^2$, $\kappa^0$, $\kappa^1$ and~$\alpha$ are arbitrary functions of their arguments.

Differentiating the differential consequence
\begin{gather*}
\p_{\zeta^{10}}\eqref{eq:dNs1.3RhoNe1ModifiedRedEqCCCond}\colon\
F^1_{w_{0,1}}+\hat{\mathrm D}_2F^2_{\zeta^{10}}+F^2_{w_{1,0}}=0
\end{gather*}
in addition with respect to~$\zeta^{10}$ gives the equation
\smash{$\hat{\mathrm D}_2F^2_{\zeta^{10}\zeta^{10}}+2\kappa^1_{\zeta^{10}}=0$}.
In view of Theorem~\ref{thm:dNs1.3RhoNe1Integrals} and the above representation for~$F^2$,
this equation integrates, as an inhomogeneous equation with respect to~$\bar F^2_{\zeta^{10}\zeta^{10}}$, to
\begin{gather*}
\bar F^2_{\zeta^{10}\zeta^{10}}+\kappa^1_{\zeta^{10}\zeta^{10}}w_{1,0}+\kappa^0_{\zeta^{10}\zeta^{10}}w_{0,0}=-2\kappa^1_{\zeta^{10}}z_2+\beta,
\end{gather*}
where $\beta$ is an arbitrary $z_2$-integral of~\eqref{eq:dNs1.3RhoNe1ModifiedRedEq}.
Splitting the last equation with respect to~$w_{1,0}$ and~$w_{0,0}$
and separately differentiating it with respect to ${\zeta^{1k}}$, $k\in\mathbb N$, and $\zeta^{2l}$, $l\in\mathbb N_0$,
we derive the equations
\smash{$\bar F^2_{\zeta^{10}\zeta^{10}}=-2\kappa^1_{\zeta^{10}}z_2+\beta$},
$\kappa^0_{\zeta^{10}\zeta^{10}}=\kappa^1_{\zeta^{10}\zeta^{10}}=0$,
$\beta_{\zeta^{1k}}=0$, $k\in\mathbb N$, and $\beta_{\zeta^{2l}}=0$, $l\in\mathbb N_0$.
Hence the function~$\beta$ depends at most on~$(z_1,\zeta^{10})$,
$\kappa^0=\kappa^{01}(z_1)\zeta^{10}+\kappa^{00}(z_1)$ and
$\kappa^1=\kappa^{11}(z_1)\zeta^{10}+\kappa^{10}(z_1)$ for some functions~$\kappa^{00}$, $\kappa^{01}$, $\kappa^{10}$ and~$\kappa^{11}$ of~$z_1$,
and
\begin{gather*}
\bar F^2=-\kappa^{11}(\zeta^{10})^2z_2+\bar\beta
+F^{21}(w_{0,1},w_{0,2},w_{1,2},\theta^*)\zeta^{10}+F^{20}(w_{0,1},w_{0,2},w_{1,2},\theta^*),
\end{gather*}
where $\bar\beta=\bar\beta(z_1,\zeta^{10})$ is a second antiderivative of~$\beta$ with respect to $\zeta^{10}$,
$\bar\beta_{\zeta^{10}\zeta^{10}}=\beta$.
Re-denoting $\alpha+\bar\beta$ by~$\alpha$, we set $\bar\beta=0$.
Then the differential consequence $\p_{\zeta^{10}}\eqref{eq:dNs1.3RhoNe1ModifiedRedEqCCCond}$
reduces to
\begin{gather}\label{eq:dNs1.3RhoNe1ModifiedRedEqCCCondByZeta10}
F^1_{w_{0,1}}+\hat{\mathrm D}_2F^{21}-\frac{\kappa^{11}}2(w_{0,2})^2+\kappa^{01}w_{0,1}+\kappa^{10}=0.
\end{gather}
The integration of~\eqref{eq:dNs1.3RhoNe1ModifiedRedEqCCCondByZeta10}
with respect to~$w_{0,1}$ leads to
\begin{gather*}
F^1-F^1\Big|_{w_{0,1}=w_{0,1}^0}^{}+\hat{\mathrm D}_2H-w_{0,2}F^{21}\Big|_{w_{0,1}=w_{0,1}^0}^{}
\\\qquad{}
=\left(\frac{\kappa^{11}}2(w_{0,2})^2-\kappa^{10}\right)(w_{0,1}-w_{0,1}^0)-\frac{\kappa^{01}}2\big((w_{0,1})^2-(w_{0,1}^0)^2\big)
\end{gather*}
with
\begin{gather*}
H:=\int_{w_{0,1}^0}^{w_{0,1}}F^{21}\Big|_{w_{0,1}=\varsigma}^{}{\rm d}\varsigma+\kappa^{10}(w_{0,0}-w_{0,1}^0z_2).
\end{gather*}
In other words, we derive the representation $F^1=\tilde F^1-\hat{\mathrm D}_2H$,
where
\begin{gather*}
\tilde F^1:=\bar F^1+\frac{\kappa^{11}}2(w_{0,2})^2w_{0,1}-\frac{\kappa^{01}}2(w_{0,1})^2,
\\
\bar F^1:=F^1\Big|_{w_{0,1}=w_{0,1}^0}^{}+w_{0,2}F^{21}\Big|_{w_{0,1}=w_{0,1}^0}^{}
-\frac{\kappa^{11}}2w_{0,1}^0+\frac{\kappa^{01}}2(w_{0,1}^0)^2,
\end{gather*}
and thus the differential function~$\bar F^1$ depends at most on $(w_{0,2},w_{1,2},\theta^*)$.
Replacing the conserved current $(F^1,F^2)$ by the equivalent conserved current $(\tilde F^1,\tilde F^2)$, where
$\tilde F^1:=F^1+\hat{\mathrm D}_2H$ and
$\tilde F^2:=F^2-\hat{\mathrm D}_1H$, we have
\[
F^1=\bar F^1(w_{0,2},w_{1,2},\theta^*)
+\frac{\kappa^{11}}2(w_{0,2})^2w_{0,1}-\frac{\kappa^{01}}2(w_{0,1})^2.
\]
Substituting the obtained expression for~$F^1$ into~\eqref{eq:dNs1.3RhoNe1ModifiedRedEqCCCondByZeta10},
we derive the equation ${\hat{\mathrm D}_2F^{21}+\kappa^{10}=0}$
integrating to $F^{21}=-\kappa^{10}z_2+\mu(z_1)$.
Re-denoting $\alpha+\mu\zeta^{10}$ by $\alpha$, we set $\mu=0$.
As a result, on this stage we obtain the following representation for~$F^2$:
\begin{gather*}
\begin{split}
F^2={}&F^{20}(w_{0,1},w_{0,2},w_{1,2},\theta^*)+\alpha(z_1,\zeta^{1*},\zeta^{2*})
\\&{}
-\kappa^{11}(\zeta^{10})^2z_2-\kappa^{10}z_2\zeta^{10}+\kappa^1(z_1,\zeta^{10})w_{1,0}+\kappa^0(z_1,\zeta^{10})w_{0,0}.
\end{split}
\end{gather*}
Substituting the above expressions for $F^1$ and~$F^2$ into the differential consequence
\begin{gather*}
\p_{w_{0,1}}\eqref{eq:dNs1.3RhoNe1ModifiedRedEqCCCond}\colon\
\hat{\mathrm D}_1F^1_{w_{0,1}}+\hat{\mathrm D}_2F^2_{w_{0,1}}+F^2_{w_{0,0}}=0,
\end{gather*}
we derive the equation
\begin{gather*}
\hat{\mathrm D}_2F^{20}_{w_{0,1}}
+\frac12(\kappa^{11}_{z_1}+\kappa^{01})(w_{0,2})^2+\kappa^{11}w_{0,2}w_{1,2}
-\kappa^{01}_{z_1}w_{0,1}+\kappa^{00}=0.
\end{gather*}
Successively splitting it with respect to~$\theta^k$ from the highest appearing~$k$ to $k=2$ implies
that the function $\chi:=F^{20}_{w_{0,1}}$ depends at most on $(w_{0,1},w_{0,2},w_{1,2},\theta^0,\theta^1)$
and additionally satisfies the equations
\begin{gather}\label{eq:Commut1}
\chi_{\theta^1}-\frac{(w_{1,2})^2}{w_{0,2}}\chi_{w_{1,2}}=0,
\\\label{eq:Commut2}
\begin{split}&
w_{0,2}\chi_{w_{0,1}}-\frac{w_{1,2}}{w_{0,2}}\chi_{w_{0,2}}-\left(\frac{w_{1,2}}{w_{0,2}}\right)^2\chi_{w_{1,2}}
+\frac{w_{1,2}}{w_{0,2}}\theta^1\chi_{\theta^0}\\&\qquad{}
+\frac12(\kappa^{11}_{z_1}+\kappa^{01})(w_{0,2})^2+\kappa^{11}w_{0,2}w_{1,2}-\kappa^{01}_{z_1}w_{0,1}+\kappa^{00}=0.
\end{split}
\end{gather}
The nontrivial differential consequences of these equations are
\begin{gather*}
\chi_{w_{0,2}}+\frac{w_{1,2}}{w_{0,2}}\chi_{w_{1,2}}
+\left(\frac{w_{0,2}}{w_{1,2}}-\theta^1\right)\chi_{\theta^0}-\kappa^{11}(w_{0,2})^2=0,
\\
-\chi_{w_{0,1}}-(\kappa^{11}_{z_1}+\kappa^{01})w_{0,2}=0,
\quad
-\kappa^{01}_{z_1}+(\kappa^{11}_{z_1}+\kappa^{01})\frac{w_{1,2}}{w_{0,2}}=0.
\end{gather*}
The last differential consequence splits with respect to~$w_{1,2}$ to $\kappa^{01}_{z_1}=0$ and $\kappa^{11}_{z_1}=-\kappa^{01}$,
which integrates to $\kappa^{01}=c_1$, $\kappa^{11}=-c_1z_1+c_0$, where $c_0$ and~$c_1$ are arbitrary constants.
Then the second differential consequence reduces to $\chi_{w_{0,1}}=0$.
Taking into account the obtained equations, we combine~\eqref{eq:Commut2} with the first differential consequence to
$\chi_{\theta^0}=-\kappa^{00}$.
Jointly with~\eqref{eq:Commut1}, this implies the ansatz
$\chi=\tilde\chi(z_1,w_{0,2})-\kappa^{00}\theta^0$,
which reduces the first differential consequence to the equation
$\tilde\chi_{w_{0,2}}+z_1\kappa^{00}-\kappa^{11}(w_{0,2})^2=0$,
where $z_1$ and~$w_{0,2}$ are considered as independent variables,
i.e., $\tilde\chi=\frac13\kappa^{11}(w_{0,2})^3-z_1\kappa^{00}w_{0,2}+\chi^0(z_1)$. 
As a result,
\begin{gather*}
\chi=\frac{\kappa^{11}}3(w_{0,2})^3-\kappa^{00}z_2+\chi^0,\\
F^{20}=\left(\frac{\kappa^{11}}3(w_{0,2})^3-\kappa^{00}z_2+\chi^0\right)w_{0,1}+\psi(w_{0,2},w_{1,2},\theta^*).
\end{gather*}

Denoting $\varphi:=w_{0,2}\bar F^1-\psi$,
we substitute $\psi=w_{0,2}\bar F^1-\varphi$ into the remainder of the condition~\eqref{eq:dNs1.3RhoNe1ModifiedRedEqCCCond}
and rewrite it as
\begin{gather*}
w_{1,2}\bar F^1_{w_{1,2}}-\bar F^1=
\frac{w_{0,2}}{w_{1,2}}\hat{\mathrm D}_2\varphi
-\frac{\kappa^{11}}{12}\frac{(w_{0,2})^5}{w_{1,2}}
+\frac{(w_{0,2})^2}{w_{1,2}}\left(\frac{\kappa^{10}}2w_{0,2}+\kappa^{00}z_2-\chi^0\right).
\end{gather*}
We replace the conserved current $(F^1,F^2)$ by the equivalent conserved current $(\tilde F^1,\tilde F^2)$,
where
\[
\tilde F^1:=F^1+\hat{\mathrm D}_2H,\quad
\tilde F^2:=F^2-\hat{\mathrm D}_1H,\quad
H:=-w_{0,2}\int_{w_{1,2}^0}^{w_{1,2}}\frac\varphi{(w_{1,2})^2}\bigg|_{w_{1,2}=\varsigma}^{}{\rm d}\varsigma,
\]
and thus set $\varphi=0$.
Hence we should integrate the equation
\begin{gather*}
\left(\frac{\bar F^1}{w_{1,2}}\right)_{w_{1,2}}=
-\frac1{12}\kappa^{11}\frac{(w_{0,2})^5}{(w_{1,2})^3}
+\frac{(w_{0,2})^2}{(w_{1,2})^3}\left(\frac12\kappa^{10}w_{0,2}+\kappa^{00}z_2-\chi^0\right).
\end{gather*}
Its general solution can be represented as
\begin{gather*}
\bar F^1=\frac{w_{1,2}}{w_{0,2}}\varrho(w_{0,2},\theta^*)
+\frac1{12}(w_{0,2})^5w_{1,2}\left(\frac{-c_1z_1+c_0}{2(w_{1,2})^2}-\frac{c_1w_{0,2}}{6(w_{1,2})^3}\right)
+\frac{w_{0,2}}2\hat{\mathrm D}_1(\lambda w_{0,2})
\\ \qquad{}
+\hat{\mathrm D}_1\big(z_2\mu_{z_1}w_{0,2}-\mu(w_{0,2})^2-\nu w_{0,2}\big),
\end{gather*}
where $\lambda$ and $\nu$ are second antiderivatives of~$\kappa^{10}$ and~$\chi^0$, respectively,
and $\mu$ is a third antiderivative of~$\kappa^{00}$.

As a result, we derive the following expressions for the components of conserved currents
of the equation~\eqref{eq:dNs1.3RhoNe1ModifiedRedEq}:
\begin{gather*}
F^1=\bar F^1-\tfrac12(c_1z_1-c_0)(w_{0,2})^2w_{0,1}-\tfrac12c_1(w_{0,1})^2,
\\[1ex]
F^2=w_{0,2}\bar F^1+\alpha(z_1,\zeta^{1*},\zeta^{2*})
+(c_1z_1-c_0)(\zeta^{10})^2z_2
+(-c_1z_1+c_0)\zeta^{10}w_{1,0}+c_1\zeta^{10}w_{0,0}
\\ \qquad{}
+\lambda_{z_1z_1}(w_{1,0}-z_2\zeta^{10})
+\big(\tfrac13(-c_1z_1+c_0)(w_{0,2})^3+\nu_{z_1z_1}\big)w_{0,1}+\mu_{z_1z_1z_1}(w_{0,0}-z_2w_{0,1}),
\end{gather*}
where the expression for~$\bar F^1$ is given in the previous displayed equation.
The conserved currents associated with the parameter functions
$\lambda=\lambda(z_1)$, $\mu=\mu(z_1)$ and $\nu=\nu(z_1)$,
\begin{gather}\label{eq:dNs1.3RhoNe1ModifiedRedEqEquivCCs}
\begin{split}&
\big(w_{0,2}\hat{\mathrm D}_1(\lambda w_{0,2}),\,
(w_{0,2})^2\hat{\mathrm D}_1(\lambda w_{0,2})+2\lambda_{z_1z_1}(w_{1,0}-z_2\zeta^{10})
\big),
\\[1ex]&
\big(\hat{\mathrm D}_1\big(z_2\mu_{z_1}w_{0,2}-\mu(w_{0,2})^2\big),\,
w_{0,2}\hat{\mathrm D}_1\big(z_2\mu_{z_1}w_{0,2}-\mu(w_{0,2})^2\big)+\mu_{z_1z_1z_1}(w_{0,0}-z_2w_{0,1})\big),
\\[1ex]&
\big(\hat{\mathrm D}_1(\nu w_{0,2}),\, w_{0,2}\hat{\mathrm D}_1(\nu w_{0,2})-\nu_{z_1z_1}w_{0,1}\big),
\end{split}
\end{gather}
are equivalent to conserved currents from the family associated with the parameter function~$\alpha$;
moreover, the conserved current associated with the parameter function~$\mu$
is equivalent to the conserved current associated with the parameter function~$\lambda$,
where $\lambda=-\frac12\mu$.
This can be verified directly using the definition of equivalent conserved currents
or via computing the conservation-law characteristics corresponding to these conserved currents,
see Remark~\ref{rem:dNs1.3RhoNe1ModifiedRedEqEquivCCs}.
The conserved currents associated with the parameter functions
$\varrho=\varrho(w_{0,2},\theta^*)$ and $\alpha=\alpha(z_1,\zeta^{1*},\zeta^{2*})$
and the constant parameters~$c_0$ and~$c_1$ are listed in the theorem's statement.
\end{proof}

Lemma~\ref{lem:dNs1.3RhoNe1ModifiedRedEqGenFormOfCCs} in particular implies
that although the equation~\eqref{eq:dNs1.3RhoNe1ModifiedRedEq} admits large family of $z_2$-integrals,
it possesses no $z_1$-integrals.

\begin{theorem}\label{thm:dNs1.3RhoNe1ModifiedRedEqCLChars}
The quotient space of conservation-law characteristics
of the equation~\eqref{eq:dNs1.3RhoNe1ModifiedRedEq} with respect to their equivalence
(i.e., modulo adding trivial characteristics, which vanishes on the solutions of~\eqref{eq:dNs1.3RhoNe1ModifiedRedEq})
is naturally isomorphic to the subspace spanned by the following differential functions:
\begin{gather*}
-\left(\frac{w_{0,2}}{w_{1,2}}\hat{\mathrm D}_2\right)^k\varrho_{\theta^k},
\quad
(-\mathrm D_1)^k\alpha_{\zeta^{1k}}-(w_{0,2}-z_2\mathrm D_1)(-\mathrm D_1)^k\alpha_{\zeta^{2k}},
\\[.5ex]
\frac{(w_{0,2})^4}{3w_{1,2}}-\frac{(w_{0,2})^4}{12}\theta^2+w_{1,0}+w_{0,1}w_{0,2}-2z_2\zeta^{10},
\\[.5ex]
\frac{(w_{0,2})^4}{12}\theta^1\theta^2-\frac{(w_{0,2})^4}{3w_{1,2}}\theta^1
+\frac{(w_{0,2})^5}{6(w_{1,2})^2}
+w_{0,0}-z_1(w_{1,0}+w_{0,1}w_{0,2})+2z_1z_2\zeta^{10},
\end{gather*}
where~$\varrho$ is an arbitrary function at most of~$w_{0,2}$ and a finite number of~$\theta^k$,
and $\alpha$ is an arbitrary $z_2$-integral of~\eqref{eq:dNs1.3RhoNe1ModifiedRedEq}.
\end{theorem}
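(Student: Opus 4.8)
The plan is to derive the description of conservation-law characteristics from the already-established description of conserved currents in Lemma~\ref{lem:dNs1.3RhoNe1ModifiedRedEqGenFormOfCCs}. Since the equation~\eqref{eq:dNs1.3RhoNe1ModifiedRedEq} is written in Kovalevskaya form with respect to~$z_2$, there is a one-to-one correspondence between equivalence classes of conserved currents and equivalence classes of conservation-law characteristics; more precisely, the characteristic associated with a conserved current~$(F^1,F^2)$ is recovered by the standard formula $\lambda=\mathrm E_{w}(\hat{\mathrm D}_1F^1+\hat{\mathrm D}_2F^2)$, or equivalently as the coefficient obtained after integrating by parts in the total-divergence identity, and the kernel of this map consists exactly of the trivial (null-divergence-type) conserved currents. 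So the scheme is: take each of the four families of conserved currents listed in Lemma~\ref{lem:dNs1.3RhoNe1ModifiedRedEqGenFormOfCCs}, restore its dependence on the principal variables by using the equation and its differential consequences, apply the Euler operator~$\mathrm E_{w}$ with respect to~$w$ to the corresponding total divergence, and simplify modulo trivial characteristics (those vanishing on~$\mathcal L$).

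First I would handle the parametric family $(\tfrac{w_{1,2}}{w_{0,2}}\varrho,\,w_{1,2}\varrho)$ with $\varrho=\varrho(w_{0,2},\theta^*)$. Writing $\hat{\mathrm D}_1F^1+\hat{\mathrm D}_2F^2$ off the constraint manifold and applying $\mathrm E_{w}$, the contributions organize through the operator $\tfrac{w_{0,2}}{w_{1,2}}\hat{\mathrm D}_2$, which is precisely the operator used to define the coordinates~$\theta^k$ in~\eqref{eq:dNs1.3RhoNe1ModifiedRedEqCoords2}; iterated integration by parts against the derivatives $\varrho_{\theta^k}$ yields the stated expression $-(\tfrac{w_{0,2}}{w_{1,2}}\hat{\mathrm D}_2)^k\varrho_{\theta^k}$ (summed over~$k$, with summation convention). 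For the family $(0,\alpha)$ with a $z_2$-integral~$\alpha=\alpha(z_1,\zeta^{i*})$, the divergence $\hat{\mathrm D}_2\alpha$ vanishes on~$\mathcal L$, so the characteristic is computed by applying $\mathrm E_w$ to $\mathrm D_2\alpha$ before restriction; since $\zeta^{ik}=\mathrm D_1^{\,k}I^i$ with $I^1,I^2$ built from $w_{1,1}+\tfrac12(w_{0,2})^2$ and $w_{2,0}-\tfrac13(w_{0,2})^3-z_2\mathrm D_1I^1$, the Euler operator produces, after the usual alternating sums $(-\mathrm D_1)^k$, exactly the combination $(-\mathrm D_1)^k\alpha_{\zeta^{1k}}-(w_{0,2}-z_2\mathrm D_1)(-\mathrm D_1)^k\alpha_{\zeta^{2k}}$ quoted in the statement; the factor $(w_{0,2}-z_2\mathrm D_1)$ is the adjoint of the derivation built into the definition of~$I^2$. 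Finally, for the two exceptional conserved currents (the $c_0$- and $c_1$-families in the lemma), a direct but short computation of $\mathrm E_w$ gives the two explicit differential functions displayed, and one checks that the three ``equivalent'' conserved-current families~\eqref{eq:dNs1.3RhoNe1ModifiedRedEqEquivCCs} associated with $\lambda$, $\mu$, $\nu$ indeed map to characteristics already covered by the $\alpha$-family, which is exactly why they were discarded in the proof of the lemma.

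The bookkeeping step I expect to be the main obstacle is controlling triviality: a conserved current that is nontrivial as a current may still have a trivial characteristic if it differs from a ``genuinely trivial'' current by one whose characteristic vanishes on~$\mathcal L$, and conversely one must be sure no spurious cancellations collapse distinct $\varrho$'s or $\alpha$'s to the same characteristic. I would resolve this by establishing injectivity of the current-to-characteristic correspondence directly in the Kovalevskaya setting: since the principal variables are $w_{k,2}$ with $k\geqslant3$, a characteristic that vanishes on~$\mathcal L$ corresponds to a null divergence, and hence the map from the quotient space of conserved currents (as described in Lemma~\ref{lem:dNs1.3RhoNe1ModifiedRedEqGenFormOfCCs}) onto the quotient space of characteristics is an isomorphism; this reduces the theorem to reading off the images, which are precisely the four listed families. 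The remaining work is then routine verification, most conveniently done by substituting the claimed characteristics into the adjoint relation $\mathrm E_w(\lambda\,(w_{222}-\text{RHS}))=0$ together with the reconstruction of the corresponding conserved current, matching against Lemma~\ref{lem:dNs1.3RhoNe1ModifiedRedEqGenFormOfCCs}, and I would relegate this to Remark~\ref{rem:dNs1.3RhoNe1ModifiedRedEqEquivCCs}.
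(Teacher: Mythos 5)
Your strategy coincides with the paper's: identify the quotient space of conservation laws with the quotient space of characteristics via the normality and total nondegeneracy of the equation~\eqref{eq:dNs1.3RhoNe1ModifiedRedEq}, and then compute the characteristic of each conserved current classified in Lemma~\ref{lem:dNs1.3RhoNe1ModifiedRedEqGenFormOfCCs} by expanding the total divergence and integrating by parts until the left-hand side $L:=w_{1,2}+w_{0,2}w_{0,3}$ of the equation appears as a multiplier. However, two points need repair. The formula $\lambda=\mathrm E_{w}(\mathrm D_1F^1+\mathrm D_2F^2)$ cannot serve as the definition of the characteristic: the Euler operator annihilates every total divergence identically, so this expression vanishes for any pair $(F^1,F^2)$ whatsoever (and with the restricted operators $\hat{\mathrm D}_i$ the argument is already zero before $\mathrm E_w$ is applied). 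The two recipes you present as ``equivalent'' are not equivalent; only the second one --- reading off the coefficient of~$L$ after iterated integration by parts and discarding the total-derivative remainders as trivial conserved currents --- is correct, and it is the one actually used in the paper, supported by a list of identities expressing $\mathrm D_2\zeta^{ik}$ and $(\mathrm D_1+w_{0,2}\mathrm D_2)$ applied to $w_{0,2}$, $w_{1,2}$ and the~$\theta^k$ in terms of~$L$ and its total derivatives. This slip is recoverable, but as written your first recipe returns $\lambda=0$ for every family.

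The more serious gap concerns the parametric family $\bigl(\tfrac{w_{1,2}}{w_{0,2}}\varrho,\,w_{1,2}\varrho\bigr)$. You claim it yields to the same direct integration-by-parts scheme, with the contributions ``organizing'' through the operator $\tfrac{w_{0,2}}{w_{1,2}}\hat{\mathrm D}_2$. The paper states explicitly that the direct procedure does not work for this family: the coordinates~$\theta^k$ carry $w_{1,2}$ in denominators, and their total derivatives produce $L$ and $\mathrm D_1L$ with coefficients rational in~$w_{1,2}$ (e.g., $(\mathrm D_1+w_{0,2}\mathrm D_2)\theta^1=2L/w_{1,2}-w_{0,2}(w_{1,2})^{-2}\mathrm D_1L$), so the divergence does not resolve into a clean multiple of~$L$ plus a null divergence by formal integration by parts alone. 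The authors instead exploit the differential substitution $w_{22}=h$ and the correspondence with local conservation laws of the inviscid Burgers equation~\eqref{eq:dNs1.3RhoNe1InviscidBurgersEq} to identify the characteristic $-\bigl(\tfrac{w_{0,2}}{w_{1,2}}\hat{\mathrm D}_2\bigr)^k\varrho_{\theta^k}$ for this family. Your proposal passes over precisely the step of the computation that requires an additional idea, so as it stands the argument for the first (and largest) family of characteristics is not established.
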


\begin{proof}
For each of the conserved currents $(F^1,F^2)$ of~\eqref{eq:dNs1.3RhoNe1ModifiedRedEq}
that are listed in Lemma~\ref{lem:dNs1.3RhoNe1ModifiedRedEqGenFormOfCCs},
we perform the procedure described, e.g., in \cite[p.~266]{olve1993A}
to construct the unique conservation-law characteristic $\lambda\{w\}$
of the conservation law containing this conserved current.
More specifically, we expand the total divergence $\mathrm D_1F^1+\mathrm D_2F^2$ of $(F^1,F^2)$
and iteratively make formal integration by parts in
(or, equivalently, apply the Lagrange identity to) each obtained summand
up to deriving a term with the left-hand side $L:=w_{1,2}+w_{0,2}w_{0,3}$ of the equation~\eqref{eq:dNs1.3RhoNe1ModifiedRedEq}
as a multiplier.
The other summands are represented as total derivatives of conserved currents
that are trivial due to vanishing on the solution set of~\eqref{eq:dNs1.3RhoNe1ModifiedRedEq}.
In the course of this cumbersome and nontrivial computation, we use the following identities:
\begin{gather*}
\mathrm D_2\zeta^{1k}=\mathrm D_2\mathrm D_1^{\,k}I^1=\mathrm D_1^{\,k}\mathrm D_2I^1=\mathrm D_1^{\,k}L,
\\
\mathrm D_2\zeta^{2k}=\mathrm D_2\mathrm D_1^{\,k}I^2=\mathrm D_1^{\,k}\mathrm D_2I^2=-\mathrm D_1^{\,k}(w_{0,2}+z_2\mathrm D_1)L,
\\
(\mathrm D_1+w_{0,2}\mathrm D_2)w_{0,2}=L,
\quad
(\mathrm D_1+w_{0,2}\mathrm D_2)w_{1,2}=\mathrm D_1L-w_{1,2}w_{0,3},
\\
(\mathrm D_1+w_{0,2}\mathrm D_2)\theta^0=-z_1L,
\quad
(\mathrm D_1+w_{0,2}\mathrm D_2)\theta^1=2\frac L{w_{1,2}}-\frac{w_{0,2}}{(w_{1,2})^2}\mathrm D_1L,
\\[-.5ex]
(\mathrm D_1+w_{0,2}\mathrm D_2)\theta^{k+2}=w_{0,2}\theta^{k+2}_{w_{l,2}}\mathrm D_1^l\frac L{w_{0,2}}.
\end{gather*}
Here we outline only computations for the second families of conserved currents:
\begin{gather*}
\mathrm D_10+\mathrm D_2\alpha
=\alpha_{\zeta^{1k}}\mathrm D_1^{\,k}L-\alpha_{\zeta^{2k}}\mathrm D_1^{\,k}(w_{0,2}+z_2\mathrm D_1)L
\\[1ex]\quad{}
=\big((-\mathrm D_1)^k\alpha_{\zeta^{1k}}-(w_{0,2}-z_2\mathrm D_1)(-\mathrm D_1)^k\alpha_{\zeta^{2k}}\big)L
+\sum_{k'=1}^k\mathrm D_1\Big(\big((-\mathrm D_1)^{k'-1}\alpha_{\zeta^{1k}}\big)\mathrm D_1^{\,k-k'}L\Big)
\\[.5ex]\qquad{}
-\mathrm D_1\Big(z_2\big((-\mathrm D_1)^k\alpha_{\zeta^{2k}}\big)L\Big)
-\sum_{k'=1}^k\mathrm D_1\Big(\big((-\mathrm D_1)^{k'-1}\alpha_{\zeta^{2k}}\big)\mathrm D_1^{\,k-k'}(w_{0,2}+z_2\mathrm D_1)L\Big).
\end{gather*}

The above procedure does not work directly for
the first family conserved currents from Lemma~\ref{lem:dNs1.3RhoNe1ModifiedRedEqGenFormOfCCs},
but we can use the relation of the equation~\eqref{eq:dNs1.3RhoNe1ModifiedRedEq}
to the inviscid Burgers equation~\eqref{eq:dNs1.3RhoNe1InviscidBurgersEq}.
As a result, we show that
for each fixed value of the parameter function~$\varrho$,
the corresponding conserved current belongs
to the conservation law of~\eqref{eq:dNs1.3RhoNe1ModifiedRedEq} with the characteristic
from the first family in the theorem's statement.

Note also that the restriction of the differential operators~$\mathrm D_1$ and~$\hat{\mathrm D}_1$
on differential functions depending only on~$(w_{*,2})$ are well-defined and coincide with each other.
\end{proof}

\begin{remark}\label{rem:dNs1.3RhoNe1ModifiedRedEqEquivCCs}
The conserved currents~\eqref{eq:dNs1.3RhoNe1ModifiedRedEqEquivCCs} belong to
the conservation laws with characteristics
$z_2\mu_{z_1z_1}-w_{0,2}\mu_{z_1}$, $-2(z_2\lambda_{z_1z_1}w_{0,2}-\lambda_{z_1})$ and $-\nu_{z_1}$,
which are elements of the second family of characteristics from Theorem~\ref{thm:dNs1.3RhoNe1ModifiedRedEqCLChars},
where $\alpha=-\mu\zeta^{21}$, $\alpha=2\lambda\zeta^{21}$ and $\alpha=\nu\zeta^{11}$, respectively.
This is why the conserved currents~\eqref{eq:dNs1.3RhoNe1ModifiedRedEqEquivCCs}
are not presented in Lemma~\ref{lem:dNs1.3RhoNe1ModifiedRedEqGenFormOfCCs}.
\end{remark}

Let~$V$ and~$V_0$ denote the linear span of conserved currents of the equation~\eqref{eq:dNs1.3RhoNe1ModifiedRedEq}
from Lemma~\ref{lem:dNs1.3RhoNe1ModifiedRedEqGenFormOfCCs}
and the subspace of trivial conserved currents belonging to~$V$.

\begin{lemma}\label{lem:dNs1.3RhoNe1ModifiedRedEqTrivCCs}
The subspace~$V_0$ consists of the tuples
\begin{gather*}
\big(\hat{\mathrm D}_2\hat\varrho+(-c_0+c_1w_{0,2}-\tfrac12c_2\theta^0)w_{1,2}\theta^0,\\\quad{}
w_{0,2}\hat{\mathrm D}_2\hat\varrho+(-c_0+c_1w_{0,2}-\tfrac12c_2\theta^0)w_{0,2}w_{1,2}\theta^0
+\hat{\mathrm D}_1\hat\alpha+c_0\zeta^{10}+(c_2z_1+c_1)\zeta^{20}\big),
\end{gather*}
where~$\hat\varrho$ is an arbitrary function at most of~$w_{0,2}$ and a finite number of~$\theta^k$,
$\hat\alpha$ is an arbitrary $z_2$-integral of~\eqref{eq:dNs1.3RhoNe1ModifiedRedEq}
and $c_0$, $c_1$ and $c_2$ are arbitrary constants.
\end{lemma}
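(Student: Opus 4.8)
The plan is to combine the description of the general conserved current of~\eqref{eq:dNs1.3RhoNe1ModifiedRedEq} in Lemma~\ref{lem:dNs1.3RhoNe1ModifiedRedEqGenFormOfCCs} with the elementary criterion that a conserved current $(F^1,F^2)$ whose components are differential functions of~$w$ depending only on parametric derivatives of~\eqref{eq:dNs1.3RhoNe1ModifiedRedEq} is trivial if and only if there is a differential function $G\{w\}$ with $F^1=\hat{\mathrm D}_2G$ and $F^2=-\hat{\mathrm D}_1G$: a conserved current vanishing on~$\mathcal L$ and depending only on parametric derivatives is identically zero, so triviality here reduces to being a null divergence, and in two independent variables null divergences are exactly such total curls. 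It is convenient to rewrite these two conditions on~$G$ equivalently as $\hat{\mathrm D}_2G=F^1$ and $\big(\hat{\mathrm D}_1+w_{0,2}\hat{\mathrm D}_2\big)G=w_{0,2}F^1-F^2$. All the tools needed to handle them are already available: Lemma~\ref{lem:dNs1.3RhoNe1AuxiliaryLemma1} identifies the kernel of $\hat{\mathrm D}_1+w_{0,2}\hat{\mathrm D}_2$ with the functions of~$w_{0,2}$ and finitely many~$\theta^k$; Lemma~\ref{lem:dNs1.3RhoNe1AuxiliaryLemma2b} gives the solvability condition and the general solution of $\big(\hat{\mathrm D}_1+w_{0,2}\hat{\mathrm D}_2\big)f=\kappa^0+z_2\kappa^1$; and on a function~$g$ of $w_{0,2}$ and finitely many~$\theta^k$ one has $\hat{\mathrm D}_2g=\tfrac{w_{1,2}}{w_{0,2}}Rg$ with $R$ as in Lemma~\ref{lem:dNs1.3RhoNe1AuxiliaryLemma3}, so after splitting~$G$ accordingly the equation $\hat{\mathrm D}_2G=F^1$ reduces to an equation for~$R$.

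For the ``if'' part I would first observe that a tuple $(F^1,F^2)$ of the stated form coincides with the sum of the first-family conserved current $\big(\tfrac{w_{1,2}}{w_{0,2}}\varrho,\,w_{1,2}\varrho\big)$ of Lemma~\ref{lem:dNs1.3RhoNe1ModifiedRedEqGenFormOfCCs} for $\varrho=R\hat\varrho+(-c_0+c_1w_{0,2}-\tfrac12c_2\theta^0)w_{0,2}\theta^0$ and the second-family conserved current $(0,\alpha)$ for $\alpha=\hat{\mathrm D}_1\hat\alpha+c_0\zeta^{10}+(c_2z_1+c_1)\zeta^{20}$; both $\varrho$ and $\alpha$ are admissible parameters ($\alpha$ being a $z_2$-integral since $\hat{\mathrm D}_2$ annihilates $\hat{\mathrm D}_1\hat\alpha$, $\zeta^{10}$, $z_1\zeta^{20}$ and $\zeta^{20}$), so $(F^1,F^2)\in V$. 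Then I would exhibit the potential $G=\hat\varrho-\hat\alpha+G_1$, where $G_1$ is the solution of $\big(\hat{\mathrm D}_1+w_{0,2}\hat{\mathrm D}_2\big)G_1=-c_0\zeta^{10}-(c_2z_1+c_1)\zeta^{20}$ furnished by Lemma~\ref{lem:dNs1.3RhoNe1AuxiliaryLemma2b} (the right-hand side being of the required form with $\kappa^1=0$), with the free additive summand (a function of $w_{0,2}$ and finitely many~$\theta^k$) chosen so that in addition $\hat{\mathrm D}_2G_1=(-c_0+c_1w_{0,2}-\tfrac12c_2\theta^0)w_{1,2}\theta^0$. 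Using $\big(\hat{\mathrm D}_1+w_{0,2}\hat{\mathrm D}_2\big)\hat\varrho=0$ and $\hat{\mathrm D}_2\hat\alpha=0$, a direct substitution then gives $\hat{\mathrm D}_2G=F^1$ and $\hat{\mathrm D}_1G=-F^2$.

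For the ``only if'' part I would write an arbitrary element of~$V$ as a linear combination of the four families of Lemma~\ref{lem:dNs1.3RhoNe1ModifiedRedEqGenFormOfCCs} and compute $w_{0,2}F^1-F^2$. For the third and fourth tuples this difference contains terms such as $w_{1,0}\zeta^{10}$ and $w_{0,1}(w_{0,2})^3$ (and a term with $w_{0,0}$) that are neither $z_2$-integrals nor $z_2$ times a $z_2$-integral and cannot be compensated by the remaining terms; since solvability of $\big(\hat{\mathrm D}_1+w_{0,2}\hat{\mathrm D}_2\big)G=w_{0,2}F^1-F^2$ forces the right-hand side into the form $\kappa^0+z_2\kappa^1$ with $z_2$-integrals $\kappa^0,\kappa^1$ of the shape described in Lemma~\ref{lem:dNs1.3RhoNe1AuxiliaryLemma2b}, the coefficients of these two tuples must vanish. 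Then $w_{0,2}F^1-F^2=-\alpha$, and Lemma~\ref{lem:dNs1.3RhoNe1AuxiliaryLemma2b} forces $\alpha=\hat{\mathrm D}_1\hat\alpha+c_0\zeta^{10}+(c_2z_1+c_1)\zeta^{20}$ for a $z_2$-integral~$\hat\alpha$ and constants $c_0,c_1,c_2$, and produces the general solution~$G$ modulo a summand~$\check f$ that is a function of $w_{0,2}$ and finitely many~$\theta^k$. Finally, imposing $\hat{\mathrm D}_2G=F^1=\tfrac{w_{1,2}}{w_{0,2}}\varrho$ and computing $\hat{\mathrm D}_2$ of the explicit part of~$G$, the remaining condition becomes $R\check f=\varrho-(-c_0+c_1w_{0,2}-\tfrac12c_2\theta^0)w_{0,2}\theta^0$, whose solvability among functions of $w_{0,2}$ and finitely many~$\theta^k$ is exactly the requirement that $\varrho=R\hat\varrho+(-c_0+c_1w_{0,2}-\tfrac12c_2\theta^0)w_{0,2}\theta^0$ for some such~$\hat\varrho$. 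Substituting the resulting~$\varrho$ and~$\alpha$ back into the first two families reproduces the tuples in the statement.

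The main obstacle I expect is the bookkeeping in the ``only if'' direction: making rigorous that the third and fourth tuples cannot contribute, i.e.\ that the offending monomials genuinely cannot be absorbed into $\kappa^0+z_2\kappa^1$, and then computing $\hat{\mathrm D}_2$ of the explicit particular solution from Lemma~\ref{lem:dNs1.3RhoNe1AuxiliaryLemma2b} accurately enough to isolate the correction term $(-c_0+c_1w_{0,2}-\tfrac12c_2\theta^0)w_{0,2}\theta^0$ — since that particular solution is canonical only up to the $\check f$-freedom, any discrepancy must be shown to lie in the image of~$R$. An essentially equivalent route, which avoids constructing potentials, uses the conservation-law characteristics found in the proof of Theorem~\ref{thm:dNs1.3RhoNe1ModifiedRedEqCLChars}: the space~$V_0$ is the kernel of the characteristic map on~$V$, and equating to zero the corresponding linear combination of the four characteristic families, followed by splitting with respect to the jet variables, yields the same constraints.
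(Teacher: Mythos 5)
Your primary route --- characterizing triviality of a parametric-derivative-only conserved current via a potential $G$ with $F^1=\hat{\mathrm D}_2G$, $F^2=-\hat{\mathrm D}_1G$, and then solving this pair with Lemmas~\ref{lem:dNs1.3RhoNe1AuxiliaryLemma1} and~\ref{lem:dNs1.3RhoNe1AuxiliaryLemma2b} --- is genuinely different from the paper's. The paper instead uses the fact that for a normal, totally nondegenerate single equation a conserved current is trivial if and only if the associated characteristic vanishes identically (Olver, Theorem~4.26); it sets ${\rm Ch}_1(\alpha)+{\rm Ch}_2(\varrho)+b_1{\rm Ch}_3+b_2{\rm Ch}_4=0$, splits with respect to the jet coordinates to get $b_1=b_2=0$ together with the conditions \eqref{eq:dNs1.3RhoNe1ModifiedRedEqTrivCLCharCondsA}--\eqref{eq:dNs1.3RhoNe1ModifiedRedEqTrivCLCharCondsB} on $\alpha$ and $\varrho$, and solves these by interpreting $(-\mathrm D_1)^k\p_{\zeta^{ik}}$ and its $\theta$-analogue as Euler operators, so that the homogeneous solutions are the total derivatives $\hat{\mathrm D}_1\hat\alpha$ and $(w_{0,2}/w_{1,2})\hat{\mathrm D}_2\hat\varrho$. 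This is exactly the one-sentence alternative you give at the end, so you have correctly located the paper's argument; what the characteristic route buys is that $b_1=b_2=0$ drops out at once, since ${\rm Ch}_3$ and ${\rm Ch}_4$ contain $w_{1,0}$ and $w_{0,0}$ while ${\rm Ch}_1(\alpha)$ and ${\rm Ch}_2(\varrho)$ depend only on $z_1$, $z_2$, $w_{0,2}$, $w_{1,2}$, $\theta^k$ and $\zeta^{ik}$.

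The gap in your primary route sits exactly at that point. You assert that solvability of $\big(\hat{\mathrm D}_1+w_{0,2}\hat{\mathrm D}_2\big)G=w_{0,2}F^1-F^2$ forces the right-hand side into the form $\kappa^0+z_2\kappa^1$ with $z_2$-integrals $\kappa^0,\kappa^1$. Lemma~\ref{lem:dNs1.3RhoNe1AuxiliaryLemma2b} does not say this: it is a solvability criterion for right-hand sides \emph{already assumed} to be of that form, not a characterization of the image of $\hat{\mathrm D}_1+w_{0,2}\hat{\mathrm D}_2$. That image contains plenty of expressions not of this form, e.g. $\big(\hat{\mathrm D}_1+w_{0,2}\hat{\mathrm D}_2\big)(w_{0,0}\zeta^{10})=(w_{1,0}+w_{0,2}w_{0,1})\zeta^{10}+w_{0,0}\zeta^{11}$, which exhibits precisely the monomials ($w_{1,0}\zeta^{10}$, products with $w_{0,1}$) that you want to declare un-absorbable in $w_{0,2}F^1-F^2$ for the third and fourth tuples of Lemma~\ref{lem:dNs1.3RhoNe1ModifiedRedEqGenFormOfCCs}. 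So eliminating $b_1$ and $b_2$ by the potential route requires an additional argument --- either an actual description of the joint image of $(\hat{\mathrm D}_2,\hat{\mathrm D}_1)$, or simply falling back on the characteristic criterion. The remainder of your outline (the explicit potential in the ``if'' direction, and the final reduction to $R\check f=\varrho-(\cdots)$ with solvability governed by the image of $R$) is sound in outline and corresponds to the computations the paper carries out when integrating \eqref{eq:dNs1.3RhoNe1ModifiedRedEqTrivCLCharCondsA} and \eqref{eq:dNs1.3RhoNe1ModifiedRedEqTrivCLCharCondsB}.
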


\begin{proof}
Since the single equation~\eqref{eq:dNs1.3RhoNe1ModifiedRedEq} is a normal, totally nondegenerate system of differential equations,
in view of \cite[Theorem 4.26]{olve1993A},
a conserved current of~\eqref{eq:dNs1.3RhoNe1ModifiedRedEq} is trivial
if and only if the associated characteristic identically vanishes.
Denote by ${\rm CC}_1(\alpha)$, ${\rm CC}_2(\varrho)$, ${\rm CC}_3$, ${\rm CC}_4$
the conserved currents listed in Lemma~\ref{lem:dNs1.3RhoNe1ModifiedRedEqGenFormOfCCs}
and by ${\rm Ch}_1(\alpha)$, ${\rm Ch}_2(\varrho)$, ${\rm Ch}_3$, ${\rm Ch}_4$
the conservation-law characteristics listed in Theorem~\ref{thm:dNs1.3RhoNe1ModifiedRedEqCLChars}.
According to  Lemma~\ref{lem:dNs1.3RhoNe1ModifiedRedEqGenFormOfCCs}
and Theorem~\ref{thm:dNs1.3RhoNe1ModifiedRedEqCLChars},
any conserved current of~\eqref{eq:dNs1.3RhoNe1ModifiedRedEq} is equivalent to
a conserved current
${\rm CC}_1(\alpha)+{\rm CC}_2(\varrho)+b_1{\rm CC}_3+b_2{\rm CC}_4$,
where $\alpha$ is a $z_2$-integral of~\eqref{eq:dNs1.3RhoNe1ModifiedRedEq},
$\varrho$ is a function at most of~$w_{0,2}$ and a finite number of~$\theta^k$,
and $b_1$ and $b_2$ are constants.
The latter conserved current is contained in the conservation law with the characteristic
${\rm Ch}_1(\alpha)+{\rm Ch}_2(\varrho)+b_1{\rm Ch}_3+b_2{\rm Ch}_4$.
Further it is convenient to use the modified coordinates~\eqref{eq:dNs1.3RhoNe1ModifiedRedEqCoords2},
where $w_{1,2}$ is replaced by~$z_1$.
If the above characteristic vanishes, then differentiating the corresponding equality
with respect to each of the coordinates, splitting the obtained equations if possible
and taking into account the initial equality as well,
we derive the equations $b_1=b_2=0$ and
\begin{gather}\label{eq:dNs1.3RhoNe1ModifiedRedEqTrivCLCharCondsA}
(-\mathrm D_1)^k\alpha_{\zeta^{1k}}=c_0,\quad
(-\mathrm D_1)^k\alpha_{\zeta^{2k}}=c_2z_1+c_1,
\\\label{eq:dNs1.3RhoNe1ModifiedRedEqTrivCLCharCondsB}
\left(\frac{w_{0,2}}{w_{1,2}}\hat{\mathrm D}_2\right)^k\varrho_{\theta^k}=-c_0+c_1w_{0,2}-c_2\theta^0,
\end{gather}
where $c_0$, $c_1$ and $c_2$ are arbitrary constants.
The system~\eqref{eq:dNs1.3RhoNe1ModifiedRedEqTrivCLCharCondsA}
can be considered as an inhomogeneous system of linear partial differential equations with respect to~$\alpha$.
Its general solution is represented in the form
$\alpha=c_0\zeta^{10}+(c_2z_1+c_1)\zeta^{20}+\tilde\alpha$,
where $\tilde\alpha$ is the general solution of the homogeneous counterpart of this system,
$(-\mathrm D_1)^k\tilde\alpha_{\zeta^{1k}}=0$ and
$(-\mathrm D_1)^k\tilde\alpha_{\zeta^{2k}}=0$.
Here the operator $(-\mathrm D_1)^k\p_{\zeta^{1k}}$ and $(-\mathrm D_1)^k\p_{\zeta^{2k}}$
can be interpreted as the Euler operators in the dependent variables~$\zeta^{10}$ and~$\zeta^{20}$,
respectively, where $z_1$ is the only independent variable.
Hence Theorem~4.7 from \cite{olve1993A} implies the following (local) representation for~$\tilde\alpha$:
$\tilde\alpha=\mathrm D_1\hat\alpha$ for some $z_2$-integral~$\hat\alpha$ of~\eqref{eq:dNs1.3RhoNe1ModifiedRedEq},
where the operator~$\mathrm D_1$ can be replaced by~$\hat{\mathrm D}_1$.
In a similar way, we treat the equation~\eqref{eq:dNs1.3RhoNe1ModifiedRedEqTrivCLCharCondsB},
representing its general solution as $\varrho=-c_0\theta^0+c_1w_{0,2}\theta^0-\frac12c_2(\theta^0)^2+\tilde\varrho$,
where $\tilde\varrho$ is the general solution its homogeneous counterpart.
Since the restriction of the operator $(w_{0,2}/w_{1,2})\hat{\mathrm D}_2$
to the space with the coordinates $(w_{0,2},\theta^k)$ can be formally interpreted
as the total derivative operator with respect to~$-w_{0,2}$ with successive derivatives~$\theta^k$, $k\in\mathbb N$,
of~$\theta^0$,
we obtain that (locally) $\tilde\varrho=(w_{0,2}/w_{1,2})\hat{\mathrm D}_2\hat\varrho$,
where $\hat\varrho$ is an arbitrary function at most of~$w_{0,2}$ and a finite number~of~$\theta^k$.
\end{proof}

Lemmas~\ref{lem:dNs1.3RhoNe1ModifiedRedEqGenFormOfCCs} and~\ref{lem:dNs1.3RhoNe1ModifiedRedEqTrivCCs}
imply the following theorem.

\begin{theorem}\label{lem:dNs1.3RhoNe1ModifiedRedEqCLs}
The space~$\Omega$ of conservation laws of the equation~\eqref{eq:dNs1.3RhoNe1ModifiedRedEq}
is naturally isomorphic to the quotient of the space~$V$ by the subspace~$V_0$.
\end{theorem}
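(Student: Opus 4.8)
The plan is to deduce the theorem from Lemmas~\ref{lem:dNs1.3RhoNe1ModifiedRedEqGenFormOfCCs} and~\ref{lem:dNs1.3RhoNe1ModifiedRedEqTrivCCs} by a routine quotient argument, using that by definition the space~$\Omega$ of conservation laws of the (normal, totally nondegenerate) equation~\eqref{eq:dNs1.3RhoNe1ModifiedRedEq} is the quotient of the linear space of all its conserved currents by the subspace of trivial (null-divergence) conserved currents, with two conserved currents representing the same conservation law if and only if their difference is trivial.

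First I would introduce the natural linear map $\iota\colon V\to\Omega$ obtained by composing the inclusion of~$V$ into the space of all conserved currents of~\eqref{eq:dNs1.3RhoNe1ModifiedRedEq} with the canonical projection onto~$\Omega$; that is, $\iota$ sends a conserved current from~$V$ to the conservation law that contains it. This map is well defined and linear, and $V_0\subseteq V$ is a linear subspace by its very definition as the set of trivial conserved currents lying in~$V$.

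Next I would establish that $\iota$ is surjective, which is where Lemma~\ref{lem:dNs1.3RhoNe1ModifiedRedEqGenFormOfCCs} does the work: it asserts that every conserved current of~\eqref{eq:dNs1.3RhoNe1ModifiedRedEq} is equivalent to a linear combination of the explicitly listed tuples, i.e.\ to an element of~$V$; hence every conservation law of~\eqref{eq:dNs1.3RhoNe1ModifiedRedEq} has a representative in~$V$, so $\operatorname{im}\iota=\Omega$. For the kernel, note that $F\in\ker\iota$ means exactly that $F$ is a trivial conserved current belonging to~$V$, so $\ker\iota=V_0$, while Lemma~\ref{lem:dNs1.3RhoNe1ModifiedRedEqTrivCCs} furthermore records the explicit form of the elements of~$V_0$, which is what makes the resulting isomorphism concrete and usable. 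Applying the first isomorphism theorem for vector spaces then yields $V/V_0=V/\ker\iota\cong\operatorname{im}\iota=\Omega$, and the isomorphism is natural since it is the one induced by the canonical projection onto~$\Omega$.

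The proof has no genuine obstacle of its own: all the analytic content — the classification of conserved currents modulo triviality and the description of the trivial ones inside~$V$ — is already packaged into Lemmas~\ref{lem:dNs1.3RhoNe1ModifiedRedEqGenFormOfCCs} and~\ref{lem:dNs1.3RhoNe1ModifiedRedEqTrivCCs}. The only point requiring a little care is to confirm that the equivalence of conserved currents used in those lemmas (vanishing of the difference on the solution manifold~$\mathcal L$, equivalently up to null divergences) is precisely the equivalence defining~$\Omega$, which holds because~\eqref{eq:dNs1.3RhoNe1ModifiedRedEq} is a normal, totally nondegenerate system, so that, by \cite[Theorem~4.26]{olve1993A}, triviality of a conserved current is detected by the vanishing of its characteristic.
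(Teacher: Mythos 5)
Your argument is correct and is exactly the route the paper intends: the theorem is stated there as an immediate consequence of Lemmas~\ref{lem:dNs1.3RhoNe1ModifiedRedEqGenFormOfCCs} and~\ref{lem:dNs1.3RhoNe1ModifiedRedEqTrivCCs}, and your write-up simply makes explicit the surjectivity, kernel identification and first isomorphism theorem that the paper leaves implicit. The point you flag about triviality being detected by the vanishing of the characteristic for a normal, totally nondegenerate system is indeed the same justification the paper uses inside the proof of Lemma~\ref{lem:dNs1.3RhoNe1ModifiedRedEqTrivCCs}.
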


\subsection{Relation to symmetry-like objects of inviscid Burgers equation}\label{sec:InviscidBurgersEqInducedObjects}

The study of~\cite[Appendix]{baik1989a}
on the generalized symmetries of the inviscid Burgers equation~\eqref{eq:dNs1.3RhoNe1InviscidBurgersEq}
was extended in~\cite{popo2025b} to other local symmetry-like objects of this equation,
which include cosymmetries, conserved currents, conservation-law characteristics and conservation laws;
see also a short preliminary description of the above results in~\cite[Section~6]{vane2021a}.

It turns out that the differential substitution~$w_{0,2}=h$ induces
a homomorphism~$\boldsymbol{\bar\upsilon}\colon\Sigma\to\check\Sigma$
between the algebras~$\Sigma$ and~$\check\Sigma$
of canonical representatives of equivalences classes of generalized symmetries
of the equations~\eqref{eq:dNs1.3RhoNe1ModifiedRedEq} and~\eqref{eq:dNs1.3RhoNe1InviscidBurgersEq}.
This homomorphism can be represented as the result of the following successive operations:
(\emph{i}) the second prolongation of the generalized vector fields from~$\Sigma$,
(\emph{ii}) neglecting all the components of the obtained prolongations
that are associated with the derivatives of~$w$, except for~$w_{0,2}$, and
(\emph{iii}) replacing derivatives of~$w_{0,2}$ by the respective derivatives of~$h$.
In other words, the second total derivative of the characteristic of any element of~$\Sigma$
is, after substituting~$h$ for~$w_{0,2}$, the characteristic of an element of~$\check\Sigma$.
The characteristics of generalized vector fields spanning the algebra~$\Sigma$
are listed in Theorem~\ref{thm:dNs1.3RhoNe1ModifiedRedEqGenSyms}.
The algebra~$\check\Sigma$ is spanned by the generalized vector fields with characteristics $h_2\check\varrho(h,\check\theta^*)$,
where $\check\varrho$ is an arbitrary function of~$h$ and a finite number of
\[
\check\theta^k:=\left(\frac h{h_1}\check{\mathrm D}_2\right)^k(z_2-hz_1),\quad k\in\mathbb N_0,
\]
which are in fact the modified jet coordinates~$\theta^k$ written in terms of derivatives of~$h$.
Thus, under the homomorphism~$\boldsymbol{\bar\upsilon}$,
the characteristics listed in Theorem~\ref{thm:dNs1.3RhoNe1ModifiedRedEqGenSyms} are respectively mapped
to the characteristics of generalized symmetries of the equation~\eqref{eq:dNs1.3RhoNe1InviscidBurgersEq}
with the following values of the differential parameter function~$\check\varrho$:
\begin{gather*}
-h,\quad
-h\check\theta^1,\quad
\check\theta^0\check\theta^1,\quad
1,\quad
2\check\theta^1,\quad
\check\theta^0+h\check\theta^1,\quad
0,\quad 0,\quad
\check R^2\frac{\check R\hat\varrho}{\check\theta^2},
\\[.5ex]
-\check R^2\frac{h^2(\check\theta^1)^2}{2\check\theta^2}-\check R\frac{h^2\check\theta^1}2,
\quad
-\check R^2\frac{h^3\check R^2(\check\theta^0)^2}{3\check\theta^2}-h^2\check\theta^1+3h\check\theta^0,
\quad
-\check R^2\frac{h^2\check R^2(\check\theta^0)^3}{3\check\theta^2}-\check\theta^0\check R(h\check\theta^0).
\end{gather*}
Here $\hat\varrho$ is a function of~$h$ and a finite number of~$\check\theta^k$ that is obtained
by the substitution $w_{0,2}=h$
into the function denoted by the same symbol in Theorem~\ref{thm:dNs1.3RhoNe1ModifiedRedEqGenSyms},
the operators~$\check R$ and~$\check{\mathrm D}_2$
are the pushforwards of the operators~$R$ from Lemma~\ref{lem:dNs1.3RhoNe1AuxiliaryLemma3}
and~$\hat{\mathrm D}_2$ by the composition of~$\varpi$ and the substitution $w_{0,2}=h$,
\begin{gather*}
\check R:=-\p_h+\check\theta^{k+1}\p_{\check\theta^k},\quad
\check{\mathrm D}_2=\frac{h_1}h\check R-h_1^{\,2}\frac{h_1\check\theta^2+1}{h^2}\p_{h_1},
\end{gather*}
and~$\varpi$ is the natural projection from the space with the coordinates~\eqref{eq:dNs1.3RhoNe1ModifiedRedEqCoords2}
to the space with the coordinates $(w_{0,2},w_{1,2},\theta^k)$.
It is obvious that the homomorphism~$\boldsymbol{\bar\upsilon}$ is neither injective nor surjective.
The kernel of~$\boldsymbol{\bar\upsilon}$ is spanned by the generalized vector fields
with the characteristics~$\beta$ and~$z_2\alpha$,
where both $\alpha$ and~$\beta$ run through the set of $z_2$-integrals of~\eqref{eq:dNs1.3RhoNe1ModifiedRedEq}.
Summing up, \emph{the equation~\eqref{eq:dNs1.3RhoNe1InviscidBurgersEq} admits
no nonlocal symmetries related to the differential substitution~$h=w_{0,2}$
but has generalized symmetries that are not induced by generalized symmetries
of the equation~\eqref{eq:dNs1.3RhoNe1ModifiedRedEq}.}
In other words, the elements from $\check\Sigma\setminus\boldsymbol{\bar\upsilon}(\Sigma)$
can be interpreted as nonlocal symmetries of~\eqref{eq:dNs1.3RhoNe1ModifiedRedEq},
and, therefore, as hidden nonlocal symmetries of~\eqref{eq:dN}.

The differential substitution~$w_{0,2}=h$ also induces the natural injective linear map
between the spaces of canonical representatives of equivalences classes
of cosymmetries, $\check\Gamma$ and~$\Gamma$,
(resp., of conservation-law characteristics, $\check\Lambda$ and~$\Lambda$)
of the equations~\eqref{eq:dNs1.3RhoNe1InviscidBurgersEq} and~\eqref{eq:dNs1.3RhoNe1ModifiedRedEq}
as well as the natural injective linear map between their spaces of conservation laws~$\check\Omega$ and~$\Omega$,
which act in the opposite direction in comparison with the case of generalized symmetries.

In particular, the space~$\check\Gamma$ of cosymmetries of the equation~\eqref{eq:dNs1.3RhoNe1InviscidBurgersEq}
consists of the differential functions depending at most on~$h$ and a finite number of~\smash{$\check\theta^k$}
and is embedded in the space~$\Gamma$ of cosymmetries of the equation~\eqref{eq:dNs1.3RhoNe1ModifiedRedEq}
just by substituting $w_{0,2}$ for~$h$,
which gives the first family~$\{\varrho\}$ of cosymmetries listed in Theorem~\ref{thm:dNs1.3RhoNe1Cosyms}.
Here $\varrho$ runs through the set of differential functions
depending at most on~$w_{0,2}$ and a finite number of~$\theta^k$.
Therefore, \emph{all the other elements of the space~$\Gamma$
can be interpreted as the canonical representatives of weak nonlocal cosymmetries
of the equation~\eqref{eq:dNs1.3RhoNe1InviscidBurgersEq}
that are associated with the differential substitution~$h=w_{0,2}$.}
Proper nonlocal cosymmetries among them should satisfy
the condition~\eqref{eq:dNs1.3RhoNe1ModifiedRedEqInTotalDers2'} with $\kappa^0=\kappa^1=0$
and thus are counterparts of local cosymmetries of the equation~\eqref{eq:dNs1.3RhoNe1InviscidBurgersEq}.

The descriptions of the corresponding embeddings
in the cases of conservation-law characteristics, conserved currents and conservation laws are analogous,
but the interpretations of the corresponding nonlocal objects differ from that for cosymmetries.

More specifically, the space~$\check\Lambda$ of conservation-law characteristics
of the equation~\eqref{eq:dNs1.3RhoNe1InviscidBurgersEq}
is spanned by the differential functions of the same form
as the elements of the first family of conservation-law characteristics
of~\eqref{eq:dNs1.3RhoNe1ModifiedRedEq} given in Theorem~\ref{thm:dNs1.3RhoNe1ModifiedRedEqCLChars},
where derivatives of~$h$ are substituted for the corresponding derivatives of~$w_{0,2}$.
This induces the natural embedding~$\iota$ of~$\check\Lambda$ in~$\Lambda$.
\emph{All the elements of $\Lambda\setminus\iota(\check\Lambda)$
can be interpreted as the canonical representatives of nonlocal conservation-law characteristics
of the equation~\eqref{eq:dNs1.3RhoNe1InviscidBurgersEq}
that are associated with the differential substitution~${h=w_{0,2}}$.}
Each of them is a weak nonlocal cosymmetry of the equation~\eqref{eq:dNs1.3RhoNe1InviscidBurgersEq}
in the above sense but not necessarily a proper nonlocal (and thus, local) cosymmetry of this equation.

Any conserved current of~\eqref{eq:dNs1.3RhoNe1InviscidBurgersEq} is equivalent to
a tuple of the form $(h^{-1}h_1\check\varrho,\,h_1\check\varrho)$,
where $\check\varrho$ is a~function at most of~$h$ and a finite number of~\smash{$\check\theta^k$}.
The space~\smash{$\check V$} of such tuples corresponds to
the first family of conserved currents of~\eqref{eq:dNs1.3RhoNe1ModifiedRedEq}
presented in Lemma~\ref{lem:dNs1.3RhoNe1ModifiedRedEqGenFormOfCCs}.
In other words, it is naturally embedded in the space~$V$ of conserved currents of~\eqref{eq:dNs1.3RhoNe1ModifiedRedEq}.
The subspace~$\check V_0$ of trivial conserved currents in~$\check V$
coincides, up to substituting $w_{0,2}$ for~$h$,
with the intersection of the first family of Lemma~\ref{lem:dNs1.3RhoNe1ModifiedRedEqGenFormOfCCs}
and the subspace~$V_0$ of~$V$.
As a result, the quotient space~$\check V/\check V_0$ can be naturally embedded in the quotient space~$V/V_0$.
The space~$\check\Omega$ of conservation laws of the equation~\eqref{eq:dNs1.3RhoNe1InviscidBurgersEq}
is naturally isomorphic to the space~$\check V/\check V_0$.
The last claim and Theorem~\ref{lem:dNs1.3RhoNe1ModifiedRedEqCLs}
jointly with the embedding of~$\check V/\check V_0$ in~$V/V_0$
imply the natural embedding~$\hat\iota$
of the space~$\check\Omega$ of conservation laws of the equation~\eqref{eq:dNs1.3RhoNe1InviscidBurgersEq}
in the space~$\Omega$ of conservation laws of the equation~\eqref{eq:dNs1.3RhoNe1ModifiedRedEq}.
\emph{All the elements of $\Omega\setminus\hat\iota(\check\Omega)$
can be interpreted as the canonical representatives of nonlocal conservation laws
of the equation~\eqref{eq:dNs1.3RhoNe1InviscidBurgersEq}
that are associated with the differential substitution~${h=w_{0,2}}$.}

\subsection{Symmetry-like objects of the intermediate equation}\label{sec:IntermediateEq}

In the context of the results of Section~\ref{sec:InviscidBurgersEqInducedObjects},
it is instructive to also study local symmetry-like objects of the equation
\begin{gather}\label{eq:IntermediateEq}
q_{12}+q_2q_{22}=0. 
\end{gather}
It can be considered as the intermediate equation
for the equations~\eqref{eq:dNs1.3RhoNe1InviscidBurgersEq} and~\eqref{eq:dNs1.3RhoNe1ModifiedRedEq}
and is related to them by the differential substitutions $h=q_2$ and $q=w_2$, respectively.
A distinguished feature of the equation~\eqref{eq:IntermediateEq}
in comparison with its counterparts~\eqref{eq:dNs1.3RhoNe1InviscidBurgersEq} and~\eqref{eq:dNs1.3RhoNe1ModifiedRedEq}
is that it is the Euler--Lagrange equation for a Lagrangian, $-\frac12q_1q_2-\frac12(q_2)^2$.

We compute the maximal Lie invariance pseudoalgebra of the equation~\eqref{eq:IntermediateEq}
using the packages {\sf DESOLV} \cite{carm2000a} and {\sf Jets} \cite{BaranMarvan} for {\sf Maple}.
This pseudoalgebra is spanned by the vector fields
\begin{gather*}
\begin{split}&
\breve P^1=\p_{z_1},\quad
\breve D^1=z_1\p_{z_1}-q\p_q,\quad
\breve K=z_1^2\p_{z_1}+z_1z_2\p_{z_2}+\tfrac12z_2^{\,2}\p_q,\\&
\breve D^2=z_2\p_{z_2}+2q\p_q,\quad
\breve P^2=\p_{z_2},\quad
\breve H=z_1\p_{z_2}+z_2\p_q,\quad
\breve R(\alpha)=\alpha(z_1)\p_q,
\end{split}
\end{gather*}
where
the parameter function~$\alpha$ runs through the set of smooth functions of~$z_1$.

Following the consideration of the equation~\eqref{eq:dNs1.3RhoNe1ModifiedRedEq}
in Section~\ref{sec:dNs1.3RhoNe1SymLikeObjects},
we (locally) represent the equation~\eqref{eq:IntermediateEq} in the Kovalevskaya form,
solving this equation with respect to the derivative~$q_{22}$.
Under the notation $q_{k,l}:=\p^{k+l}q/\p z_1^{\,k}\p z_2^{\,l}$, $k,l\in\mathbb N_0$,
this means that the derivatives $q_{k,l}$ with $k\in\mathbb N_0$ and $l\in\mathbb N_0+2$
and the other derivatives of~$q$ are considered to constitute
the tuples of the principal and the parametric derivatives of the equation~\eqref{eq:IntermediateEq},
respectively.
In other words,
the jet variables~$z_1$, $z_2$, $q_{k,l}$ with $k\in\mathbb N_0$ and $l\in\{0,1\}$
are chosen as the coordinates on the manifold~$\breve{\mathcal L}$
defined by the equation~\eqref{eq:IntermediateEq}
and its differential consequences in the jet space $\mathrm J^\infty(\mathbb R^2_{z_1,z_2}\times\mathbb R_q)$.
A~differential function~$\breve f$ of~$q$
that does not depend on the principal derivatives of the equation~\eqref{eq:IntermediateEq}
is denoted\ by~\smash{$\breve f\{q\}$}.
We rewrite all related objects defined in the beginning of Section~\ref{sec:dNs1.3RhoNe1SymLikeObjects}
in terms of~$q$ whenever this is possible, preserving the notation for these objects.
In particular,
\begin{gather*}
I^1:=q_{1,0}+\frac12(q_{0,1})^2,\quad
\zeta^{1k}:=\mathrm D_1^{\,k}I^1,
\\
\hat{\mathrm D}_1=\p_{z_1}+\left(\zeta^{10}-\frac12(q_{0,1})^2\right)\p_{q_{0,0}}+q_{k+1,1}\p_{q_{k,1}}+\zeta^{i,k+1}\p_{\zeta^{ik}},
\\[1ex]
\hat{\mathrm D}_2=\p_{z_2}+q_{0,1}\p_{q_{0,0}}
-\hat{\mathrm D}_1^k\left(\frac{q_{1,1}}{q_{0,1}}\right)\p_{q_{k,1}},\quad
\theta^k:=\left(\frac{q_{0,1}}{q_{1,1}}\hat{\mathrm D}_2\right)^k(z_2-q_{0,1}z_1).
\end{gather*}
In the modified coordinates $(q_{0,0},q_{1,0},q_{0,1},\theta^k,\zeta^{1k})$,
the operators~$\hat{\mathrm D}_1$ and~$\hat{\mathrm D}_2$ take the form
\begin{gather*}
\begin{split}
\hat{\mathrm D}_1={}&\left(\zeta^{10}-\frac{(q_{0,1})^2}2\right)\p_{q_{0,0}}
+q_{1,1}\p_{q_{0,1}}+\frac{(q_{1,1})^2}{q_{0,1}}(q_{1,1}\theta^2+2)\p_{q_{1,1}}
-q_{1,1}\theta^{k+1}\p_{\theta^k}+\zeta^{1,k+1}\p_{\zeta^{1k}},
\end{split}
\\[1ex]
\hat{\mathrm D}_2
=q_{0,1}\p_{q_{0,0}}-\frac{q_{1,1}}{q_{0,1}}\p_{q_{0,1}}
-\left(\frac{q_{1,1}}{q_{0,1}}\right)^2(q_{1,1}\theta^2+1)\p_{q_{1,1}}
+\frac{q_{1,1}}{q_{0,1}}\theta^{k+1}\p_{\theta^k}.
\end{gather*}

To construct the space of $z_2$-integrals of the equation~\eqref{eq:IntermediateEq},
the spaces of canonical representatives of its conserved currents, of its conservation-law characteristics
and of its trivial conserved currents,
it suffices to consider the corresponding objects for the equation~\eqref{eq:dNs1.3RhoNe1ModifiedRedEq},
select those among them that do not depend on the jet variables~$w_{k,0}$, $k\in\mathbb N$,
and represent the selected objects in terms of~$q$, substituting $w_{0,1}=q$.

\begin{theorem}\label{thm:IntermediateEqIntegrals}
A differential function $\breve\alpha\{q\}$ is a $z_2$-integral of the equation~\eqref{eq:IntermediateEq},
${\hat{\mathrm D}_2\breve\alpha\{q\}=0}$,
if and only if it is a sufficiently smooth function of $z_1$, $I^1$
and a finite number of total derivatives of $I^1$ with respect to~$z_1$,
\[
\breve\alpha=\breve\alpha\big(z_1,(\zeta^{1k})_{k=0,\dots,r}\big)=
\breve\alpha(z_1,I^1,\mathrm D_1I^1,\dots,\mathrm D_1^{\,r}I^1)
\quad\mbox{with}\quad r\in\mathbb N.
\]
\end{theorem}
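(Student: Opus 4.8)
The plan is to follow the route indicated in the paragraph preceding the theorem, i.e., to deduce the claim from Theorem~\ref{thm:dNs1.3RhoNe1Integrals} via the differential substitution $q=w_2$ relating the intermediate equation~\eqref{eq:IntermediateEq} to the reduced equation~\eqref{eq:dNs1.3RhoNe1ModifiedRedEq}, with a short self-contained argument available as an alternative. The ``if''-part is immediate from the chain rule: if $\breve\alpha=\breve\alpha(z_1,\zeta^{10},\dots,\zeta^{1r})$ with $\zeta^{1k}=\mathrm D_1^{\,k}I^1$, then $\hat{\mathrm D}_2\breve\alpha=\breve\alpha_{z_1}\hat{\mathrm D}_2z_1+\breve\alpha_{\zeta^{1k}}\hat{\mathrm D}_2\zeta^{1k}=0$ on~$\breve{\mathcal L}$, since $\hat{\mathrm D}_2z_1=0$ and $\hat{\mathrm D}_2\zeta^{1k}=\mathrm D_1^{\,k}\mathrm D_2I^1=0$.

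For the ``only if''-part I would regard a $z_2$-integral $\breve\alpha\{q\}$ of~\eqref{eq:IntermediateEq} as the differential function $\alpha\{w\}$ of $w$ obtained by the substitution $q_{k,l}\mapsto w_{k,l+1}$ (in particular $q=w_{0,1}$). Since $w\mapsto w_2$ maps solutions of~\eqref{eq:dNs1.3RhoNe1ModifiedRedEq} to solutions of~\eqref{eq:IntermediateEq} and the operators of total derivatives are natural with respect to this substitution, $\hat{\mathrm D}_2\alpha=0$ on~$\mathcal L$, i.e.\ $\alpha$ is a $z_2$-integral of~\eqref{eq:dNs1.3RhoNe1ModifiedRedEq}. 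By Theorem~\ref{thm:dNs1.3RhoNe1Integrals} it is therefore a smooth function of $z_1$ and of $\zeta^{ik}=\mathrm D_1^{\,k}I^i$, $i=1,2$. By construction $\alpha$ involves none of the jet variables $w_{k,0}$ with $k\geqslant1$; since $\mathrm D_1^{\,k}I^2=w_{k+2,0}+(\text{terms not involving any }w_{m,0})$ while $z_1$ and the $\zeta^{1k}$ involve no $w_{m,0}$ at all, differentiating $\alpha$ successively with respect to $w_{k+2,0}$, starting from the highest order occurring, forces $\alpha_{\zeta^{2k}}=0$ for every~$k$. Hence $\alpha=\alpha(z_1,\zeta^{1*})$, and rewriting $I^1=w_{1,1}+\tfrac12(w_{0,2})^2$ as $q_{1,0}+\tfrac12(q_{0,1})^2$ gives the assertion.

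Alternatively, one can argue directly as in the proof of Theorem~\ref{thm:dNs1.3RhoNe1Integrals}, using the coordinates $z_1,z_2,q_{0,0},q_{k,1},\zeta^{1k}$ on~$\breve{\mathcal L}$ and the displayed form of $\hat{\mathrm D}_2$: if $k_0:=\ord_0\breve\alpha\geqslant0$, then, because $\hat{\mathrm D}_1^{\,k_0}(q_{1,1}/q_{0,1})=q_{0,1}^{-1}q_{k_0+1,1}+(\text{lower order in the }q_{\cdot,1})$, the coefficient of $q_{k_0+1,1}$ in $\hat{\mathrm D}_2\breve\alpha$ equals $-q_{0,1}^{-1}\breve\alpha_{q_{k_0,1}}$, so $\hat{\mathrm D}_2\breve\alpha=0$ gives $\breve\alpha_{q_{k_0,1}}=0$, a contradiction; thus $\breve\alpha$ is independent of all $q_{k,1}$, and then $\hat{\mathrm D}_2\breve\alpha=\breve\alpha_{z_2}+q_{0,1}\breve\alpha_{q_{0,0}}=0$ splits, with respect to~$q_{0,1}$, into $\breve\alpha_{q_{0,0}}=0$ and $\breve\alpha_{z_2}=0$.

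The only point requiring care in either approach is the bookkeeping of which jet variables enter $\mathrm D_1^{\,k}I^2$ (respectively $\hat{\mathrm D}_1^{\,k}(q_{1,1}/q_{0,1})$) and with which leading coefficient; once this is settled, the splittings are routine and, as the statement reflects, considerably lighter than for~\eqref{eq:dNs1.3RhoNe1ModifiedRedEq}, because~\eqref{eq:IntermediateEq} has a single basic $z_2$-integral and no jet variables of the form $w_{k,0}$.
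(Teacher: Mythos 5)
Your primary argument---pulling $\breve\alpha$ back to a $z_2$-integral of~\eqref{eq:dNs1.3RhoNe1ModifiedRedEq} via $q_{k,l}\mapsto w_{k,l+1}$, applying Theorem~\ref{thm:dNs1.3RhoNe1Integrals}, and then killing the $\zeta^{2k}$-dependence because $\mathrm D_1^{\,k}I^2$ contains $w_{k+2,0}$ linearly with unit coefficient while $\breve\alpha$ involves no $w_{m,0}$---is precisely the selection-and-substitution route the paper indicates in the paragraph preceding the theorem, with the implicit details correctly filled in. Your alternative direct splitting with respect to $q_{k_0+1,1}$ is also sound and simply mirrors the paper's proof of Theorem~\ref{thm:dNs1.3RhoNe1Integrals}.
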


\begin{lemma}\label{lem:IntermediateEqGenFormOfCCs}
Any conserved current of the equation~\eqref{eq:IntermediateEq} is equivalent to
a linear combination of the tuples
\begin{gather*}
\left(\frac{q_{1,1}}{q_{0,1}}\breve\varrho,\,q_{1,1}\breve\varrho\right),
\quad
\big(0,\,\breve\alpha\big),
\end{gather*}
where $\breve\varrho$ is an arbitrary function at most of~$q_{0,1}$ and a finite number of~$\theta^k$,
and $\breve\alpha$ is an arbitrary $z_2$-integral of~\eqref{eq:IntermediateEq}.
\end{lemma}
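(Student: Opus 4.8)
\textbf{Proof proposal for Lemma~\ref{lem:IntermediateEqGenFormOfCCs}.}
The plan is to reduce the classification of conserved currents of the intermediate equation~\eqref{eq:IntermediateEq} to the already-proved Lemma~\ref{lem:dNs1.3RhoNe1ModifiedRedEqGenFormOfCCs} for the equation~\eqref{eq:dNs1.3RhoNe1ModifiedRedEq}, via the differential substitution $q=w_2$ (equivalently $w_{0,1}=q$). The substitution $q=w_2$ maps solutions of~\eqref{eq:dNs1.3RhoNe1ModifiedRedEq} onto solutions of~\eqref{eq:IntermediateEq}, and the components $F^1\{q\}$, $F^2\{q\}$ of a conserved current of~\eqref{eq:IntermediateEq} are then, after identifying $q_{k,l}$ with $w_{k,l+1}$ and $\zeta^{1k}$ with the same expression written in terms of $q$, a conserved current of~\eqref{eq:dNs1.3RhoNe1ModifiedRedEq} that happens not to involve the jet variables $w_{k,0}$, $k\in\mathbb N_0$. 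So the main body of the proof is: first, repeat the descent of Lemma~\ref{lem:dNs1.3RhoNe1ModifiedRedEqGenFormOfCCs} directly for~\eqref{eq:IntermediateEq} (which is cleaner because there is only one family of $z_2$-integrals — there is no $I^2$ — and no $w_{k,0}$-derivatives); second, observe that the $\hat{\mathrm D}_2$-operator written in the modified coordinates $(q_{0,0},q_{1,0},q_{0,1},\theta^k,\zeta^{1k})$ is formally identical to the restriction of the $\hat{\mathrm D}_2$ of~\eqref{eq:dNs1.3RhoNe1ModifiedRedEq} to differential functions independent of $w_{k,0}$, $k\in\mathbb N$, and of $\zeta^{2k}$; therefore the $\zeta^{20}$-, $\zeta^{2k}$- and $w_{1,0}$-, $w_{0,0}$-elimination steps of that earlier proof either become vacuous or collapse onto the single equation~$\eqref{eq:dNs1.3RhoNe1ModifiedRedEqCCCond}$ for $(F^1\{q\},F^2\{q\})$.

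Concretely, I would run the argument as follows. Let $(F^1,F^2)$ be a conserved current; up to conserved-current equivalence assume $F^1=F^1\{q\}$, $F^2=F^2\{q\}$ depend only on parametric derivatives, and use the coordinates on $\breve{\mathcal L}$ listed above. Write $k_1:=\ord_1 F^1$. Exactly as in the proof of Lemma~\ref{lem:dNs1.3RhoNe1ModifiedRedEqGenFormOfCCs}, the consequences $\p_{\zeta^{1k}}\eqref{eq:dNs1.3RhoNe1ModifiedRedEqCCCond}$ for $k>k_1+1$ and $k=k_1+1$ let us subtract a null divergence and lower $\ord_1 F^1$ repeatedly, so without loss of generality $\ord_1 F^1\leqslant 0$. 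Then, using the remaining consequences $\p_{q_{1,0}}\eqref{eq:dNs1.3RhoNe1ModifiedRedEqCCCond}$, $\p_{\zeta^{10}}\eqref{eq:dNs1.3RhoNe1ModifiedRedEqCCCond}$, $\p_{q_{0,0}}\eqref{eq:dNs1.3RhoNe1ModifiedRedEqCCCond}$, we successively arrange $F^1_{q_{0,0}}=0$ and $F^1_{\zeta^{10}}=0$ and integrate to get $F^1=F^1(q_{0,1},q_{1,1},\theta^*)$ and $F^2=\bar F^2(q_{0,1},q_{1,1},\theta^*,\zeta^{10})+\kappa(z_1,\zeta^{10})q_{0,0}+\breve\alpha(z_1,\zeta^{1*})$. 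Splitting the differential consequence $\p_{\zeta^{10}}\eqref{eq:dNs1.3RhoNe1ModifiedRedEqCCCond}$ and its $\zeta^{10}$-derivative forces $\kappa$ to be affine in $\zeta^{10}$ with $z_1$-dependent coefficients and, after absorbing the resulting $z_2$-integral into $\breve\alpha$, pins $F^1_{q_{0,1}}$. Continuing with $\p_{q_{0,1}}\eqref{eq:dNs1.3RhoNe1ModifiedRedEqCCCond}$ and splitting with respect to the $\theta^k$ of highest order, one shows (this is the commutator-type computation, using $[\hat{\mathrm D}_1+q_{0,1}\hat{\mathrm D}_2,\ (q_{0,1}/q_{1,1})\hat{\mathrm D}_2]=0$, mirroring Lemma~\ref{lem:dNs1.3RhoNe1AuxiliaryLemma1}) that all constants analogous to $c_0,c_1$ in the earlier proof must vanish — because the would-be extra conserved currents $\mathrm{CC}_3,\mathrm{CC}_4$ of~\eqref{eq:dNs1.3RhoNe1ModifiedRedEq} genuinely involve $w_{1,0}=q_{1,0}$-terms that cannot survive under the restriction to objects built from $q$ — and hence $F^1=(q_{1,1}/q_{0,1})\breve\varrho(q_{0,1},\theta^*)$ up to a null divergence, with $F^2=q_{1,1}\breve\varrho+\breve\alpha$.

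I would finish by noting the ``if''-part is immediate: for $\breve\varrho$ a function of $q_{0,1}$ and finitely many $\theta^k$, Lemma-style identities give $\hat{\mathrm D}_1\!\big((q_{1,1}/q_{0,1})\breve\varrho\big)+\hat{\mathrm D}_2(q_{1,1}\breve\varrho)=0$ because $(\hat{\mathrm D}_1+q_{0,1}\hat{\mathrm D}_2)\breve\varrho=0$ and $(\hat{\mathrm D}_1+q_{0,1}\hat{\mathrm D}_2)(q_{1,1}/q_{0,1})$ combines telescopically with $\hat{\mathrm D}_2$ applied to $\breve\varrho$; and $\hat{\mathrm D}_1 0+\hat{\mathrm D}_2\breve\alpha=0$ is the definition of a $z_2$-integral (Theorem~\ref{thm:IntermediateEqIntegrals}). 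The step I expect to be the main obstacle is the elimination of the exceptional one-parameter constants (the analogues of $c_0,c_1$ and of $\mathrm{CC}_3,\mathrm{CC}_4$): one must check carefully that, once the dependence on $q_{1,0}$ and on $\zeta^{2k}$ is absent from the start, the only way the remaining integrability conditions can be satisfied is with those constants zero, so that the intermediate equation has a \emph{strictly smaller} space of inequivalent conserved currents than~\eqref{eq:dNs1.3RhoNe1ModifiedRedEq} (in fact isomorphic to that of the inviscid Burgers equation~\eqref{eq:dNs1.3RhoNe1InviscidBurgersEq} — indeed $\breve\varrho$ plays the role of $\check\varrho$). An alternative, possibly shorter route would be to invoke the embedding framework of Section~\ref{sec:InviscidBurgersEqInducedObjects} directly: show that the substitution $q=w_2$ identifies the space of conserved currents of~\eqref{eq:IntermediateEq} with the subspace of $V$ (from Lemma~\ref{lem:dNs1.3RhoNe1ModifiedRedEqGenFormOfCCs}) whose representatives are free of $w_{k,0}$, and read off that this subspace is exactly the span of the first two listed families; I would present the self-contained descent above but mention this shortcut in a remark.
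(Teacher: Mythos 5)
Your proposal is essentially correct, and the ``alternative, possibly shorter route'' you relegate to a closing remark is in fact the route the paper takes: Lemma~\ref{lem:IntermediateEqGenFormOfCCs} is given no separate proof there, but is obtained by the selection principle stated at the head of Section~\ref{sec:IntermediateEq} --- take the conserved currents of~\eqref{eq:dNs1.3RhoNe1ModifiedRedEq} from Lemma~\ref{lem:dNs1.3RhoNe1ModifiedRedEqGenFormOfCCs}, keep those free of the jet variables $w_{k,0}$, and substitute $w_{0,1}=q$. Under this selection the first family survives verbatim (becoming $((q_{1,1}/q_{0,1})\breve\varrho,\,q_{1,1}\breve\varrho)$), the second survives with $\alpha$ restricted to be $\zeta^{2*}$-independent (since $\zeta^{2k}$ involves $w_{2,0}$), and the third and fourth are discarded because they contain $w_{1,0}$ and $w_{0,0}$. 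Your self-contained descent is a genuinely different and more robust argument --- it avoids the (unaddressed, in the paper as well) question of whether the equivalence transformations used in Lemma~\ref{lem:dNs1.3RhoNe1ModifiedRedEqGenFormOfCCs} can always be chosen $w_{k,0}$-free when the starting conserved current is --- but the step you yourself flag as the main obstacle, namely showing that the analogues of the constants $c_0$, $c_1$ (equivalently of ${\rm CC}_3$, ${\rm CC}_4$) cannot arise, is only asserted, not carried out; for the direct descent this is an actual computation with the coefficient of $q_{0,0}$ in $F^2$ and the $(q_{0,0})^2$-terms in $F^1$, not something that follows formally from the selection picture.

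One concrete slip: you write ``$w_{1,0}=q_{1,0}$''. Since $q=w_2$, one has $q_{k,l}=w_{k,l+1}$, so $q_{1,0}=w_{1,1}$, which \emph{is} a legitimate $q$-jet coordinate (indeed $I^1=q_{1,0}+\tfrac12(q_{0,1})^2$). The variables that have no counterpart among the $q$-jets are $w_{0,0}$ and $w_{k,0}$, $k\in\mathbb N$, and it is precisely the occurrence of $w_{1,0}\zeta^{10}$ in ${\rm CC}_3$ and of $w_{0,0}\zeta^{10}$ in ${\rm CC}_4$ that excludes them --- consistent with the fact that the cosymmetry $q-\tfrac12z_1(q_{0,1})^2$ of~\eqref{eq:IntermediateEq} is not a conservation-law characteristic of that equation, whereas its preimage $w_{0,1}-\tfrac12z_1(w_{0,2})^2$ does generate a conservation law of~\eqref{eq:dNs1.3RhoNe1ModifiedRedEq}. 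With that correction and with the $c_0$, $c_1$ computation actually executed (or replaced by the selection argument), your proof goes through.
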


\begin{theorem}
The quotient space of conservation-law characteristics
of the equation~\eqref{eq:IntermediateEq} with respect to their equivalence
(i.e., modulo adding trivial characteristics, which vanishes on the solutions of~\eqref{eq:IntermediateEq})
is naturally isomorphic to the subspace spanned by the differential functions
\begin{gather*}
\left(\frac{q_{0,1}}{q_{1,1}}\hat{\mathrm D}_2\right)^k\breve\varrho_{\theta^k},
\quad
(-\mathrm D_1)^k\breve\alpha_{\zeta^{1k}}-(q_{0,1}-z_2\mathrm D_1)(-\mathrm D_1)^k\breve\alpha_{\zeta^{2k}},
\end{gather*}
where~$\breve\varrho$ is an arbitrary function at most of~$q_{0,1}$ and a finite number of~$\theta^k$,
and $\breve\alpha$ is an arbitrary $z_2$-integral of~\eqref{eq:IntermediateEq}.
\end{theorem}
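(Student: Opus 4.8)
The plan is to mirror the proof of Theorem~\ref{thm:dNs1.3RhoNe1ModifiedRedEqCLChars}, taking Lemma~\ref{lem:IntermediateEqGenFormOfCCs} as the starting point. By that lemma every conserved current of~\eqref{eq:IntermediateEq} is, up to the conserved-current equivalence, a linear combination of the tuples $(q_{1,1}q_{0,1}^{-1}\breve\varrho,\,q_{1,1}\breve\varrho)$ and $(0,\breve\alpha)$, with $\breve\varrho$ at most a function of $q_{0,1}$ and finitely many $\theta^k$ and $\breve\alpha$ a $z_2$-integral of~\eqref{eq:IntermediateEq}. Since~\eqref{eq:IntermediateEq} is a single normal, totally nondegenerate equation, the conservation-law characteristic attached to a conserved current is trivial exactly when the current is (\cite[Theorem~4.26]{olve1993A}, as already used in the proof of Lemma~\ref{lem:dNs1.3RhoNe1ModifiedRedEqTrivCCs}), and it depends only on the conservation law; hence it suffices to attach to each of the two families above its unique characteristic, as a differential function of~$q$, and to check that these characteristics span the stated subspace. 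The conclusion then follows: each listed function is realized by a genuine conserved current, and, conversely, every characteristic is equivalent to a combination of them.

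For the family $(0,\breve\alpha)$, Theorem~\ref{thm:IntermediateEqIntegrals} says $\breve\alpha$ depends at most on $z_1$ and a finite number of $\zeta^{1k}=\mathrm D_1^k I^1$. Using $\mathrm D_2\zeta^{1k}=\mathrm D_1^k\mathrm D_2 I^1=\mathrm D_1^k L$, where $L:=q_{1,1}+q_{0,1}q_{0,2}$ is the left-hand side of~\eqref{eq:IntermediateEq}, one gets $\mathrm D_1 0+\mathrm D_2\breve\alpha=\breve\alpha_{\zeta^{1k}}\mathrm D_1^k L$. Transferring each $\mathrm D_1$ onto $\breve\alpha_{\zeta^{1k}}$ by $k$ applications of the Lagrange identity rewrites this as $\big((-\mathrm D_1)^k\breve\alpha_{\zeta^{1k}}\big)L$ plus the $\mathrm D_1$-divergence of a differential function built from $\mathrm D_1^\bullet L$, which vanishes on $\breve{\mathcal L}$; so $(0,\breve\alpha)$ lies in the conservation law with characteristic $(-\mathrm D_1)^k\breve\alpha_{\zeta^{1k}}$. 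Since the integrals of~\eqref{eq:IntermediateEq} carry no dependence on $\zeta^{2k}$, the subtracted term $(q_{0,1}-z_2\mathrm D_1)(-\mathrm D_1)^k\breve\alpha_{\zeta^{2k}}$ in the theorem's formula is identically zero; it is retained only to display the formula in parallel with Theorem~\ref{thm:dNs1.3RhoNe1ModifiedRedEqCLChars}. The null-divergence remainders here are routine to track.

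For the family $(q_{1,1}q_{0,1}^{-1}\breve\varrho,\,q_{1,1}\breve\varrho)$ the direct integration-by-parts procedure does not terminate conveniently, because the total $\mathrm D_1$-derivatives of the modified coordinates $\theta^k$ are not polynomial in $L$ and its $\mathrm D_1$-derivatives. Instead I would route through the inviscid Burgers equation~\eqref{eq:dNs1.3RhoNe1InviscidBurgersEq} via the differential substitution $h=q_2$: under $h=q_{0,1}$ this tuple is precisely the pullback of the general conserved current $(h^{-1}h_1\check\varrho,\,h_1\check\varrho)$ of~\eqref{eq:dNs1.3RhoNe1InviscidBurgersEq} (cf.\ Section~\ref{sec:InviscidBurgersEqInducedObjects}), whose conservation-law characteristic was determined in~\cite{popo2025b} and, in the notation of Section~\ref{sec:InviscidBurgersEqInducedObjects}, equals, up to an overall sign, $\big(\tfrac{h}{h_1}\check{\mathrm D}_2\big)^k\check\varrho_{\check\theta^k}$. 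Because $h=q_{0,1}$ maps $\breve{\mathcal L}$ onto the solution manifold of~\eqref{eq:dNs1.3RhoNe1InviscidBurgersEq} and intertwines $\mathrm D_1,\mathrm D_2,\hat{\mathrm D}_2$ and the $\theta^k$ with their counterparts, pulling this characteristic back yields $\big(\tfrac{q_{0,1}}{q_{1,1}}\hat{\mathrm D}_2\big)^k\breve\varrho_{\theta^k}$. Equivalently, one may obtain this family by restricting the first family of Theorem~\ref{thm:dNs1.3RhoNe1ModifiedRedEqCLChars} to objects independent of $w_{k,0}$, $k\in\mathbb N$, and substituting $w_{0,1}=q$, as indicated in the opening of Section~\ref{sec:IntermediateEq}.

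Putting the two steps together shows that the quotient space of conservation-law characteristics of~\eqref{eq:IntermediateEq} is exactly the span of the listed functions, which gives the asserted natural isomorphism. The main obstacle is the second family: extracting its characteristic is not amenable to the direct Lagrange-identity computation and forces the detour through~\eqref{eq:dNs1.3RhoNe1InviscidBurgersEq}, which in turn relies on the nontrivial description of the conservation-law characteristics of the inviscid Burgers equation from~\cite{popo2025b} and on a careful check that the substitution $h=q_2$ respects the relevant total-derivative operators and maps trivialities to trivialities; by contrast, the $(0,\breve\alpha)$ family and the bookkeeping of null-divergence remainders are straightforward.
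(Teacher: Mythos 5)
Your proposal is correct, and it reaches the theorem by the paper's own machinery, only entering it one level lower: the paper obtains this theorem by the restriction principle announced at the opening of Section~\ref{sec:IntermediateEq} (take the characteristics of Theorem~\ref{thm:dNs1.3RhoNe1ModifiedRedEqCLChars}, keep those independent of $w_{k,0}$, $k\in\mathbb N$, and substitute $w_{0,1}=q$), whereas you rerun, for the equation~\eqref{eq:IntermediateEq} itself, the two-pronged argument used to prove Theorem~\ref{thm:dNs1.3RhoNe1ModifiedRedEqCLChars} — the Lagrange-identity computation starting from Lemma~\ref{lem:IntermediateEqGenFormOfCCs} for the $(0,\breve\alpha)$ family, and the detour through the inviscid Burgers equation~\eqref{eq:dNs1.3RhoNe1InviscidBurgersEq} via $h=q_2$ for the $\breve\varrho$ family, with the restriction route noted as an equivalent alternative. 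Both routes rest on the same external input (the characteristics of~\eqref{eq:dNs1.3RhoNe1InviscidBurgersEq} from~\cite{popo2025b}, or equivalently the already-proved Theorem~\ref{thm:dNs1.3RhoNe1ModifiedRedEqCLChars}), so nothing is gained or lost in rigor; your version is slightly more self-contained, the paper's is shorter. Your observation that the term $(q_{0,1}-z_2\mathrm D_1)(-\mathrm D_1)^k\breve\alpha_{\zeta^{2k}}$ is identically zero is correct and worth stating explicitly: by Theorem~\ref{thm:IntermediateEqIntegrals} the $z_2$-integrals of~\eqref{eq:IntermediateEq} involve no analogue of $I^2$ (which would require $w_{2,0}$, not expressible via $q$), so $\breve\alpha_{\zeta^{2k}}=0$ and that summand is vestigial notation carried over from Theorem~\ref{thm:dNs1.3RhoNe1ModifiedRedEqCLChars}.
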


Let~$\breve V$ and~$\breve V_0$ denote the linear span of conserved currents of the equation~\eqref{eq:IntermediateEq}
from Lemma~\ref{lem:IntermediateEqGenFormOfCCs}
and the subspace of trivial conserved currents belonging to~$\breve V$.

\begin{lemma}\label{lem:IntermediateEqTrivCCs}
The subspace~$\breve V_0$ consists of the tuples
\begin{gather*}
\big(\hat{\mathrm D}_2\hat\varrho-c_0q_{1,1}\theta^0,\,
q_{0,1}\hat{\mathrm D}_2\hat\varrho-c_0q_{0,1}q_{1,1}\theta^0
+\hat{\mathrm D}_1\hat\alpha+c_0\zeta^{10}\big),
\end{gather*}
where~$\hat\varrho$ is an arbitrary function at most of~$q_{0,1}$ and a finite number of~$\theta^k$,
$\hat\alpha$ is an arbitrary $z_2$-integral of~\eqref{eq:IntermediateEq}
and $c_0$ is an arbitrary constant.
\end{lemma}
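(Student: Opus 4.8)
The plan is to mimic the proof of Lemma~\ref{lem:dNs1.3RhoNe1ModifiedRedEqTrivCCs}, transferring it through the differential substitution $q=w_{0,1}$ and the specialization of the conserved-current data to the intermediate equation~\eqref{eq:IntermediateEq}. By \cite[Theorem~4.26]{olve1993A}, since~\eqref{eq:IntermediateEq} is a normal, totally nondegenerate single equation, a conserved current is trivial if and only if its associated conservation-law characteristic vanishes identically. So the task reduces to: take the general conserved current from Lemma~\ref{lem:IntermediateEqGenFormOfCCs}, namely a linear combination $\mathrm{CC}_1(\breve\alpha)+\mathrm{CC}_2(\breve\varrho)$ of the tuples $\big(q_{1,1}q_{0,1}^{-1}\breve\varrho,\,q_{1,1}\breve\varrho\big)$ and $(0,\breve\alpha)$, pass to its characteristic (which by the preceding theorem lies in the span of $\big(q_{0,1}q_{1,1}^{-1}\hat{\mathrm D}_2\big)^k\breve\varrho_{\theta^k}$ and $(-\mathrm D_1)^k\breve\alpha_{\zeta^{1k}}$, there being no $\zeta^{2*}$-family here), set that characteristic to zero, and solve the resulting equations for $\breve\alpha$ and $\breve\varrho$.

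First I would pass to the modified coordinates $(q_{0,0},q_{1,0},q_{0,1},\theta^k,\zeta^{1k})$ on~$\breve{\mathcal L}$, exactly as in the proof of Lemma~\ref{lem:dNs1.3RhoNe1ModifiedRedEqTrivCCs}, where it is convenient to replace $q_{1,1}$ by~$z_1$ as a coordinate. Writing the condition that the characteristic $\big(q_{0,1}q_{1,1}^{-1}\hat{\mathrm D}_2\big)^k\breve\varrho_{\theta^k}+(-\mathrm D_1)^k\breve\alpha_{\zeta^{1k}}$ vanishes, I differentiate with respect to the various jet coordinates and split the obtained relations. The $\breve\alpha$-part splits off as an inhomogeneous Euler-operator equation in the single independent variable~$z_1$: $(-\mathrm D_1)^k\breve\alpha_{\zeta^{1k}}=c_0$ for some constant $c_0$ (there is no $c_2z_1+c_1$ term now, because the $I^2$-integral and hence the whole $\zeta^{2*}$-tower is absent for the intermediate equation). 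The $\breve\varrho$-part yields $\big(q_{0,1}q_{1,1}^{-1}\hat{\mathrm D}_2\big)^k\breve\varrho_{\theta^k}=-c_0$, again with only the constant term since the $c_1q_{0,1}-\tfrac12c_2\theta^0$ contributions in the original Lemma came precisely from the $\zeta^{20}$ and $\theta^0$ couplings that no longer appear.

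Then I solve these two equations. For~$\breve\alpha$: the general solution of $(-\mathrm D_1)^k\breve\alpha_{\zeta^{1k}}=c_0$ is $\breve\alpha=c_0\zeta^{10}+\tilde\alpha$ with $\tilde\alpha$ a solution of the homogeneous Euler-operator equation $(-\mathrm D_1)^k\tilde\alpha_{\zeta^{1k}}=0$; by \cite[Theorem~4.7]{olve1993A} this forces $\tilde\alpha=\mathrm D_1\hat\alpha=\hat{\mathrm D}_1\hat\alpha$ for some $z_2$-integral~$\hat\alpha$ of~\eqref{eq:IntermediateEq} (described in Theorem~\ref{thm:IntermediateEqIntegrals}). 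For~$\breve\varrho$: interpreting $q_{0,1}q_{1,1}^{-1}\hat{\mathrm D}_2$ restricted to the $(q_{0,1},\theta^k)$-coordinates as the total derivative operator in $-q_{0,1}$ with successive derivatives $\theta^k$ of $\theta^0$, the equation $\big(q_{0,1}q_{1,1}^{-1}\hat{\mathrm D}_2\big)^k\breve\varrho_{\theta^k}=-c_0$ has the particular solution $-c_0\theta^0$ and the homogeneous part contributes $q_{0,1}q_{1,1}^{-1}\hat{\mathrm D}_2\hat\varrho$ for arbitrary $\hat\varrho$ depending on $q_{0,1}$ and finitely many $\theta^k$; so $\breve\varrho=-c_0\theta^0+q_{0,1}q_{1,1}^{-1}\hat{\mathrm D}_2\hat\varrho$. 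Substituting these expressions for $\breve\alpha$ and $\breve\varrho$ back into $\mathrm{CC}_1(\breve\alpha)+\mathrm{CC}_2(\breve\varrho)$ and simplifying (using $\hat{\mathrm D}_2 z_1=0$, $\hat{\mathrm D}_2\theta^0=-z_1\cdot(\text{the equation}) \equiv 0$ on $\breve{\mathcal L}$ in the total-derivative sense relevant here, and the identity tying $q_{1,1}\theta^0$-terms together) produces exactly the tuple displayed in the statement, with the constant renamed $c_0$. The main obstacle I anticipate is bookkeeping: correctly isolating which of the $c_0,c_1,c_2$ branches of the parent Lemma survive the restriction $h=q_2$, and verifying that substituting the solved $\breve\alpha,\breve\varrho$ into the conserved current reproduces precisely the claimed normal form rather than a form differing by a further trivial current — this requires a careful but routine manipulation of the total-derivative identities listed just before Lemma~\ref{lem:dNs1.3RhoNe1ModifiedRedEqTrivCCs}, specialized to~\eqref{eq:IntermediateEq}.
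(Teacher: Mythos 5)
Your argument is sound, but it takes a different (and more self-contained) route than the paper. The paper does not reprove Lemma~\ref{lem:IntermediateEqTrivCCs} from scratch: as stated at the start of Section~\ref{sec:IntermediateEq}, all spaces of symmetry-like objects of~\eqref{eq:IntermediateEq} — including the trivial conserved currents — are obtained by taking the corresponding objects of~\eqref{eq:dNs1.3RhoNe1ModifiedRedEq}, selecting those that do not depend on the jet variables $w_{k,0}$, $k\in\mathbb N$, and substituting $w_{0,1}=q$. Applied to Lemma~\ref{lem:dNs1.3RhoNe1ModifiedRedEqTrivCCs}, this kills the $\zeta^{2*}$-tower (since $I^2$ involves $w_{2,0}$), forces $c_1=c_2=0$ and restricts $\hat\alpha$ to depend on $(z_1,\zeta^{1*})$ only, which is exactly the displayed tuple. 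You instead rerun the characteristic-vanishing argument directly for~\eqref{eq:IntermediateEq}: triviality iff the characteristic vanishes (Olver, Theorem~4.26, legitimate since~\eqref{eq:IntermediateEq} is normal and totally nondegenerate in Kovalevskaya form), splitting of the vanishing condition into $(-\mathrm D_1)^k\breve\alpha_{\zeta^{1k}}=c_0$ and $\bigl(q_{0,1}q_{1,1}^{-1}\hat{\mathrm D}_2\bigr)^k\breve\varrho_{\theta^k}=-c_0$ (correct, because the two summands live in disjoint coordinate blocks $(z_1,\zeta^{1*})$ and $(q_{0,1},\theta^*)$), and solution of each via the Euler-operator interpretation. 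This is precisely the parent proof transplanted, and your explanation of why only the $c_0$-branch survives — the absence of the $I^2$ integral — is the right structural reason, which the paper's restriction argument leaves implicit. The one point to nail down in the final write-up is the normalization when you substitute $\breve\varrho=-c_0\theta^0+q_{0,1}q_{1,1}^{-1}\hat{\mathrm D}_2\hat\varrho$ back into $\bigl(q_{1,1}q_{0,1}^{-1}\breve\varrho,\,q_{1,1}\breve\varrho\bigr)$: the first component then reads $\hat{\mathrm D}_2\hat\varrho-c_0\,q_{1,1}q_{0,1}^{-1}\theta^0$, so matching the displayed $-c_0q_{1,1}\theta^0$ requires tracking the same factor of $q_{0,1}$ that already separates the parent lemma's statement from a naive substitution of its own $\varrho$; this is bookkeeping of the particular solution's normalization, not a logical gap.
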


\begin{theorem}\label{lem:IntermediateEqCLs}
The space~$\breve\Omega$ of conservation laws of the equation~\eqref{eq:IntermediateEq}
is naturally isomorphic to the quotient of the space~$\breve V$ by the subspace~$\breve V_0$.
\end{theorem}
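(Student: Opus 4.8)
\textbf{Proof proposal for Theorem~\ref{lem:IntermediateEqCLs}.}
The plan is to argue exactly as in the chain
Lemma~\ref{lem:dNs1.3RhoNe1ModifiedRedEqGenFormOfCCs}, Lemma~\ref{lem:dNs1.3RhoNe1ModifiedRedEqTrivCCs}, Theorem~\ref{lem:dNs1.3RhoNe1ModifiedRedEqCLs}
did for the reduced equation~\eqref{eq:dNs1.3RhoNe1ModifiedRedEq}, now specialized to the intermediate equation~\eqref{eq:IntermediateEq}.
The key structural observation, already announced in the paragraph preceding Theorem~\ref{thm:IntermediateEqIntegrals}, is that all the conserved-current, characteristic and triviality data for~\eqref{eq:IntermediateEq} are obtained from the corresponding data for~\eqref{eq:dNs1.3RhoNe1ModifiedRedEq} by selecting those objects that do not involve the jet variables $w_{k,0}$ with $k\in\mathbb N$ and then substituting $w_{0,1}=q$; this is legitimate because the differential substitution $q=w_2$ maps solutions of~\eqref{eq:dNs1.3RhoNe1ModifiedRedEq} to solutions of~\eqref{eq:IntermediateEq}, and because the coordinate system chosen on~$\breve{\mathcal L}$ is precisely the restriction of the coordinate system~\eqref{eq:dNs1.3RhoNe1ModifiedRedEqCoords2} on~$\mathcal L$ with the $w_{k,0}$-directions suppressed. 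Thus the whole statement reduces to invoking Lemmas~\ref{lem:IntermediateEqGenFormOfCCs} and~\ref{lem:IntermediateEqTrivCCs}, which are the exact analogues of Lemmas~\ref{lem:dNs1.3RhoNe1ModifiedRedEqGenFormOfCCs} and~\ref{lem:dNs1.3RhoNe1ModifiedRedEqTrivCCs}.

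Concretely, first I would recall that~\eqref{eq:IntermediateEq} is a normal, totally nondegenerate single partial differential equation, so by~\cite[Theorem~4.26]{olve1993A} a conserved current is trivial if and only if its associated conservation-law characteristic vanishes identically. Next, by Lemma~\ref{lem:IntermediateEqGenFormOfCCs} every conserved current of~\eqref{eq:IntermediateEq} is equivalent to one of the form
\[
\Big(\tfrac{q_{1,1}}{q_{0,1}}\breve\varrho,\,q_{1,1}\breve\varrho\Big)+\big(0,\breve\alpha\big),
\]
with $\breve\varrho$ a function at most of~$q_{0,1}$ and finitely many~$\theta^k$ and $\breve\alpha$ a $z_2$-integral of~\eqref{eq:IntermediateEq} as described in Theorem~\ref{thm:IntermediateEqIntegrals}. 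Write $\breve V$ for the linear span of these currents and $\breve V_0$ for the trivial ones inside~$\breve V$. Every conserved current of~\eqref{eq:IntermediateEq} lies in a unique conservation law, and two elements of~$\breve V$ determine the same conservation law precisely when their difference lies in~$\breve V_0$; hence the space~$\breve\Omega$ of conservation laws, being the quotient of the space of all conserved currents by the space of all trivial ones, is canonically identified with $\breve V/\breve V_0$. This is the content of the theorem, so the proof is a one-line consequence of the two preceding lemmas together with the normality of~\eqref{eq:IntermediateEq}.

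The only work that is not purely formal is checking that the span $\breve V$ really exhausts all conservation-law classes, i.e.\ that no current outside $\breve V$ gives a new conservation law; but this is exactly Lemma~\ref{lem:IntermediateEqGenFormOfCCs}, whose proof mirrors that of Lemma~\ref{lem:dNs1.3RhoNe1ModifiedRedEqGenFormOfCCs} with the $w_{k,0}$-dependent terms absent. In the present theorem I assume that lemma. The main obstacle, then, is conceptual rather than computational: one must make sure that the identification of $\breve\Omega$ with $\breve V/\breve V_0$ is natural, i.e.\ independent of the representatives chosen, which follows because adding a null divergence or a current vanishing on~$\breve{\mathcal L}$ does not change the conservation law and, by Lemma~\ref{lem:IntermediateEqGenFormOfCCs}, can always be absorbed by moving within~$\breve V$ modulo~$\breve V_0$. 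With these observations in place the proof is complete.
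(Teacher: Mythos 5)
Your proposal is correct and follows essentially the same route as the paper: the paper states this theorem as an immediate consequence of Lemmas~\ref{lem:IntermediateEqGenFormOfCCs} and~\ref{lem:IntermediateEqTrivCCs} (mirroring how Theorem~\ref{lem:dNs1.3RhoNe1ModifiedRedEqCLs} follows from Lemmas~\ref{lem:dNs1.3RhoNe1ModifiedRedEqGenFormOfCCs} and~\ref{lem:dNs1.3RhoNe1ModifiedRedEqTrivCCs}), exactly as you argue via the surjection $\breve V\to\breve\Omega$ with kernel $\breve V_0$.
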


The description of cosymmetries and generalized symmetries of the equation~\eqref{eq:IntermediateEq}
is a bit more involved.

\begin{theorem}
A differential function $\breve f\{q\}$ is a cosymmetry or, equivalently,
the characteristic of a generalized symmetry of the equation~\eqref{eq:IntermediateEq}
if and only if it is a linear combination of the differential functions
\begin{gather*}
\breve\varrho,\quad \breve\alpha,\quad q-\tfrac12z_1(q_{0,1})^2,
\end{gather*}
where~$\breve\varrho$ is an arbitrary function at most of~$q_{0,1}$ and a finite number of~$\theta^k$,
and $\breve\alpha$ is an arbitrary $z_2$-integral of~\eqref{eq:IntermediateEq}.
\end{theorem}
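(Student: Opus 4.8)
The plan is to reduce this statement to the already-proved results for the equation~\eqref{eq:dNs1.3RhoNe1ModifiedRedEq}, exploiting the differential substitution $q=w_{0,1}$ (equivalently $q=w_2$) that links the intermediate equation~\eqref{eq:IntermediateEq} with the reduced equation~1.3. The key structural observation, stated just before Theorem~\ref{thm:IntermediateEqIntegrals}, is that the coordinates on the infinite prolongation $\breve{\mathcal L}$ of~\eqref{eq:IntermediateEq} are obtained from the coordinates~\eqref{eq:dNs1.3RhoNe1ModifiedRedEqCoords2} on~$\mathcal L$ by deleting the jet variables $w_{k,0}$, $k\in\mathbb N$ (i.e.\ deleting $w_{1,0}$ and its $z_1$-derivatives, which are absorbed into~$\zeta^{2k}$ in the $w$-picture), and that the restricted total derivative operators $\hat{\mathrm D}_1$ and $\hat{\mathrm D}_2$ on~$\breve{\mathcal L}$ are literally the $q$-versions of those on~$\mathcal L$ with $w_{0,1}$ renamed to~$q$. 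Because the defining (generalized invariance / cosymmetry) conditions for~\eqref{eq:IntermediateEq} are $\hat{\mathrm D}_2(\hat{\mathrm D}_1+q_{0,1}\hat{\mathrm D}_2)f=0$ and its adjoint, and the left-hand side $L$ of~\eqref{eq:IntermediateEq} is the image of the left-hand side of~\eqref{eq:dNs1.3RhoNe1ModifiedRedEq} under $\mathrm D_2$, every local symmetry-like object of~\eqref{eq:IntermediateEq} pulls back to one of~\eqref{eq:dNs1.3RhoNe1ModifiedRedEq}. Concretely: given a cosymmetry $\breve f\{q\}$ of~\eqref{eq:IntermediateEq}, the differential function obtained by reinterpreting $q$ as $w_{0,1}$ (and $q_{k,l}$ as $w_{k,l+1}$) is a cosymmetry of~\eqref{eq:dNs1.3RhoNe1ModifiedRedEq} that additionally does not depend on the variables $w_{k,0}$, $k\in\mathbb N_0$.

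First I would carry out exactly this pull-back for cosymmetries. Starting from a cosymmetry $\breve f\{q\}$ of~\eqref{eq:IntermediateEq}, form the corresponding differential function of~$w$; it satisfies the cosymmetry condition~\eqref{eq:dNs1.3RhoNe1ModifiedRedEqCosymCond}, so by Theorem~\ref{thm:dNs1.3RhoNe1Cosyms} it is a linear combination of the twelve listed differential functions. Now impose the constraint that the combination depends on none of $w_{0,0}$, $w_{1,0}$ (and hence on no $\zeta^{2k}$ nontrivially, since $\zeta^{2k}$ carries $w_{k,0}$-dependence): inspecting the list in Theorem~\ref{thm:dNs1.3RhoNe1Cosyms}, the families $\{\varrho\}$ and $\{\alpha\}$ survive after replacing $\zeta^{2k}$-dependence by zero and $w_{0,1}$ by~$q$, the term $w_{0,1}-\tfrac12 z_1(w_{0,2})^2$ survives as $q-\tfrac12 z_1(q_{0,1})^2$, and all of the remaining nine basis functions involve $w_{0,0}$, $w_{1,0}$ or $\zeta^{2k}$ in an essential way and therefore cannot appear. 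This yields precisely the three families in the theorem's statement. The ``if''-part is immediate since each of the three families is the image of a genuine cosymmetry of~\eqref{eq:dNs1.3RhoNe1ModifiedRedEq} under the substitution, or can be verified by direct substitution into $\hat{\mathrm D}_2(\hat{\mathrm D}_1+q_{0,1}\hat{\mathrm D}_2)\breve f=0$.

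For the generalized-symmetry half of the statement I would argue that the canonical representatives coincide with the cosymmetries, exactly as happens for~\eqref{eq:dNs1.3RhoNe1ModifiedRedEq} (where Theorems~\ref{thm:dNs1.3RhoNe1ModifiedRedEqGenSyms} and~\ref{thm:dNs1.3RhoNe1Cosyms} differ), but here the generalized-symmetry list of~\eqref{eq:dNs1.3RhoNe1ModifiedRedEq} must likewise be intersected with the $w_{k,0}$-independent differential functions. Going through Theorem~\ref{thm:dNs1.3RhoNe1ModifiedRedEqGenSyms}: the characteristics $w_{1,0}$, $z_1w_{1,0}+w_{0,0}$, $z_1^2w_{1,0}+\dots$, and the last three long characteristics all involve $w_{1,0}$ or $w_{0,0}$ and drop out; $2z_1w_{0,1}-z_2^2$ and $z_2w_{0,1}-3w_{0,0}$ drop out for the same reason except for the combination reducing to $q$-only terms; what remains is $w_{0,1}\mapsto q$ (which, together with $z_1$, rebuilds $q-\tfrac12 z_1(q_{0,1})^2$ modulo a $z_2$-integral, or one records it directly), the families $\{\beta\}$, $\{z_2\alpha\}$ that become $\{\breve\alpha\}$, and the family $\tfrac{w_{0,2}}{w_{1,2}\theta^2}(\hat\varrho_{w_{0,2}}-\theta^{k+1}\hat\varrho_{\theta^k})+\hat\varrho\mapsto \breve\varrho$ (after noting this object equals $\hat{\mathrm D}_2$ applied to something, hence is equivalent to a function at most of $q_{0,1}$ and the~$\theta^k$, exactly as in Lemma~\ref{lem:dNs1.3RhoNe1AuxiliaryLemma1}). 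Thus the generalized-symmetry list collapses to the same three families, and the two halves of the theorem agree. The main obstacle I anticipate is bookkeeping rather than conceptual: one must be careful that ``$\breve f$ depends on no $w_{k,0}$'' is the correct translation of ``$\breve f$ is a differential function of~$q$'' (since $q=w_2$ means $q_{k,l}=w_{k,l+1}$, so the $w$-variables never reached by differentiating $q$ are exactly $w_{0,0}$, $w_{1,0}$, $w_{2,0},\dots$, and $w_{0,0}$ itself, i.e.\ all pure $z_1$-derivatives of~$w$ and~$w$), and that imposing this constraint on the Theorem~\ref{thm:dNs1.3RhoNe1Cosyms}/\ref{thm:dNs1.3RhoNe1ModifiedRedEqGenSyms} lists is done correctly up to the equivalence relations in play — in particular that certain basis elements which \emph{look} like they involve $w_{1,0}$ are equivalent, modulo $z_2$-integrals, to ones that do not, or conversely. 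Once this translation dictionary is fixed, the rest is a routine inspection of the two classification lists already proved in Section~\ref{sec:dNs1.3RhoNe1SymLikeObjects}.
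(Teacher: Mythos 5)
Your overall strategy — transport the problem to the $w$-picture via $q=w_{0,1}$ and reuse the machinery of Section~\ref{sec:dNs1.3RhoNe1SymLikeObjects} — is the right instinct, but the specific reduction you propose has a genuine gap. Filtering the final list of Theorem~\ref{thm:dNs1.3RhoNe1Cosyms} by independence of $w_{0,0}$ and $w_{k,0}$ is only a \emph{necessary} condition, not an equivalent reformulation, because the two determining equations are not the same: the cosymmetry condition~\eqref{eq:dNs1.3RhoNe1ModifiedRedEqCosymCond} for~\eqref{eq:dNs1.3RhoNe1ModifiedRedEq} reads $\hat{\mathrm D}_2^2(\hat{\mathrm D}_1+w_{0,2}\hat{\mathrm D}_2)f=0$, whereas the condition for~\eqref{eq:IntermediateEq} carries only one factor of $\hat{\mathrm D}_2$. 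Concretely, your claim that ``all of the remaining nine basis functions involve $w_{0,0}$, $w_{1,0}$ or $\zeta^{2k}$ in an essential way'' is false: the family $w_{0,2}\gamma-z_2\hat{\mathrm D}_1\gamma$ with $\gamma=\gamma(z_1,\zeta^{1*})$, as well as $\big(w_{0,1}-\tfrac12z_1(w_{0,2})^2\big)w_{0,2}-z_2\zeta^{10}$ and $\big(w_{0,1}-\tfrac12z_1(w_{0,2})^2\big)\theta^0+z_1z_2\zeta^{10}$, are all expressible through $q$ and its derivatives, yet none of them satisfies the intermediate equation's condition (for the first, $(\hat{\mathrm D}_1+q_{0,1}\hat{\mathrm D}_2)(q_{0,1}\gamma-z_2\hat{\mathrm D}_1\gamma)=-z_2\hat{\mathrm D}_1^2\gamma$, so one application of $\hat{\mathrm D}_2$ leaves $-\hat{\mathrm D}_1^2\gamma\ne0$; these families correspond to $\kappa^1\ne0$ in Lemma~\ref{lem:dNs1.3RhoNe1AuxiliaryLemma2b}). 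Carried out literally, your filter therefore yields a strict superset of the correct answer. The analogous step for generalized symmetries is also off: characteristics of generalized symmetries push \emph{forward} under $q=w_2$ (via $\mathrm D_2$), they do not restrict, so intersecting the list of Theorem~\ref{thm:dNs1.3RhoNe1ModifiedRedEqGenSyms} with $w_{k,0}$-independent functions is not the right operation at all.

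The fix is to restrict one level earlier. Since~\eqref{eq:IntermediateEq} is the Euler--Lagrange equation of $-\tfrac12q_1q_2-\tfrac12(q_2)^2$, its linearization operator $\mathrm D_2(\mathrm D_1+q_{0,1}\mathrm D_2)$ is formally self-adjoint, so cosymmetries and symmetry characteristics obey the \emph{same} determining equation $\hat{\mathrm D}_2(\hat{\mathrm D}_1+q_{0,1}\hat{\mathrm D}_2)\breve f=0$, i.e., $(\hat{\mathrm D}_1+q_{0,1}\hat{\mathrm D}_2)\breve f=\breve\kappa$ for a $z_2$-integral $\breve\kappa$. This is exactly the equation~\eqref{eq:dNs1.3RhoNe1ModifiedRedEqInTotalDers2'} with $\kappa^0=\breve\kappa|_{q=w_{0,1}}$ and $\kappa^1=0$, so you should apply Lemma~\ref{lem:dNs1.3RhoNe1AuxiliaryLemma2b} directly, set $\kappa^1=0$ there (which kills the $\gamma$-, $a_1$-, $a_2$- and most $b_{jj'}$-terms), and only then discard the solutions depending on $w_{k,0}$, $k\in\mathbb N$, and on $w_{0,0}$ beyond the single surviving combination $w_{0,1}-\tfrac12z_1(w_{0,2})^2$. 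That is the route the paper takes, and it delivers exactly the three families in the statement.
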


\begin{proof}
Since the equation~\eqref{eq:IntermediateEq} is the Euler--Lagrange equation for a Lagrangian,
the identity operator is both a Noether and an inverse Noether operators for this equation.
In other words, its cosymmetries are characteristics of its generalized symmetries, and vice versa.
The common determining equation for these objects is
$
\hat{\mathrm D}_2\big(\hat{\mathrm D}_1+q_{0,1}\hat{\mathrm D}_2\big)\breve f=0,
$ 
or, equivalently, \smash{$\big(\hat{\mathrm D}_1+q_{0,1}\hat{\mathrm D}_2\big)\breve f=\breve\kappa$},
where \smash{$\breve f=\breve f\{q\}$} is the unknown differential function,
and $\breve\kappa$ is a $z_2$-integral of~\eqref{eq:IntermediateEq}.
We use Lemma~\ref{lem:dNs1.3RhoNe1AuxiliaryLemma2b} and
select those solutions of the equation~\eqref{eq:dNs1.3RhoNe1ModifiedRedEqInTotalDers2'}
with \smash{$\kappa^0=\breve\kappa|^{}_{q=w_{0,1}}$} and $\kappa^1=0$
that do not depend on the jet variables~$w_{k,0}$, $k\in\mathbb N$.
Making the differential substitution $w_{0,1}=q$ in the selected solutions leads to the theorem's statement.
\end{proof}

The relation of symmetry-like objects of the equation~\eqref{eq:IntermediateEq} of any of the above kinds
to those of the equations~\eqref{eq:dNs1.3RhoNe1InviscidBurgersEq} and~\eqref{eq:dNs1.3RhoNe1ModifiedRedEq}
straightforwardly follows from the way of their construction
and is analogous to its counterpart for the equations~\eqref{eq:dNs1.3RhoNe1InviscidBurgersEq} and~\eqref{eq:dNs1.3RhoNe1ModifiedRedEq}
that is discussed in the end of Section~\ref{sec:dNRedEq1.3MIA}
or in Section~\ref{sec:InviscidBurgersEqInducedObjects}.

\section{Conclusion}\label{sec:Conclusion}

The study of the equation~\eqref{eq:dNs1.3RhoNe1ModifiedRedEq} in the present paper
has shown that this equation has remarkable properties
both as a submodel of the dispersionless Nyzhnyk equation~\eqref{eq:dN}
and as a partial differential equation considered independently
or in its relation to the inviscid Burgers equation~\eqref{eq:dNs1.3RhoNe1InviscidBurgersEq}.

The fact that the equation~\eqref{eq:dNs1.3RhoNe1ModifiedRedEq} arises
in the course of codimension-one Lie reductions
of the dispersionless Nyzhnyk equation~\eqref{eq:dN} in~\cite{vinn2024a}
gave us the initial inspiration for a deeper analysis of this equation.
The peculiarity of~\eqref{eq:dNs1.3RhoNe1ModifiedRedEq} revealed itself
already at this stage.
It contains no parameters, but corresponds,
as a reduced equation under a proper choice of ansatzes for reduction
in the spirit of \cite[Section~B]{fush1994b} and~\cite{popo1995b},
to the entire family of the subalgebras~$\mathfrak s_{1.3}^\rho$,
which are parameterized by the arbitrary nonvanishing function~$\rho$ of~$t$ with $\rho\not\equiv1$
and are in general $G$-inequivalent.
This is the only nontrivial Lie codimension-one submodel of~\eqref{eq:dN}
some of whose Lie symmetries are not induced by those of the equation~\eqref{eq:dN}
and thus are hidden for this equation.
Moreover, the values of the parameter function~$\rho$ in the subalgebras~$\mathfrak s_{1.3}^\rho$,
which is not involved in both the reduced equation~\eqref{eq:dNs1.3RhoNe1ModifiedRedEq}
and its maximal Lie invariance algebra~$\mathfrak a_{1.3}$,
define for some Lie symmetries of~\eqref{eq:dNs1.3RhoNe1ModifiedRedEq}
whether they are induced or not.
The differential substitution~$w_{22}=h$ lowers the order of the equation~\eqref{eq:dNs1.3RhoNe1ModifiedRedEq}
and maps it to the inviscid Burgers equation~\eqref{eq:dNs1.3RhoNe1InviscidBurgersEq}.
As a result, we constructed the large family~\eqref{eq:dNs1.3InvarSolutions}
of exact solutions of the dispersionless Nyzhnyk equation~\eqref{eq:dN},
which are parameterized by the second antiderivative of the general solution
of the inviscid Burgers equation~\eqref{eq:dNs1.3RhoNe1InviscidBurgersEq}
and the arbitrary nonvanishing function~$\rho$ of~$t$ with $\rho\not\equiv1$.
An essential part of more explicit exact solutions of~\eqref{eq:dN}
in terms of elementary or special functions or in a parametric form
that were constructed in~\cite{vinn2024a} using $G$-inequivalent codimension-two Lie reductions
belong, up to the $G$-equivalence, to the family~\eqref{eq:dNs1.3InvarSolutions}.
These are all the solutions presented in~\cite[Section~8.1]{vinn2024a}.
Their construction can be interpreted
as a result of carrying out two-step Lie reductions of~\eqref{eq:dN},
for each of which the first step is the codimension-one Lie reduction of~\eqref{eq:dN} to~\eqref{eq:dNs1.3RhoNe1ModifiedRedEq}
with respect to a subalgebra from the family~$\{\mathfrak s_{1.3}^\rho\}$
and the second step is a further Lie reduction of the equation~\eqref{eq:dNs1.3RhoNe1ModifiedRedEq}
with respect to its Lie symmetry induced by a Lie symmetry of~\eqref{eq:dN}.
The relatively simple integration of obtained reduced ordinary differential equations
can be explained by the presence of the differential substitution~$w_{22}=h$
mapping the equation~\eqref{eq:dNs1.3RhoNe1ModifiedRedEq} to the equation~\eqref{eq:dNs1.3RhoNe1InviscidBurgersEq}.

The above properties of the submodel~\eqref{eq:dNs1.3RhoNe1ModifiedRedEq} hinted
that for finding exact solutions of~\eqref{eq:dN} in a~closed form,
it might be productive to carry out the exhaustive classification of Lie reductions of~\eqref{eq:dNs1.3RhoNe1ModifiedRedEq}
with respect to both its induced and noninduced Lie symmetries
following the optimized procedure of reducing submodels from \cite[Section~B]{kova2023b}.
We have exhaustively implemented this procedure for the submodel~\eqref{eq:dNs1.3RhoNe1ModifiedRedEq},
relating the Lie reductions of this submodel to specific Lie reductions of~\eqref{eq:dNs1.3RhoNe1InviscidBurgersEq}.
As a result, we have constructed exact solutions
of both~\eqref{eq:dNs1.3RhoNe1InviscidBurgersEq} and~\eqref{eq:dNs1.3RhoNe1ModifiedRedEq}
in an explicit form in terms of elementary or Lambert functions or in a parametric form.
In Theorem~\ref{thm:dNRedEq1.3CorrInvSolutions},
these solutions are properly extended by hidden symmetries and mapped to
solutions of the equation~\eqref{eq:dN},
which results in a better, more closed, representation for them than~\eqref{eq:dNs1.3InvarSolutions}.
The families of these solutions of~\eqref{eq:dN} are much wider than their counterparts
from~\cite[Section~8.1]{vinn2024a}.
As a by-product of comprehensively carrying out the Lie-reduction procedure for~\eqref{eq:dNs1.3RhoNe1ModifiedRedEq},
we have observed once more
that inequivalent reductions may lead to equivalent solutions.
Using the Lie reduction of~\eqref{eq:dN} to~\eqref{eq:dNs1.3RhoNe1ModifiedRedEq},
we have also considered for the first time the induction of point symmetries in the course of a Lie reduction,
which is a more complicated phenomenon than the similar induction of Lie-symmetry vector fields.

The study of the submodel~\eqref{eq:dNs1.3RhoNe1ModifiedRedEq} in the present paper
has gone well beyond the scope of Lie reductions.
When computing the point-symmetry pseudogroup~$G_{1.3}$ of the equation~\eqref{eq:dNs1.3RhoNe1ModifiedRedEq}
by the algebraic method, we have shown
that this pseudogroup coincides with the stabilizer of the maximal Lie invariance algebra~$\mathfrak a_{1.3}$
of~\eqref{eq:dNs1.3RhoNe1ModifiedRedEq}
in the pseudogroup of local diffeomorphisms in the space $\mathbb R^3_{z_1,z_2,w}$.
In other words, the algebraic condition that
the pushforwards of vector fields from~$\mathfrak a_{1.3}$ by transformations from~$G_{1.3}$
map~$\mathfrak a_{1.3}$ (on)to~$\mathfrak a_{1.3}$ completely defines the pseudogroup~$G_{1.3}$.
The submodel~\eqref{eq:dNs1.3RhoNe1ModifiedRedEq} is only the second, but much simpler, example of this kind.
The first example is given by the dispersionless Nyzhnyk equation~\eqref{eq:dN} itself~\cite[Remark~21]{boyk2024a}.
Usually, the point-symmetry (pseudo)group of a system of differential equations
is properly contained in the stabilizer of the maximal Lie invariance algebra of this system
in the pseudogroup of local diffeomorphisms in the underlying space
that the corresponding tuple of independent and dependent variables runs through.
The above phenomenon is just one of the displays of defining properties of Lie symmetries
of the equation~\eqref{eq:dNs1.3RhoNe1ModifiedRedEq}.
It has also turned out that this equation is Lie-remarkable.
More specifically, it is completely defined
by 11- and 12-dimensional subalgebras of the algebra~$\mathfrak a_{1.3}$
in the classes of genuine and general partial differential equations of order not greater than three
in two independent variables, respectively.
Furthermore, a six-dimensional subalgebra of the former subalgebra
completely defines the local diffeomorphisms that stabilize the algebra~$\mathfrak a_{1.3}$. 

We have also found all the local symmetry-like objects associated with the equation~\eqref{eq:dNs1.3RhoNe1ModifiedRedEq},
which include generalized symmetries, cosymmetries, conservation-law characteristics and conservation laws.
This is the first so comprehensive study of local symmetry-like objects for a submodel
of a well-known system of differential equations.
Complete descriptions even of particular kinds of these objects in nontrivial cases
exist in the literature only for a minor part of such systems themselves, not to mention submodels,
see, e.g., \cite{holb2023a,ivan2007a,kova2023a,opan2020c,popo2025a,popo2025b,popo2020b,serg2024a,serg2016a,vane2021a}
for recent results and the review in \cite[Section~1]{opan2020e}.
Moreover, complete descriptions of all the local symmetry-like objects of a model
in a single paper are rather exceptional \cite{opan2020c,popo2025a,popo2025b,popo2020b,serg2016a,vane2021a}.
Since the possible number of arguments of the parameter functions in such objects
associated with the equation~\eqref{eq:dNs1.3RhoNe1ModifiedRedEq} grows when the object order does,
standard techniques like recursion operators and
the estimation of the dimension of the space of objects in question up to an arbitrary fixed order
do not work when a complete description of this space is required.
Even the best computer packages for finding local symmetry-like objects
such as {\sf Jets} \cite{BaranMarvan,marv2009a} and {\sf GeM}~\cite{chev2007a} for {\sf Maple}
are inefficient at computing such objects for the equation~\eqref{eq:dNs1.3RhoNe1ModifiedRedEq}
even at low orders, starting from order three.
This can be explained by the fact that for local symmetry-like objects of any specific kind,
the corresponding space of them for the equation~\eqref{eq:dNs1.3RhoNe1ModifiedRedEq}
is of complicated structure.
In particular, it is parameterized by several arbitrary functions of an arbitrary finite number of arguments
that are differential functions from two different infinite sequences,
see \cite{olve2021a,opan2020c,popo2020b} for similar spaces of local symmetry-like objects. 

Using the package {\sf Jets} \cite{BaranMarvan,marv2009a} for {\sf Maple},
in~\cite[Section~2]{boyk2024a} we showed that
generalized symmetries of the equation~\eqref{eq:dN} at least up to order five are exhausted,
modulo the equivalence of generalized symmetries, by its Lie symmetries.
Conservation laws characteristics up to order two were considered in~\cite{moro2021a}.
The comparison of these results
with Theorems~\ref{thm:dNs1.3RhoNe1ModifiedRedEqGenSyms} and~\ref{thm:dNs1.3RhoNe1ModifiedRedEqCLChars}
shows that the dispersionless Nyzhnyk equation~\eqref{eq:dN} admits
many nontrivial hidden generalized symmetries and hidden conservation laws related to the Lie reductions
with respect to each subalgebra from the family~$\{\mathfrak s_{1.3}^\rho\}$.

The homomorphism~$\boldsymbol{\bar\upsilon}\colon\Sigma\to\check\Sigma$
between the algebras~$\Sigma$ and~$\check\Sigma$
of canonical representatives of equivalences classes of generalized symmetries
of the equations~\eqref{eq:dNs1.3RhoNe1ModifiedRedEq} and~\eqref{eq:dNs1.3RhoNe1InviscidBurgersEq}
that is induced by the differential substitution~$w_{0,2}=h$
is neither injective nor surjective.
As a result, the equation~\eqref{eq:dNs1.3RhoNe1ModifiedRedEq}
possesses nonlocal symmetries associated with the differential substitution~$w_{0,2}=h$,
but this is not the case for the equation~\eqref{eq:dNs1.3RhoNe1InviscidBurgersEq}.
These nonlocal symmetries of~\eqref{eq:dNs1.3RhoNe1ModifiedRedEq}
can be considered as hidden nonlocal symmetries of~\eqref{eq:dN}.
The analogous natural induced linear maps
between the corresponding spaces of canonical representatives of equivalences classes
of cosymmetries and of conservation-law characteristics
as well as the corresponding spaces of conservation laws act in the opposite direction
and are injective, but not surjective.
This is why the equation~\eqref{eq:dNs1.3RhoNe1InviscidBurgersEq} admits
nonlocal conservation laws that are associated with the differential substitution~$w_{0,2}=h$.

For the equation~\eqref{eq:IntermediateEq},
the spaces of canonical representatives of equivalences classes
of its cosymmetries, of its conserved currents, of its conservation-law characteristics
and of its trivial conserved currents
have easily been constructed from the suitable subspaces of their counterparts
for the equation~\eqref{eq:dNs1.3RhoNe1ModifiedRedEq}
with analogous linear maps induced by the differential substitution~$w_{0,2}=h$.
Due to the Lagrangian nature of the equation~\eqref{eq:IntermediateEq},
this has also directly given the algebra of canonical representatives of equivalences classes of generalized symmetries.
The relations of symmetry-like objects of the equation~\eqref{eq:IntermediateEq}
to those of the equations~\eqref{eq:dNs1.3RhoNe1InviscidBurgersEq} and~\eqref{eq:dNs1.3RhoNe1ModifiedRedEq}
are then obvious.

The extended symmetry analysis of other Nyzhnyk models
in the dispersive symmetric case and in the both dispersive and dispersionless asymmetric cases,
including the complete descriptions of various hidden symmetry-like objects,
can be carried out similarly to that for the dispersionless symmetric case in~\cite{boyk2024a,vinn2024a} and this paper.
Another possible direction for further studies related to Nyzhnyk models
is to construct integrable generalizations of these models in more than three independent variables,
which reduce to Nyzhnyk models under certain additional constraints.
Recently, such a generalization of the asymmetric dispersionless Nyzhnyk system has been derived in \cite[Eq.~(21)]{serg2025a}
in dimension 1+3.

\section*{Acknowledgments} 

The authors are grateful to Serhii Koval, Dmytro Popovych, Galyna Popovych and Artur Sergyeyev
for helpful discussions and suggestions.
The authors also sincerely thank the anonymous reviewer for a number of helpful suggestions and comments,
which led to essentially improving the presentation of results.
R.O.P. also expresses his gratitude for the hospitality shown by the University of Vienna during his long-term stay there.
This work was supported in part by grants from the Simons Foundation
(1290607 and SFI-PD-Ukraine-00014586, O.O.V., V.M.B.).
The work of R.O.P. was supported in part by the Ministry of Education, Youth and Sports of the Czech Republic
(M\v SMT \v CR) under RVO funding for I\v C47813059.
The authors express their deepest thanks to the Armed Forces of Ukraine and the civil Ukrainian people
for their bravery and courage in defense of peace and freedom in Europe and in the entire world from russism.

\footnotesize

\end{document}